\documentclass[acmsmall, screen, authorversion, nonacm]{acmart}

\setcitestyle{nosort}

\usepackage[bw]{agda}

\renewcommand{\AgdaKeywordFontStyle}[1]{%
  \ensuremath{\textsf{\textbf{#1}}}}
\renewcommand{\AgdaFontStyle}[1]{%
  \ensuremath{\mathit{#1}}}
\renewcommand{\AgdaBoundFontStyle}[1]{%
  \ensuremath{\mathit{#1}}}
\renewcommand{\AgdaSymbol}[1]{%
  \AgdaNoSpaceMath{\ensuremath{#1}}}
\renewcommand{\AgdaNumber}[1]{%
  \AgdaNoSpaceMath{\ensuremath{#1}}}
\renewcommand{\AgdaInductiveConstructor}[1]{%
  \AgdaNoSpaceMath{\ensuremath{\mathsf{\AgdaFormat{#1}{#1}}}}}
\renewcommand{\AgdaField}[1]{%
  \AgdaNoSpaceMath{\ensuremath{\mathsf{\AgdaFormat{#1}{#1}}}}}

\usepackage{newunicodechar}
\newunicodechar{▸}{\ensuremath{\mathrel{\blacktriangleright}}}
\newunicodechar{◯}{\ensuremath{\bigcirc}}
\newunicodechar{∞}{\ensuremath{\infty}}
\newunicodechar{‐}{\textit{-}}

\usepackage{mathpartir}

\usepackage{polytable}

\newcommand{\Erased}{\ensuremath{\AgdaDatatype{Erased}}}
\newcommand{\boxop}{%
  $\AgdaOperator{\AgdaInductiveConstructor{[\AgdaUnderscore{}]}}$}
\newcommand{\boxcong}{\ensuremath{\AgdaFunction{[]‐cong}}}

\newcommand{\MCbreak}{\vspace{-1.8ex}\\}

\specialcomment{acks}{%
  \begingroup
  \section*{Acknowledgements}
  \phantomsection\addcontentsline{toc}{section}{Acknowledgements}
}{%
  \endgroup
}
\makeatletter
\if@ACM@anonymous
  \excludecomment{acks}
\fi
\makeatother

\makeatletter
\newcommand{\ifAnonymous}[2]{\if@ACM@anonymous #1\else #2\fi}
\makeatother

\usepackage{xstring}
\newcommand{\ifAppendicesIncluded}[2]{%
  \IfBeginWith*{\jobname}{Paper-with-appendices}{#1}{#2}}
\newcommand{\appendixA}{%
  Appendix~\ifAppendicesIncluded{\ref{sec:full-typing-rules}}{A}}
\newcommand{\appendixB}{%
  Appendix~\ifAppendicesIncluded{\ref{sec:target-language}}{B}}

\setcopyright{cc}
\copyrightyear{2026}

\newcommand{\excludeAllowed}[1]{}

\newcommand{\wfCtxt}[1]{⊢ #1}
\newcommand{\wfTy}[2]{#1 ⊢ #2}
\newcommand{\wrTy}[4]{#1 ▸ #2 ⊢^{#3} #4}
\newcommand{\wfTm}[3]{#1 ⊢ #2 : #3}
\newcommand{\wrTm}[5]{#1 ▸ #2 ⊢ #3 :^{\!#4} #5}
\newcommand{\wrTmExt}[5]{#1 ▸ #2 ⊢_{\mathrm{E}} #3 :^{\!#4} #5}
\newcommand{\eqTy}[3]{#1 ⊢ #2 ≡ #3}
\newcommand{\eqTm}[4]{#1 ⊢ #2 ≡ #3 : #4}
\newcommand{\eqTmExt}[4]{#1 ⊢_{\mathrm{E}} #2 ≡ #3 : #4}
\newcommand{\eqTmR}[4]{#1 ⊢_{\mathrm{R}} #2 ≡ #3 : #4}
\newcommand{\wfSubst}[3]{#1 ⊢ #2 : #3}
\newcommand{\wrSubst}[5]{#1 ▸ #2 ⊢ #3 :^{\!#4} #5}
\newcommand{\wrSubstExt}[5]{#1 ▸ #2 ⊢_{\mathrm{E}} #3 :^{\!#4} #5}
\newcommand{\wrSubstR}[5]{#1 ▸ #2 ⊢_{\mathrm{R}} #3 :^{\!#4} #5}
\newcommand{\wfWk}[3]{#1 ⊢ #2 : #3}
\newcommand{\appSubstExt}[2]{#1\,#2}
\newcommand{\trName}{\mathit{tr}}
\newcommand{\tr}[1]{\trName\ #1}
\newcommand{\varTy}[3]{#1 : #2 ∈ #3}
\newcommand{\lookup}[2]{#1(#2)}
\newcommand{\leqG}[2]{#1 ≤ #2}
\newcommand{\redTy}[3]{#1 ⊢ #2 ⇒ #3}
\newcommand{\redsTy}[3]{#1 ⊢ #2 ⇒^{*} #3}
\newcommand{\redTm}[4]{#1 ⊢ #2 ⇒ #3 : #4}
\newcommand{\redsTm}[4]{#1 ⊢ #2 ⇒^{*} #3 : #4}
\newcommand{\redSucName}{⇒^{\text{suc}}}
\newcommand{\redsSucName}{⇒^{\text{suc}*}}
\newcommand{\redSuc}[3]{#1 ⊢ #2 \redSucName #3 : \Nat}
\newcommand{\redsSuc}[3]{#1 ⊢ #2 \redsSucName #3 : \Nat}

\newcommand{\var}[1]{#1}
\newcommand{\emptyC}{ε}
\newcommand{\consC}[2]{#1 , #2}
\newcommand{\wk}[2]{#1\ #2}
\newcommand{\wkO}[1]{{#1}{↑}}
\newcommand{\wkn}[2]{{#2}{↑^{#1}}}
\newcommand{\wkT}[1]{\wkn{2}{#1}}
\newcommand{\wklift}[3]{{#3}{↑^{#1}_{#2}}}
\newcommand{\head}[1]{\mathit{head}\ #1}
\newcommand{\tail}[1]{\mathit{tail}\ #1}
\newcommand{\subst}[2]{#1\ #2}
\newcommand{\substZ}[2]{#1 [#2/\var{0}]}
\newcommand{\substO}[3]{#1 [#2/\var{1}, #3/\var{0}]}
\newcommand{\substZUp}[2]{#1 [\wkO{\mathsf{\mathsf{id}}},#2/\var{0}]}
\newcommand{\substZUpn}[3]{#2 [\wkn{#1}{\mathsf{\mathsf{id}}},#3/\var{0}]}
\newcommand{\substZUpT}[2]{\substZUpn{2}{#1}{#2}}

\newcommand{\zeroG}{0}
\newcommand{\omegaG}{ω}
\newcommand{\mul}[2]{#1#2}

\newcommand{\emptyGC}{ε}
\newcommand{\consGC}[2]{#1 , #2}
\newcommand{\zeroGC}{\mathbf{0}}

\newcommand{\varZ}{0}
\newcommand{\varO}{1}
\newcommand{\varSuc}[1]{#1 + 1}
\newcommand{\UName}{\mathsf{U}}
\newcommand{\U}[1]{\UName_{#1}}
\newcommand{\zeroL}{0}
\newcommand{\sucL}[1]{#1 + 1}
\newcommand{\maxL}[2]{#1 ⊔ #2}
\newcommand{\PiWithLevel}[1]{Π_{#1}}
\newcommand{\PiT}[3]{\PiWithLevel{#1}\,#2\,#3}
\newcommand{\lam}[2]{λ^{#1}\,#2}
\newcommand{\app}[3]{#1{}^{#2}#3}
\newcommand{\SigmaWith}[2]{Σ_{#1,#2}}
\newcommand{\SigmaT}[4]{\SigmaWith{#1}{#2}\,#3\,#4}
\newcommand{\prodTm}[4]{(#3,#4)_{#1,#2}}
\newcommand{\fstWith}[1]{\mathsf{fst}_{#1}}
\newcommand{\fst}[2]{\fstWith{#1}\,#2}
\newcommand{\snd}[2]{\mathsf{snd}_{#1}\,#2}
\newcommand{\prodrecName}{\mathsf{prodrec}}
\newcommand{\prodrecWith}[2]{\prodrecName_{#1,#2}}
\newcommand{\prodrec}[5]{\prodrecWith{#1}{#2}\,#3\,#4\,#5}
\newcommand{\IdName}{\mathsf{Id}}
\newcommand{\IdT}[3]{\IdName\,#1\,#2\,#3}
\newcommand{\IdRules}[1]{\IdName\,#1}
\newcommand{\rfl}{\mathsf{refl}}
\newcommand{\JName}{\mathsf{J}}
\newcommand{\JWithGrades}[2]{\JName_{#1,#2}}
\newcommand{\J}[8]{\JWithGrades{#1}{#2}\,#3\,#4\,#5\,#6\,#7\,#8}
\newcommand{\JRules}[1]{\JName\,#1}
\newcommand{\KName}{\mathsf{K}}
\newcommand{\K}[6]{\KName_{#1}\,#2\,#3\,#4\,#5\,#6}
\newcommand{\KRules}[1]{\KName\,#1}
\newcommand{\EName}{\mathsf{Erased}}
\newcommand{\E}[2]{\EName_{#1}\,#2}
\newcommand{\ES}[3]{\EName_{#1,#2}\,#3}
\newcommand{\bx}[1]{[#1]}
\newcommand{\bxS}[2]{[#2]_{#1}}

\newcommand{\erasedName}{\mathsf{erased}}
\newcommand{\erased}[2]{\erasedName\,#1\,#2}
\newcommand{\bcName}{\mathsf{[]‐cong}}
\newcommand{\bcWithLevel}[1]{\bcName_{#1}}
\newcommand{\bcSWith}[1]{\bcName_{#1}}
\newcommand{\bc}[5]{\bcWithLevel{#1}\,#2\,#3\,#4\,#5}
\newcommand{\bcS}[6]{\bcName_{#1,#2}\,#3\,#4\,#5\,#6}
\newcommand{\QuotName}{{\text{\textunderscore{}}/\text{\textunderscore{}}}}
\newcommand{\Quot}[2]{#1 / #2}
\newcommand{\className}{\mathsf{class}}
\newcommand{\class}[1]{\className{}\,#1}
\newcommand{\respName}{\mathsf{resp}}
\newcommand{\resp}[5]{\respName{}\,#1\,#2\,#3\,#4\,#5}
\newcommand{\setName}{\mathsf{set}}
\newcommand{\set}[6]{\setName{}\,#1\,#2\,#3\,#4\,#5\,#6}
\newcommand{\qrecName}{\mathsf{qrec}}
\newcommand{\qrec}[5]{\qrecName{}\,#1\,#2\,#3\,#4\,#5}

\newcommand{\respType}[4]{\mathsf{Resp‐type}\,#1\,#2\,#3\,#4}
\newcommand{\LiftWith}[1]{\mathsf{Lift}_{#1}}
\newcommand{\LiftT}[2]{\LiftWith{#1}\,#2}
\newcommand{\lift}[1]{\mathsf{lift}\,#1}
\newcommand{\lowerTm}[1]{\mathsf{lower}\,#1}
\newcommand{\Empty}{⊥}
\newcommand{\emptyrecName}{\mathsf{emptyrec}}
\newcommand{\emptyrecWithLevel}[1]{\emptyrecName_{#1}}
\newcommand{\emptyrec}[3]{\emptyrecWithLevel{#1}\,#2\,#3}
\newcommand{\weak}{\mathbf{w}}
\newcommand{\strongS}{\mathbf{s}}
\newcommand{\Unit}[1]{⊤_{#1}}
\newcommand{\starTm}[1]{\star_{#1}}
\newcommand{\unitrecName}{\mathsf{unitrec}}
\newcommand{\unitrecWithLevel}[1]{\unitrecName_{#1}}
\newcommand{\unitrec}[4]{\unitrecWithLevel{#1}\,#2\,#3\,#4}
\newcommand{\Nat}{\mathsf{Nat}}
\newcommand{\zero}{\mathsf{zero}}
\newcommand{\suc}[1]{\mathsf{suc}\,#1}
\newcommand{\natrec}[6]{\mathsf{natrec}_{#1,#2}\,#3\,#4\,#5\,#6}
\newcommand{\bcVar}{\mathsf{bc}}
\newcommand{\bcVarZero}{\mathsf{bc}_0}
\newcommand{\JEName}{\mathsf{J^E}}
\newcommand{\JE}[6]{\JEName\,#1\,#2\,#3\,#4\,#5\,#6}

\newcommand{\hastype}[2]{{#1:#2}}
\newcommand{\identifier}[1]{\mathit{#1}}

\newcommand{\varT}[1]{\textsf{var}\,#1}
\newcommand{\starT}{\textsf{star}}
\newcommand{\unitrecT}[2]{\textsf{unitrec}\,#1\,#2}
\newcommand{\lamT}[1]{\textsf{lam}\,#1}
\newcommand{\appT}[2]{\textsf{app}\,#1\,#2}
\newcommand{\prodT}[2]{\textsf{pair}\,#1\,#2}
\newcommand{\fstT}[1]{\textsf{fst}\,#1}
\newcommand{\sndT}[1]{\textsf{snd}\,#1}
\newcommand{\prodrecT}[2]{\textsf{prodrec}\,#1\,#2}
\newcommand{\zeroName}{\mathsf{zero}}
\newcommand{\sucName}{\mathsf{suc}}
\newcommand{\sucT}[1]{\sucName\,#1}
\newcommand{\natrecT}[3]{\textsf{natrec}\,#1\,#2\,#3}
\newcommand{\dummy}{{↯}}

\newcommand{\stepName}{\Rightarrow}
\newcommand{\stepsName}{\Rightarrow^{*}}
\newcommand{\step}[2]{#1 \stepName #2}
\newcommand{\steps}[2]{#1 \stepsName #2}

\newcommand{\numeralAsTerm}[1]{\underline{n}}
\newcommand{\numeralAsTermExplanation}[1]{\sucName^{#1}\,\zeroName}

\newcommand{\extract}[1]{\mathit{erase}\ #1}
\newcommand{\loopT}{\mathit{loop}}

\newcommand{\extractExtName}{\mathit{erase}_{\mathrm{E}}}
\newcommand{\extractExt}[1]{\extractExtName\ #1}

\newcommand{\theTitle}{%
  Erased Postulates, Identity Types and Quotients}

\begin{code}[hide]%
\>[0]\AgdaSymbol{\{-\#}\AgdaSpace{}%
\AgdaKeyword{OPTIONS}\AgdaSpace{}%
\AgdaPragma{--cubical}\AgdaSpace{}%
\AgdaSymbol{\#-\}}\<%
\\
\\[\AgdaEmptyExtraSkip]%
\>[0]\AgdaKeyword{module}\AgdaSpace{}%
\AgdaModule{Cubical}\AgdaSpace{}%
\AgdaKeyword{where}\<%
\\
\\[\AgdaEmptyExtraSkip]%
\>[0]\AgdaKeyword{import}\AgdaSpace{}%
\AgdaModule{Equality.Path}\<%
\\
\>[0]\AgdaKeyword{open}\AgdaSpace{}%
\AgdaKeyword{import}\AgdaSpace{}%
\AgdaModule{Prelude}\<%
\\
\\[\AgdaEmptyExtraSkip]%
\>[0]\AgdaKeyword{private}\AgdaSpace{}%
\AgdaKeyword{variable}\<%
\\
\>[0][@{}l@{\AgdaIndent{0}}]%
\>[2]\AgdaGeneralizable{a}%
\>[6]\AgdaSymbol{:}\AgdaSpace{}%
\AgdaPostulate{Level}\<%
\\
\>[2]\AgdaGeneralizable{A}%
\>[6]\AgdaSymbol{:}\AgdaSpace{}%
\AgdaPrimitive{Type}\AgdaSpace{}%
\AgdaGeneralizable{a}\<%
\\
\>[2]\AgdaGeneralizable{x}\AgdaSpace{}%
\AgdaGeneralizable{y}\AgdaSpace{}%
\AgdaSymbol{:}\AgdaSpace{}%
\AgdaGeneralizable{A}\<%
\\
\\[\AgdaEmptyExtraSkip]%
\>[0]\AgdaFunction{Path}\AgdaSpace{}%
\AgdaSymbol{:}\AgdaSpace{}%
\AgdaSymbol{(}\AgdaBound{A}\AgdaSpace{}%
\AgdaSymbol{:}\AgdaSpace{}%
\AgdaPrimitive{Type}\AgdaSpace{}%
\AgdaGeneralizable{a}\AgdaSymbol{)}\AgdaSpace{}%
\AgdaSymbol{→}\AgdaSpace{}%
\AgdaBound{A}\AgdaSpace{}%
\AgdaSymbol{→}\AgdaSpace{}%
\AgdaBound{A}\AgdaSpace{}%
\AgdaSymbol{→}\AgdaSpace{}%
\AgdaPrimitive{Type}\AgdaSpace{}%
\AgdaGeneralizable{a}\<%
\\
\>[0]\AgdaFunction{Path}\AgdaSpace{}%
\AgdaSymbol{\AgdaUnderscore{}}\AgdaSpace{}%
\AgdaSymbol{=}\AgdaSpace{}%
\AgdaOperator{\AgdaFunction{Equality.Path.\AgdaUnderscore{}≡\AgdaUnderscore{}}}\<%
\\
\\[\AgdaEmptyExtraSkip]%
\>[0]\AgdaKeyword{data}\AgdaSpace{}%
\AgdaDatatype{Erased}\AgdaSpace{}%
\AgdaSymbol{(}\AgdaSymbol{@0}\AgdaSpace{}%
\AgdaBound{A}\AgdaSpace{}%
\AgdaSymbol{:}\AgdaSpace{}%
\AgdaPrimitive{Type}\AgdaSpace{}%
\AgdaGeneralizable{a}\AgdaSymbol{)}\AgdaSpace{}%
\AgdaSymbol{:}\AgdaSpace{}%
\AgdaPrimitive{Type}\AgdaSpace{}%
\AgdaBound{a}\AgdaSpace{}%
\AgdaKeyword{where}\<%
\\
\>[0][@{}l@{\AgdaIndent{0}}]%
\>[2]\AgdaOperator{\AgdaInductiveConstructor{[\AgdaUnderscore{}]}}\AgdaSpace{}%
\AgdaSymbol{:}\AgdaSpace{}%
\AgdaSymbol{@0}\AgdaSpace{}%
\AgdaBound{A}\AgdaSpace{}%
\AgdaSymbol{→}\AgdaSpace{}%
\AgdaDatatype{Erased}\AgdaSpace{}%
\AgdaBound{A}\<%
\end{code}

\begin{code}[hide]%
\>[0]\AgdaKeyword{postulate}\<%
\\
\>[0][@{}l@{\AgdaIndent{0}}]%
\>[2]\AgdaPostulate{\AgdaUnderscore{}}%
\>[46I]\AgdaSymbol{:}\<%
\end{code}
\newcommand{\PathAxy}{%
\begin{code}[inline]%
\>[46I][@{}l@{\AgdaIndent{1}}]%
\>[6]\AgdaFunction{Path}\AgdaSpace{}%
\AgdaGeneralizable{A}\AgdaSpace{}%
\AgdaGeneralizable{x}\AgdaSpace{}%
\AgdaGeneralizable{y}\<%
\end{code}}

\newcommand{\boxcongPath}{%
\begin{code}%
\>[0]\AgdaFunction{[\ensuremath{\mkern1.5mu}]‐cong‐Path}\AgdaSpace{}%
\AgdaSymbol{:}\AgdaSpace{}%
\AgdaSymbol{\{}\AgdaSymbol{@0}\AgdaSpace{}%
\AgdaBound{A}\AgdaSpace{}%
\AgdaSymbol{:}\AgdaSpace{}%
\AgdaPrimitive{Type}\AgdaSpace{}%
\AgdaGeneralizable{a}\AgdaSymbol{\}}\AgdaSpace{}%
\AgdaSymbol{\{}\AgdaSymbol{@0}\AgdaSpace{}%
\AgdaBound{x}\AgdaSpace{}%
\AgdaBound{y}\AgdaSpace{}%
\AgdaSymbol{:}\AgdaSpace{}%
\AgdaBound{A}\AgdaSymbol{\}}\AgdaSpace{}%
\AgdaSymbol{→}\AgdaSpace{}%
\AgdaSymbol{@0}\AgdaSpace{}%
\AgdaFunction{Path}\AgdaSpace{}%
\AgdaBound{A}\AgdaSpace{}%
\AgdaBound{x}\AgdaSpace{}%
\AgdaBound{y}\AgdaSpace{}%
\AgdaSymbol{→}\AgdaSpace{}%
\AgdaFunction{Path}\AgdaSpace{}%
\AgdaSymbol{(}\AgdaDatatype{Erased}\AgdaSpace{}%
\AgdaBound{A}\AgdaSymbol{)}\AgdaSpace{}%
\AgdaOperator{\AgdaInductiveConstructor{[}}\AgdaSpace{}%
\AgdaBound{x}\AgdaSpace{}%
\AgdaOperator{\AgdaInductiveConstructor{]}}\AgdaSpace{}%
\AgdaOperator{\AgdaInductiveConstructor{[}}\AgdaSpace{}%
\AgdaBound{y}\AgdaSpace{}%
\AgdaOperator{\AgdaInductiveConstructor{]}}\<%
\\
\>[0]\AgdaFunction{[\ensuremath{\mkern1.5mu}]‐cong‐Path}\AgdaSpace{}%
\AgdaBound{eq}\AgdaSpace{}%
\AgdaSymbol{=}\AgdaSpace{}%
\AgdaSymbol{λ}\AgdaSpace{}%
\AgdaBound{i}\AgdaSpace{}%
\AgdaSymbol{→}\AgdaSpace{}%
\AgdaOperator{\AgdaInductiveConstructor{[}}\AgdaSpace{}%
\AgdaBound{eq}\AgdaSpace{}%
\AgdaBound{i}\AgdaSpace{}%
\AgdaOperator{\AgdaInductiveConstructor{]}}\<%
\end{code}}

\begin{code}[hide]%
\>[0]\AgdaKeyword{import}\AgdaSpace{}%
\AgdaModule{Equality}\AgdaSpace{}%
\AgdaSymbol{as}\AgdaSpace{}%
\AgdaModule{E}\<%
\\
\\[\AgdaEmptyExtraSkip]%
\>[0]\AgdaKeyword{module}\AgdaSpace{}%
\AgdaModule{Paper}\<%
\\
\>[0][@{}l@{\AgdaIndent{0}}]%
\>[2]\AgdaSymbol{\{}\AgdaBound{e⁺}\AgdaSymbol{\}}\AgdaSpace{}%
\AgdaSymbol{(}\AgdaBound{eq‐J}\AgdaSpace{}%
\AgdaSymbol{:}\AgdaSpace{}%
\AgdaSymbol{∀}\AgdaSpace{}%
\AgdaSymbol{\{}\AgdaBound{a}\AgdaSpace{}%
\AgdaBound{p}\AgdaSymbol{\}}\AgdaSpace{}%
\AgdaSymbol{→}\AgdaSpace{}%
\AgdaRecord{E.Equality‐with‐J}\AgdaSpace{}%
\AgdaBound{a}\AgdaSpace{}%
\AgdaBound{p}\AgdaSpace{}%
\AgdaBound{e⁺}\AgdaSymbol{)}\AgdaSpace{}%
\AgdaKeyword{where}\<%
\\
\\[\AgdaEmptyExtraSkip]%
\>[0]\AgdaKeyword{open}\AgdaSpace{}%
\AgdaModule{E.Derived-definitions-and-properties}\AgdaSpace{}%
\AgdaBound{eq‐J}\<%
\\
\>[0][@{}l@{\AgdaIndent{0}}]%
\>[2]\AgdaKeyword{using}\AgdaSpace{}%
\AgdaSymbol{(}\AgdaFunction{Decidable‐equality}\AgdaSymbol{)}\<%
\\
\\[\AgdaEmptyExtraSkip]%
\>[0]\AgdaKeyword{open}\AgdaSpace{}%
\AgdaKeyword{import}\AgdaSpace{}%
\AgdaModule{Logical-equivalence}\AgdaSpace{}%
\AgdaKeyword{using}\AgdaSpace{}%
\AgdaSymbol{(}\AgdaOperator{\AgdaRecord{\AgdaUnderscore{}⇔\AgdaUnderscore{}}}\AgdaSymbol{)}\<%
\\
\>[0]\AgdaKeyword{open}\AgdaSpace{}%
\AgdaKeyword{import}\AgdaSpace{}%
\AgdaModule{Prelude}\<%
\\
\>[0][@{}l@{\AgdaIndent{0}}]%
\>[2]\AgdaKeyword{using}\<%
\\
\>[2][@{}l@{\AgdaIndent{0}}]%
\>[4]\AgdaSymbol{(}\AgdaPostulate{Level}\AgdaSymbol{;}\AgdaSpace{}%
\AgdaPrimitive{lsuc}\AgdaSymbol{;}\AgdaSpace{}%
\AgdaOperator{\AgdaPrimitive{\AgdaUnderscore{}⊔\AgdaUnderscore{}}}\AgdaSymbol{;}\AgdaSpace{}%
\AgdaPrimitive{Type}\AgdaSymbol{;}\AgdaSpace{}%
\AgdaOperator{\AgdaFunction{\AgdaUnderscore{}∘\AgdaUnderscore{}}}\AgdaSymbol{;}\AgdaSpace{}%
\AgdaRecord{⊤}\AgdaSymbol{;}\AgdaSpace{}%
\AgdaInductiveConstructor{tt}\AgdaSymbol{;}\AgdaSpace{}%
\AgdaOperator{\AgdaFunction{¬\AgdaUnderscore{}}}\AgdaSymbol{;}\AgdaSpace{}%
\AgdaDatatype{ℕ}\AgdaSymbol{;}\AgdaSpace{}%
\AgdaInductiveConstructor{zero}\AgdaSymbol{;}\AgdaSpace{}%
\AgdaInductiveConstructor{suc}\AgdaSymbol{;}\AgdaSpace{}%
\AgdaOperator{\AgdaPrimitive{\AgdaUnderscore{}+\AgdaUnderscore{}}}\AgdaSymbol{;}\AgdaSpace{}%
\AgdaOperator{\AgdaDatatype{\AgdaUnderscore{}⊎\AgdaUnderscore{}}}\AgdaSymbol{;}\<%
\\
\>[4][@{}l@{\AgdaIndent{0}}]%
\>[5]\AgdaFunction{Bool}\AgdaSymbol{;}\AgdaSpace{}%
\AgdaInductiveConstructor{true}\AgdaSymbol{;}\AgdaSpace{}%
\AgdaInductiveConstructor{false}\AgdaSymbol{;}\AgdaSpace{}%
\AgdaRecord{Σ}\AgdaSymbol{;}\AgdaSpace{}%
\AgdaFunction{∃}\AgdaSymbol{;}\AgdaSpace{}%
\AgdaOperator{\AgdaFunction{\AgdaUnderscore{}×\AgdaUnderscore{}}}\AgdaSymbol{;}\AgdaSpace{}%
\AgdaOperator{\AgdaInductiveConstructor{\AgdaUnderscore{},\AgdaUnderscore{}}}\AgdaSymbol{;}\AgdaSpace{}%
\AgdaDatatype{List}\AgdaSymbol{)}\<%
\\
\>[2]\AgdaKeyword{renaming}\AgdaSpace{}%
\AgdaSymbol{(}\AgdaField{proj\ensuremath{{}_{\mathrm{2}}}}\AgdaSpace{}%
\AgdaSymbol{to}\AgdaSpace{}%
\AgdaField{snd}\AgdaSymbol{;}\AgdaSpace{}%
\AgdaFunction{⊥\ensuremath{{}_{\mathrm{0}}}}\AgdaSpace{}%
\AgdaSymbol{to}\AgdaSpace{}%
\AgdaFunction{⊥}\AgdaSymbol{)}\<%
\\
\>[0]\AgdaKeyword{open}\AgdaSpace{}%
\AgdaKeyword{import}\AgdaSpace{}%
\AgdaModule{Embedding}\AgdaSpace{}%
\AgdaBound{eq‐J}\AgdaSpace{}%
\AgdaKeyword{using}\AgdaSpace{}%
\AgdaSymbol{(}\AgdaFunction{Is‐embedding}\AgdaSymbol{)}\<%
\\
\>[0]\AgdaKeyword{open}\AgdaSpace{}%
\AgdaKeyword{import}\AgdaSpace{}%
\AgdaModule{Equivalence}\AgdaSpace{}%
\AgdaBound{eq‐J}\AgdaSpace{}%
\AgdaSymbol{as}\AgdaSpace{}%
\AgdaModule{Eq}\AgdaSpace{}%
\AgdaKeyword{using}\AgdaSpace{}%
\AgdaSymbol{(}\AgdaFunction{Is‐equivalence}\AgdaSymbol{)}\<%
\\
\>[0]\AgdaKeyword{open}\AgdaSpace{}%
\AgdaKeyword{import}\AgdaSpace{}%
\AgdaModule{Function-universe}\AgdaSpace{}%
\AgdaBound{eq‐J}\AgdaSpace{}%
\AgdaKeyword{using}\AgdaSpace{}%
\AgdaSymbol{(}\AgdaDatatype{Kind}\AgdaSymbol{;}\AgdaSpace{}%
\AgdaOperator{\AgdaFunction{\AgdaUnderscore{}↝[\AgdaUnderscore{}]\AgdaUnderscore{}}}\AgdaSymbol{)}\<%
\\
\>[0]\AgdaKeyword{open}\AgdaSpace{}%
\AgdaKeyword{import}\AgdaSpace{}%
\AgdaModule{H-level}\AgdaSpace{}%
\AgdaBound{eq‐J}\AgdaSpace{}%
\AgdaKeyword{using}\AgdaSpace{}%
\AgdaSymbol{(}\AgdaFunction{H‐level}\AgdaSymbol{)}\<%
\\
\>[0]\AgdaKeyword{open}\AgdaSpace{}%
\AgdaKeyword{import}\AgdaSpace{}%
\AgdaModule{List}\AgdaSpace{}%
\AgdaBound{eq‐J}\AgdaSpace{}%
\AgdaKeyword{using}\AgdaSpace{}%
\AgdaSymbol{(}\AgdaFunction{length}\AgdaSymbol{)}\<%
\\
\>[0]\AgdaKeyword{open}\AgdaSpace{}%
\AgdaKeyword{import}\AgdaSpace{}%
\AgdaModule{Surjection}\AgdaSpace{}%
\AgdaBound{eq‐J}\AgdaSpace{}%
\AgdaKeyword{using}\AgdaSpace{}%
\AgdaSymbol{(}\AgdaFunction{Split‐surjective}\AgdaSymbol{)}\<%
\\
\\[\AgdaEmptyExtraSkip]%
\>[0]\AgdaKeyword{variable}\<%
\\
\>[0][@{}l@{\AgdaIndent{0}}]%
\>[2]\AgdaGeneralizable{a}\AgdaSpace{}%
\AgdaGeneralizable{b}\AgdaSpace{}%
\AgdaGeneralizable{c}\AgdaSpace{}%
\AgdaGeneralizable{r}%
\>[34]\AgdaSymbol{:}\AgdaSpace{}%
\AgdaPostulate{Level}\<%
\\
\>[2]\AgdaGeneralizable{A}\AgdaSpace{}%
\AgdaGeneralizable{B}\AgdaSpace{}%
\AgdaGeneralizable{C}%
\>[34]\AgdaSymbol{:}\AgdaSpace{}%
\AgdaPrimitive{Type}\AgdaSpace{}%
\AgdaSymbol{\AgdaUnderscore{}}\<%
\\
\>[2]\AgdaSymbol{@0}\AgdaSpace{}%
\AgdaGeneralizable{R}%
\>[34]\AgdaSymbol{:}\AgdaSpace{}%
\AgdaGeneralizable{A}\AgdaSpace{}%
\AgdaSymbol{→}\AgdaSpace{}%
\AgdaGeneralizable{A}\AgdaSpace{}%
\AgdaSymbol{→}\AgdaSpace{}%
\AgdaPrimitive{Type}\AgdaSpace{}%
\AgdaSymbol{\AgdaUnderscore{}}\<%
\\
\>[2]\AgdaGeneralizable{P}\AgdaSpace{}%
\AgdaOperator{\AgdaGeneralizable{Has‐type\AgdaUnderscore{}}}%
\>[34]\AgdaSymbol{:}\AgdaSpace{}%
\AgdaGeneralizable{A}\AgdaSpace{}%
\AgdaSymbol{→}\AgdaSpace{}%
\AgdaPrimitive{Type}\AgdaSpace{}%
\AgdaSymbol{\AgdaUnderscore{}}\<%
\\
\>[2]\AgdaOperator{\AgdaGeneralizable{Has‐type[\AgdaUnderscore{}]\AgdaUnderscore{}}}%
\>[34]\AgdaSymbol{:}\AgdaSpace{}%
\AgdaSymbol{(}\AgdaBound{A}\AgdaSpace{}%
\AgdaSymbol{:}\AgdaSpace{}%
\AgdaPrimitive{Type}\AgdaSpace{}%
\AgdaGeneralizable{a}\AgdaSymbol{)}\AgdaSpace{}%
\AgdaSymbol{→}\AgdaSpace{}%
\AgdaBound{A}\AgdaSpace{}%
\AgdaSymbol{→}\AgdaSpace{}%
\AgdaPrimitive{Type}\AgdaSpace{}%
\AgdaSymbol{\AgdaUnderscore{}}\<%
\\
\>[2]\AgdaGeneralizable{k}%
\>[34]\AgdaSymbol{:}\AgdaSpace{}%
\AgdaDatatype{Kind}\<%
\\
\>[2]\AgdaGeneralizable{eq}\AgdaSpace{}%
\AgdaGeneralizable{f\,}\AgdaSpace{}%
\AgdaGeneralizable{g}\AgdaSpace{}%
\AgdaGeneralizable{n}\AgdaSpace{}%
\AgdaGeneralizable{t\ensuremath{{}_{\mathrm{1}}}}\AgdaSpace{}%
\AgdaGeneralizable{t\ensuremath{{}_{\mathrm{2}}}}\AgdaSpace{}%
\AgdaGeneralizable{x}\AgdaSpace{}%
\AgdaGeneralizable{x′}\AgdaSpace{}%
\AgdaGeneralizable{xs}\AgdaSpace{}%
\AgdaGeneralizable{y}\AgdaSpace{}%
\AgdaGeneralizable{y′}\AgdaSpace{}%
\AgdaGeneralizable{p}\AgdaSpace{}%
\AgdaGeneralizable{q}\AgdaSpace{}%
\AgdaSymbol{:}\AgdaSpace{}%
\AgdaGeneralizable{A}\<%
\\
\\[\AgdaEmptyExtraSkip]%
\>[0]\AgdaFunction{Σ‐syntax}\AgdaSpace{}%
\AgdaSymbol{:}\AgdaSpace{}%
\AgdaSymbol{(}\AgdaBound{A}\AgdaSpace{}%
\AgdaSymbol{:}\AgdaSpace{}%
\AgdaPrimitive{Type}\AgdaSpace{}%
\AgdaGeneralizable{a}\AgdaSymbol{)}\AgdaSpace{}%
\AgdaSymbol{→}\AgdaSpace{}%
\AgdaSymbol{(}\AgdaBound{A}\AgdaSpace{}%
\AgdaSymbol{→}\AgdaSpace{}%
\AgdaPrimitive{Type}\AgdaSpace{}%
\AgdaGeneralizable{p}\AgdaSymbol{)}\AgdaSpace{}%
\AgdaSymbol{→}\AgdaSpace{}%
\AgdaPrimitive{Type}\AgdaSpace{}%
\AgdaSymbol{(}\AgdaGeneralizable{a}\AgdaSpace{}%
\AgdaOperator{\AgdaPrimitive{⊔}}\AgdaSpace{}%
\AgdaGeneralizable{p}\AgdaSymbol{)}\<%
\\
\>[0]\AgdaFunction{Σ‐syntax}\AgdaSpace{}%
\AgdaSymbol{=}\AgdaSpace{}%
\AgdaRecord{Σ}\<%
\\
\\[\AgdaEmptyExtraSkip]%
\>[0]\AgdaKeyword{syntax}\AgdaSpace{}%
\AgdaFunction{Σ‐syntax}\AgdaSpace{}%
\AgdaBound{A}\AgdaSpace{}%
\AgdaSymbol{(λ}\AgdaSpace{}%
\AgdaBound{x}\AgdaSpace{}%
\AgdaSymbol{→}\AgdaSpace{}%
\AgdaBound{P}\AgdaSymbol{)}\AgdaSpace{}%
\AgdaSymbol{=}\AgdaSpace{}%
\AgdaSymbol{(}%
\AgdaBound{x}\AgdaSpace{}%
\AgdaSymbol{:}\AgdaSpace{}%
\AgdaBound{A}\AgdaSymbol{)}\AgdaSpace{}%
\AgdaFunction{×}\AgdaSpace{}%
\AgdaBound{P}\<%
\\
\\[\AgdaEmptyExtraSkip]%
\>[0]\AgdaFunction{Lift}\AgdaSpace{}%
\AgdaSymbol{:}\AgdaSpace{}%
\AgdaPrimitive{Type}\AgdaSpace{}%
\AgdaGeneralizable{a}\AgdaSpace{}%
\AgdaSymbol{→}\AgdaSpace{}%
\AgdaPrimitive{Type}\AgdaSpace{}%
\AgdaSymbol{(}\AgdaGeneralizable{a}\AgdaSpace{}%
\AgdaOperator{\AgdaPrimitive{⊔}}\AgdaSpace{}%
\AgdaGeneralizable{b}\AgdaSymbol{)}\<%
\\
\>[0]\AgdaFunction{Lift}\AgdaSpace{}%
\AgdaSymbol{\{}\AgdaBound{b}\AgdaSymbol{\}}\AgdaSpace{}%
\AgdaSymbol{=}\AgdaSpace{}%
\AgdaRecord{Prelude.↑}\AgdaSpace{}%
\AgdaBound{b}\<%
\\
\\[\AgdaEmptyExtraSkip]%
\>[0]\AgdaKeyword{infix}\AgdaSpace{}%
\AgdaNumber{-100}\AgdaSpace{}%
\AgdaOperator{\AgdaFunction{\AgdaUnderscore{}invisible}}\<%
\\
\\[\AgdaEmptyExtraSkip]%
\>[0]\AgdaOperator{\AgdaFunction{\AgdaUnderscore{}invisible}}\AgdaSpace{}%
\AgdaSymbol{:}\AgdaSpace{}%
\AgdaSymbol{\{}\AgdaBound{A}\AgdaSpace{}%
\AgdaSymbol{:}\AgdaSpace{}%
\AgdaPrimitive{Type}\AgdaSpace{}%
\AgdaGeneralizable{a}\AgdaSymbol{\}}\AgdaSpace{}%
\AgdaSymbol{→}\AgdaSpace{}%
\AgdaBound{A}\AgdaSpace{}%
\AgdaSymbol{→}\AgdaSpace{}%
\AgdaBound{A}\<%
\\
\>[0]\AgdaBound{x}\AgdaSpace{}%
\AgdaOperator{}\AgdaSpace{}%
\AgdaSymbol{=}\AgdaSpace{}%
\AgdaBound{x}\<%
\end{code}

\begin{code}[hide]%
\>[0]\AgdaKeyword{module}\AgdaSpace{}%
\AgdaModule{Extra-Id}\AgdaSpace{}%
\AgdaKeyword{where}\<%
\\
\>[0][@{}l@{\AgdaIndent{0}}]%
\>[2]\AgdaKeyword{data}\AgdaSpace{}%
\AgdaDatatype{Id}\AgdaSpace{}%
\AgdaSymbol{(}\AgdaBound{A}\AgdaSpace{}%
\AgdaSymbol{:}\AgdaSpace{}%
\AgdaPrimitive{Type}\AgdaSpace{}%
\AgdaGeneralizable{a}\AgdaSymbol{)}\AgdaSpace{}%
\AgdaSymbol{(}\AgdaBound{x}\AgdaSpace{}%
\AgdaSymbol{:}\AgdaSpace{}%
\AgdaBound{A}\AgdaSymbol{)}\AgdaSpace{}%
\AgdaSymbol{:}\AgdaSpace{}%
\AgdaBound{A}\AgdaSpace{}%
\AgdaSymbol{→}\AgdaSpace{}%
\AgdaPrimitive{Type}\AgdaSpace{}%
\AgdaBound{a}\AgdaSpace{}%
\AgdaKeyword{where}\<%
\end{code}
\newcommand{\refl}{%
\begin{code}[inline]%
\>[2][@{}l@{\AgdaIndent{1}}]%
\>[4]\AgdaInductiveConstructor{refl}\AgdaSpace{}%
\AgdaSymbol{:}\AgdaSpace{}%
\AgdaDatatype{Id}\AgdaSpace{}%
\AgdaBound{A}\AgdaSpace{}%
\AgdaBound{x}\AgdaSpace{}%
\AgdaBound{x}\<%
\end{code}}

\begin{code}[hide]%
\>[0]\AgdaKeyword{postulate}\<%
\\
\>[0][@{}l@{\AgdaIndent{0}}]%
\>[2]\AgdaPostulate{\AgdaUnderscore{}}%
\>[223I]\AgdaSymbol{:}\<%
\end{code}
\newcommand{\IdAxy}{%
\begin{code}[inline]%
\>[.][@{}l@{}]\<[223I]%
\>[4]\AgdaDatatype{Id}\AgdaSpace{}%
\AgdaGeneralizable{A}\AgdaSpace{}%
\AgdaGeneralizable{x}\AgdaSpace{}%
\AgdaGeneralizable{y}\<%
\end{code}}

\newcommand{\congType}{%
\begin{code}[inline]%
\>[0]\AgdaFunction{cong}\AgdaSpace{}%
\AgdaSymbol{:}\AgdaSpace{}%
\AgdaSymbol{(}\AgdaBound{f\,}\AgdaSpace{}%
\AgdaSymbol{:}\AgdaSpace{}%
\AgdaGeneralizable{A}\AgdaSpace{}%
\AgdaSymbol{→}\AgdaSpace{}%
\AgdaGeneralizable{B}\AgdaSymbol{)}\AgdaSpace{}%
\AgdaSymbol{→}\AgdaSpace{}%
\AgdaDatatype{Id}\AgdaSpace{}%
\AgdaGeneralizable{A}\AgdaSpace{}%
\AgdaGeneralizable{x}\AgdaSpace{}%
\AgdaGeneralizable{y}\AgdaSpace{}%
\AgdaSymbol{→}\AgdaSpace{}%
\AgdaDatatype{Id}\AgdaSpace{}%
\AgdaGeneralizable{B}\AgdaSpace{}%
\AgdaSymbol{(}\AgdaBound{f\,}\AgdaSpace{}%
\AgdaGeneralizable{x}\AgdaSymbol{)}\AgdaSpace{}%
\AgdaSymbol{(}\AgdaBound{f\,}\AgdaSpace{}%
\AgdaGeneralizable{y}\AgdaSymbol{)}\<%
\end{code}}
\begin{code}[hide]%
\>[0]\AgdaFunction{cong}\AgdaSpace{}%
\AgdaBound{f\,}\AgdaSpace{}%
\AgdaInductiveConstructor{refl}\AgdaSpace{}%
\AgdaSymbol{=}\AgdaSpace{}%
\AgdaInductiveConstructor{refl}\<%
\end{code}

\begin{code}[hide]%
\>[0]\AgdaKeyword{infix}%
\>[7]\AgdaNumber{-1}\AgdaSpace{}%
\AgdaOperator{\AgdaPostulate{\AgdaUnderscore{}□}}\<%
\\
\>[0]\AgdaKeyword{infixr}\AgdaSpace{}%
\AgdaNumber{-2}\AgdaSpace{}%
\AgdaOperator{\AgdaPostulate{\AgdaUnderscore{}→⟨⟩\AgdaUnderscore{}}}\AgdaSpace{}%
\AgdaOperator{\AgdaPostulate{\AgdaUnderscore{}⇔⟨⟩\AgdaUnderscore{}}}\AgdaSpace{}%
\AgdaOperator{\AgdaPostulate{\AgdaUnderscore{}≃⟨⟩\AgdaUnderscore{}}}\<%
\\
\\[\AgdaEmptyExtraSkip]%
\>[0]\AgdaKeyword{postulate}\<%
\\
\>[0][@{}l@{\AgdaIndent{0}}]%
\>[2]\AgdaOperator{\AgdaPostulate{\AgdaUnderscore{}→⟨⟩\AgdaUnderscore{}}}%
\>[9]\AgdaSymbol{:}\AgdaSpace{}%
\AgdaSymbol{(}\AgdaBound{A}\AgdaSpace{}%
\AgdaSymbol{:}\AgdaSpace{}%
\AgdaPrimitive{Type}\AgdaSpace{}%
\AgdaGeneralizable{a}\AgdaSymbol{)}\AgdaSpace{}%
\AgdaSymbol{→}\AgdaSpace{}%
\AgdaSymbol{(}\AgdaGeneralizable{B}\AgdaSpace{}%
\AgdaSymbol{→}\AgdaSpace{}%
\AgdaGeneralizable{C}\AgdaSymbol{)}\AgdaSpace{}%
\AgdaSymbol{→}\AgdaSpace{}%
\AgdaSymbol{(}\AgdaBound{A}\AgdaSpace{}%
\AgdaSymbol{→}\AgdaSpace{}%
\AgdaGeneralizable{C}\AgdaSymbol{)}\<%
\\
\>[2]\AgdaOperator{\AgdaPostulate{\AgdaUnderscore{}⇔⟨⟩\AgdaUnderscore{}}}%
\>[9]\AgdaSymbol{:}\AgdaSpace{}%
\AgdaSymbol{(}\AgdaBound{A}\AgdaSpace{}%
\AgdaSymbol{:}\AgdaSpace{}%
\AgdaPrimitive{Type}\AgdaSpace{}%
\AgdaGeneralizable{a}\AgdaSymbol{)}\AgdaSpace{}%
\AgdaSymbol{→}\AgdaSpace{}%
\AgdaGeneralizable{B}\AgdaSpace{}%
\AgdaOperator{\AgdaRecord{⇔}}\AgdaSpace{}%
\AgdaGeneralizable{C}\AgdaSpace{}%
\AgdaSymbol{→}\AgdaSpace{}%
\AgdaBound{A}\AgdaSpace{}%
\AgdaOperator{\AgdaRecord{⇔}}\AgdaSpace{}%
\AgdaGeneralizable{C}\<%
\\
\>[2]\AgdaOperator{\AgdaPostulate{\AgdaUnderscore{}≃⟨⟩\AgdaUnderscore{}}}%
\>[9]\AgdaSymbol{:}\AgdaSpace{}%
\AgdaSymbol{(}\AgdaBound{A}\AgdaSpace{}%
\AgdaSymbol{:}\AgdaSpace{}%
\AgdaPrimitive{Type}\AgdaSpace{}%
\AgdaGeneralizable{a}\AgdaSymbol{)}\AgdaSpace{}%
\AgdaSymbol{→}\AgdaSpace{}%
\AgdaGeneralizable{B}\AgdaSpace{}%
\AgdaOperator{\AgdaRecord{Eq.≃}}\AgdaSpace{}%
\AgdaGeneralizable{C}\AgdaSpace{}%
\AgdaSymbol{→}\AgdaSpace{}%
\AgdaBound{A}\AgdaSpace{}%
\AgdaOperator{\AgdaRecord{Eq.≃}}\AgdaSpace{}%
\AgdaGeneralizable{C}\<%
\\
\>[2]\AgdaOperator{\AgdaPostulate{\AgdaUnderscore{}□}}%
\>[9]\AgdaSymbol{:}\AgdaSpace{}%
\AgdaSymbol{(}\AgdaBound{A}\AgdaSpace{}%
\AgdaSymbol{:}\AgdaSpace{}%
\AgdaPrimitive{Type}\AgdaSpace{}%
\AgdaGeneralizable{a}\AgdaSymbol{)}\AgdaSpace{}%
\AgdaSymbol{→}\AgdaSpace{}%
\AgdaBound{A}\AgdaSpace{}%
\AgdaOperator{\AgdaFunction{↝[}}\AgdaSpace{}%
\AgdaGeneralizable{k}\AgdaSpace{}%
\AgdaOperator{\AgdaFunction{]}}\AgdaSpace{}%
\AgdaBound{A}\<%
\end{code}

\begin{code}[hide]%
\>[0]\AgdaKeyword{module}\AgdaSpace{}%
\AgdaModule{Subst}\AgdaSpace{}%
\AgdaKeyword{where}\<%
\end{code}
\newcommand{\substDef}{%
\begin{code}%
\>[0][@{}l@{\AgdaIndent{1}}]%
\>[2]\AgdaFunction{subst}\AgdaSpace{}%
\AgdaSymbol{:}\AgdaSpace{}%
\AgdaSymbol{(}\AgdaBound{P}\AgdaSpace{}%
\AgdaSymbol{:}\AgdaSpace{}%
\AgdaGeneralizable{A}\AgdaSpace{}%
\AgdaSymbol{→}\AgdaSpace{}%
\AgdaPrimitive{Type}\AgdaSpace{}%
\AgdaGeneralizable{p}\AgdaSymbol{)}\AgdaSpace{}%
\AgdaSymbol{→}\AgdaSpace{}%
\AgdaDatatype{Id}\AgdaSpace{}%
\AgdaGeneralizable{A}\AgdaSpace{}%
\AgdaGeneralizable{x}\AgdaSpace{}%
\AgdaGeneralizable{y}\AgdaSpace{}%
\AgdaSymbol{→}\AgdaSpace{}%
\AgdaBound{P}\AgdaSpace{}%
\AgdaGeneralizable{x}\AgdaSpace{}%
\AgdaSymbol{→}\AgdaSpace{}%
\AgdaBound{P}\AgdaSpace{}%
\AgdaGeneralizable{y}\<%
\\
\>[2]\AgdaFunction{subst}\AgdaSpace{}%
\AgdaBound{P}\AgdaSpace{}%
\AgdaInductiveConstructor{refl}\AgdaSpace{}%
\AgdaBound{p}\AgdaSpace{}%
\AgdaSymbol{=}\AgdaSpace{}%
\AgdaBound{p}\<%
\end{code}}

\begin{code}[hide]%
\>[0]\AgdaFunction{Is‐set}\AgdaSpace{}%
\AgdaSymbol{:}\AgdaSpace{}%
\AgdaPrimitive{Type}\AgdaSpace{}%
\AgdaGeneralizable{a}\AgdaSpace{}%
\AgdaSymbol{→}\AgdaSpace{}%
\AgdaPrimitive{Type}\AgdaSpace{}%
\AgdaGeneralizable{a}\<%
\end{code}
\newcommand{\Isset}{%
\begin{code}[inline]%
\>[0]\AgdaFunction{Is‐set}\AgdaSpace{}%
\AgdaBound{A}\AgdaSpace{}%
\AgdaSymbol{=}\AgdaSpace{}%
\AgdaSymbol{\{}\AgdaBound{x}\AgdaSpace{}%
\AgdaBound{y}\AgdaSpace{}%
\AgdaSymbol{:}\AgdaSpace{}%
\AgdaBound{A}\AgdaSymbol{\}}\AgdaSpace{}%
\AgdaSymbol{(}\AgdaBound{p}\AgdaSpace{}%
\AgdaBound{q}\AgdaSpace{}%
\AgdaSymbol{:}\AgdaSpace{}%
\AgdaDatatype{Id}\AgdaSpace{}%
\AgdaBound{A}\AgdaSpace{}%
\AgdaBound{x}\AgdaSpace{}%
\AgdaBound{y}\AgdaSymbol{)}\AgdaSpace{}%
\AgdaSymbol{→}\AgdaSpace{}%
\AgdaDatatype{Id}\AgdaSpace{}%
\AgdaSymbol{(}\AgdaDatatype{Id}\AgdaSpace{}%
\AgdaBound{A}\AgdaSpace{}%
\AgdaBound{x}\AgdaSpace{}%
\AgdaBound{y}\AgdaSymbol{)}\AgdaSpace{}%
\AgdaBound{p}\AgdaSpace{}%
\AgdaBound{q}\<%
\end{code}}

\begin{code}[hide]%
\>[0]\AgdaKeyword{module}\AgdaSpace{}%
\AgdaModule{Quotient}\AgdaSpace{}%
\AgdaKeyword{where}\<%
\\
\>[0][@{}l@{\AgdaIndent{0}}]%
\>[2]\AgdaKeyword{open}\AgdaSpace{}%
\AgdaModule{Subst}\<%
\\
\\[\AgdaEmptyExtraSkip]%
\>[2]\AgdaKeyword{module}\AgdaSpace{}%
\AgdaModule{Hidden}\AgdaSpace{}%
\AgdaKeyword{where}\<%
\\
\>[2][@{}l@{\AgdaIndent{0}}]%
\>[4]\AgdaKeyword{postulate}\<%
\end{code}
\newcommand{\QuotAPI}{%
\begin{code}%
\>[4][@{}l@{\AgdaIndent{1}}]%
\>[6]\AgdaOperator{\AgdaPostulate{\AgdaUnderscore{}/\AgdaUnderscore{}}}%
\>[15]\AgdaSymbol{:}\AgdaSpace{}%
\AgdaSymbol{(}\AgdaBound{A}\AgdaSpace{}%
\AgdaSymbol{:}\AgdaSpace{}%
\AgdaPrimitive{Type}\AgdaSpace{}%
\AgdaGeneralizable{a}\AgdaSymbol{)}\AgdaSpace{}%
\AgdaSymbol{→}\AgdaSpace{}%
\AgdaSymbol{@0}\AgdaSpace{}%
\AgdaSymbol{(}\AgdaBound{A}\AgdaSpace{}%
\AgdaSymbol{→}\AgdaSpace{}%
\AgdaBound{A}\AgdaSpace{}%
\AgdaSymbol{→}\AgdaSpace{}%
\AgdaPrimitive{Type}\AgdaSpace{}%
\AgdaGeneralizable{a}\AgdaSymbol{)}\AgdaSpace{}%
\AgdaSymbol{→}\AgdaSpace{}%
\AgdaPrimitive{Type}\AgdaSpace{}%
\AgdaGeneralizable{a}\<%
\\
\>[6]\AgdaOperator{\AgdaPostulate{[\AgdaUnderscore{}]}}%
\>[15]\AgdaSymbol{:}\AgdaSpace{}%
\AgdaSymbol{\{}\AgdaSymbol{@0}\AgdaSpace{}%
\AgdaBound{R}\AgdaSpace{}%
\AgdaSymbol{:}\AgdaSpace{}%
\AgdaGeneralizable{A}\AgdaSpace{}%
\AgdaSymbol{→}\AgdaSpace{}%
\AgdaGeneralizable{A}\AgdaSpace{}%
\AgdaSymbol{→}\AgdaSpace{}%
\AgdaPrimitive{Type}\AgdaSpace{}%
\AgdaGeneralizable{a}\AgdaSymbol{\}}\AgdaSpace{}%
\AgdaSymbol{→}\AgdaSpace{}%
\AgdaGeneralizable{A}\AgdaSpace{}%
\AgdaSymbol{→}\AgdaSpace{}%
\AgdaGeneralizable{A}\AgdaSpace{}%
\AgdaOperator{\AgdaPostulate{/}}\AgdaSpace{}%
\AgdaBound{R}\<%
\\
\>[6]\AgdaSymbol{@}\AgdaNumber{0}\AgdaSpace{}%
\AgdaInductiveConstructor{resp}%
\>[15]\AgdaSymbol{:}\AgdaSpace{}%
\AgdaGeneralizable{R}\AgdaSpace{}%
\AgdaGeneralizable{x}\AgdaSpace{}%
\AgdaGeneralizable{y}\AgdaSpace{}%
\AgdaSymbol{→}\AgdaSpace{}%
\AgdaDatatype{Id}\AgdaSpace{}%
\AgdaSymbol{(}\AgdaGeneralizable{A}\AgdaSpace{}%
\AgdaOperator{\AgdaPostulate{/}}\AgdaSpace{}%
\AgdaGeneralizable{R}\AgdaSymbol{)}\AgdaSpace{}%
\AgdaOperator{\AgdaPostulate{[}}\AgdaSpace{}%
\AgdaGeneralizable{x}\AgdaSpace{}%
\AgdaOperator{\AgdaPostulate{]}}\AgdaSpace{}%
\AgdaOperator{\AgdaPostulate{[}}\AgdaSpace{}%
\AgdaGeneralizable{y}\AgdaSpace{}%
\AgdaOperator{\AgdaPostulate{]}}\<%
\\
\>[6]\AgdaSymbol{@}\AgdaNumber{0}\AgdaSpace{}%
\AgdaInductiveConstructor{set}%
\>[15]\AgdaSymbol{:}\AgdaSpace{}%
\AgdaFunction{Is‐set}\AgdaSpace{}%
\AgdaSymbol{(}\AgdaGeneralizable{A}\AgdaSpace{}%
\AgdaOperator{\AgdaPostulate{/}}\AgdaSpace{}%
\AgdaGeneralizable{R}\AgdaSymbol{)}\<%
\\
\>[6]\AgdaPostulate{qrec}%
\>[15]\AgdaSymbol{:}%
\>[407I]\AgdaSymbol{\{}\AgdaSymbol{@0}\AgdaSpace{}%
\AgdaBound{R}\AgdaSpace{}%
\AgdaSymbol{:}\AgdaSpace{}%
\AgdaGeneralizable{A}\AgdaSpace{}%
\AgdaSymbol{→}\AgdaSpace{}%
\AgdaGeneralizable{A}\AgdaSpace{}%
\AgdaSymbol{→}\AgdaSpace{}%
\AgdaPrimitive{Type}\AgdaSpace{}%
\AgdaGeneralizable{a}\AgdaSymbol{\}}\AgdaSpace{}%
\AgdaSymbol{→}\AgdaSpace{}%
\AgdaSymbol{(}\AgdaBound{P}\AgdaSpace{}%
\AgdaSymbol{:}\AgdaSpace{}%
\AgdaGeneralizable{A}\AgdaSpace{}%
\AgdaOperator{\AgdaPostulate{/}}\AgdaSpace{}%
\AgdaBound{R}\AgdaSpace{}%
\AgdaSymbol{→}\AgdaSpace{}%
\AgdaPrimitive{Type}\AgdaSpace{}%
\AgdaGeneralizable{p}\AgdaSymbol{)}\AgdaSpace{}%
\AgdaSymbol{→}\AgdaSpace{}%
\AgdaSymbol{(}\AgdaBound{f\,}\AgdaSpace{}%
\AgdaSymbol{:}\AgdaSpace{}%
\AgdaSymbol{∀}\AgdaSpace{}%
\AgdaBound{x}\AgdaSpace{}%
\AgdaSymbol{→}\AgdaSpace{}%
\AgdaBound{P}\AgdaSpace{}%
\AgdaOperator{\AgdaPostulate{[}}\AgdaSpace{}%
\AgdaBound{x}\AgdaSpace{}%
\AgdaOperator{\AgdaPostulate{]}}\AgdaSymbol{)}\AgdaSpace{}%
\AgdaSymbol{→}\<%
\\
\>[.][@{}l@{}]\<[407I]%
\>[17]\AgdaSymbol{@0}\AgdaSpace{}%
\AgdaSymbol{(∀}\AgdaSpace{}%
\AgdaSymbol{\{}\AgdaBound{x}\AgdaSpace{}%
\AgdaBound{y}\AgdaSymbol{\}}\AgdaSpace{}%
\AgdaSymbol{(}\AgdaBound{r}\AgdaSpace{}%
\AgdaSymbol{:}\AgdaSpace{}%
\AgdaBound{R}\AgdaSpace{}%
\AgdaBound{x}\AgdaSpace{}%
\AgdaBound{y}\AgdaSymbol{)}\AgdaSpace{}%
\AgdaSymbol{→}\AgdaSpace{}%
\AgdaDatatype{Id}\AgdaSpace{}%
\AgdaSymbol{(}\AgdaBound{P}\AgdaSpace{}%
\AgdaOperator{\AgdaPostulate{[}}\AgdaSpace{}%
\AgdaBound{y}\AgdaSpace{}%
\AgdaOperator{\AgdaPostulate{]}}\AgdaSymbol{)}\AgdaSpace{}%
\AgdaSymbol{(}\AgdaFunction{subst}\AgdaSpace{}%
\AgdaBound{P}\AgdaSpace{}%
\AgdaSymbol{(}\AgdaInductiveConstructor{resp}\AgdaSpace{}%
\AgdaBound{r}\AgdaSymbol{)}\AgdaSpace{}%
\AgdaSymbol{(}\AgdaBound{f\,}\AgdaSpace{}%
\AgdaBound{x}\AgdaSymbol{))}\AgdaSpace{}%
\AgdaSymbol{(}\AgdaBound{f\,}\AgdaSpace{}%
\AgdaBound{y}\AgdaSymbol{))}\AgdaSpace{}%
\AgdaSymbol{→}\<%
\\
\>[17]\AgdaSymbol{@0}\AgdaSpace{}%
\AgdaSymbol{(∀}\AgdaSpace{}%
\AgdaBound{x}\AgdaSpace{}%
\AgdaSymbol{→}\AgdaSpace{}%
\AgdaFunction{Is‐set}\AgdaSpace{}%
\AgdaSymbol{(}\AgdaBound{P}\AgdaSpace{}%
\AgdaBound{x}\AgdaSymbol{))}\AgdaSpace{}%
\AgdaSymbol{→}\AgdaSpace{}%
\AgdaSymbol{(}\AgdaBound{x}\AgdaSpace{}%
\AgdaSymbol{:}\AgdaSpace{}%
\AgdaGeneralizable{A}\AgdaSpace{}%
\AgdaOperator{\AgdaPostulate{/}}\AgdaSpace{}%
\AgdaBound{R}\AgdaSymbol{)}\AgdaSpace{}%
\AgdaSymbol{→}\AgdaSpace{}%
\AgdaBound{P}\AgdaSpace{}%
\AgdaBound{x}\<%
\end{code}}

\begin{code}[hide]%
\>[2]\AgdaKeyword{data}\AgdaSpace{}%
\AgdaOperator{\AgdaDatatype{\AgdaUnderscore{}/\AgdaUnderscore{}}}\AgdaSpace{}%
\AgdaSymbol{(}\AgdaBound{A}\AgdaSpace{}%
\AgdaSymbol{:}\AgdaSpace{}%
\AgdaPrimitive{Type}\AgdaSpace{}%
\AgdaGeneralizable{a}\AgdaSymbol{)}\AgdaSpace{}%
\AgdaSymbol{(}\AgdaSymbol{@0}\AgdaSpace{}%
\AgdaBound{R}\AgdaSpace{}%
\AgdaSymbol{:}\AgdaSpace{}%
\AgdaBound{A}\AgdaSpace{}%
\AgdaSymbol{→}\AgdaSpace{}%
\AgdaBound{A}\AgdaSpace{}%
\AgdaSymbol{→}\AgdaSpace{}%
\AgdaPrimitive{Type}\AgdaSpace{}%
\AgdaGeneralizable{a}\AgdaSymbol{)}\AgdaSpace{}%
\AgdaSymbol{:}\AgdaSpace{}%
\AgdaPrimitive{Type}\AgdaSpace{}%
\AgdaBound{a}\AgdaSpace{}%
\AgdaKeyword{where}\<%
\\
\>[2][@{}l@{\AgdaIndent{0}}]%
\>[4]\AgdaOperator{\AgdaInductiveConstructor{[\AgdaUnderscore{}]}}\AgdaSpace{}%
\AgdaSymbol{:}\AgdaSpace{}%
\AgdaBound{A}\AgdaSpace{}%
\AgdaSymbol{→}\AgdaSpace{}%
\AgdaBound{A}\AgdaSpace{}%
\AgdaOperator{\AgdaDatatype{/}}\AgdaSpace{}%
\AgdaBound{R}\<%
\\
\\[\AgdaEmptyExtraSkip]%
\>[2]\AgdaKeyword{postulate}\<%
\\
\>[2][@{}l@{\AgdaIndent{0}}]%
\>[4]\AgdaSymbol{@}\AgdaNumber{0}\AgdaSpace{}%
\AgdaInductiveConstructor{resp}%
\>[13]\AgdaSymbol{:}\AgdaSpace{}%
\AgdaGeneralizable{R}\AgdaSpace{}%
\AgdaGeneralizable{x}\AgdaSpace{}%
\AgdaGeneralizable{y}\AgdaSpace{}%
\AgdaSymbol{→}\AgdaSpace{}%
\AgdaDatatype{Id}\AgdaSpace{}%
\AgdaSymbol{(}\AgdaGeneralizable{A}\AgdaSpace{}%
\AgdaOperator{\AgdaDatatype{/}}\AgdaSpace{}%
\AgdaGeneralizable{R}\AgdaSymbol{)}\AgdaSpace{}%
\AgdaOperator{\AgdaInductiveConstructor{[}}\AgdaSpace{}%
\AgdaGeneralizable{x}\AgdaSpace{}%
\AgdaOperator{\AgdaInductiveConstructor{]}}\AgdaSpace{}%
\AgdaOperator{\AgdaInductiveConstructor{[}}\AgdaSpace{}%
\AgdaGeneralizable{y}\AgdaSpace{}%
\AgdaOperator{\AgdaInductiveConstructor{]}}\<%
\\
\>[4]\AgdaSymbol{@}\AgdaNumber{0}\AgdaSpace{}%
\AgdaInductiveConstructor{set}%
\>[13]\AgdaSymbol{:}\AgdaSpace{}%
\AgdaFunction{Is‐set}\AgdaSpace{}%
\AgdaSymbol{(}\AgdaGeneralizable{A}\AgdaSpace{}%
\AgdaOperator{\AgdaDatatype{/}}\AgdaSpace{}%
\AgdaGeneralizable{R}\AgdaSymbol{)}\<%
\\
\\[\AgdaEmptyExtraSkip]%
\>[2]\AgdaFunction{qrec}\AgdaSpace{}%
\AgdaSymbol{:}%
\>[519I]\AgdaSymbol{(}\AgdaBound{P}\AgdaSpace{}%
\AgdaSymbol{:}\AgdaSpace{}%
\AgdaGeneralizable{A}\AgdaSpace{}%
\AgdaOperator{\AgdaDatatype{/}}\AgdaSpace{}%
\AgdaGeneralizable{R}\AgdaSpace{}%
\AgdaSymbol{→}\AgdaSpace{}%
\AgdaPrimitive{Type}\AgdaSpace{}%
\AgdaGeneralizable{p}\AgdaSymbol{)}\AgdaSpace{}%
\AgdaSymbol{→}\AgdaSpace{}%
\AgdaSymbol{(}\AgdaBound{f\,}\AgdaSpace{}%
\AgdaSymbol{:}\AgdaSpace{}%
\AgdaSymbol{∀}\AgdaSpace{}%
\AgdaBound{x}\AgdaSpace{}%
\AgdaSymbol{→}\AgdaSpace{}%
\AgdaBound{P}\AgdaSpace{}%
\AgdaOperator{\AgdaInductiveConstructor{[}}\AgdaSpace{}%
\AgdaBound{x}\AgdaSpace{}%
\AgdaOperator{\AgdaInductiveConstructor{]}}\AgdaSymbol{)}\AgdaSpace{}%
\AgdaSymbol{→}\<%
\\
\>[.][@{}l@{}]\<[519I]%
\>[9]\AgdaSymbol{@}\AgdaNumber{0}\AgdaSpace{}%
\AgdaSymbol{(∀}\AgdaSpace{}%
\AgdaSymbol{\{}\AgdaBound{x}\AgdaSpace{}%
\AgdaBound{y}\AgdaSymbol{\}}\AgdaSpace{}%
\AgdaSymbol{(}\AgdaBound{r}\AgdaSpace{}%
\AgdaSymbol{:}\AgdaSpace{}%
\AgdaGeneralizable{R}\AgdaSpace{}%
\AgdaBound{x}\AgdaSpace{}%
\AgdaBound{y}\AgdaSymbol{)}\AgdaSpace{}%
\AgdaSymbol{→}\AgdaSpace{}%
\AgdaDatatype{Id}\AgdaSpace{}%
\AgdaSymbol{(}\AgdaBound{P}\AgdaSpace{}%
\AgdaOperator{\AgdaInductiveConstructor{[}}\AgdaSpace{}%
\AgdaBound{y}\AgdaSpace{}%
\AgdaOperator{\AgdaInductiveConstructor{]}}\AgdaSymbol{)}\AgdaSpace{}%
\AgdaSymbol{(}\AgdaFunction{subst}\AgdaSpace{}%
\AgdaBound{P}\AgdaSpace{}%
\AgdaSymbol{(}\AgdaInductiveConstructor{resp}\AgdaSpace{}%
\AgdaBound{r}\AgdaSymbol{)}\AgdaSpace{}%
\AgdaSymbol{(}\AgdaBound{f\,}\AgdaSpace{}%
\AgdaBound{x}\AgdaSymbol{))}\AgdaSpace{}%
\AgdaSymbol{(}\AgdaBound{f\,}\AgdaSpace{}%
\AgdaBound{y}\AgdaSymbol{))}\AgdaSpace{}%
\AgdaSymbol{→}\<%
\\
\>[9]\AgdaSymbol{@}\AgdaNumber{0}\AgdaSpace{}%
\AgdaSymbol{(∀}\AgdaSpace{}%
\AgdaBound{x}\AgdaSpace{}%
\AgdaSymbol{→}\AgdaSpace{}%
\AgdaFunction{Is‐set}\AgdaSpace{}%
\AgdaSymbol{(}\AgdaBound{P}\AgdaSpace{}%
\AgdaBound{x}\AgdaSymbol{))}\AgdaSpace{}%
\AgdaSymbol{→}\AgdaSpace{}%
\AgdaSymbol{(}\AgdaBound{x}\AgdaSpace{}%
\AgdaSymbol{:}\AgdaSpace{}%
\AgdaGeneralizable{A}\AgdaSpace{}%
\AgdaOperator{\AgdaDatatype{/}}\AgdaSpace{}%
\AgdaGeneralizable{R}\AgdaSymbol{)}\AgdaSpace{}%
\AgdaSymbol{→}\AgdaSpace{}%
\AgdaBound{P}\AgdaSpace{}%
\AgdaBound{x}\<%
\end{code}
\newcommand{\qrecdef}{%
\begin{code}[inline]%
\>[2]\AgdaFunction{qrec}\AgdaSpace{}%
\AgdaSymbol{\AgdaUnderscore{}}\AgdaSpace{}%
\AgdaBound{f\,}\AgdaSpace{}%
\AgdaSymbol{\AgdaUnderscore{}}\AgdaSpace{}%
\AgdaSymbol{\AgdaUnderscore{}}\AgdaSpace{}%
\AgdaOperator{\AgdaInductiveConstructor{[}}\AgdaSpace{}%
\AgdaBound{x}\AgdaSpace{}%
\AgdaOperator{\AgdaInductiveConstructor{]}}\AgdaSpace{}%
\AgdaSymbol{=}\AgdaSpace{}%
\AgdaBound{f\,}\AgdaSpace{}%
\AgdaBound{x}\<%
\end{code}}

\begin{code}[hide]%
\>[2]\AgdaKeyword{postulate}\<%
\\
\>[2][@{}l@{\AgdaIndent{0}}]%
\>[4]\AgdaPostulate{\AgdaUnderscore{}}%
\>[586I]\AgdaSymbol{:}\<%
\end{code}
\newcommand{\QuotAR}{%
\begin{code}[inline]%
\>[.][@{}l@{}]\<[586I]%
\>[6]\AgdaGeneralizable{A}\AgdaSpace{}%
\AgdaOperator{\AgdaDatatype{/}}\AgdaSpace{}%
\AgdaGeneralizable{R}\<%
\end{code}}

\newcommand{\quotbox}{\AgdaInductiveConstructor{[\AgdaUnderscore{}]}}%

\title{\theTitle}
\author{Nils Anders Danielsson}
\orcid{0000-0001-8688-0333}
\affiliation{
  \department{Department of Computer Science and Engineering}
  \institution{University of Gothenburg and Chalmers University of Technology}
  \city{Gothenburg}
  \country{Sweden}
}
\email{nad@cse.gu.se}

\begin{document}

\begin{abstract}
  This text is concerned with the question of whether, in type theory
  with erasure annotations, one can postulate that some type is
  inhabited and still have a guarantee that a program will not get
  stuck.
  Previous work has provided such guarantees for consistent erased
  postulates, i.e.\ postulates that are restricted to be used in
  erased contexts.
  Here those guarantees are extended to type theory with identity
  types.

  Similar ideas provide a simple way to support quotient types: it is
  shown that one can let things like ``the equivalence classes for two
  related values are equal'' be erased postulates and have an
  eliminator that only computes for the equivalence class constructor,
  and still get a guarantee that programs will compute correctly.

  Another question is whether programs compute correctly if one is
  allowed to transport (cast) using erased identity proofs.
  It is shown that this is safe in the absence of quotients and
  postulates, and in the presence of quotients and erased postulates
  that can be implemented using equality reflection.
  However, unrestricted transports of this kind are not compatible
  with erased, postulated univalence.
  For that reason the text includes a study of the function
  \boxcong{}, which encapsulates a limited form of transport for
  erased identity proofs.

  The text is accompanied by machine-checked Agda proofs.
\end{abstract}

\maketitle

\section{Introduction}
\label{sec:introduction}

\begin{code}[hide]%
\>[0]\AgdaKeyword{module}\AgdaSpace{}%
\AgdaModule{Introduction}\AgdaSpace{}%
\AgdaKeyword{where}\<%
\\
\\[\AgdaEmptyExtraSkip]%
\>[0][@{}l@{\AgdaIndent{0}}]%
\>[2]\AgdaKeyword{open}\AgdaSpace{}%
\AgdaModule{List}\AgdaSpace{}%
\AgdaKeyword{using}\AgdaSpace{}%
\AgdaSymbol{(}\AgdaInductiveConstructor{[\ensuremath{\mkern1.5mu}]}\AgdaSymbol{;}\AgdaSpace{}%
\AgdaOperator{\AgdaInductiveConstructor{\AgdaUnderscore{}∷\AgdaUnderscore{}}}\AgdaSymbol{)}\<%
\\
\>[2]\AgdaKeyword{open}\AgdaSpace{}%
\AgdaModule{Subst}\AgdaSpace{}%
\AgdaKeyword{public}\<%
\end{code}

Dependent type theories/programming languages like Agda
\citep{agda-2025}, Idris \citep{brady-2021}, Lean \citep{lean-2026}
and Rocq \citep{rocq-2026} support \emph{identity types}.
A key feature of identity types is the ability to substitute equals
for equals, as witnessed by the following Agda function:
\substDef{}%
In any context expressible as a function \AgdaBound{P} from some type
\AgdaBound{A} to the universe
\begin{code}[hide]%
\>[2]\AgdaKeyword{postulate}\<%
\\
\>[2][@{}l@{\AgdaIndent{0}}]%
\>[4]\AgdaPostulate{\AgdaUnderscore{}}%
\>[597I]\AgdaSymbol{:}\<%
\end{code}
\begin{code}[inline*]%
\>[.][@{}l@{}]\<[597I]%
\>[6]\AgdaPrimitive{Type}\AgdaSpace{}%
\AgdaGeneralizable{p}\<%
\end{code}
(the type of types with universe level \AgdaBound{p}) one can replace
\AgdaBound{x} with \AgdaBound{y}, given that the identity type
\IdAxy{} is inhabited.
Note that \AgdaFunction{subst} is implemented by matching on the
constructor \refl{}.

A potential drawback of identity types in plain, intensional
Martin-Löf type theory is that one cannot prove certain properties,
like function extensionality – extensionally equal functions are equal
\citep{altenkirch-1999} – or univalence – which implies that types
that are in bijective correspondence are equal \citep{hott-2013}.
A workaround is to \emph{postulate} the properties that one needs
(assuming that they are consistent).
However, if the type theory is used as a programming language, then a
postulate could lead to programs becoming stuck during execution.

This problem was addressed by \citet{abel-et-al-2023}, who study a
family of type theories with support for \emph{erasure} inspired by
the work of \citet{mcbride-2016} and \citet{atkey-2018}, and similar
to the support for erasure in Agda and Idris 2.
The idea is basically that erased data will be removed by compilers,
and that the type system ensures that no run-time decision is made
based on erased data.
\citet[Theorem 8.3]{abel-et-al-2023} show that, for a program $t$ of
type $ℕ$ that only uses postulates in erased contexts, if the
postulates are mutually consistent, then $t$ will reduce to a numeral,
and the corresponding compiled program will reduce to the same
numeral.

\citeauthor{abel-et-al-2023} did not support identity types.
One contribution of this work is to add support for identity types
(see Theorem \ref{thm:soundness-of-erasure}), thus establishing that a
postulate like the following one is ``safe'':
\begin{code}%
\>[2]\AgdaKeyword{postulate}\AgdaSpace{}%
\AgdaSymbol{@}\AgdaNumber{0}\AgdaSpace{}%
\AgdaPostulate{funext}\AgdaSpace{}%
\AgdaSymbol{:}\AgdaSpace{}%
\AgdaSymbol{(∀}\AgdaSpace{}%
\AgdaBound{x}\AgdaSpace{}%
\AgdaSymbol{→}\AgdaSpace{}%
\AgdaDatatype{Id}\AgdaSpace{}%
\AgdaSymbol{(}\AgdaGeneralizable{P}\AgdaSpace{}%
\AgdaBound{x}\AgdaSymbol{)}\AgdaSpace{}%
\AgdaSymbol{(}\AgdaGeneralizable{f\,}\AgdaSpace{}%
\AgdaBound{x}\AgdaSymbol{)}\AgdaSpace{}%
\AgdaSymbol{(}\AgdaGeneralizable{g}\AgdaSpace{}%
\AgdaBound{x}\AgdaSymbol{))}\AgdaSpace{}%
\AgdaSymbol{→}\AgdaSpace{}%
\AgdaDatatype{Id}\AgdaSpace{}%
\AgdaSymbol{((}\AgdaBound{x}\AgdaSpace{}%
\AgdaSymbol{:}\AgdaSpace{}%
\AgdaGeneralizable{A}\AgdaSymbol{)}\AgdaSpace{}%
\AgdaSymbol{→}\AgdaSpace{}%
\AgdaGeneralizable{P}\AgdaSpace{}%
\AgdaBound{x}\AgdaSymbol{)}\AgdaSpace{}%
\AgdaGeneralizable{f\,}\AgdaSpace{}%
\AgdaGeneralizable{g}\<%
\end{code}
This code uses Agda notation: the annotation \AgdaSymbol{@0} means
that the postulate is erased, and may only be used in erased contexts,
for instance in erased function arguments.

This work also goes further:
\begin{itemize}
\item It is shown how erased postulates make it easy to implement a
  form of \emph{quotient types} (§~\ref{sec:quotients}).
  As an aside, the associated proofs of normalisation, canonicity,
  consistency, etc.\ are possibly the first mechanised proofs of that
  kind for type theory with quotients
  (§~\ref{sec:meta-theory}).
\item It is shown that in some cases erased postulates can be
  supported even in the presence of \emph{erased matches}
  \citep{abel-et-al-2023}, in particular variants of
  \AgdaFunction{subst} that take \emph{erased} identity proofs
  (§~\ref{sec:erased-postulates-and-erased-matches}).
\item Some forms of erased matches are incompatible with univalence,
  and for that reason this text includes a study of the function
  \boxcong{}, which provides a restricted form of erased match for
  identity proofs (§~\ref{sec:box-cong}).
\item This work provides a first step towards \emph{Computational Book
    HoTT}: a variant of Book HoTT \citep{hott-2013} with erased
  univalence, higher inductive types with erased higher constructors,
  and a guarantee that programs will compute properly
  (§~\ref{sec:computational-book-hott}).
\item The technique of \emph{fording}
  \citep{chapman-et-al-2010,pujet-leray-tabareau-2025} – encoding
  inductive families using regular inductive types and identity proofs
  – is adapted for use with erased identity proofs.
  (The text only treats the case of vectors.)
  The technique uses erased matches for identity types
  (§~\ref{sec:fording}).
\end{itemize}

\paragraph{Formalisation}

All results have been formalised in Agda, and the code has been made
available \ifAnonymous{as supplementary material}{to download
  \citep{danielsson-2026}}.
The results in §§~\ref{sec:typing}-\ref{sec:no-box-cong} have been
proved in the context of a formalisation of graded type theory
\citep{abel-et-al-2023,abel-et-al-2026}, while those in
§§~\ref{sec:quotients}–\ref{sec:fording} and
§§~\ref{sec:modalities}–\ref{sec:box-cong-from-funext} are proved for
Agda.
Note that there are minor differences between the code and the
presentation in the text.

The graded type theory formalisation of \citet{abel-et-al-2023} builds
on earlier work for non-graded type theory by \citet{abel-et-al-2017},
extended by Oskar Eriksson, Gaëtan Gilbert and Wojciech Nawrocki.
The graded type theory formalisation has in turn been extended with a
weak unit type (Eriksson), top-level, possibly opaque definitions
\citep{danielsson-geng-2025}, and universe polymorphism
\citep{danielsson-et-al-2026}.

\section{Erasure}
\label{sec:erasure}

Let me start by presenting parts of Agda's support for erasure (the
design of which was inspired by the work of \citet{mcbride-2016} and
\citet{atkey-2018}) through some examples.

Erased arguments are marked with $\AgdaSymbol{@0}$ (or
$\AgdaSymbol{@erased}$), and the idea is that no run-time decision
should be made based on an erased argument.
The following definition of \AgdaFunction{ok} is OK, because the
result of the function can be computed without making use of the
(implicit) type argument $\AgdaBound{A}$:\footnote{I sometimes omit
  declarations of implicit arguments, but not erased ones.}\MCbreak{}
\begin{minipage}[t]{0.49\linewidth}
\begin{code}%
\>[2]\AgdaFunction{ok}\AgdaSpace{}%
\AgdaSymbol{:}\AgdaSpace{}%
\AgdaSymbol{\{}\AgdaSymbol{@}\AgdaNumber{0}\AgdaSpace{}%
\AgdaBound{A}\AgdaSpace{}%
\AgdaSymbol{:}\AgdaSpace{}%
\AgdaPrimitive{Type}\AgdaSymbol{\}}\AgdaSpace{}%
\AgdaSymbol{→}\AgdaSpace{}%
\AgdaBound{A}\AgdaSpace{}%
\AgdaSymbol{→}\AgdaSpace{}%
\AgdaBound{A}\<%
\\
\>[2]\AgdaFunction{ok}\AgdaSpace{}%
\AgdaBound{x}\AgdaSpace{}%
\AgdaSymbol{=}\AgdaSpace{}%
\AgdaBound{x}\<%
\end{code}
\end{minipage}
\begin{minipage}[t]{0.49\linewidth}
\begin{code}[hide]%
\>[2]\AgdaKeyword{module}\AgdaSpace{}%
\AgdaModule{Hide\ensuremath{{}_{\mathrm{1}}}}\AgdaSpace{}%
\AgdaKeyword{where}\<%
\\
\>[2][@{}l@{\AgdaIndent{0}}]%
\>[4]\AgdaSymbol{@}\AgdaNumber{0}\<%
\end{code}
\begin{code}%
\>[4][@{}l@{\AgdaIndent{1}}]%
\>[6]\AgdaFunction{not‐ok}\AgdaSpace{}%
\AgdaSymbol{:}\AgdaSpace{}%
\AgdaSymbol{\{}\AgdaSymbol{@}\AgdaNumber{0}\AgdaSpace{}%
\AgdaBound{A}\AgdaSpace{}%
\AgdaSymbol{:}\AgdaSpace{}%
\AgdaPrimitive{Type}\AgdaSymbol{\}}\AgdaSpace{}%
\AgdaSymbol{→}\AgdaSpace{}%
\AgdaSymbol{@}\AgdaNumber{0}\AgdaSpace{}%
\AgdaBound{A}\AgdaSpace{}%
\AgdaSymbol{→}\AgdaSpace{}%
\AgdaBound{A}\<%
\\
\>[4]\AgdaFunction{not‐ok}\AgdaSpace{}%
\AgdaBound{x}\AgdaSpace{}%
\AgdaSymbol{=}\AgdaSpace{}%
\AgdaBound{x}\<%
\end{code}
\end{minipage}\\
However, \AgdaFunction{not‐ok} is not OK (in a non-erased context),
because the argument of type $\AgdaBound{A}$ is marked as erased.
Erased arguments and definitions can always be used in erased
contexts, which include type signatures and erased function arguments
(for instance the subexpression $\AgdaBound{e}$ of the expression
$\mbox{\AgdaBound{f\,} \AgdaBound{e}}$, if the first explicit argument
of $\AgdaBound{f\,}$ is erased).

\newcommand{\ErasedDef}{%
\begin{code}%
\>[2]\AgdaKeyword{data}\AgdaSpace{}%
\AgdaDatatype{Erased}\AgdaSpace{}%
\AgdaSymbol{(}\AgdaSymbol{@0}\AgdaSpace{}%
\AgdaBound{A}\AgdaSpace{}%
\AgdaSymbol{:}\AgdaSpace{}%
\AgdaPrimitive{Type}\AgdaSpace{}%
\AgdaGeneralizable{a}\AgdaSymbol{)}\AgdaSpace{}%
\AgdaSymbol{:}\AgdaSpace{}%
\AgdaPrimitive{Type}\AgdaSpace{}%
\AgdaBound{a}\AgdaSpace{}%
\AgdaKeyword{where}\<%
\\
\>[2][@{}l@{\AgdaIndent{0}}]%
\>[4]\AgdaOperator{\AgdaInductiveConstructor{[\AgdaUnderscore{}]}}\AgdaSpace{}%
\AgdaSymbol{:}\AgdaSpace{}%
\AgdaSymbol{@0}\AgdaSpace{}%
\AgdaBound{A}\AgdaSpace{}%
\AgdaSymbol{→}\AgdaSpace{}%
\AgdaDatatype{Erased}\AgdaSpace{}%
\AgdaBound{A}\<%
\end{code}}

\begin{code}[hide]%
\>[2]\AgdaKeyword{postulate}\<%
\\
\>[2][@{}l@{\AgdaIndent{0}}]%
\>[4]\AgdaPostulate{\AgdaUnderscore{}}%
\>[665I]\AgdaSymbol{:}\<%
\end{code}
\newcommand{\ErasedA}{%
\mbox{\begin{code}[inline]%
\>[.][@{}l@{}]\<[665I]%
\>[6]\AgdaDatatype{Erased}\AgdaSpace{}%
\AgdaGeneralizable{A}\<%
\end{code}}}

One key type in this text is \Erased{}
\citep{mishra-linger-2008}.
The data type \ErasedA{} has a single constructor \boxop{} (``box'')
that takes a single, erased argument of type \AgdaBound{A} – it is
basically a type-level variant of \AgdaSymbol{@0}:\MCbreak{}
\begin{minipage}[t]{0.59\linewidth}
\ErasedDef{}%
\end{minipage}
\begin{minipage}[t]{0.39\linewidth}
\begin{code}%
\>[2]\AgdaSymbol{@}\AgdaNumber{0}\AgdaSpace{}%
\AgdaFunction{erased}\AgdaSpace{}%
\AgdaSymbol{:}\AgdaSpace{}%
\AgdaDatatype{Erased}\AgdaSpace{}%
\AgdaGeneralizable{A}\AgdaSpace{}%
\AgdaSymbol{→}\AgdaSpace{}%
\AgdaGeneralizable{A}\<%
\\
\>[2]\AgdaFunction{erased}\AgdaSpace{}%
\AgdaOperator{\AgdaInductiveConstructor{[}}\AgdaSpace{}%
\AgdaBound{x}\AgdaSpace{}%
\AgdaOperator{\AgdaInductiveConstructor{]}}\AgdaSpace{}%
\AgdaSymbol{=}\AgdaSpace{}%
\AgdaBound{x}\<%
\end{code}
\end{minipage}\\
\Erased{} comes with the projection function \AgdaFunction{erased},
which is erased.
Erased definitions like \AgdaFunction{erased} can only be used in
erased contexts, but their bodies can make use of anything that is
erased.

Note that the type argument \AgdaBound{A} of \Erased{} is erased.
This means that \Erased{}, seen as a function from types to types,
takes an erased argument.
This is not necessarily OK for all types and arguments, see the
discussion of \AgdaDatatype{Id\ensuremath{{}_{\mathrm{0}}}} in §~\ref{sec:box-cong};
\citet{abel-et-al-2021} provide more examples.

This text is mainly focused on a variant of \Erased{} without
η-equality, like the one above.
Initially I investigated a variant with η-equality.
However, in a setting with linear types such a type may not work very
well \citep{abel-et-al-2023}: the η-long normal form of the linear
identity function on \ErasedA{} is basically
\begin{code}[hide]%
\>[2]\AgdaFunction{\AgdaUnderscore{}}\AgdaSpace{}%
\AgdaSymbol{:}\AgdaSpace{}%
\AgdaDatatype{Erased}\AgdaSpace{}%
\AgdaGeneralizable{A}\AgdaSpace{}%
\AgdaSymbol{→}\AgdaSpace{}%
\AgdaDatatype{Erased}\AgdaSpace{}%
\AgdaGeneralizable{A}\<%
\\
\>[2]\AgdaSymbol{\AgdaUnderscore{}}%
\>[684I]\AgdaSymbol{=}\<%
\end{code}
\mbox{\begin{code}[inline]%
\>[.][@{}l@{}]\<[684I]%
\>[4]\AgdaSymbol{λ}\AgdaSpace{}%
\AgdaBound{x}\AgdaSpace{}%
\AgdaSymbol{→}\AgdaSpace{}%
\AgdaOperator{\AgdaInductiveConstructor{[}}\AgdaSpace{}%
\AgdaFunction{erased}\AgdaSpace{}%
\AgdaBound{x}\AgdaSpace{}%
\AgdaOperator{\AgdaInductiveConstructor{]}}\<%
\end{code}},
which is arguably not a linear function because the argument of the
box op\-er\-a\-tor is erased.
For that reason I have chosen to focus on a variant of \Erased{}
without η-equality.

Pattern matching on erased arguments is not allowed in non-erased
contexts for data types with two or more constructors.
However, such \emph{erased matches} \citep{abel-et-al-2023} are
allowed for the empty type, a data type with zero constructors.
Agda optionally allows erased matches for non-indexed,
single-constructor data types.
If such a match is allowed, then all the constructor's arguments are
treated as erased in the right-hand side.
For instance, \AgdaFunction{fst'} is disallowed, but
\AgdaFunction{fst''} is allowed:\MCbreak{}
\begin{minipage}[t]{0.49\linewidth}
\begin{code}[hide]%
\>[2]\AgdaSymbol{@}\AgdaNumber{0}\<%
\end{code}
\begin{code}%
\>[2][@{}l@{\AgdaIndent{1}}]%
\>[4]\AgdaFunction{fst'}\AgdaSpace{}%
\AgdaSymbol{:}\AgdaSpace{}%
\AgdaSymbol{@}\AgdaNumber{0}\AgdaSpace{}%
\AgdaGeneralizable{A}\AgdaSpace{}%
\AgdaOperator{\AgdaFunction{×}}\AgdaSpace{}%
\AgdaGeneralizable{B}\AgdaSpace{}%
\AgdaSymbol{→}\AgdaSpace{}%
\AgdaGeneralizable{A}\<%
\\
\>[2]\AgdaFunction{fst'}\AgdaSpace{}%
\AgdaSymbol{(}\AgdaBound{x}\AgdaSpace{}%
\AgdaOperator{\AgdaInductiveConstructor{,}}\AgdaSpace{}%
\AgdaSymbol{\AgdaUnderscore{})}\AgdaSpace{}%
\AgdaSymbol{=}\AgdaSpace{}%
\AgdaBound{x}\<%
\end{code}
\end{minipage}
\begin{minipage}[t]{0.49\linewidth}
\begin{code}%
\>[2]\AgdaFunction{fst''}\AgdaSpace{}%
\AgdaSymbol{:}\AgdaSpace{}%
\AgdaSymbol{@}\AgdaNumber{0}\AgdaSpace{}%
\AgdaGeneralizable{A}\AgdaSpace{}%
\AgdaOperator{\AgdaFunction{×}}\AgdaSpace{}%
\AgdaGeneralizable{B}\AgdaSpace{}%
\AgdaSymbol{→}\AgdaSpace{}%
\AgdaDatatype{Erased}\AgdaSpace{}%
\AgdaGeneralizable{A}\<%
\\
\>[2]\AgdaFunction{fst''}\AgdaSpace{}%
\AgdaSymbol{(}\AgdaBound{x}\AgdaSpace{}%
\AgdaOperator{\AgdaInductiveConstructor{,}}\AgdaSpace{}%
\AgdaSymbol{\AgdaUnderscore{})}\AgdaSpace{}%
\AgdaSymbol{=}\AgdaSpace{}%
\AgdaOperator{\AgdaInductiveConstructor{[}}\AgdaSpace{}%
\AgdaBound{x}\AgdaSpace{}%
\AgdaOperator{\AgdaInductiveConstructor{]}}\<%
\end{code}
\end{minipage}

\section{Quotient Types}
\label{sec:quotients}

Higher inductive types \citep{hott-2013} are inductive types that can
have \emph{higher constructors} imposing equalities on the values of
the type (or equalities between equalities of values, and so on).
One example is given by \emph{quotient types}
\citep{hofmann-1995,hott-2013}.
Quotient types are used in mathematics, but can also be used for
programming.
For instance, efficient data structures often allow a single logical
value to be represented in multiple ways.
Quotients allow one to make all representations that stand for the
same logical value propositionally equal (if ``standing for the same
logical value'' can be defined in a suitable way), and an advantage of
propositional equality over custom equivalence relations is that
propositional equality is necessarily preserved by all functions.

Erased postulates can be used to provide a simple implementation of
quotient types.
The technique presented below should work for more higher inductive
types, but the type theories presented in §~\ref{sec:typing} only have
one data type with higher constructors, that of quotients.
However, note that \emph{propositional truncation}, which is used
prominently in homotopy type theory \citep{hott-2013}, can be
implemented as a quotient with the trivial relation.

The following interface, along with a specific computation rule, is
shown to be safe (the interface is based on the set quotient HIT, due
to Qian and Mörtberg, in the cubical library):
\QuotAPI{}%
The type \QuotAR{} is \AgdaBound{A} quotiented by the relation
\AgdaBound{R}.
The \emph{point constructor} \quotbox{} takes a value to its
equivalence class.
The higher constructor \AgdaInductiveConstructor{resp} ensures that
the equivalence classes of related values are equal, and the higher
constructor \AgdaInductiveConstructor{set} ensures that the quotients
are \emph{set-truncated}, following \citet{hott-2013}: \Isset{} states
that \AgdaBound{A} is a \emph{set}, i.e.\ a type for which all
identity proofs are equal.
(If one chooses to work in a setting with uniqueness of identity
proofs, where every type is a set, then one can ignore this truncation
constructor and the corresponding argument of the eliminator.)

The higher constructors are erased.
In turn the corresponding arguments of the eliminator
\AgdaPostulate{qrec} are also erased.
The eliminator computes in the usual way when applied to the point
constructor (\qrecdef{}), but no computation rules are associated to
the higher constructors, unlike quotient types implemented in cubical
type theory \citep{vezzosi-et-al-2019}.
However, cubical type theory is arguably rather complicated.
This interface – a variant of Licata's trick \citep{hott-2013} –
should be comparatively easy to implement given a type theory with
suitable erasure annotations.

A drawback of this simple interface is that terms can get stuck.
However, the fact that the higher constructors are erased ensures that
programs of type $ℕ$ will compute to the correct numeral, and this is
true also in the presence of consistent erased postulates (see
Theorem \ref{thm:soundness-of-erasure}).

\begin{code}[hide]%
\>[2]\AgdaFunction{Is‐prop}\AgdaSpace{}%
\AgdaSymbol{:}\AgdaSpace{}%
\AgdaPrimitive{Type}\AgdaSpace{}%
\AgdaGeneralizable{a}\AgdaSpace{}%
\AgdaSymbol{→}\AgdaSpace{}%
\AgdaPrimitive{Type}\AgdaSpace{}%
\AgdaGeneralizable{a}\<%
\end{code}
\newcommand{\IsPropositionDef}{%
\begin{code}[inline]%
\>[2]\AgdaFunction{Is‐prop}\AgdaSpace{}%
\AgdaBound{A}\AgdaSpace{}%
\AgdaSymbol{=}\AgdaSpace{}%
\AgdaSymbol{(}\AgdaBound{x}\AgdaSpace{}%
\AgdaBound{y}\AgdaSpace{}%
\AgdaSymbol{:}\AgdaSpace{}%
\AgdaBound{A}\AgdaSymbol{)}\AgdaSpace{}%
\AgdaSymbol{→}\AgdaSpace{}%
\AgdaDatatype{Id}\AgdaSpace{}%
\AgdaBound{A}\AgdaSpace{}%
\AgdaBound{x}\AgdaSpace{}%
\AgdaBound{y}\<%
\end{code}}

\begin{code}[hide]%
\>[2]\AgdaKeyword{postulate}\<%
\\
\>[2][@{}l@{\AgdaIndent{0}}]%
\>[4]\AgdaPostulate{Is‐equivalence‐relation}\AgdaSpace{}%
\AgdaSymbol{:}\AgdaSpace{}%
\AgdaSymbol{(}\AgdaGeneralizable{A}\AgdaSpace{}%
\AgdaSymbol{→}\AgdaSpace{}%
\AgdaGeneralizable{A}\AgdaSpace{}%
\AgdaSymbol{→}\AgdaSpace{}%
\AgdaPrimitive{Type}\AgdaSpace{}%
\AgdaGeneralizable{a}\AgdaSymbol{)}\AgdaSpace{}%
\AgdaSymbol{→}\AgdaSpace{}%
\AgdaPrimitive{Type}\AgdaSpace{}%
\AgdaGeneralizable{a}\<%
\end{code}

One postulate that one may in particular want to use is that of
\emph{propositional extensionality}:
\begin{code}%
\>[2]\AgdaKeyword{postulate}\AgdaSpace{}%
\AgdaSymbol{@}\AgdaNumber{0}\AgdaSpace{}%
\AgdaPostulate{propext}\AgdaSpace{}%
\AgdaSymbol{:}\AgdaSpace{}%
\AgdaFunction{Is‐prop}\AgdaSpace{}%
\AgdaGeneralizable{A}\AgdaSpace{}%
\AgdaSymbol{→}\AgdaSpace{}%
\AgdaFunction{Is‐prop}\AgdaSpace{}%
\AgdaGeneralizable{B}\AgdaSpace{}%
\AgdaSymbol{→}\AgdaSpace{}%
\AgdaGeneralizable{A}\AgdaSpace{}%
\AgdaOperator{\AgdaRecord{⇔}}\AgdaSpace{}%
\AgdaGeneralizable{B}\AgdaSpace{}%
\AgdaSymbol{→}\AgdaSpace{}%
\AgdaDatatype{Id}\AgdaSpace{}%
\AgdaSymbol{(}\AgdaPrimitive{Type}\AgdaSpace{}%
\AgdaGeneralizable{a}\AgdaSymbol{)}\AgdaSpace{}%
\AgdaGeneralizable{A}\AgdaSpace{}%
\AgdaGeneralizable{B}\<%
\end{code}
A type is a proposition if all its elements are equal
\citep{hott-2013}: \mbox{\IsPropositionDef{}}.
Propositional extensionality, which follows from univalence, ensures
that any logically equivalent propositions in the same universe are
equal.
Using erased propositional extensionality and erased function
extensionality (sometimes abbreviated to propext and funext) one can
prove that certain quotient relations \AgdaBound{R} are
\emph{effective} \citep{hofmann-1995,hott-2013,chapman-et-al-2015},
formulated in the following way:
\begin{code}[hide]%
\>[2]\AgdaKeyword{postulate}\<%
\\
\>[2][@{}l@{\AgdaIndent{0}}]%
\>[4]\AgdaPostulate{\AgdaUnderscore{}}%
\>[763I]\AgdaSymbol{:}\AgdaSpace{}%
\AgdaKeyword{let}\AgdaSpace{}%
\AgdaKeyword{open}\AgdaSpace{}%
\AgdaModule{Quotient}\AgdaSpace{}%
\AgdaKeyword{in}\<%
\end{code}
\mbox{\begin{code}[inline]%
\>[.][@{}l@{}]\<[763I]%
\>[6]\AgdaDatatype{Id}\AgdaSpace{}%
\AgdaSymbol{(}\AgdaGeneralizable{A}\AgdaSpace{}%
\AgdaOperator{\AgdaDatatype{/}}\AgdaSpace{}%
\AgdaGeneralizable{R}\AgdaSymbol{)}\AgdaSpace{}%
\AgdaOperator{\AgdaInductiveConstructor{[}}\AgdaSpace{}%
\AgdaGeneralizable{x}\AgdaSpace{}%
\AgdaOperator{\AgdaInductiveConstructor{]}}\AgdaSpace{}%
\AgdaOperator{\AgdaInductiveConstructor{[}}\AgdaSpace{}%
\AgdaGeneralizable{y}\AgdaSpace{}%
\AgdaOperator{\AgdaInductiveConstructor{]}}\AgdaSpace{}%
\AgdaSymbol{→}\AgdaSpace{}%
\AgdaGeneralizable{R}\AgdaSpace{}%
\AgdaGeneralizable{x}\AgdaSpace{}%
\AgdaGeneralizable{y}\<%
\end{code}}.
In erased contexts all propositional equivalence relations are
effective, and one can even prove something that holds in non-erased
contexts:
\begin{code}[hide]%
\>[2]\AgdaKeyword{postulate}\<%
\\
\>[2][@{}l@{\AgdaIndent{0}}]%
\>[4]\AgdaPostulate{\AgdaUnderscore{}}%
\>[781I]\AgdaSymbol{:}\AgdaSpace{}%
\AgdaKeyword{let}\AgdaSpace{}%
\AgdaKeyword{open}\AgdaSpace{}%
\AgdaModule{Quotient}\AgdaSpace{}%
\AgdaKeyword{in}\<%
\end{code}
\begin{code}%
\>[.][@{}l@{}]\<[781I]%
\>[6]\AgdaSymbol{@0}\AgdaSpace{}%
\AgdaPostulate{Is‐equivalence‐relation}\AgdaSpace{}%
\AgdaGeneralizable{R}\AgdaSpace{}%
\AgdaSymbol{→}\AgdaSpace{}%
\AgdaSymbol{@0}\AgdaSpace{}%
\AgdaFunction{Is‐prop}\AgdaSpace{}%
\AgdaSymbol{(}\AgdaGeneralizable{R}\AgdaSpace{}%
\AgdaGeneralizable{x}\AgdaSpace{}%
\AgdaGeneralizable{y}\AgdaSymbol{)}\AgdaSpace{}%
\AgdaSymbol{→}\AgdaSpace{}%
\AgdaGeneralizable{R}\AgdaSpace{}%
\AgdaGeneralizable{x}\AgdaSpace{}%
\AgdaGeneralizable{x}\AgdaSpace{}%
\AgdaSymbol{→}\AgdaSpace{}%
\AgdaDatatype{Id}\AgdaSpace{}%
\AgdaSymbol{(}\AgdaGeneralizable{A}\AgdaSpace{}%
\AgdaOperator{\AgdaDatatype{/}}\AgdaSpace{}%
\AgdaGeneralizable{R}\AgdaSymbol{)}\AgdaSpace{}%
\AgdaOperator{\AgdaInductiveConstructor{[}}\AgdaSpace{}%
\AgdaGeneralizable{x}\AgdaSpace{}%
\AgdaOperator{\AgdaInductiveConstructor{]}}\AgdaSpace{}%
\AgdaOperator{\AgdaInductiveConstructor{[}}\AgdaSpace{}%
\AgdaGeneralizable{y}\AgdaSpace{}%
\AgdaOperator{\AgdaInductiveConstructor{]}}\AgdaSpace{}%
\AgdaSymbol{→}\AgdaSpace{}%
\AgdaGeneralizable{R}\AgdaSpace{}%
\AgdaGeneralizable{x}\AgdaSpace{}%
\AgdaGeneralizable{y}\<%
\end{code}

As an example of what effectiveness can be used for, let us try to
decide erased equality for \mbox{\QuotAR{}}.
What does ``decide erased equality'' mean?
This can be formulated using the type
\begin{code}[hide]%
\>[2]\AgdaFunction{Dec‐Erased}\AgdaSpace{}%
\AgdaSymbol{:}\AgdaSpace{}%
\AgdaPrimitive{Type}\AgdaSpace{}%
\AgdaGeneralizable{a}\AgdaSpace{}%
\AgdaSymbol{→}\AgdaSpace{}%
\AgdaPrimitive{Type}\AgdaSpace{}%
\AgdaGeneralizable{a}\<%
\end{code}
\begin{code}[inline]%
\>[2]\AgdaFunction{Dec‐Erased}\AgdaSpace{}%
\AgdaBound{A}\AgdaSpace{}%
\AgdaSymbol{=}\AgdaSpace{}%
\AgdaDatatype{Erased}\AgdaSpace{}%
\AgdaBound{A}\AgdaSpace{}%
\AgdaOperator{\AgdaDatatype{⊎}}\AgdaSpace{}%
\AgdaDatatype{Erased}\AgdaSpace{}%
\AgdaSymbol{(}\AgdaOperator{\AgdaFunction{¬}}\AgdaSpace{}%
\AgdaBound{A}\AgdaSymbol{)}\<%
\end{code}.
A value of this type gives you either an erased proof that
\AgdaBound{A} is inhabited, or an erased proof that \AgdaBound{A} is
not inhabited.
Now let us try to prove
\begin{code}[hide]%
\>[2]\AgdaKeyword{postulate}\<%
\\
\>[2][@{}l@{\AgdaIndent{0}}]%
\>[4]\AgdaPostulate{\AgdaUnderscore{}}%
\>[827I]\AgdaSymbol{:}\AgdaSpace{}%
\AgdaKeyword{let}\AgdaSpace{}%
\AgdaKeyword{open}\AgdaSpace{}%
\AgdaModule{Quotient}\AgdaSpace{}%
\AgdaKeyword{in}\<%
\end{code}
\mbox{\begin{code}[inline]%
\>[.][@{}l@{}]\<[827I]%
\>[6]\AgdaSymbol{(∀}\AgdaSpace{}%
\AgdaBound{x}\AgdaSpace{}%
\AgdaBound{y}\AgdaSpace{}%
\AgdaSymbol{→}\AgdaSpace{}%
\AgdaFunction{Dec‐Erased}\AgdaSpace{}%
\AgdaSymbol{(}\AgdaGeneralizable{R}\AgdaSpace{}%
\AgdaBound{x}\AgdaSpace{}%
\AgdaBound{y}\AgdaSymbol{))}\AgdaSpace{}%
\AgdaSymbol{→}\<%
\end{code}}
\mbox{\begin{code}[inline]%
\>[6]\AgdaSymbol{∀}\AgdaSpace{}%
\AgdaBound{x}\AgdaSpace{}%
\AgdaBound{y}\AgdaSpace{}%
\AgdaSymbol{→}\AgdaSpace{}%
\AgdaFunction{Dec‐Erased}\AgdaSpace{}%
\AgdaSymbol{(}\AgdaDatatype{Id}\AgdaSpace{}%
\AgdaSymbol{(}\AgdaGeneralizable{A}\AgdaSpace{}%
\AgdaOperator{\AgdaDatatype{/}}\AgdaSpace{}%
\AgdaGeneralizable{R}\AgdaSymbol{)}\AgdaSpace{}%
\AgdaBound{x}\AgdaSpace{}%
\AgdaBound{y}\AgdaSymbol{)}\<%
\end{code}}.
We use \AgdaFunction{qrec} to match on \AgdaBound{x} and
\AgdaBound{y}: the arguments corresponding to the higher constructors
can be filled in using the fact that, in erased contexts, erased
funext implies that
\begin{code}[hide]%
\>[2]\AgdaKeyword{postulate}\<%
\\
\>[2][@{}l@{\AgdaIndent{0}}]%
\>[4]\AgdaPostulate{\AgdaUnderscore{}}%
\>[850I]\AgdaSymbol{:}\AgdaSpace{}%
\AgdaKeyword{let}\AgdaSpace{}%
\AgdaKeyword{open}\AgdaSpace{}%
\AgdaModule{Quotient}\AgdaSpace{}%
\AgdaKeyword{in}\<%
\end{code}
\begin{code}[inline*]%
\>[.][@{}l@{}]\<[850I]%
\>[6]\AgdaFunction{Dec‐Erased}\AgdaSpace{}%
\AgdaSymbol{(}\AgdaDatatype{Id}\AgdaSpace{}%
\AgdaSymbol{(}\AgdaGeneralizable{A}\AgdaSpace{}%
\AgdaOperator{\AgdaDatatype{/}}\AgdaSpace{}%
\AgdaGeneralizable{R}\AgdaSymbol{)}\AgdaSpace{}%
\AgdaGeneralizable{x}\AgdaSpace{}%
\AgdaGeneralizable{y}\AgdaSymbol{)}\<%
\end{code}
is a proposition (note that every proposition is a set).
We get
\begin{code}[hide]%
\>[2]\AgdaKeyword{postulate}\<%
\\
\>[2][@{}l@{\AgdaIndent{0}}]%
\>[4]\AgdaPostulate{\AgdaUnderscore{}}%
\>[861I]\AgdaSymbol{:}\AgdaSpace{}%
\AgdaOperator{\AgdaGeneralizable{Has‐type}}\<%
\end{code}
\mbox{\begin{code}[inline]%
\>[.][@{}l@{}]\<[861I]%
\>[6]\AgdaOperator{\AgdaInductiveConstructor{[}}\AgdaSpace{}%
\AgdaGeneralizable{x′}\AgdaSpace{}%
\AgdaOperator{\AgdaInductiveConstructor{]}}\<%
\end{code}}
and
\begin{code}[hide]%
\>[2]\AgdaKeyword{postulate}\<%
\\
\>[2][@{}l@{\AgdaIndent{0}}]%
\>[4]\AgdaPostulate{\AgdaUnderscore{}}%
\>[865I]\AgdaSymbol{:}\AgdaSpace{}%
\AgdaOperator{\AgdaGeneralizable{Has‐type}}\<%
\end{code}
\mbox{\begin{code}[inline]%
\>[.][@{}l@{}]\<[865I]%
\>[6]\AgdaOperator{\AgdaInductiveConstructor{[}}\AgdaSpace{}%
\AgdaGeneralizable{y′}\AgdaSpace{}%
\AgdaOperator{\AgdaInductiveConstructor{]}}\<%
\end{code}},
and check whether
\begin{code}[hide]%
\>[2]\AgdaKeyword{postulate}\<%
\\
\>[2][@{}l@{\AgdaIndent{0}}]%
\>[4]\AgdaPostulate{\AgdaUnderscore{}}%
\>[869I]\AgdaSymbol{:}\<%
\end{code}
\begin{code}[inline*]%
\>[.][@{}l@{}]\<[869I]%
\>[6]\AgdaGeneralizable{R}\AgdaSpace{}%
\AgdaGeneralizable{x′}\AgdaSpace{}%
\AgdaGeneralizable{y′}\<%
\end{code}
holds.
If it does, then we use \AgdaInductiveConstructor{resp} to conclude that
\AgdaBound{x} is equal to \AgdaBound{y}.
If not, then we can conclude that \AgdaBound{x} is not equal to
\AgdaBound{y} if \AgdaBound{R} is effective in erased contexts.

In order to demonstrate that the quotient interface presented here
works in practice I have implemented finite sets as red-black trees
quotiented by set equivalence.
(I implemented the empty set, insertion, deletion, the membership
relation, and a membership test, along with proofs of
correctness.)

The red-black trees are implemented following \citet{okasaki-1999},
using a technique due to \citet{mcbride-2014} to handle ordering
constraints.
They are parametrised by a strict total order with a comparison
function of type
\begin{code}[hide]%
\>[2]\AgdaKeyword{postulate}\<%
\\
\>[2][@{}l@{\AgdaIndent{0}}]%
\>[4]\AgdaPostulate{\AgdaUnderscore{}}%
\>[872I]\AgdaSymbol{:}\AgdaSpace{}%
\AgdaSymbol{(}\AgdaOperator{\AgdaBound{\AgdaUnderscore{}<\AgdaUnderscore{}}}\AgdaSpace{}%
\AgdaSymbol{:}\AgdaSpace{}%
\AgdaGeneralizable{A}\AgdaSpace{}%
\AgdaSymbol{→}\AgdaSpace{}%
\AgdaGeneralizable{A}\AgdaSpace{}%
\AgdaSymbol{→}\AgdaSpace{}%
\AgdaPrimitive{Type}\AgdaSpace{}%
\AgdaGeneralizable{a}\AgdaSymbol{)}\AgdaSpace{}%
\AgdaSymbol{→}\<%
\end{code}
\begin{code}[inline]%
\>[.][@{}l@{}]\<[872I]%
\>[6]\AgdaSymbol{∀}\AgdaSpace{}%
\AgdaBound{x}\AgdaSpace{}%
\AgdaBound{y}\AgdaSpace{}%
\AgdaSymbol{→}\AgdaSpace{}%
\AgdaDatatype{Erased}\AgdaSpace{}%
\AgdaSymbol{(}\AgdaBound{x}\AgdaSpace{}%
\AgdaOperator{\AgdaBound{<}}\AgdaSpace{}%
\AgdaBound{y}\AgdaSymbol{)}\AgdaSpace{}%
\AgdaOperator{\AgdaDatatype{⊎}}\AgdaSpace{}%
\AgdaDatatype{Erased}\AgdaSpace{}%
\AgdaSymbol{(}\AgdaDatatype{Id}\AgdaSpace{}%
\AgdaGeneralizable{A}\AgdaSpace{}%
\AgdaBound{x}\AgdaSpace{}%
\AgdaBound{y}\AgdaSymbol{)}\AgdaSpace{}%
\AgdaOperator{\AgdaDatatype{⊎}}\AgdaSpace{}%
\AgdaDatatype{Erased}\AgdaSpace{}%
\AgdaSymbol{(}\AgdaBound{y}\AgdaSpace{}%
\AgdaOperator{\AgdaBound{<}}\AgdaSpace{}%
\AgdaBound{x}\AgdaSymbol{)}\<%
\end{code},
and the tree invariants are erased.

\begin{code}[hide]%
\>[2]\AgdaKeyword{module}\AgdaSpace{}%
\AgdaModule{\AgdaUnderscore{}}\AgdaSpace{}%
\AgdaSymbol{(}\AgdaBound{a}\AgdaSpace{}%
\AgdaSymbol{:}\AgdaSpace{}%
\AgdaPostulate{Level}\AgdaSymbol{)}\AgdaSpace{}%
\AgdaSymbol{(}\AgdaBound{A}\AgdaSpace{}%
\AgdaSymbol{:}\AgdaSpace{}%
\AgdaPrimitive{Type}\AgdaSpace{}%
\AgdaBound{a}\AgdaSymbol{)}\AgdaSpace{}%
\AgdaKeyword{where}\<%
\\
\>[2][@{}l@{\AgdaIndent{0}}]%
\>[4]\AgdaKeyword{open}\AgdaSpace{}%
\AgdaModule{Quotient}\<%
\\
\\[\AgdaEmptyExtraSkip]%
\>[4]\AgdaKeyword{postulate}\<%
\\
\>[4][@{}l@{\AgdaIndent{0}}]%
\>[6]\AgdaPostulate{Tree}\AgdaSpace{}%
\AgdaSymbol{:}\AgdaSpace{}%
\AgdaPrimitive{Type}\AgdaSpace{}%
\AgdaBound{a}\<%
\end{code}
\newcommand{\treeMembershipType}{%
  \begin{code}[inline]%
\>[6]\AgdaOperator{\AgdaPostulate{\AgdaUnderscore{}∈ᵀ\AgdaUnderscore{}}}\AgdaSpace{}%
\AgdaSymbol{:}\AgdaSpace{}%
\AgdaBound{A}\AgdaSpace{}%
\AgdaSymbol{→}\AgdaSpace{}%
\AgdaPostulate{Tree}\AgdaSpace{}%
\AgdaSymbol{→}\AgdaSpace{}%
\AgdaPrimitive{Type}\AgdaSpace{}%
\AgdaBound{a}\<%
\end{code}}

The finite sets are defined by
\begin{code}[hide]%
\>[4]\AgdaFunction{Set}\AgdaSpace{}%
\AgdaSymbol{:}\AgdaSpace{}%
\AgdaPrimitive{Type}\AgdaSpace{}%
\AgdaBound{a}\<%
\\
\>[4]\AgdaFunction{Set}\AgdaSpace{}%
\AgdaSymbol{=}\<%
\end{code}
\begin{code}[inline]%
\>[4][@{}l@{\AgdaIndent{1}}]%
\>[6]\AgdaPostulate{Tree}\AgdaSpace{}%
\AgdaOperator{\AgdaDatatype{/}}\AgdaSpace{}%
\AgdaSymbol{(λ}\AgdaSpace{}%
\AgdaBound{t\ensuremath{{}_{\mathrm{1}}}}\AgdaSpace{}%
\AgdaBound{t\ensuremath{{}_{\mathrm{2}}}}\AgdaSpace{}%
\AgdaSymbol{→}\AgdaSpace{}%
\AgdaSymbol{∀}\AgdaSpace{}%
\AgdaBound{x}\AgdaSpace{}%
\AgdaSymbol{→}\AgdaSpace{}%
\AgdaBound{x}\AgdaSpace{}%
\AgdaOperator{\AgdaPostulate{∈ᵀ}}\AgdaSpace{}%
\AgdaBound{t\ensuremath{{}_{\mathrm{1}}}}\AgdaSpace{}%
\AgdaOperator{\AgdaRecord{⇔}}\AgdaSpace{}%
\AgdaBound{x}\AgdaSpace{}%
\AgdaOperator{\AgdaPostulate{∈ᵀ}}\AgdaSpace{}%
\AgdaBound{t\ensuremath{{}_{\mathrm{2}}}}\AgdaSymbol{)}\<%
\end{code}.
Let me explain how tree membership (\treeMembershipType{}) can be
lifted to set membership.
One cannot simply define
\begin{code}[hide]%
\>[4]\AgdaOperator{\AgdaFunction{\AgdaUnderscore{}∈\AgdaUnderscore{}}}\AgdaSpace{}%
\AgdaSymbol{:}\AgdaSpace{}%
\AgdaBound{A}\AgdaSpace{}%
\AgdaSymbol{→}\AgdaSpace{}%
\AgdaFunction{Set}\AgdaSpace{}%
\AgdaSymbol{→}\AgdaSpace{}%
\AgdaPrimitive{Type}\AgdaSpace{}%
\AgdaBound{a}\<%
\end{code}
\mbox{\begin{code}[inline]%
\>[4]\AgdaBound{x}\AgdaSpace{}%
\AgdaOperator{\AgdaFunction{∈}}\AgdaSpace{}%
\AgdaOperator{\AgdaInductiveConstructor{[}}\AgdaSpace{}%
\AgdaBound{t}\AgdaSpace{}%
\AgdaOperator{\AgdaInductiveConstructor{]}}\AgdaSpace{}%
\AgdaSymbol{=}\AgdaSpace{}%
\AgdaBound{x}\AgdaSpace{}%
\AgdaOperator{\AgdaPostulate{∈ᵀ}}\AgdaSpace{}%
\AgdaBound{t}\<%
\end{code}},
because the eliminator \AgdaPostulate{qrec} requires that the motive
\AgdaBound{P} is a family of sets (in erased contexts), and in the
presence of univalence
\begin{code}[hide]%
\>[4]\AgdaKeyword{postulate}\<%
\\
\>[4][@{}l@{\AgdaIndent{0}}]%
\>[6]\AgdaPostulate{\AgdaUnderscore{}}%
\>[954I]\AgdaSymbol{:}\<%
\end{code}
\begin{code}[inline*]%
\>[.][@{}l@{}]\<[954I]%
\>[8]\AgdaPrimitive{Type}\AgdaSpace{}%
\AgdaBound{a}\<%
\end{code}
is not a set.
However, in erased contexts one can use propext and funext to prove
that the Σ-type
\begin{code}[hide]%
\>[4]\AgdaKeyword{postulate}\<%
\\
\>[4][@{}l@{\AgdaIndent{0}}]%
\>[6]\AgdaPostulate{\AgdaUnderscore{}}%
\>[956I]\AgdaSymbol{:}\<%
\end{code}
\begin{code}[inline*]%
\>[.][@{}l@{}]\<[956I]%
\>[8]\AgdaSymbol{(}%
\AgdaBound{A}\AgdaSpace{}%
\AgdaSymbol{:}\AgdaSpace{}%
\AgdaPrimitive{Type}\AgdaSpace{}%
\AgdaBound{a}\AgdaSymbol{)}\AgdaSpace{}%
\AgdaFunction{×}\AgdaSpace{}%
\AgdaDatatype{Erased}\AgdaSpace{}%
\AgdaSymbol{(}\AgdaFunction{Is‐prop}\AgdaSpace{}%
\AgdaBound{A}\AgdaSymbol{)}\<%
\end{code}
is a set, so one can use \AgdaFunction{qrec} to define types that are
propositions, and it suffices that one can prove that they are
propositions in erased contexts.
Such a proof is possible for tree membership.
If tree membership were not propositional, then one could have used
propositional truncation to turn the type into a proposition – and as
mentioned above propositional truncation can be implemented as a
quotient.

The use of \AgdaPostulate{qrec} in the implementation of
\AgdaFunction{\AgdaUnderscore{}∈\AgdaUnderscore{}} also requires that
the ``point function'' respects the relation (in erased contexts):
in the non-dependent case where \AgdaBound{P} is
\begin{code}[hide]%
\>[4]\AgdaKeyword{postulate}\<%
\\
\>[4][@{}l@{\AgdaIndent{0}}]%
\>[6]\AgdaPostulate{\AgdaUnderscore{}}%
\>[966I]\AgdaSymbol{:}\AgdaSpace{}%
\AgdaOperator{\AgdaGeneralizable{Has‐type[}}\AgdaSpace{}%
\AgdaSymbol{(}\AgdaBound{A}\AgdaSpace{}%
\AgdaOperator{\AgdaDatatype{/}}\AgdaSpace{}%
\AgdaGeneralizable{R}\AgdaSpace{}%
\AgdaSymbol{→}\AgdaSpace{}%
\AgdaPrimitive{Type}\AgdaSpace{}%
\AgdaGeneralizable{p}\AgdaSymbol{)}\AgdaSpace{}%
\AgdaOperator{\AgdaGeneralizable{]}}\<%
\end{code}
\begin{code}[inline*]%
\>[.][@{}l@{}]\<[966I]%
\>[8]\AgdaSymbol{λ}\AgdaSpace{}%
\AgdaBound{\AgdaUnderscore{}}\AgdaSpace{}%
\AgdaSymbol{→}\AgdaSpace{}%
\AgdaGeneralizable{B}\<%
\end{code}
the third explicit argument of \AgdaPostulate{qrec} can be simplified
to
\begin{code}[hide]%
\>[4]\AgdaKeyword{postulate}\<%
\\
\>[4][@{}l@{\AgdaIndent{0}}]%
\>[6]\AgdaPostulate{\AgdaUnderscore{}}%
\>[978I]\AgdaSymbol{:}\AgdaSpace{}%
\AgdaSymbol{(}\AgdaBound{f\,}\AgdaSpace{}%
\AgdaSymbol{:}\AgdaSpace{}%
\AgdaBound{A}\AgdaSpace{}%
\AgdaSymbol{→}\AgdaSpace{}%
\AgdaGeneralizable{B}\AgdaSymbol{)}\AgdaSpace{}%
\AgdaSymbol{→}\<%
\end{code}
\begin{code}[inline]%
\>[.][@{}l@{}]\<[978I]%
\>[8]\AgdaSymbol{∀}\AgdaSpace{}%
\AgdaSymbol{\{}\AgdaBound{x}\AgdaSpace{}%
\AgdaBound{y}\AgdaSymbol{\}}\AgdaSpace{}%
\AgdaSymbol{→}\AgdaSpace{}%
\AgdaGeneralizable{R}\AgdaSpace{}%
\AgdaBound{x}\AgdaSpace{}%
\AgdaBound{y}\AgdaSpace{}%
\AgdaSymbol{→}\AgdaSpace{}%
\AgdaDatatype{Id}\AgdaSpace{}%
\AgdaGeneralizable{B}\AgdaSpace{}%
\AgdaSymbol{(}\AgdaBound{f\,}\AgdaSpace{}%
\AgdaBound{x}\AgdaSymbol{)}\AgdaSpace{}%
\AgdaSymbol{(}\AgdaBound{f\,}\AgdaSpace{}%
\AgdaBound{y}\AgdaSymbol{)}\<%
\end{code}.
Again one can use propext, which ensures that
\begin{code}[hide]%
\>[4]\AgdaKeyword{postulate}\<%
\\
\>[4][@{}l@{\AgdaIndent{0}}]%
\>[6]\AgdaPostulate{\AgdaUnderscore{}}%
\>[998I]\AgdaSymbol{:}\<%
\end{code}
\mbox{\begin{code}[inline]%
\>[.][@{}l@{}]\<[998I]%
\>[8]\AgdaGeneralizable{x}\AgdaSpace{}%
\AgdaOperator{\AgdaPostulate{∈ᵀ}}\AgdaSpace{}%
\AgdaGeneralizable{t\ensuremath{{}_{\mathrm{1}}}}\AgdaSpace{}%
\AgdaOperator{\AgdaRecord{⇔}}\AgdaSpace{}%
\AgdaGeneralizable{x}\AgdaSpace{}%
\AgdaOperator{\AgdaPostulate{∈ᵀ}}\AgdaSpace{}%
\AgdaGeneralizable{t\ensuremath{{}_{\mathrm{2}}}}\<%
\end{code}}
implies
\begin{code}[hide]%
\>[4]\AgdaKeyword{postulate}\<%
\\
\>[4][@{}l@{\AgdaIndent{0}}]%
\>[6]\AgdaPostulate{\AgdaUnderscore{}}%
\>[1005I]\AgdaSymbol{:}\<%
\end{code}
\mbox{\begin{code}[inline]%
\>[.][@{}l@{}]\<[1005I]%
\>[8]\AgdaDatatype{Id}\AgdaSpace{}%
\AgdaSymbol{(}\AgdaPrimitive{Type}\AgdaSpace{}%
\AgdaBound{a}\AgdaSymbol{)}\AgdaSpace{}%
\AgdaSymbol{(}\AgdaGeneralizable{x}\AgdaSpace{}%
\AgdaOperator{\AgdaPostulate{∈ᵀ}}\AgdaSpace{}%
\AgdaGeneralizable{t\ensuremath{{}_{\mathrm{1}}}}\AgdaSymbol{)}\AgdaSpace{}%
\AgdaSymbol{(}\AgdaGeneralizable{x}\AgdaSpace{}%
\AgdaOperator{\AgdaPostulate{∈ᵀ}}\AgdaSpace{}%
\AgdaGeneralizable{t\ensuremath{{}_{\mathrm{2}}}}\AgdaSymbol{)}\<%
\end{code}}.
Using funext one can prove that
\begin{code}[hide]%
\>[4]\AgdaKeyword{postulate}\<%
\\
\>[4][@{}l@{\AgdaIndent{0}}]%
\>[6]\AgdaPostulate{\AgdaUnderscore{}}%
\>[1014I]\AgdaSymbol{:}\<%
\end{code}
\mbox{\begin{code}[inline]%
\>[.][@{}l@{}]\<[1014I]%
\>[8]\AgdaDatatype{Erased}\AgdaSpace{}%
\AgdaSymbol{(}\AgdaFunction{Is‐prop}\AgdaSpace{}%
\AgdaBound{A}\AgdaSymbol{)}\<%
\end{code}}
is a proposition, which allows us to conclude.

\section{Erased Postulates and Erased Matches}
\label{sec:erased-postulates-and-erased-matches}

The previous section illustrates one way of working with erasure:
``Proofs'', like the higher quotient constructors or the invariants of
the red-black tree data structure, are marked as erased.
``Programs'' are not erased, but can contain erased parts, like some
arguments of \AgdaFunction{qrec} or the results of the comparison
function used for the red-black trees.
This separation allows postulates like funext and propext to
be used in proofs, without affecting the computational behaviour of
the non-erased parts of a program.
However, sometimes one may wish to use an erased proof to show that
something non-erased is type-correct.

In order to illustrate this, let us discuss two definitions of vectors
(length-indexed lists):\MCbreak{}
\begin{minipage}[t]{0.58\linewidth}
\begin{code}%
\>[2]\AgdaKeyword{data}\AgdaSpace{}%
\AgdaDatatype{Vec\ensuremath{{}^{\mkern1mu\mathrm{D}}}}\AgdaSpace{}%
\AgdaSymbol{(}\AgdaBound{A}\AgdaSpace{}%
\AgdaSymbol{:}\AgdaSpace{}%
\AgdaPrimitive{Type}\AgdaSpace{}%
\AgdaGeneralizable{a}\AgdaSymbol{)}\AgdaSpace{}%
\AgdaSymbol{:}\AgdaSpace{}%
\AgdaSymbol{@0}\AgdaSpace{}%
\AgdaDatatype{ℕ}\AgdaSpace{}%
\AgdaSymbol{→}\AgdaSpace{}%
\AgdaPrimitive{Type}\AgdaSpace{}%
\AgdaBound{a}\AgdaSpace{}%
\AgdaKeyword{where}\<%
\\
\>[2][@{}l@{\AgdaIndent{0}}]%
\>[4]\AgdaInductiveConstructor{[\ensuremath{\mkern1.5mu}]}%
\>[9]\AgdaSymbol{:}\AgdaSpace{}%
\AgdaDatatype{Vec\ensuremath{{}^{\mkern1mu\mathrm{D}}}}\AgdaSpace{}%
\AgdaBound{A}\AgdaSpace{}%
\AgdaInductiveConstructor{zero}\<%
\\
\>[4]\AgdaOperator{\AgdaInductiveConstructor{\AgdaUnderscore{}∷\AgdaUnderscore{}}}%
\>[9]\AgdaSymbol{:}\AgdaSpace{}%
\AgdaSymbol{\{}\AgdaSymbol{@0}\AgdaSpace{}%
\AgdaBound{n}\AgdaSpace{}%
\AgdaSymbol{:}\AgdaSpace{}%
\AgdaDatatype{ℕ}\AgdaSymbol{\}}\AgdaSpace{}%
\AgdaSymbol{→}\AgdaSpace{}%
\AgdaBound{A}\AgdaSpace{}%
\AgdaSymbol{→}\AgdaSpace{}%
\AgdaDatatype{Vec\ensuremath{{}^{\mkern1mu\mathrm{D}}}}\AgdaSpace{}%
\AgdaBound{A}\AgdaSpace{}%
\AgdaBound{n}\AgdaSpace{}%
\AgdaSymbol{→}\AgdaSpace{}%
\AgdaDatatype{Vec\ensuremath{{}^{\mkern1mu\mathrm{D}}}}\AgdaSpace{}%
\AgdaBound{A}\AgdaSpace{}%
\AgdaSymbol{(}\AgdaInductiveConstructor{suc}\AgdaSpace{}%
\AgdaBound{n}\AgdaSymbol{)}\<%
\end{code}
\end{minipage}
\begin{minipage}[t]{0.41\linewidth}
\begin{code}%
\>[2]\AgdaFunction{Vec\ensuremath{{}^{\mkern2mu\mathrm{L}}}}\AgdaSpace{}%
\AgdaSymbol{:}\AgdaSpace{}%
\AgdaPrimitive{Type}\AgdaSpace{}%
\AgdaGeneralizable{a}\AgdaSpace{}%
\AgdaSymbol{→}\AgdaSpace{}%
\AgdaSymbol{@}\AgdaNumber{0}\AgdaSpace{}%
\AgdaDatatype{ℕ}\AgdaSpace{}%
\AgdaSymbol{→}\AgdaSpace{}%
\AgdaPrimitive{Type}\AgdaSpace{}%
\AgdaGeneralizable{a}\<%
\\
\>[2]\AgdaFunction{Vec\ensuremath{{}^{\mkern2mu\mathrm{L}}}}\AgdaSpace{}%
\AgdaBound{A}\AgdaSpace{}%
\AgdaBound{n}\AgdaSpace{}%
\AgdaSymbol{=}\AgdaSpace{}%
\AgdaSymbol{(}%
\AgdaBound{xs}\AgdaSpace{}%
\AgdaSymbol{:}\AgdaSpace{}%
\AgdaDatatype{List}\AgdaSpace{}%
\AgdaBound{A}\AgdaSymbol{)}\AgdaSpace{}%
\AgdaFunction{×}\<%
\\
\>[2][@{}l@{\AgdaIndent{0}}]%
\>[4]\AgdaDatatype{Erased}\AgdaSpace{}%
\AgdaSymbol{(}\AgdaDatatype{Id}\AgdaSpace{}%
\AgdaDatatype{ℕ}\AgdaSpace{}%
\AgdaSymbol{(}\AgdaFunction{length}\AgdaSpace{}%
\AgdaBound{xs}\AgdaSymbol{)}\AgdaSpace{}%
\AgdaBound{n}\AgdaSymbol{)}\<%
\end{code}
\end{minipage}\\
\AgdaDatatype{Vec\ensuremath{{}^{\mkern1mu\mathrm{D}}}} is an indexed data type.
Note that the (implicit) natural number argument of the cons
constructor is erased.
A goal here is to ensure that there is no need to keep the numbers
representing the vectors' lengths around at run-time, so all natural
numbers are erased in this example.
\AgdaFunction{Vec\ensuremath{{}^{\mkern2mu\mathrm{L}}}} uses plain lists, but pairs them up with erased
proofs showing that the lists have correct lengths.

It is easy to convert a list to a vector.
Can we convert from \AgdaFunction{Vec\ensuremath{{}^{\mkern2mu\mathrm{L}}}} to
\AgdaDatatype{Vec\ensuremath{{}^{\mkern1mu\mathrm{D}}}}?\MCbreak{}
\begin{minipage}[t]{0.47\linewidth}
\begin{code}%
\>[2]\AgdaFunction{conv\ensuremath{{}^{\mkern2mu\mathrm{L}}}}\AgdaSpace{}%
\AgdaSymbol{:}\AgdaSpace{}%
\AgdaSymbol{(}\AgdaBound{xs}\AgdaSpace{}%
\AgdaSymbol{:}\AgdaSpace{}%
\AgdaDatatype{List}\AgdaSpace{}%
\AgdaGeneralizable{A}\AgdaSymbol{)}\AgdaSpace{}%
\AgdaSymbol{→}\AgdaSpace{}%
\AgdaDatatype{Vec\ensuremath{{}^{\mkern1mu\mathrm{D}}}}\AgdaSpace{}%
\AgdaGeneralizable{A}\AgdaSpace{}%
\AgdaSymbol{(}\AgdaFunction{length}\AgdaSpace{}%
\AgdaBound{xs}\AgdaSymbol{)}\<%
\\
\>[2]\AgdaFunction{conv\ensuremath{{}^{\mkern2mu\mathrm{L}}}}%
\>[9]\AgdaInductiveConstructor{[\ensuremath{\mkern1.5mu}]}%
\>[19]\AgdaSymbol{=}\AgdaSpace{}%
\AgdaInductiveConstructor{[\ensuremath{\mkern1.5mu}]}\<%
\\
\>[2]\AgdaFunction{conv\ensuremath{{}^{\mkern2mu\mathrm{L}}}}%
\>[9]\AgdaSymbol{(}\AgdaBound{x}\AgdaSpace{}%
\AgdaOperator{\AgdaInductiveConstructor{∷}}\AgdaSpace{}%
\AgdaBound{xs}\AgdaSymbol{)}%
\>[19]\AgdaSymbol{=}\AgdaSpace{}%
\AgdaBound{x}\AgdaSpace{}%
\AgdaOperator{\AgdaInductiveConstructor{∷}}\AgdaSpace{}%
\AgdaFunction{conv\ensuremath{{}^{\mkern2mu\mathrm{L}}}}\AgdaSpace{}%
\AgdaBound{xs}\<%
\end{code}
\end{minipage}
\begin{code}[hide]%
\>[2]\AgdaKeyword{module}\AgdaSpace{}%
\AgdaModule{Hide\ensuremath{{}_{\mathrm{2}}}}\AgdaSpace{}%
\AgdaKeyword{where}\<%
\end{code}
\begin{minipage}[t]{0.51\linewidth}
\begin{code}%
\>[2][@{}l@{\AgdaIndent{1}}]%
\>[4]\AgdaFunction{conv\ensuremath{{}^{\mkern1mu\mathrm{V}}}}\AgdaSpace{}%
\AgdaSymbol{:}\AgdaSpace{}%
\AgdaSymbol{\{}\AgdaSymbol{@}\AgdaNumber{0}\AgdaSpace{}%
\AgdaBound{n}\AgdaSpace{}%
\AgdaSymbol{:}\AgdaSpace{}%
\AgdaDatatype{ℕ}\AgdaSymbol{\}}\AgdaSpace{}%
\AgdaSymbol{→}\AgdaSpace{}%
\AgdaFunction{Vec\ensuremath{{}^{\mkern2mu\mathrm{L}}}}\AgdaSpace{}%
\AgdaGeneralizable{A}\AgdaSpace{}%
\AgdaBound{n}\AgdaSpace{}%
\AgdaSymbol{→}\AgdaSpace{}%
\AgdaDatatype{Vec\ensuremath{{}^{\mkern1mu\mathrm{D}}}}\AgdaSpace{}%
\AgdaGeneralizable{A}\AgdaSpace{}%
\AgdaBound{n}\<%
\\
\>[4]\AgdaFunction{conv\ensuremath{{}^{\mkern1mu\mathrm{V}}}}\AgdaSpace{}%
\AgdaSymbol{(}\AgdaInductiveConstructor{[\ensuremath{\mkern1.5mu}]}%
\>[19]\AgdaOperator{\AgdaInductiveConstructor{,}}\AgdaSpace{}%
\AgdaOperator{\AgdaInductiveConstructor{[}}\AgdaSpace{}%
\AgdaBound{eq}\AgdaSpace{}%
\AgdaOperator{\AgdaInductiveConstructor{]}}\AgdaSymbol{)}\AgdaSpace{}%
\AgdaSymbol{=}\AgdaSpace{}%
\AgdaHole{?}\<%
\\
\>[4]\AgdaFunction{conv\ensuremath{{}^{\mkern1mu\mathrm{V}}}}\AgdaSpace{}%
\AgdaSymbol{(}\AgdaBound{x}\AgdaSpace{}%
\AgdaOperator{\AgdaInductiveConstructor{∷}}\AgdaSpace{}%
\AgdaBound{xs}%
\>[19]\AgdaOperator{\AgdaInductiveConstructor{,}}\AgdaSpace{}%
\AgdaOperator{\AgdaInductiveConstructor{[}}\AgdaSpace{}%
\AgdaBound{eq}\AgdaSpace{}%
\AgdaOperator{\AgdaInductiveConstructor{]}}\AgdaSymbol{)}\AgdaSpace{}%
\AgdaSymbol{=}\AgdaSpace{}%
\AgdaHole{?}\<%
\end{code}
\end{minipage}\\
In the nil case we should return something of type
\begin{code}[hide]%
\>[2]\AgdaKeyword{postulate}\<%
\\
\>[2][@{}l@{\AgdaIndent{0}}]%
\>[4]\AgdaPostulate{\AgdaUnderscore{}}%
\>[1117I]\AgdaSymbol{:}\<%
\end{code}
\begin{code}[inline]%
\>[.][@{}l@{}]\<[1117I]%
\>[6]\AgdaDatatype{Vec\ensuremath{{}^{\mkern1mu\mathrm{D}}}}\AgdaSpace{}%
\AgdaGeneralizable{A}\AgdaSpace{}%
\AgdaGeneralizable{n}\<%
\end{code},
given an \emph{erased} proof that shows that \AgdaBound{n} is equal to
zero.
Can we return \AgdaInductiveConstructor{[\ensuremath{\mkern1.5mu}]}?
That constructor has the index \AgdaInductiveConstructor{zero}, not
\AgdaBound{n}.
Can we use the function \AgdaFunction{subst} above to substitute
\AgdaBound{n} for \AgdaInductiveConstructor{zero}?
That function does not take an erased identity proof (the penultimate
argument is not annotated with \AgdaSymbol{@0}).\footnote{It is
  important for the discussion that the argument \AgdaBound{n} is
  erased: an erased identity proof between two non-erased natural
  numbers can be turned into a non-erased identity proof because
  equality of natural numbers is decidable.}
Is the following variant of \AgdaFunction{subst}, with an erased
identity proof, OK?
\begin{code}%
\>[2]\AgdaFunction{subst′}\AgdaSpace{}%
\AgdaSymbol{:}\AgdaSpace{}%
\AgdaSymbol{(}\AgdaBound{P}\AgdaSpace{}%
\AgdaSymbol{:}\AgdaSpace{}%
\AgdaGeneralizable{A}\AgdaSpace{}%
\AgdaSymbol{→}\AgdaSpace{}%
\AgdaPrimitive{Type}\AgdaSpace{}%
\AgdaGeneralizable{p}\AgdaSymbol{)}\AgdaSpace{}%
\AgdaSymbol{→}\AgdaSpace{}%
\AgdaSymbol{@}\AgdaNumber{0}\AgdaSpace{}%
\AgdaDatatype{Id}\AgdaSpace{}%
\AgdaGeneralizable{A}\AgdaSpace{}%
\AgdaGeneralizable{x}\AgdaSpace{}%
\AgdaGeneralizable{y}\AgdaSpace{}%
\AgdaSymbol{→}\AgdaSpace{}%
\AgdaBound{P}\AgdaSpace{}%
\AgdaGeneralizable{x}\AgdaSpace{}%
\AgdaSymbol{→}\AgdaSpace{}%
\AgdaBound{P}\AgdaSpace{}%
\AgdaGeneralizable{y}\<%
\\
\>[2]\AgdaFunction{subst′}\AgdaSpace{}%
\AgdaBound{P}\AgdaSpace{}%
\AgdaInductiveConstructor{refl}\AgdaSpace{}%
\AgdaBound{p}\AgdaSpace{}%
\AgdaSymbol{=}\AgdaSpace{}%
\AgdaBound{p}\<%
\end{code}
Maybe.
Note that this code involves a match on an erased argument: an erased
match.
It should not be possible to make run-time decisions based on erased
data, but in this case there is only one constructor –
\AgdaInductiveConstructor{refl} – so this might still make sense.
In the presence of (erased) univalence this code is not safe: it
allows one to construct two provably distinct booleans that only
differ in their erased parts (see \citet{abel-et-al-2021} or
§ \ref{sec:box-cong}).
However, there are settings in which it does make sense.

The soundness theorem presented by
\citet[Theorem 8.3]{abel-et-al-2023} supports erased matches for
single-constructor data types if there are no postulates: given a
program $t$ of type $ℕ$ with erased matches but no postulates, $t$
will reduce to a numeral, and the corresponding compiled program will
reduce to the same numeral.
However, the following statement does not hold in general: ``given a
program $t$ of type $ℕ$ with erased matches and erased postulates, if
the postulates are mutually consistent, then $t$ will reduce to a
numeral, and the corresponding compiled program will reduce to the
same numeral''.
In particular, the part ``$t$ will reduce to a numeral'' can fail if
an erased match gets stuck on a postulate.

One contribution of this work is a proof of soundness of erasure for a
class of programs with both erased matches and erased postulates
(Theorem \ref{thm:soundness-reflection}).
There are no restrictions on the erased match\-es, but the types of
the erased postulates must be inhabited in an extension of the type
theory with \emph{equality reflection}.
Under equality reflection, if there is some term of type
\mbox{\IdAxy{}}, then \AgdaBound{x} and \AgdaBound{y} are
\emph{judgementally} equal.
Type-checking is not decidable for Martin-Löf type theory with
equality reflection, and systems like Agda do not support this
feature.
However, it allows one to prove properties like function
extensionality and uniqueness of identity proofs (UIP: every type is a
set).
The statement of Theorem \ref{thm:soundness-reflection} is (roughly)
that, given a program $t$ of type $ℕ$ that uses erased matches and
erased postulates, if the postulates can be implemented in the type
theory extended with equality reflection, then there is a numeral $n$
such that the compiled program corresponding to $t$ reduces to $n$,
and furthermore $t$ with the implementations substituted for the
postulates is judgementally equal to $n$ in the extended type theory.

\section{Box-Cong}
\label{sec:box-cong}

As mentioned above, if unrestricted erased matches are allowed for the
identity type, then the theory becomes incompatible with (erased,
postulated) univalence \citep{abel-et-al-2021}. (A similar problem
affects Rocq with identity types in $\mathit{Prop}$
\citep{shulman-2012}.)
Using univalence one can turn bijections into equalities.
In particular, one can turn the not function into a proof that
$\AgdaDatatype{Bool}$ is equal to $\AgdaDatatype{Bool}$.
Let us use $\AgdaFunction{swap}$ to denote such a proof, constructed
using erased univalence, and therefore only available in erased
contexts.
Now consider the following code, which uses the function
\AgdaFunction{subst′} from
§ \ref{sec:erased-postulates-and-erased-matches}:\MCbreak{}
\begin{code}[hide]%
\>[2]\AgdaKeyword{postulate}\<%
\\
\>[2][@{}l@{\AgdaIndent{0}}]%
\>[4]\AgdaPostulate{swap}\AgdaSpace{}%
\AgdaSymbol{:}\AgdaSpace{}%
\AgdaDatatype{Id}\AgdaSpace{}%
\AgdaPrimitive{Type}\AgdaSpace{}%
\AgdaFunction{Bool}\AgdaSpace{}%
\AgdaFunction{Bool}\<%
\end{code}
\begin{minipage}[t]{0.47\linewidth}
\begin{code}%
\>[2]\AgdaFunction{should‐be‐true}\AgdaSpace{}%
\AgdaSymbol{:}\AgdaSpace{}%
\AgdaFunction{Bool}\<%
\\
\>[2]\AgdaFunction{should‐be‐true}\AgdaSpace{}%
\AgdaSymbol{=}\AgdaSpace{}%
\AgdaFunction{subst′}\AgdaSpace{}%
\AgdaSymbol{(λ}\AgdaSpace{}%
\AgdaBound{A}\AgdaSpace{}%
\AgdaSymbol{→}\AgdaSpace{}%
\AgdaBound{A}\AgdaSymbol{)}\AgdaSpace{}%
\AgdaInductiveConstructor{refl}\AgdaSpace{}%
\AgdaInductiveConstructor{true}\<%
\end{code}
\end{minipage}
\begin{minipage}[t]{0.49\linewidth}
\begin{code}%
\>[2]\AgdaFunction{should‐be‐false}\AgdaSpace{}%
\AgdaSymbol{:}\AgdaSpace{}%
\AgdaFunction{Bool}\<%
\\
\>[2]\AgdaFunction{should‐be‐false}\AgdaSpace{}%
\AgdaSymbol{=}\AgdaSpace{}%
\AgdaFunction{subst′}\AgdaSpace{}%
\AgdaSymbol{(λ}\AgdaSpace{}%
\AgdaBound{A}\AgdaSpace{}%
\AgdaSymbol{→}\AgdaSpace{}%
\AgdaBound{A}\AgdaSymbol{)}\AgdaSpace{}%
\AgdaPostulate{swap}\AgdaSpace{}%
\AgdaInductiveConstructor{true}\<%
\end{code}
\end{minipage}\\
The definition of \AgdaFunction{should‐be‐false} is ``OK'' because the
penultimate argument of \AgdaFunction{subst′} is erased.
The definition of \AgdaFunction{should‐be‐true} reduces to
\AgdaInductiveConstructor{true}.
Using properties of univalence one can also construct an erased proof
that shows that \AgdaFunction{should‐be‐false} is equal to
\AgdaInductiveConstructor{false} (and because equality is decidable
for $\AgdaDatatype{Bool}$ one can turn this into a non-erased proof).
However, the only difference between \AgdaFunction{should‐be‐true} and
\AgdaFunction{should‐be‐false} is the erased argument,\footnote{And
  perhaps details of how the implicit arguments are instantiated, but
  one could write them out in full. Similar comments apply below as
  well.} and Agda compiles these definitions to basically identical
code, even though they are provably distinct.

So, the function \AgdaFunction{subst′} is not safe in the presence of
erased univalence.
Univalence is also inconsistent in the presence of UIP, and thus also
in the presence of equality reflection, so
Theorem \ref{thm:soundness-reflection} which was discussed above does
not apply in the presence of erased, postulated univalence.
Is any kind of erased match compatible with that postulate?

Theorem \ref{thm:soundness-of-erasure}, which was mentioned above,
supports one kind of erased match (following \citet{abel-et-al-2023}).
As long as the erased postulates are mutually consistent it is the
case that erased matches for the empty type are allowed, as in
\begin{code}[inline]%
\>[2]\AgdaFunction{⊥‐elim}\AgdaSpace{}%
\AgdaSymbol{:}\AgdaSpace{}%
\AgdaSymbol{\{}\AgdaSymbol{@}\AgdaNumber{0}\AgdaSpace{}%
\AgdaBound{A}\AgdaSpace{}%
\AgdaSymbol{:}\AgdaSpace{}%
\AgdaPrimitive{Type}\AgdaSpace{}%
\AgdaGeneralizable{a}\AgdaSymbol{\}}\AgdaSpace{}%
\AgdaSymbol{→}\AgdaSpace{}%
\AgdaSymbol{@}\AgdaNumber{0}\AgdaSpace{}%
\AgdaFunction{⊥}\AgdaSpace{}%
\AgdaSymbol{→}\AgdaSpace{}%
\AgdaBound{A}\<%
\end{code}.
\begin{code}[hide]%
\>[2]\AgdaFunction{⊥‐elim}\AgdaSpace{}%
\AgdaSymbol{()}\<%
\end{code}
This allows one to, for instance, use erased information to discard
impossible branches.
However, the theorem does not support erased matches for
single-constructor data types (except if there are no erased
postulates).

Fortunately there may be a way forward.
One contribution of this work is an investigation of a function called
\boxcong{} (``box-cong''), which encapsulates limited forms of erased
matches for identity types.
The work of \citet{abel-et-al-2021} suggests to me that the following
definition, with an erased match, might be safe in the presence of
erased univalence (more about this in
§ \ref{sec:computational-book-hott}):
\begin{code}%
\>[2]\AgdaFunction{[\ensuremath{\mkern1.5mu}]‐cong}\AgdaSpace{}%
\AgdaSymbol{:}\AgdaSpace{}%
\AgdaSymbol{\{}\AgdaSymbol{@}\AgdaNumber{0}\AgdaSpace{}%
\AgdaBound{A}\AgdaSpace{}%
\AgdaSymbol{:}\AgdaSpace{}%
\AgdaPrimitive{Type}\AgdaSpace{}%
\AgdaGeneralizable{a}\AgdaSymbol{\}}\AgdaSpace{}%
\AgdaSymbol{\{}\AgdaSymbol{@}\AgdaNumber{0}\AgdaSpace{}%
\AgdaBound{x}\AgdaSpace{}%
\AgdaBound{y}\AgdaSpace{}%
\AgdaSymbol{:}\AgdaSpace{}%
\AgdaBound{A}\AgdaSymbol{\}}\AgdaSpace{}%
\AgdaSymbol{→}\AgdaSpace{}%
\AgdaSymbol{@}\AgdaNumber{0}\AgdaSpace{}%
\AgdaDatatype{Id}\AgdaSpace{}%
\AgdaBound{A}\AgdaSpace{}%
\AgdaBound{x}\AgdaSpace{}%
\AgdaBound{y}\AgdaSpace{}%
\AgdaSymbol{→}\AgdaSpace{}%
\AgdaDatatype{Id}\AgdaSpace{}%
\AgdaSymbol{(}\AgdaDatatype{Erased}\AgdaSpace{}%
\AgdaBound{A}\AgdaSymbol{)}\AgdaSpace{}%
\AgdaOperator{\AgdaInductiveConstructor{[}}\AgdaSpace{}%
\AgdaBound{x}\AgdaSpace{}%
\AgdaOperator{\AgdaInductiveConstructor{]}}\AgdaSpace{}%
\AgdaOperator{\AgdaInductiveConstructor{[}}\AgdaSpace{}%
\AgdaBound{y}\AgdaSpace{}%
\AgdaOperator{\AgdaInductiveConstructor{]}}\<%
\\
\>[2]\AgdaFunction{[\ensuremath{\mkern1.5mu}]‐cong}\AgdaSpace{}%
\AgdaInductiveConstructor{refl}\AgdaSpace{}%
\AgdaSymbol{=}\AgdaSpace{}%
\AgdaInductiveConstructor{refl}\<%
\end{code}
Using \boxcong{} – as well as the erased projection function
\AgdaFunction{erased} – one can define
\AgdaFunction{subst\ensuremath{{}^{\mkern2mu\mathrm{E}}}}:
\begin{code}%
\>[2]\AgdaFunction{subst\ensuremath{{}^{\mkern2mu\mathrm{E}}}}\AgdaSpace{}%
\AgdaSymbol{:}\AgdaSpace{}%
\AgdaSymbol{\{}\AgdaSymbol{@}\AgdaNumber{0}\AgdaSpace{}%
\AgdaBound{A}\AgdaSpace{}%
\AgdaSymbol{:}\AgdaSpace{}%
\AgdaPrimitive{Type}\AgdaSpace{}%
\AgdaGeneralizable{a}\AgdaSymbol{\}}\AgdaSpace{}%
\AgdaSymbol{\{}\AgdaSymbol{@}\AgdaNumber{0}\AgdaSpace{}%
\AgdaBound{x}\AgdaSpace{}%
\AgdaBound{y}\AgdaSpace{}%
\AgdaSymbol{:}\AgdaSpace{}%
\AgdaBound{A}\AgdaSymbol{\}}\AgdaSpace{}%
\AgdaSymbol{(}\AgdaBound{P}\AgdaSpace{}%
\AgdaSymbol{:}\AgdaSpace{}%
\AgdaSymbol{@}\AgdaNumber{0}\AgdaSpace{}%
\AgdaBound{A}\AgdaSpace{}%
\AgdaSymbol{→}\AgdaSpace{}%
\AgdaPrimitive{Type}\AgdaSpace{}%
\AgdaGeneralizable{p}\AgdaSymbol{)}\AgdaSpace{}%
\AgdaSymbol{→}\AgdaSpace{}%
\AgdaSymbol{@}\AgdaNumber{0}\AgdaSpace{}%
\AgdaDatatype{Id}\AgdaSpace{}%
\AgdaBound{A}\AgdaSpace{}%
\AgdaBound{x}\AgdaSpace{}%
\AgdaBound{y}\AgdaSpace{}%
\AgdaSymbol{→}\AgdaSpace{}%
\AgdaBound{P}\AgdaSpace{}%
\AgdaBound{x}\AgdaSpace{}%
\AgdaSymbol{→}\AgdaSpace{}%
\AgdaBound{P}\AgdaSpace{}%
\AgdaBound{y}\<%
\\
\>[2]\AgdaFunction{subst\ensuremath{{}^{\mkern2mu\mathrm{E}}}}\AgdaSpace{}%
\AgdaBound{P}\AgdaSpace{}%
\AgdaBound{eq}\AgdaSpace{}%
\AgdaBound{p}\AgdaSpace{}%
\AgdaSymbol{=}\AgdaSpace{}%
\AgdaFunction{subst}\AgdaSpace{}%
\AgdaSymbol{(λ}\AgdaSpace{}%
\AgdaBound{x}\AgdaSpace{}%
\AgdaSymbol{→}\AgdaSpace{}%
\AgdaBound{P}\AgdaSpace{}%
\AgdaSymbol{(}\AgdaFunction{erased}\AgdaSpace{}%
\AgdaBound{x}\AgdaSymbol{))}\AgdaSpace{}%
\AgdaSymbol{(}\AgdaFunction{[\ensuremath{\mkern1.5mu}]‐cong}\AgdaSpace{}%
\AgdaBound{eq}\AgdaSymbol{)}\AgdaSpace{}%
\AgdaBound{p}\<%
\end{code}
Note that the identity proof argument is erased, but in turn
\AgdaBound{P} takes an erased argument: this makes the use of
\AgdaFunction{erased} in the right-hand side OK.
The type function
\begin{code}[hide]%
\>[2]\AgdaKeyword{postulate}\<%
\\
\>[2][@{}l@{\AgdaIndent{0}}]%
\>[4]\AgdaPostulate{\AgdaUnderscore{}}%
\>[1255I]\AgdaSymbol{:}\AgdaSpace{}%
\AgdaOperator{\AgdaGeneralizable{Has‐type}}\<%
\end{code}
\begin{code}[inline*]%
\>[.][@{}l@{}]\<[1255I]%
\>[6]\AgdaDatatype{Vec\ensuremath{{}^{\mkern1mu\mathrm{D}}}}\AgdaSpace{}%
\AgdaGeneralizable{A}\<%
\end{code}
also takes an erased argument, so one can use \AgdaFunction{subst\ensuremath{{}^{\mkern2mu\mathrm{E}}}} to
implement \AgdaFunction{conv\ensuremath{{}^{\mkern1mu\mathrm{V}}}}:
\begin{code}[hide]%
\>[2]\AgdaFunction{conv\ensuremath{{}^{\mkern1mu\mathrm{V}}}}\AgdaSpace{}%
\AgdaSymbol{:}\AgdaSpace{}%
\AgdaSymbol{\{}\AgdaSymbol{@}\AgdaNumber{0}\AgdaSpace{}%
\AgdaBound{n}\AgdaSpace{}%
\AgdaSymbol{:}\AgdaSpace{}%
\AgdaDatatype{ℕ}\AgdaSymbol{\}}\AgdaSpace{}%
\AgdaSymbol{→}\AgdaSpace{}%
\AgdaFunction{Vec\ensuremath{{}^{\mkern2mu\mathrm{L}}}}\AgdaSpace{}%
\AgdaGeneralizable{A}\AgdaSpace{}%
\AgdaBound{n}\AgdaSpace{}%
\AgdaSymbol{→}\AgdaSpace{}%
\AgdaDatatype{Vec\ensuremath{{}^{\mkern1mu\mathrm{D}}}}\AgdaSpace{}%
\AgdaGeneralizable{A}\AgdaSpace{}%
\AgdaBound{n}\<%
\end{code}
\begin{code}[inline]%
\>[2]\AgdaFunction{conv\ensuremath{{}^{\mkern1mu\mathrm{V}}}}\AgdaSpace{}%
\AgdaSymbol{\{}\AgdaBound{A}\AgdaSymbol{\}}\AgdaSpace{}%
\AgdaSymbol{(}\AgdaBound{xs}\AgdaSpace{}%
\AgdaOperator{\AgdaInductiveConstructor{,}}\AgdaSpace{}%
\AgdaOperator{\AgdaInductiveConstructor{[}}\AgdaSpace{}%
\AgdaBound{eq}\AgdaSpace{}%
\AgdaOperator{\AgdaInductiveConstructor{]}}\AgdaSymbol{)}\AgdaSpace{}%
\AgdaSymbol{=}\AgdaSpace{}%
\AgdaFunction{subst\ensuremath{{}^{\mkern2mu\mathrm{E}}}}\AgdaSpace{}%
\AgdaSymbol{(}\AgdaDatatype{Vec\ensuremath{{}^{\mkern1mu\mathrm{D}}}}\AgdaSpace{}%
\AgdaBound{A}\AgdaSymbol{)}\AgdaSpace{}%
\AgdaBound{eq}\AgdaSpace{}%
\AgdaSymbol{(}\AgdaFunction{conv\ensuremath{{}^{\mkern2mu\mathrm{L}}}}\AgdaSpace{}%
\AgdaBound{xs}\AgdaSymbol{)}\<%
\end{code}.

The implementation of \AgdaFunction{conv\ensuremath{{}^{\mkern1mu\mathrm{V}}}} is just a simple example,
but it should illustrate that erased matches for identity types can
sometimes be useful.
The text also contains other examples of the utility of \boxcong{}:
\begin{itemize}
\item \emph{Fording} is a method for encoding indexed data types using
  regular data types and identity proofs
  \citep{chapman-et-al-2010,pujet-leray-tabareau-2025}.
  Does fording work if the identity proofs are erased?
  In §~\ref{sec:fording} it is shown that fording with erased identity
  proofs works for \AgdaDatatype{Vec\ensuremath{{}^{\mkern1mu\mathrm{D}}}} if \boxcong{} is available, and
  also that such fording does not work (in a certain sense) unless a
  limited variant of \boxcong{} can be defined.
\item In the absence of erased matches for single-constructor data
  types, and the absence of equality reflection, \boxcong{} cannot be
  defined, and the limited variant of \boxcong{} discussed above also
  cannot be defined (see Theorem~\ref{thm:no-box-cong}).
\item A number of statements are logically equivalent to ``\boxcong{}
  can be defined'', see §~\ref{sec:box-cong-iff}.
  One of these statements is that a specific function is an
  equivalence – basically a bijection – from
  \begin{code}[hide]%
\>[2]\AgdaKeyword{postulate}\<%
\\
\>[2][@{}l@{\AgdaIndent{0}}]%
\>[4]\AgdaPostulate{\AgdaUnderscore{}}%
\>[1284I]\AgdaSymbol{:}\<%
\end{code}
  \begin{code}[inline*]%
\>[.][@{}l@{}]\<[1284I]%
\>[6]\AgdaDatatype{Id}\AgdaSpace{}%
\AgdaSymbol{(}\AgdaDatatype{Erased}\AgdaSpace{}%
\AgdaGeneralizable{A}\AgdaSymbol{)}\AgdaSpace{}%
\AgdaGeneralizable{x}\AgdaSpace{}%
\AgdaGeneralizable{y}\<%
\end{code}
  to
  \begin{code}[hide]%
\>[2]\AgdaKeyword{postulate}\<%
\\
\>[2][@{}l@{\AgdaIndent{0}}]%
\>[4]\AgdaPostulate{\AgdaUnderscore{}}%
\>[1289I]\AgdaSymbol{:}\<%
\end{code}
  \mbox{\begin{code}[inline]%
\>[.][@{}l@{}]\<[1289I]%
\>[6]\AgdaDatatype{Erased}\AgdaSpace{}%
\AgdaSymbol{(}\AgdaDatatype{Id}\AgdaSpace{}%
\AgdaGeneralizable{A}\AgdaSpace{}%
\AgdaSymbol{(}\AgdaFunction{erased}\AgdaSpace{}%
\AgdaGeneralizable{x}\AgdaSymbol{)}\AgdaSpace{}%
\AgdaSymbol{(}\AgdaFunction{erased}\AgdaSpace{}%
\AgdaGeneralizable{y}\AgdaSymbol{))}\<%
\end{code}}.
  Note that this provides a characterisation of equality for
  \AgdaDatatype{Erased}.\footnote{In Martin-Löf type theory identity
  types can be underspecified.
  A classic example is provided by identity for Π-types.
  One formulation of function extensionality is that a certain
  function is an equivalence from
  \begin{code}[hide]%
\>[2]\AgdaKeyword{postulate}\<%
\\
\>[2][@{}l@{\AgdaIndent{0}}]%
\>[4]\AgdaPostulate{\AgdaUnderscore{}}%
\>[1296I]\AgdaSymbol{:}\<%
\end{code}
  \begin{code}[inline*]%
\>[.][@{}l@{}]\<[1296I]%
\>[6]\AgdaDatatype{Id}\AgdaSpace{}%
\AgdaSymbol{((}\AgdaBound{x}\AgdaSpace{}%
\AgdaSymbol{:}\AgdaSpace{}%
\AgdaGeneralizable{A}\AgdaSymbol{)}\AgdaSpace{}%
\AgdaSymbol{→}\AgdaSpace{}%
\AgdaGeneralizable{P}\AgdaSpace{}%
\AgdaBound{x}\AgdaSymbol{)}\AgdaSpace{}%
\AgdaGeneralizable{f\,}\AgdaSpace{}%
\AgdaGeneralizable{g}\<%
\end{code}
  to
  \begin{code}[hide]%
\>[2]\AgdaKeyword{postulate}\<%
\\
\>[2][@{}l@{\AgdaIndent{0}}]%
\>[4]\AgdaPostulate{\AgdaUnderscore{}}%
\>[1305I]\AgdaSymbol{:}\AgdaSpace{}%
\AgdaSymbol{\{}\AgdaBound{f\,}\AgdaSpace{}%
\AgdaBound{g}\AgdaSpace{}%
\AgdaSymbol{:}\AgdaSpace{}%
\AgdaSymbol{(}\AgdaBound{x}\AgdaSpace{}%
\AgdaSymbol{:}\AgdaSpace{}%
\AgdaGeneralizable{A}\AgdaSymbol{)}\AgdaSpace{}%
\AgdaSymbol{→}\AgdaSpace{}%
\AgdaGeneralizable{P}\AgdaSpace{}%
\AgdaBound{x}\AgdaSymbol{\}}\AgdaSpace{}%
\AgdaSymbol{→}\<%
\end{code}
  \mbox{\begin{code}[inline]%
\>[.][@{}l@{}]\<[1305I]%
\>[6]\AgdaSymbol{(}\AgdaBound{x}\AgdaSpace{}%
\AgdaSymbol{:}\AgdaSpace{}%
\AgdaGeneralizable{A}\AgdaSymbol{)}\AgdaSpace{}%
\AgdaSymbol{→}\AgdaSpace{}%
\AgdaDatatype{Id}\AgdaSpace{}%
\AgdaSymbol{(}\AgdaGeneralizable{P}\AgdaSpace{}%
\AgdaBound{x}\AgdaSymbol{)}\AgdaSpace{}%
\AgdaSymbol{(}\AgdaBound{f\,}\AgdaSpace{}%
\AgdaBound{x}\AgdaSymbol{)}\AgdaSpace{}%
\AgdaSymbol{(}\AgdaBound{g}\AgdaSpace{}%
\AgdaBound{x}\AgdaSymbol{)}\<%
\end{code}}
  \citep{hott-2013}.}
  Another statement is that \Erased{} and the constructor \boxop{}
  (η-expanded) form a \emph{modality} in the sense of
  \citet{rijke-et-al-2020} (see §~\ref{sec:modalities}).
  This modality is \emph{left exact}, which for instance implies that
  if \AgdaBound{A} is a proposition, then \ErasedA{} is a proposition
  (for a stronger statement, see §~\ref{sec:erasure-modality}).
\end{itemize}
It turns out that, in the presence of (non-erased) function
extensionality, \boxcong{} can be implemented.
This is discussed in §~\ref{sec:box-cong-from-funext}.

Above it was shown that \AgdaFunction{subst′} is incompatible with
erased univalence.
In the presence of \boxcong{} a variant of
\AgdaFunction{subst} with the type signature
\begin{code}[inline*]%
\>[2]\AgdaFunction{subst″}\AgdaSpace{}%
\AgdaSymbol{:}\AgdaSpace{}%
\AgdaSymbol{(}\AgdaSymbol{@}\AgdaNumber{0}\AgdaSpace{}%
\AgdaBound{P}\AgdaSpace{}%
\AgdaSymbol{:}\AgdaSpace{}%
\AgdaGeneralizable{A}\AgdaSpace{}%
\AgdaSymbol{→}\AgdaSpace{}%
\AgdaPrimitive{Type}\AgdaSpace{}%
\AgdaGeneralizable{p}\AgdaSymbol{)}\AgdaSpace{}%
\AgdaSymbol{→}\AgdaSpace{}%
\AgdaDatatype{Id}\AgdaSpace{}%
\AgdaGeneralizable{A}\AgdaSpace{}%
\AgdaGeneralizable{x}\AgdaSpace{}%
\AgdaGeneralizable{y}\AgdaSpace{}%
\AgdaSymbol{→}\AgdaSpace{}%
\AgdaBound{P}\AgdaSpace{}%
\AgdaGeneralizable{x}\AgdaSpace{}%
\AgdaSymbol{→}\AgdaSpace{}%
\AgdaBound{P}\AgdaSpace{}%
\AgdaGeneralizable{y}\<%
\end{code}
\begin{code}[hide]%
\>[2]\AgdaFunction{subst″}\AgdaSpace{}%
\AgdaBound{P}\AgdaSpace{}%
\AgdaInductiveConstructor{refl}\AgdaSpace{}%
\AgdaBound{p}\AgdaSpace{}%
\AgdaSymbol{=}\AgdaSpace{}%
\AgdaBound{p}\<%
\end{code}
is also problematic \citep{abel-et-al-2021}.
This function allows us to construct the following variants of
\AgdaFunction{should‐be‐true} and
\AgdaFunction{should‐be‐false}:\MCbreak{}
\begin{minipage}[t]{0.47\linewidth}
\begin{code}%
\>[2]\AgdaFunction{should‐be‐true′}\AgdaSpace{}%
\AgdaSymbol{:}\AgdaSpace{}%
\AgdaFunction{Bool}\<%
\\
\>[2]\AgdaFunction{should‐be‐true′}\AgdaSpace{}%
\AgdaSymbol{=}\<%
\\
\>[2][@{}l@{\AgdaIndent{0}}]%
\>[4]\AgdaFunction{subst″}\AgdaSpace{}%
\AgdaFunction{erased}\AgdaSpace{}%
\AgdaSymbol{(}\AgdaFunction{[\ensuremath{\mkern1.5mu}]‐cong}\AgdaSpace{}%
\AgdaInductiveConstructor{refl}\AgdaSymbol{)}\AgdaSpace{}%
\AgdaInductiveConstructor{true}\<%
\end{code}
\end{minipage}
\begin{minipage}[t]{0.49\linewidth}
\begin{code}%
\>[2]\AgdaFunction{should‐be‐false′}\AgdaSpace{}%
\AgdaSymbol{:}\AgdaSpace{}%
\AgdaFunction{Bool}\<%
\\
\>[2]\AgdaFunction{should‐be‐false′}\AgdaSpace{}%
\AgdaSymbol{=}\<%
\\
\>[2][@{}l@{\AgdaIndent{0}}]%
\>[4]\AgdaFunction{subst″}\AgdaSpace{}%
\AgdaFunction{erased}\AgdaSpace{}%
\AgdaSymbol{(}\AgdaFunction{[\ensuremath{\mkern1.5mu}]‐cong}\AgdaSpace{}%
\AgdaPostulate{swap}\AgdaSymbol{)}\AgdaSpace{}%
\AgdaInductiveConstructor{true}\<%
\end{code}
\end{minipage}\\
These booleans are provably equal to \AgdaInductiveConstructor{true}
and \AgdaInductiveConstructor{false}, respectively, but are equal
except for the erased proofs \AgdaInductiveConstructor{refl} and
\AgdaFunction{swap}.

\begin{code}[hide]%
\>[0]\AgdaKeyword{module}\AgdaSpace{}%
\AgdaModule{Id\ensuremath{{}_{\mathrm{0}}}}\AgdaSpace{}%
\AgdaKeyword{where}\<%
\\
\>[0][@{}l@{\AgdaIndent{0}}]%
\>[2]\AgdaKeyword{open}\AgdaSpace{}%
\AgdaModule{Introduction}\AgdaSpace{}%
\AgdaKeyword{using}\AgdaSpace{}%
\AgdaSymbol{(}\AgdaDatatype{Erased}\AgdaSymbol{;}\AgdaSpace{}%
\AgdaOperator{\AgdaInductiveConstructor{[\AgdaUnderscore{}]}}\AgdaSymbol{;}\AgdaSpace{}%
\AgdaFunction{erased}\AgdaSymbol{)}\<%
\end{code}
Let me end this section by discussing a variant of the identity type
with three erased arguments:
\begin{code}%
\>[2]\AgdaKeyword{data}\AgdaSpace{}%
\AgdaDatatype{Id\ensuremath{{}_{\mathrm{0}}}}\AgdaSpace{}%
\AgdaSymbol{(}\AgdaSymbol{@0}\AgdaSpace{}%
\AgdaBound{A}\AgdaSpace{}%
\AgdaSymbol{:}\AgdaSpace{}%
\AgdaPrimitive{Type}\AgdaSpace{}%
\AgdaGeneralizable{a}\AgdaSymbol{)}\AgdaSpace{}%
\AgdaSymbol{(}\AgdaSymbol{@0}\AgdaSpace{}%
\AgdaBound{x}\AgdaSpace{}%
\AgdaSymbol{:}\AgdaSpace{}%
\AgdaBound{A}\AgdaSymbol{)}\AgdaSpace{}%
\AgdaSymbol{:}\AgdaSpace{}%
\AgdaSymbol{@0}\AgdaSpace{}%
\AgdaBound{A}\AgdaSpace{}%
\AgdaSymbol{→}\AgdaSpace{}%
\AgdaPrimitive{Type}\AgdaSpace{}%
\AgdaBound{a}\AgdaSpace{}%
\AgdaKeyword{where}\<%
\\
\>[2][@{}l@{\AgdaIndent{0}}]%
\>[4]\AgdaInductiveConstructor{refl}\AgdaSpace{}%
\AgdaSymbol{:}\AgdaSpace{}%
\AgdaDatatype{Id\ensuremath{{}_{\mathrm{0}}}}\AgdaSpace{}%
\AgdaBound{A}\AgdaSpace{}%
\AgdaBound{x}\AgdaSpace{}%
\AgdaBound{x}\<%
\end{code}
In the presence of \boxcong{}\footnote{Either for this type, or for
  the identity type presented earlier. The code presented here assumes
  that \boxcong{}, univalence, \AgdaFunction{swap} and
  \AgdaFunction{subst} are stated using this variant of the identity
  type.} this type is incompatible with erased, postulated univalence.
Using the fact that the last argument of \AgdaDatatype{Id\ensuremath{{}_{\mathrm{0}}}} is erased
we can define the following function that takes an erased identity
proof to a non-erased identity proof:
\begin{code}[hide]%
\>[2]\AgdaKeyword{postulate}\<%
\\
\>[2][@{}l@{\AgdaIndent{0}}]%
\>[4]\AgdaPostulate{swap}%
\>[12]\AgdaSymbol{:}\AgdaSpace{}%
\AgdaDatatype{Id\ensuremath{{}_{\mathrm{0}}}}\AgdaSpace{}%
\AgdaPrimitive{Type}\AgdaSpace{}%
\AgdaFunction{Bool}\AgdaSpace{}%
\AgdaFunction{Bool}\<%
\\
\>[4]\AgdaPostulate{[\ensuremath{\mkern1.5mu}]‐cong}\AgdaSpace{}%
\AgdaSymbol{:}\AgdaSpace{}%
\AgdaSymbol{@0}\AgdaSpace{}%
\AgdaDatatype{Id\ensuremath{{}_{\mathrm{0}}}}\AgdaSpace{}%
\AgdaGeneralizable{A}\AgdaSpace{}%
\AgdaGeneralizable{x}\AgdaSpace{}%
\AgdaGeneralizable{y}\AgdaSpace{}%
\AgdaSymbol{→}\AgdaSpace{}%
\AgdaDatatype{Id\ensuremath{{}_{\mathrm{0}}}}\AgdaSpace{}%
\AgdaSymbol{(}\AgdaDatatype{Erased}\AgdaSpace{}%
\AgdaGeneralizable{A}\AgdaSymbol{)}\AgdaSpace{}%
\AgdaOperator{\AgdaInductiveConstructor{[}}\AgdaSpace{}%
\AgdaGeneralizable{x}\AgdaSpace{}%
\AgdaOperator{\AgdaInductiveConstructor{]}}\AgdaSpace{}%
\AgdaOperator{\AgdaInductiveConstructor{[}}\AgdaSpace{}%
\AgdaGeneralizable{y}\AgdaSpace{}%
\AgdaOperator{\AgdaInductiveConstructor{]}}\<%
\\
\>[4]\AgdaPostulate{subst}%
\>[12]\AgdaSymbol{:}\AgdaSpace{}%
\AgdaSymbol{(}\AgdaBound{P}\AgdaSpace{}%
\AgdaSymbol{:}\AgdaSpace{}%
\AgdaGeneralizable{A}\AgdaSpace{}%
\AgdaSymbol{→}\AgdaSpace{}%
\AgdaPrimitive{Type}\AgdaSpace{}%
\AgdaGeneralizable{p}\AgdaSymbol{)}\AgdaSpace{}%
\AgdaSymbol{→}\AgdaSpace{}%
\AgdaDatatype{Id\ensuremath{{}_{\mathrm{0}}}}\AgdaSpace{}%
\AgdaGeneralizable{A}\AgdaSpace{}%
\AgdaGeneralizable{x}\AgdaSpace{}%
\AgdaGeneralizable{y}\AgdaSpace{}%
\AgdaSymbol{→}\AgdaSpace{}%
\AgdaBound{P}\AgdaSpace{}%
\AgdaGeneralizable{x}\AgdaSpace{}%
\AgdaSymbol{→}\AgdaSpace{}%
\AgdaBound{P}\AgdaSpace{}%
\AgdaGeneralizable{y}\<%
\end{code}
\begin{code}%
\>[2]\AgdaFunction{resurrect}\AgdaSpace{}%
\AgdaSymbol{:}\AgdaSpace{}%
\AgdaSymbol{@}\AgdaNumber{0}\AgdaSpace{}%
\AgdaDatatype{Id\ensuremath{{}_{\mathrm{0}}}}\AgdaSpace{}%
\AgdaGeneralizable{A}\AgdaSpace{}%
\AgdaGeneralizable{x}\AgdaSpace{}%
\AgdaGeneralizable{y}\AgdaSpace{}%
\AgdaSymbol{→}\AgdaSpace{}%
\AgdaDatatype{Id\ensuremath{{}_{\mathrm{0}}}}\AgdaSpace{}%
\AgdaGeneralizable{A}\AgdaSpace{}%
\AgdaGeneralizable{x}\AgdaSpace{}%
\AgdaGeneralizable{y}\<%
\\
\>[2]\AgdaFunction{resurrect}\AgdaSpace{}%
\AgdaSymbol{\{}\AgdaBound{A}\AgdaSymbol{\}}\AgdaSpace{}%
\AgdaSymbol{\{}\AgdaBound{x}\AgdaSymbol{\}}\AgdaSpace{}%
\AgdaBound{eq}\AgdaSpace{}%
\AgdaSymbol{=}\AgdaSpace{}%
\AgdaPostulate{subst}\AgdaSpace{}%
\AgdaSymbol{(λ}\AgdaSpace{}%
\AgdaBound{y}\AgdaSpace{}%
\AgdaSymbol{→}\AgdaSpace{}%
\AgdaDatatype{Id\ensuremath{{}_{\mathrm{0}}}}\AgdaSpace{}%
\AgdaBound{A}\AgdaSpace{}%
\AgdaBound{x}\AgdaSpace{}%
\AgdaSymbol{(}\AgdaFunction{erased}\AgdaSpace{}%
\AgdaBound{y}\AgdaSymbol{))}\AgdaSpace{}%
\AgdaSymbol{(}\AgdaPostulate{[\ensuremath{\mkern1.5mu}]‐cong}\AgdaSpace{}%
\AgdaBound{eq}\AgdaSymbol{)}\AgdaSpace{}%
\AgdaInductiveConstructor{refl}\<%
\end{code}
Using \AgdaFunction{resurrect} it is easy to define variants of
\AgdaFunction{should‐be‐true} and
\AgdaFunction{should‐be‐false}:\MCbreak{}
\begin{minipage}[t]{0.47\linewidth}
\begin{code}%
\>[2]\AgdaFunction{should‐be‐true″}\AgdaSpace{}%
\AgdaSymbol{:}\AgdaSpace{}%
\AgdaFunction{Bool}\<%
\\
\>[2]\AgdaFunction{should‐be‐true″}\AgdaSpace{}%
\AgdaSymbol{=}\<%
\\
\>[2][@{}l@{\AgdaIndent{0}}]%
\>[4]\AgdaPostulate{subst}\AgdaSpace{}%
\AgdaSymbol{(λ}\AgdaSpace{}%
\AgdaBound{A}\AgdaSpace{}%
\AgdaSymbol{→}\AgdaSpace{}%
\AgdaBound{A}\AgdaSymbol{)}\AgdaSpace{}%
\AgdaSymbol{(}\AgdaFunction{resurrect}\AgdaSpace{}%
\AgdaInductiveConstructor{refl}\AgdaSymbol{)}\AgdaSpace{}%
\AgdaInductiveConstructor{true}\<%
\end{code}
\end{minipage}
\begin{minipage}[t]{0.49\linewidth}
\begin{code}%
\>[2]\AgdaFunction{should‐be‐false″}\AgdaSpace{}%
\AgdaSymbol{:}\AgdaSpace{}%
\AgdaFunction{Bool}\<%
\\
\>[2]\AgdaFunction{should‐be‐false″}\AgdaSpace{}%
\AgdaSymbol{=}\<%
\\
\>[2][@{}l@{\AgdaIndent{0}}]%
\>[4]\AgdaPostulate{subst}\AgdaSpace{}%
\AgdaSymbol{(λ}\AgdaSpace{}%
\AgdaBound{A}\AgdaSpace{}%
\AgdaSymbol{→}\AgdaSpace{}%
\AgdaBound{A}\AgdaSymbol{)}\AgdaSpace{}%
\AgdaSymbol{(}\AgdaFunction{resurrect}\AgdaSpace{}%
\AgdaPostulate{swap}\AgdaSymbol{)}\AgdaSpace{}%
\AgdaInductiveConstructor{true}\<%
\end{code}
\end{minipage}\\
Again these booleans are provably equal to
\AgdaInductiveConstructor{true} and \AgdaInductiveConstructor{false},
respectively, but are equal except for the erased proofs
\AgdaInductiveConstructor{refl} and \AgdaFunction{swap}.

\section{Towards Computational Book HoTT}
\label{sec:computational-book-hott}

The work presented in this text gives us a fairly simple way to
support univalence and computation in the same theory, without having
to implement cubical type theory \citep{cohen-et-al-2018}, which is
arguably rather complicated.
In the absence of erased matches for single-constructor data types one
is free to use univalence in the erased fragment of the theory.
Univalence does not compute, just like in ``Book HoTT''
\cite{hott-2013}, but that does not affect compiled programs (here
restricted to programs of type $ℕ$), which compute properly.
Book HoTT also supports higher inductive types (HITs).
In Book HoTT the eliminators for HITs only compute for the point
constructors, just like for the quotient types presented in
§ \ref{sec:quotients}.

Thus this work provides a first step towards \emph{Computational Book
  HoTT}.
One thing that is left for future work is to investigate if the
technique employed for quotient types works for all HITs supported by
Book HoTT.
If one wants to use \boxcong{}, then it would also make sense to prove
that the combination of \boxcong{} and postulated, erased univalence
is safe.
I have not done this, but I have tried to pave the way for such a
proof.
That is another contribution of this work.

Theorem \ref{thm:soundness-reflection}, discussed above, is proved
using a meta-theorem (Theorem \ref{thm:meta}) instantiated with type
theories with equality reflection.
If that meta-theorem can be instantiated with cubical type theories
\citep{cohen-et-al-2018} that support identity types with J and
\boxcong{} in addition to path types, then one may obtain the
following result: given a program $t$ of type $ℕ$ that uses erased
postulates and \boxcong{} (and perhaps certain other forms of erased
matches), if the postulates can be implemented in the type theory
extended with certain cubical features, then there is a numeral $n$
such that the compiled program corresponding to $t$ reduces to $n$,
and furthermore $t$ with the implementations substituted for the
postulates is judgementally equal to $n$ in the cubical type theory.
Note that univalence can be proved in cubical type theory.

\citet{abel-et-al-2021} sketch a proof, with some unproved
assumptions, of soundness of erasure for a variant of cubical type
theory with erasure and erased univalence (but without identity
types with the J rule).
In that system \boxcong{} can be implemented for the path type, like
in the following Cubical Agda code:
\boxcongPath{}%
A path in \PathAxy{} is basically a function from ``the interval'' to
\AgdaBound{A} that is equal to \AgdaBound{x} and \AgdaBound{y} at the
interval's two endpoints.
Note that the erased path \AgdaBound{eq} is used in the erased
argument of \boxop{}.
The work of \citeauthor{abel-et-al-2021} may be incomplete, but it
makes me hopeful that one could instantiate the meta-theorem in the
way described above.

\section{Fording with Erased Identity Proofs}
\label{sec:fording}

\begin{code}[hide]%
\>[0]\AgdaKeyword{module}\AgdaSpace{}%
\AgdaModule{Eliminators}\AgdaSpace{}%
\AgdaKeyword{where}\<%
\\
\\[\AgdaEmptyExtraSkip]%
\>[0][@{}l@{\AgdaIndent{0}}]%
\>[2]\AgdaKeyword{open}\AgdaSpace{}%
\AgdaModule{Introduction}\AgdaSpace{}%
\AgdaKeyword{using}\AgdaSpace{}%
\AgdaSymbol{(}\AgdaDatatype{Erased}\AgdaSymbol{;}\AgdaSpace{}%
\AgdaOperator{\AgdaInductiveConstructor{[\AgdaUnderscore{}]}}\AgdaSymbol{;}\AgdaSpace{}%
\AgdaFunction{subst\ensuremath{{}^{\mkern2mu\mathrm{E}}}}\AgdaSymbol{)}\<%
\\
\\[\AgdaEmptyExtraSkip]%
\>[2]\AgdaOperator{\AgdaFunction{\AgdaUnderscore{}≡\AgdaUnderscore{}}}\AgdaSpace{}%
\AgdaSymbol{:}\AgdaSpace{}%
\AgdaSymbol{\{}\AgdaBound{A}\AgdaSpace{}%
\AgdaSymbol{:}\AgdaSpace{}%
\AgdaPrimitive{Type}\AgdaSpace{}%
\AgdaGeneralizable{a}\AgdaSymbol{\}}\AgdaSpace{}%
\AgdaSymbol{→}\AgdaSpace{}%
\AgdaBound{A}\AgdaSpace{}%
\AgdaSymbol{→}\AgdaSpace{}%
\AgdaBound{A}\AgdaSpace{}%
\AgdaSymbol{→}\AgdaSpace{}%
\AgdaPrimitive{Type}\AgdaSpace{}%
\AgdaGeneralizable{a}\<%
\\
\>[2]\AgdaBound{x}\AgdaSpace{}%
\AgdaOperator{\AgdaFunction{≡}}\AgdaSpace{}%
\AgdaBound{y}\AgdaSpace{}%
\AgdaSymbol{=}\AgdaSpace{}%
\AgdaDatatype{Id}\AgdaSpace{}%
\AgdaSymbol{\AgdaUnderscore{}}\AgdaSpace{}%
\AgdaBound{x}\AgdaSpace{}%
\AgdaBound{y}\<%
\end{code}

Let us now return to the type \AgdaDatatype{Vec\ensuremath{{}^{\mkern1mu\mathrm{D}}}} from
§ \ref{sec:erased-postulates-and-erased-matches}.
This is an inductive family, where the two constructors have different
codomains.
In the absence of primitive inductive families one can try to encode
them using the technique of \emph{fording}, i.e.\ using regular data
types and identity proofs
\citep{chapman-et-al-2010,pujet-leray-tabareau-2025}.
Let us encode \AgdaDatatype{Vec\ensuremath{{}^{\mkern1mu\mathrm{D}}}} in this way, but using \emph{erased}
identity proofs:
\begin{code}%
\>[2]\AgdaKeyword{data}\AgdaSpace{}%
\AgdaDatatype{Vec\ensuremath{{}^{\mkern1mu\mathrm{F}}}}\AgdaSpace{}%
\AgdaSymbol{(}\AgdaBound{A}\AgdaSpace{}%
\AgdaSymbol{:}\AgdaSpace{}%
\AgdaPrimitive{Type}\AgdaSpace{}%
\AgdaGeneralizable{a}\AgdaSymbol{)}\AgdaSpace{}%
\AgdaSymbol{(}\AgdaSymbol{@0}\AgdaSpace{}%
\AgdaBound{n}\AgdaSpace{}%
\AgdaSymbol{:}\AgdaSpace{}%
\AgdaDatatype{ℕ}\AgdaSymbol{)}\AgdaSpace{}%
\AgdaSymbol{:}\AgdaSpace{}%
\AgdaPrimitive{Type}\AgdaSpace{}%
\AgdaBound{a}\AgdaSpace{}%
\AgdaKeyword{where}\<%
\\
\>[2][@{}l@{\AgdaIndent{0}}]%
\>[4]\AgdaInductiveConstructor{nil}%
\>[10]\AgdaSymbol{:}\AgdaSpace{}%
\AgdaSymbol{@0}\AgdaSpace{}%
\AgdaDatatype{Id}\AgdaSpace{}%
\AgdaDatatype{ℕ}\AgdaSpace{}%
\AgdaInductiveConstructor{zero}\AgdaSpace{}%
\AgdaBound{n}\AgdaSpace{}%
\AgdaSymbol{→}\AgdaSpace{}%
\AgdaDatatype{Vec\ensuremath{{}^{\mkern1mu\mathrm{F}}}}\AgdaSpace{}%
\AgdaBound{A}\AgdaSpace{}%
\AgdaBound{n}\<%
\\
\>[4]\AgdaInductiveConstructor{cons}%
\>[10]\AgdaSymbol{:}\AgdaSpace{}%
\AgdaSymbol{\{}\AgdaSymbol{@0}\AgdaSpace{}%
\AgdaBound{m}\AgdaSpace{}%
\AgdaSymbol{:}\AgdaSpace{}%
\AgdaDatatype{ℕ}\AgdaSymbol{\}}\AgdaSpace{}%
\AgdaSymbol{→}\AgdaSpace{}%
\AgdaBound{A}\AgdaSpace{}%
\AgdaSymbol{→}\AgdaSpace{}%
\AgdaDatatype{Vec\ensuremath{{}^{\mkern1mu\mathrm{F}}}}\AgdaSpace{}%
\AgdaBound{A}\AgdaSpace{}%
\AgdaBound{m}\AgdaSpace{}%
\AgdaSymbol{→}\AgdaSpace{}%
\AgdaSymbol{@0}\AgdaSpace{}%
\AgdaDatatype{Id}\AgdaSpace{}%
\AgdaDatatype{ℕ}\AgdaSpace{}%
\AgdaSymbol{(}\AgdaInductiveConstructor{suc}\AgdaSpace{}%
\AgdaBound{m}\AgdaSymbol{)}\AgdaSpace{}%
\AgdaBound{n}\AgdaSpace{}%
\AgdaSymbol{→}\AgdaSpace{}%
\AgdaDatatype{Vec\ensuremath{{}^{\mkern1mu\mathrm{F}}}}\AgdaSpace{}%
\AgdaBound{A}\AgdaSpace{}%
\AgdaBound{n}\<%
\end{code}
For \AgdaDatatype{Vec\ensuremath{{}^{\mkern1mu\mathrm{F}}}} we get the following eliminator (recursion
principle):
\begin{code}%
\>[2]\AgdaFunction{Vec^{\mkern1mu\mathrm{F}}\mkern-2mu{}‐elim}\AgdaSpace{}%
\AgdaSymbol{:}%
\>[1547I]\AgdaSymbol{\{}\AgdaSymbol{@}\AgdaNumber{0}\AgdaSpace{}%
\AgdaBound{A}\AgdaSpace{}%
\AgdaSymbol{:}\AgdaSpace{}%
\AgdaPrimitive{Type}\AgdaSpace{}%
\AgdaGeneralizable{a}\AgdaSymbol{\}}\AgdaSpace{}%
\AgdaSymbol{(}\AgdaSymbol{@}\AgdaNumber{0}\AgdaSpace{}%
\AgdaBound{P}\AgdaSpace{}%
\AgdaSymbol{:}\AgdaSpace{}%
\AgdaSymbol{\{}\AgdaBound{n}\AgdaSpace{}%
\AgdaSymbol{:}\AgdaSpace{}%
\AgdaDatatype{ℕ}\AgdaSymbol{\}}\AgdaSpace{}%
\AgdaSymbol{→}\AgdaSpace{}%
\AgdaDatatype{Vec\ensuremath{{}^{\mkern1mu\mathrm{F}}}}\AgdaSpace{}%
\AgdaBound{A}\AgdaSpace{}%
\AgdaBound{n}\AgdaSpace{}%
\AgdaSymbol{→}\AgdaSpace{}%
\AgdaPrimitive{Type}\AgdaSpace{}%
\AgdaGeneralizable{p}\AgdaSymbol{)}\AgdaSpace{}%
\AgdaSymbol{→}\<%
\\
\>[.][@{}l@{}]\<[1547I]%
\>[14]\AgdaSymbol{(\{}\AgdaSymbol{@}\AgdaNumber{0}\AgdaSpace{}%
\AgdaBound{n}\AgdaSpace{}%
\AgdaSymbol{:}\AgdaSpace{}%
\AgdaDatatype{ℕ}\AgdaSymbol{\}}\AgdaSpace{}%
\AgdaSymbol{(}\AgdaSymbol{@}\AgdaNumber{0}\AgdaSpace{}%
\AgdaBound{eq}\AgdaSpace{}%
\AgdaSymbol{:}\AgdaSpace{}%
\AgdaDatatype{Id}\AgdaSpace{}%
\AgdaDatatype{ℕ}\AgdaSpace{}%
\AgdaInductiveConstructor{zero}\AgdaSpace{}%
\AgdaBound{n}\AgdaSymbol{)}\AgdaSpace{}%
\AgdaSymbol{→}\AgdaSpace{}%
\AgdaBound{P}\AgdaSpace{}%
\AgdaSymbol{(}\AgdaInductiveConstructor{nil}\AgdaSpace{}%
\AgdaBound{eq}\AgdaSymbol{))}\AgdaSpace{}%
\AgdaSymbol{→}\<%
\\
\>[14]\AgdaSymbol{(\{}\AgdaSymbol{@}\AgdaNumber{0}\AgdaSpace{}%
\AgdaBound{m}\AgdaSpace{}%
\AgdaBound{n}\AgdaSpace{}%
\AgdaSymbol{:}\AgdaSpace{}%
\AgdaDatatype{ℕ}\AgdaSymbol{\}}\AgdaSpace{}%
\AgdaSymbol{(}\AgdaBound{x}\AgdaSpace{}%
\AgdaSymbol{:}\AgdaSpace{}%
\AgdaBound{A}\AgdaSymbol{)}\AgdaSpace{}%
\AgdaSymbol{(}\AgdaBound{xs}\AgdaSpace{}%
\AgdaSymbol{:}\AgdaSpace{}%
\AgdaDatatype{Vec\ensuremath{{}^{\mkern1mu\mathrm{F}}}}\AgdaSpace{}%
\AgdaBound{A}\AgdaSpace{}%
\AgdaBound{m}\AgdaSymbol{)}\AgdaSpace{}%
\AgdaSymbol{(}\AgdaSymbol{@}\AgdaNumber{0}\AgdaSpace{}%
\AgdaBound{eq}\AgdaSpace{}%
\AgdaSymbol{:}\AgdaSpace{}%
\AgdaDatatype{Id}\AgdaSpace{}%
\AgdaDatatype{ℕ}\AgdaSpace{}%
\AgdaSymbol{(}\AgdaInductiveConstructor{suc}\AgdaSpace{}%
\AgdaBound{m}\AgdaSymbol{)}\AgdaSpace{}%
\AgdaBound{n}\AgdaSymbol{)}\AgdaSpace{}%
\AgdaSymbol{→}\AgdaSpace{}%
\AgdaBound{P}\AgdaSpace{}%
\AgdaBound{xs}\AgdaSpace{}%
\AgdaSymbol{→}\<%
\\
\>[14][@{}l@{\AgdaIndent{0}}]%
\>[15]\AgdaBound{P}\AgdaSpace{}%
\AgdaSymbol{(}\AgdaInductiveConstructor{cons}\AgdaSpace{}%
\AgdaBound{x}\AgdaSpace{}%
\AgdaBound{xs}\AgdaSpace{}%
\AgdaBound{eq}\AgdaSymbol{))}\AgdaSpace{}%
\AgdaSymbol{→}\<%
\\
\>[14]\AgdaSymbol{\{}\AgdaSymbol{@}\AgdaNumber{0}\AgdaSpace{}%
\AgdaBound{n}\AgdaSpace{}%
\AgdaSymbol{:}\AgdaSpace{}%
\AgdaDatatype{ℕ}\AgdaSymbol{\}}\AgdaSpace{}%
\AgdaSymbol{(}\AgdaBound{xs}\AgdaSpace{}%
\AgdaSymbol{:}\AgdaSpace{}%
\AgdaDatatype{Vec\ensuremath{{}^{\mkern1mu\mathrm{F}}}}\AgdaSpace{}%
\AgdaBound{A}\AgdaSpace{}%
\AgdaBound{n}\AgdaSymbol{)}\AgdaSpace{}%
\AgdaSymbol{→}\AgdaSpace{}%
\AgdaBound{P}\AgdaSpace{}%
\AgdaBound{xs}\<%
\\
\>[2]\AgdaFunction{Vec^{\mkern1mu\mathrm{F}}\mkern-2mu{}‐elim}\AgdaSpace{}%
\AgdaBound{P}\AgdaSpace{}%
\AgdaBound{n}\AgdaSpace{}%
\AgdaBound{c}\AgdaSpace{}%
\AgdaSymbol{(}\AgdaInductiveConstructor{nil}\AgdaSpace{}%
\AgdaBound{eq}\AgdaSymbol{)}%
\>[34]\AgdaSymbol{=}\AgdaSpace{}%
\AgdaBound{n}\AgdaSpace{}%
\AgdaBound{eq}\<%
\\
\>[2]\AgdaFunction{Vec^{\mkern1mu\mathrm{F}}\mkern-2mu{}‐elim}\AgdaSpace{}%
\AgdaBound{P}\AgdaSpace{}%
\AgdaBound{n}\AgdaSpace{}%
\AgdaBound{c}\AgdaSpace{}%
\AgdaSymbol{(}\AgdaInductiveConstructor{cons}\AgdaSpace{}%
\AgdaBound{x}\AgdaSpace{}%
\AgdaBound{xs}\AgdaSpace{}%
\AgdaBound{eq}\AgdaSymbol{)}%
\>[34]\AgdaSymbol{=}\AgdaSpace{}%
\AgdaBound{c}\AgdaSpace{}%
\AgdaBound{x}\AgdaSpace{}%
\AgdaBound{xs}\AgdaSpace{}%
\AgdaBound{eq}\AgdaSpace{}%
\AgdaSymbol{(}\AgdaFunction{Vec^{\mkern1mu\mathrm{F}}\mkern-2mu{}‐elim}\AgdaSpace{}%
\AgdaBound{P}\AgdaSpace{}%
\AgdaBound{n}\AgdaSpace{}%
\AgdaBound{c}\AgdaSpace{}%
\AgdaBound{xs}\AgdaSymbol{)}\<%
\end{code}
This eliminator is somewhat low-level, in the sense that the arguments
corresponding to \AgdaInductiveConstructor{nil} and
\AgdaInductiveConstructor{cons} take explicit identity proofs.
Can we define a more high-level eliminator, closer to what one might
expect for \AgdaDatatype{Vec\ensuremath{{}^{\mkern1mu\mathrm{D}}}}?

\newcommand{\JType}{%
  \begin{code}[inline]%
\>[2]\AgdaFunction{J}\AgdaSpace{}%
\AgdaSymbol{:}\AgdaSpace{}%
\AgdaSymbol{(}\AgdaBound{P}\AgdaSpace{}%
\AgdaSymbol{:}\AgdaSpace{}%
\AgdaSymbol{\{}\AgdaBound{y}\AgdaSpace{}%
\AgdaSymbol{:}\AgdaSpace{}%
\AgdaGeneralizable{A}\AgdaSymbol{\}}\AgdaSpace{}%
\AgdaSymbol{→}\AgdaSpace{}%
\AgdaDatatype{Id}\AgdaSpace{}%
\AgdaGeneralizable{A}\AgdaSpace{}%
\AgdaGeneralizable{x}\AgdaSpace{}%
\AgdaBound{y}\AgdaSpace{}%
\AgdaSymbol{→}\AgdaSpace{}%
\AgdaPrimitive{Type}\AgdaSpace{}%
\AgdaGeneralizable{p}\AgdaSymbol{)}\AgdaSpace{}%
\AgdaSymbol{→}\AgdaSpace{}%
\AgdaBound{P}\AgdaSpace{}%
\AgdaInductiveConstructor{refl}\AgdaSpace{}%
\AgdaSymbol{→}\AgdaSpace{}%
\AgdaSymbol{(}\AgdaBound{eq}\AgdaSpace{}%
\AgdaSymbol{:}\AgdaSpace{}%
\AgdaDatatype{Id}\AgdaSpace{}%
\AgdaGeneralizable{A}\AgdaSpace{}%
\AgdaGeneralizable{x}\AgdaSpace{}%
\AgdaGeneralizable{y}\AgdaSymbol{)}\AgdaSpace{}%
\AgdaSymbol{→}\AgdaSpace{}%
\AgdaBound{P}\AgdaSpace{}%
\AgdaBound{eq}\<%
\end{code}}
\begin{code}[hide]%
\>[2]\AgdaFunction{J}\AgdaSpace{}%
\AgdaSymbol{\AgdaUnderscore{}}\AgdaSpace{}%
\AgdaBound{p}\AgdaSpace{}%
\AgdaInductiveConstructor{refl}\AgdaSpace{}%
\AgdaSymbol{=}\AgdaSpace{}%
\AgdaBound{p}\<%
\end{code}

\newcommand{\JEType}{%
\begin{code}%
\>[2]\AgdaFunction{J\ensuremath{{}^{\mkern2mu\mathrm{E}}}}\AgdaSpace{}%
\AgdaSymbol{:}%
\>[1677I]\AgdaSymbol{\{}\AgdaSymbol{@}\AgdaNumber{0}\AgdaSpace{}%
\AgdaBound{A}\AgdaSpace{}%
\AgdaSymbol{:}\AgdaSpace{}%
\AgdaPrimitive{Type}\AgdaSpace{}%
\AgdaGeneralizable{a}\AgdaSymbol{\}}\AgdaSpace{}%
\AgdaSymbol{\{}\AgdaSymbol{@}\AgdaNumber{0}\AgdaSpace{}%
\AgdaBound{x}\AgdaSpace{}%
\AgdaBound{y}\AgdaSpace{}%
\AgdaSymbol{:}\AgdaSpace{}%
\AgdaBound{A}\AgdaSymbol{\}}\AgdaSpace{}%
\AgdaSymbol{(}\AgdaBound{P}\AgdaSpace{}%
\AgdaSymbol{:}\AgdaSpace{}%
\AgdaSymbol{\{}\AgdaSymbol{@}\AgdaNumber{0}\AgdaSpace{}%
\AgdaBound{y}\AgdaSpace{}%
\AgdaSymbol{:}\AgdaSpace{}%
\AgdaBound{A}\AgdaSymbol{\}}\AgdaSpace{}%
\AgdaSymbol{→}\AgdaSpace{}%
\AgdaSymbol{@}\AgdaNumber{0}\AgdaSpace{}%
\AgdaDatatype{Id}\AgdaSpace{}%
\AgdaBound{A}\AgdaSpace{}%
\AgdaBound{x}\AgdaSpace{}%
\AgdaBound{y}\AgdaSpace{}%
\AgdaSymbol{→}\AgdaSpace{}%
\AgdaPrimitive{Type}\AgdaSpace{}%
\AgdaGeneralizable{p}\AgdaSymbol{)}\AgdaSpace{}%
\AgdaSymbol{→}\<%
\\
\>[.][@{}l@{}]\<[1677I]%
\>[7]\AgdaBound{P}\AgdaSpace{}%
\AgdaInductiveConstructor{refl}\AgdaSpace{}%
\AgdaSymbol{→}\AgdaSpace{}%
\AgdaSymbol{(}\AgdaSymbol{@}\AgdaNumber{0}\AgdaSpace{}%
\AgdaBound{eq}\AgdaSpace{}%
\AgdaSymbol{:}\AgdaSpace{}%
\AgdaDatatype{Id}\AgdaSpace{}%
\AgdaBound{A}\AgdaSpace{}%
\AgdaBound{x}\AgdaSpace{}%
\AgdaBound{y}\AgdaSymbol{)}\AgdaSpace{}%
\AgdaSymbol{→}\AgdaSpace{}%
\AgdaBound{P}\AgdaSpace{}%
\AgdaBound{eq}\<%
\end{code}}

\newcommand{\JEDef}{%
\begin{code}%
\>[2]\AgdaFunction{J\ensuremath{{}^{\mkern2mu\mathrm{E}}}}\AgdaSpace{}%
\AgdaSymbol{\{}\AgdaBound{A}\AgdaSymbol{\}}\AgdaSpace{}%
\AgdaSymbol{\{}\AgdaBound{x}\AgdaSymbol{\}}\AgdaSpace{}%
\AgdaBound{P}\AgdaSpace{}%
\AgdaBound{p}\AgdaSpace{}%
\AgdaBound{eq}\AgdaSpace{}%
\AgdaSymbol{=}\<%
\\
\>[2][@{}l@{\AgdaIndent{0}}]%
\>[4]\AgdaFunction{subst\ensuremath{{}^{\mkern2mu\mathrm{E}}}}\AgdaSpace{}%
\AgdaSymbol{(λ}\AgdaSpace{}%
\AgdaBound{p}\AgdaSpace{}%
\AgdaSymbol{→}\AgdaSpace{}%
\AgdaBound{P}\AgdaSpace{}%
\AgdaSymbol{(}\AgdaField{snd}\AgdaSpace{}%
\AgdaBound{p}\AgdaSymbol{))}\AgdaSpace{}%
\AgdaSymbol{(}\AgdaFunction{J}\AgdaSpace{}%
\AgdaSymbol{(λ}\AgdaSpace{}%
\AgdaSymbol{\{}\AgdaBound{y}\AgdaSymbol{\}}\AgdaSpace{}%
\AgdaBound{eq}\AgdaSpace{}%
\AgdaSymbol{→}\AgdaSpace{}%
\AgdaDatatype{Id}\AgdaSpace{}%
\AgdaSymbol{(}\AgdaSymbol{(}%
\AgdaBound{y}\AgdaSpace{}%
\AgdaSymbol{:}\AgdaSpace{}%
\AgdaBound{A}\AgdaSymbol{)}\AgdaSpace{}%
\AgdaFunction{×}\AgdaSpace{}%
\AgdaDatatype{Id}\AgdaSpace{}%
\AgdaBound{A}\AgdaSpace{}%
\AgdaBound{x}\AgdaSpace{}%
\AgdaBound{y}\AgdaSymbol{)}\AgdaSpace{}%
\AgdaSymbol{(}\AgdaBound{x}\AgdaSpace{}%
\AgdaOperator{\AgdaInductiveConstructor{,}}\AgdaSpace{}%
\AgdaInductiveConstructor{refl}\AgdaSymbol{)}\AgdaSpace{}%
\AgdaSymbol{(}\AgdaBound{y}\AgdaSpace{}%
\AgdaOperator{\AgdaInductiveConstructor{,}}\AgdaSpace{}%
\AgdaBound{eq}\AgdaSymbol{))}\AgdaSpace{}%
\AgdaInductiveConstructor{refl}\AgdaSpace{}%
\AgdaBound{eq}\AgdaSymbol{)}\AgdaSpace{}%
\AgdaBound{p}\<%
\end{code}}

Let us first define variants of \AgdaInductiveConstructor{nil} and
\AgdaInductiveConstructor{cons} that do not take identity
proofs as arguments:\MCbreak{}
\begin{code}[hide]%
\>[2]\AgdaKeyword{infixr}\AgdaSpace{}%
\AgdaNumber{5}\AgdaSpace{}%
\AgdaOperator{\AgdaFunction{\AgdaUnderscore{}∷\AgdaUnderscore{}}}\<%
\end{code}
\begin{minipage}[t]{0.49\linewidth}
\begin{code}%
\>[2]\AgdaFunction{[\ensuremath{\mkern1.5mu}]}\AgdaSpace{}%
\AgdaSymbol{:}\AgdaSpace{}%
\AgdaSymbol{\{}\AgdaSymbol{@}\AgdaNumber{0}\AgdaSpace{}%
\AgdaBound{A}\AgdaSpace{}%
\AgdaSymbol{:}\AgdaSpace{}%
\AgdaPrimitive{Type}\AgdaSpace{}%
\AgdaGeneralizable{a}\AgdaSymbol{\}}\AgdaSpace{}%
\AgdaSymbol{→}\AgdaSpace{}%
\AgdaDatatype{Vec\ensuremath{{}^{\mkern1mu\mathrm{F}}}}\AgdaSpace{}%
\AgdaBound{A}\AgdaSpace{}%
\AgdaInductiveConstructor{zero}\<%
\\
\>[2]\AgdaFunction{[\ensuremath{\mkern1.5mu}]}\AgdaSpace{}%
\AgdaSymbol{=}\AgdaSpace{}%
\AgdaInductiveConstructor{nil}\AgdaSpace{}%
\AgdaInductiveConstructor{refl}\<%
\end{code}
\end{minipage}
\begin{minipage}[t]{0.49\linewidth}
\begin{code}%
\>[2]\AgdaOperator{\AgdaFunction{\AgdaUnderscore{}∷\AgdaUnderscore{}}}\AgdaSpace{}%
\AgdaSymbol{:}%
\>[1768I]\AgdaSymbol{\{}\AgdaSymbol{@}\AgdaNumber{0}\AgdaSpace{}%
\AgdaBound{A}\AgdaSpace{}%
\AgdaSymbol{:}\AgdaSpace{}%
\AgdaPrimitive{Type}\AgdaSpace{}%
\AgdaGeneralizable{a}\AgdaSymbol{\}}\AgdaSpace{}%
\AgdaSymbol{\{}\AgdaSymbol{@}\AgdaNumber{0}\AgdaSpace{}%
\AgdaBound{n}\AgdaSpace{}%
\AgdaSymbol{:}\AgdaSpace{}%
\AgdaDatatype{ℕ}\AgdaSymbol{\}}\AgdaSpace{}%
\AgdaSymbol{→}\<%
\\
\>[.][@{}l@{}]\<[1768I]%
\>[8]\AgdaBound{A}\AgdaSpace{}%
\AgdaSymbol{→}\AgdaSpace{}%
\AgdaDatatype{Vec\ensuremath{{}^{\mkern1mu\mathrm{F}}}}\AgdaSpace{}%
\AgdaBound{A}\AgdaSpace{}%
\AgdaBound{n}\AgdaSpace{}%
\AgdaSymbol{→}\AgdaSpace{}%
\AgdaDatatype{Vec\ensuremath{{}^{\mkern1mu\mathrm{F}}}}\AgdaSpace{}%
\AgdaBound{A}\AgdaSpace{}%
\AgdaSymbol{(}\AgdaInductiveConstructor{suc}\AgdaSpace{}%
\AgdaBound{n}\AgdaSymbol{)}\<%
\\
\>[2]\AgdaBound{x}\AgdaSpace{}%
\AgdaOperator{\AgdaFunction{∷}}\AgdaSpace{}%
\AgdaBound{xs}\AgdaSpace{}%
\AgdaSymbol{=}\AgdaSpace{}%
\AgdaInductiveConstructor{cons}\AgdaSpace{}%
\AgdaBound{x}\AgdaSpace{}%
\AgdaBound{xs}\AgdaSpace{}%
\AgdaInductiveConstructor{refl}\<%
\end{code}
\end{minipage}\\
In the presence of \boxcong{} we can then define a high-level
eliminator:
\begin{code}%
\>[2]\AgdaFunction{Vec^{\mkern1mu\mathrm{F}}\mkern-2mu{}‐elim′}\AgdaSpace{}%
\AgdaSymbol{:}%
\>[1795I]\AgdaSymbol{\{}\AgdaSymbol{@}\AgdaNumber{0}\AgdaSpace{}%
\AgdaBound{A}\AgdaSpace{}%
\AgdaSymbol{:}\AgdaSpace{}%
\AgdaPrimitive{Type}\AgdaSpace{}%
\AgdaGeneralizable{a}\AgdaSymbol{\}}\AgdaSpace{}%
\AgdaSymbol{(}\AgdaBound{P}\AgdaSpace{}%
\AgdaSymbol{:}\AgdaSpace{}%
\AgdaSymbol{\{}\AgdaSymbol{@}\AgdaNumber{0}\AgdaSpace{}%
\AgdaBound{n}\AgdaSpace{}%
\AgdaSymbol{:}\AgdaSpace{}%
\AgdaDatatype{ℕ}\AgdaSymbol{\}}\AgdaSpace{}%
\AgdaSymbol{→}\AgdaSpace{}%
\AgdaDatatype{Vec\ensuremath{{}^{\mkern1mu\mathrm{F}}}}\AgdaSpace{}%
\AgdaBound{A}\AgdaSpace{}%
\AgdaBound{n}\AgdaSpace{}%
\AgdaSymbol{→}\AgdaSpace{}%
\AgdaPrimitive{Type}\AgdaSpace{}%
\AgdaGeneralizable{p}\AgdaSymbol{)}\AgdaSpace{}%
\AgdaSymbol{→}\<%
\\
\>[.][@{}l@{}]\<[1795I]%
\>[15]\AgdaBound{P}\AgdaSpace{}%
\AgdaFunction{[\ensuremath{\mkern1.5mu}]}\AgdaSpace{}%
\AgdaSymbol{→}\AgdaSpace{}%
\AgdaSymbol{(\{}\AgdaSymbol{@}\AgdaNumber{0}\AgdaSpace{}%
\AgdaBound{n}\AgdaSpace{}%
\AgdaSymbol{:}\AgdaSpace{}%
\AgdaDatatype{ℕ}\AgdaSymbol{\}}\AgdaSpace{}%
\AgdaSymbol{(}\AgdaBound{x}\AgdaSpace{}%
\AgdaSymbol{:}\AgdaSpace{}%
\AgdaBound{A}\AgdaSymbol{)}\AgdaSpace{}%
\AgdaSymbol{(}\AgdaBound{xs}\AgdaSpace{}%
\AgdaSymbol{:}\AgdaSpace{}%
\AgdaDatatype{Vec\ensuremath{{}^{\mkern1mu\mathrm{F}}}}\AgdaSpace{}%
\AgdaBound{A}\AgdaSpace{}%
\AgdaBound{n}\AgdaSymbol{)}\AgdaSpace{}%
\AgdaSymbol{→}\AgdaSpace{}%
\AgdaBound{P}\AgdaSpace{}%
\AgdaBound{xs}\AgdaSpace{}%
\AgdaSymbol{→}\AgdaSpace{}%
\AgdaBound{P}\AgdaSpace{}%
\AgdaSymbol{(}\AgdaBound{x}\AgdaSpace{}%
\AgdaOperator{\AgdaFunction{∷}}\AgdaSpace{}%
\AgdaBound{xs}\AgdaSymbol{))}\AgdaSpace{}%
\AgdaSymbol{→}\<%
\\
\>[15]\AgdaSymbol{\{}\AgdaSymbol{@}\AgdaNumber{0}\AgdaSpace{}%
\AgdaBound{n}\AgdaSpace{}%
\AgdaSymbol{:}\AgdaSpace{}%
\AgdaDatatype{ℕ}\AgdaSymbol{\}}\AgdaSpace{}%
\AgdaSymbol{(}\AgdaBound{xs}\AgdaSpace{}%
\AgdaSymbol{:}\AgdaSpace{}%
\AgdaDatatype{Vec\ensuremath{{}^{\mkern1mu\mathrm{F}}}}\AgdaSpace{}%
\AgdaBound{A}\AgdaSpace{}%
\AgdaBound{n}\AgdaSymbol{)}\AgdaSpace{}%
\AgdaSymbol{→}\AgdaSpace{}%
\AgdaBound{P}\AgdaSpace{}%
\AgdaBound{xs}\<%
\\
\>[2]\AgdaFunction{Vec^{\mkern1mu\mathrm{F}}\mkern-2mu{}‐elim′}\AgdaSpace{}%
\AgdaBound{P}\AgdaSpace{}%
\AgdaBound{n}\AgdaSpace{}%
\AgdaBound{c}\AgdaSpace{}%
\AgdaSymbol{=}%
\>[1852I]\AgdaFunction{Vec^{\mkern1mu\mathrm{F}}\mkern-2mu{}‐elim}\AgdaSpace{}%
\AgdaBound{P}\AgdaSpace{}%
\AgdaSymbol{(λ}\AgdaSpace{}%
\AgdaBound{eq}\AgdaSpace{}%
\AgdaSymbol{→}\AgdaSpace{}%
\AgdaFunction{J\ensuremath{{}^{\mkern2mu\mathrm{E}}}}\AgdaSpace{}%
\AgdaSymbol{(λ}\AgdaSpace{}%
\AgdaBound{eq}\AgdaSpace{}%
\AgdaSymbol{→}\AgdaSpace{}%
\AgdaBound{P}\AgdaSpace{}%
\AgdaSymbol{(}\AgdaInductiveConstructor{nil}\AgdaSpace{}%
\AgdaBound{eq}\AgdaSymbol{))}\AgdaSpace{}%
\AgdaBound{n}\AgdaSpace{}%
\AgdaBound{eq}\AgdaSymbol{)}\<%
\\
\>[1852I][@{}l@{\AgdaIndent{0}}]%
\>[23]\AgdaSymbol{(λ}\AgdaSpace{}%
\AgdaBound{x}\AgdaSpace{}%
\AgdaBound{xs}\AgdaSpace{}%
\AgdaBound{eq}\AgdaSpace{}%
\AgdaBound{p}\AgdaSpace{}%
\AgdaSymbol{→}\AgdaSpace{}%
\AgdaFunction{J\ensuremath{{}^{\mkern2mu\mathrm{E}}}}\AgdaSpace{}%
\AgdaSymbol{(λ}\AgdaSpace{}%
\AgdaBound{eq}\AgdaSpace{}%
\AgdaSymbol{→}\AgdaSpace{}%
\AgdaBound{P}\AgdaSpace{}%
\AgdaSymbol{(}\AgdaInductiveConstructor{cons}\AgdaSpace{}%
\AgdaBound{x}\AgdaSpace{}%
\AgdaBound{xs}\AgdaSpace{}%
\AgdaBound{eq}\AgdaSymbol{))}\AgdaSpace{}%
\AgdaSymbol{(}\AgdaBound{c}\AgdaSpace{}%
\AgdaBound{x}\AgdaSpace{}%
\AgdaBound{xs}\AgdaSpace{}%
\AgdaBound{p}\AgdaSymbol{)}\AgdaSpace{}%
\AgdaBound{eq}\AgdaSymbol{)}\<%
\end{code}
This implementation uses the following variant of the J rule:
\JEType{}%
For instance, in the \AgdaInductiveConstructor{nil} case we should
return something of type
\begin{code}[hide]%
\>[2]\AgdaKeyword{postulate}\<%
\\
\>[2][@{}l@{\AgdaIndent{0}}]%
\>[4]\AgdaPostulate{\AgdaUnderscore{}}%
\>[1885I]\AgdaSymbol{:}\<%
\end{code}
\begin{code}[inline]%
\>[.][@{}l@{}]\<[1885I]%
\>[6]\AgdaGeneralizable{P}\AgdaSpace{}%
\AgdaSymbol{(}\AgdaInductiveConstructor{nil}\AgdaSpace{}%
\AgdaGeneralizable{eq}\AgdaSymbol{)}\<%
\end{code},
but \AgdaBound{n} has type
\begin{code}[hide]%
\>[2]\AgdaKeyword{postulate}\<%
\\
\>[2][@{}l@{\AgdaIndent{0}}]%
\>[4]\AgdaPostulate{\AgdaUnderscore{}}%
\>[1888I]\AgdaSymbol{:}\<%
\end{code}
\mbox{\begin{code}[inline]%
\>[.][@{}l@{}]\<[1888I]%
\>[6]\AgdaGeneralizable{P}\AgdaSpace{}%
\AgdaSymbol{(}\AgdaInductiveConstructor{nil}\AgdaSpace{}%
\AgdaInductiveConstructor{refl}\AgdaSymbol{)}\<%
\end{code}},
and \AgdaFunction{J\ensuremath{{}^{\mkern2mu\mathrm{E}}}} is used to cast from the latter type to the
former.
The identity proof \AgdaBound{eq} is erased, so it is important that
the final argument of \AgdaFunction{J\ensuremath{{}^{\mkern2mu\mathrm{E}}}} is erased.
Note that the motive \AgdaBound{P} of \AgdaFunction{Vec^{\mkern1mu\mathrm{F}}\mkern-2mu{}‐elim′} is not
erased, but takes an erased natural number argument: this ensures that
the two uses of \AgdaFunction{J\ensuremath{{}^{\mkern2mu\mathrm{E}}}} are well-formed.

The function \AgdaFunction{J\ensuremath{{}^{\mkern2mu\mathrm{E}}}} is implemented using the function
\AgdaFunction{subst\ensuremath{{}^{\mkern2mu\mathrm{E}}}} from § \ref{sec:box-cong}:
\JEDef{}%
This code uses \JType{}.
The implicit argument \AgdaBound{A} of \AgdaFunction{subst\ensuremath{{}^{\mkern2mu\mathrm{E}}}} is here
the pair type
\begin{code}[hide]%
\>[2]\AgdaKeyword{postulate}\<%
\\
\>[2][@{}l@{\AgdaIndent{0}}]%
\>[4]\AgdaPostulate{\AgdaUnderscore{}}%
\>[1891I]\AgdaSymbol{:}\<%
\end{code}
\mbox{\begin{code}[inline]%
\>[.][@{}l@{}]\<[1891I]%
\>[6]\AgdaSymbol{(}%
\AgdaBound{y}\AgdaSpace{}%
\AgdaSymbol{:}\AgdaSpace{}%
\AgdaGeneralizable{A}\AgdaSymbol{)}\AgdaSpace{}%
\AgdaFunction{×}\AgdaSpace{}%
\AgdaDatatype{Id}\AgdaSpace{}%
\AgdaGeneralizable{A}\AgdaSpace{}%
\AgdaGeneralizable{x}\AgdaSpace{}%
\AgdaBound{y}\<%
\end{code}},
and the second explicit argument has type
\begin{code}[hide]%
\>[2]\AgdaKeyword{postulate}\<%
\\
\>[2][@{}l@{\AgdaIndent{0}}]%
\>[4]\AgdaPostulate{\AgdaUnderscore{}}%
\>[1901I]\AgdaSymbol{:}\AgdaSpace{}%
\AgdaSymbol{(}\AgdaBound{eq}\AgdaSpace{}%
\AgdaSymbol{:}\AgdaSpace{}%
\AgdaDatatype{Id}\AgdaSpace{}%
\AgdaGeneralizable{A}\AgdaSpace{}%
\AgdaGeneralizable{x}\AgdaSpace{}%
\AgdaGeneralizable{y}\AgdaSymbol{)}\AgdaSpace{}%
\AgdaSymbol{→}\<%
\end{code}
\begin{code}[inline]%
\>[.][@{}l@{}]\<[1901I]%
\>[6]\AgdaDatatype{Id}\AgdaSpace{}%
\AgdaSymbol{(}\AgdaSymbol{(}%
\AgdaBound{y}\AgdaSpace{}%
\AgdaSymbol{:}\AgdaSpace{}%
\AgdaGeneralizable{A}\AgdaSymbol{)}\AgdaSpace{}%
\AgdaFunction{×}\AgdaSpace{}%
\AgdaDatatype{Id}\AgdaSpace{}%
\AgdaGeneralizable{A}\AgdaSpace{}%
\AgdaGeneralizable{x}\AgdaSpace{}%
\AgdaBound{y}\AgdaSymbol{)}\AgdaSpace{}%
\AgdaSymbol{(}\AgdaGeneralizable{x}\AgdaSpace{}%
\AgdaOperator{\AgdaInductiveConstructor{,}}\AgdaSpace{}%
\AgdaInductiveConstructor{refl}\AgdaSymbol{)}\AgdaSpace{}%
\AgdaSymbol{(}\AgdaGeneralizable{y}\AgdaSpace{}%
\AgdaOperator{\AgdaInductiveConstructor{,}}\AgdaSpace{}%
\AgdaBound{eq}\AgdaSymbol{)}\<%
\end{code}.

The function \AgdaFunction{Vec^{\mkern1mu\mathrm{F}}\mkern-2mu{}‐elim′} computes ``as expected'':
\begin{code}[hide]%
\>[2]\AgdaKeyword{postulate}\<%
\\
\>[2][@{}l@{\AgdaIndent{0}}]%
\>[4]\AgdaPostulate{\AgdaUnderscore{}}%
\>[1925I]\AgdaSymbol{:}%
\>[1926I]\AgdaSymbol{(}\AgdaBound{P}\AgdaSpace{}%
\AgdaSymbol{:}\AgdaSpace{}%
\AgdaSymbol{∀}\AgdaSpace{}%
\AgdaSymbol{\{}\AgdaSymbol{@0}\AgdaSpace{}%
\AgdaBound{n}\AgdaSymbol{\}}\AgdaSpace{}%
\AgdaSymbol{→}\AgdaSpace{}%
\AgdaDatatype{Vec\ensuremath{{}^{\mkern1mu\mathrm{F}}}}\AgdaSpace{}%
\AgdaGeneralizable{A}\AgdaSpace{}%
\AgdaBound{n}\AgdaSpace{}%
\AgdaSymbol{→}\AgdaSpace{}%
\AgdaPrimitive{Type}\AgdaSpace{}%
\AgdaGeneralizable{p}\AgdaSymbol{)}\<%
\\
\>[.][@{}l@{}]\<[1926I]%
\>[8]\AgdaSymbol{(}\AgdaBound{n}\AgdaSpace{}%
\AgdaSymbol{:}\AgdaSpace{}%
\AgdaBound{P}\AgdaSpace{}%
\AgdaFunction{[\ensuremath{\mkern1.5mu}]}\AgdaSymbol{)}\<%
\\
\>[8]\AgdaSymbol{(}\AgdaBound{c}\AgdaSpace{}%
\AgdaSymbol{:}\AgdaSpace{}%
\AgdaSymbol{∀}\AgdaSpace{}%
\AgdaSymbol{\{}\AgdaSymbol{@0}\AgdaSpace{}%
\AgdaBound{n}\AgdaSymbol{\}}\AgdaSpace{}%
\AgdaSymbol{(}\AgdaBound{x}\AgdaSpace{}%
\AgdaSymbol{:}\AgdaSpace{}%
\AgdaGeneralizable{A}\AgdaSymbol{)}\AgdaSpace{}%
\AgdaSymbol{(}\AgdaBound{xs}\AgdaSpace{}%
\AgdaSymbol{:}\AgdaSpace{}%
\AgdaDatatype{Vec\ensuremath{{}^{\mkern1mu\mathrm{F}}}}\AgdaSpace{}%
\AgdaGeneralizable{A}\AgdaSpace{}%
\AgdaBound{n}\AgdaSymbol{)}\AgdaSpace{}%
\AgdaSymbol{→}\AgdaSpace{}%
\AgdaBound{P}\AgdaSpace{}%
\AgdaBound{xs}\AgdaSpace{}%
\AgdaSymbol{→}\AgdaSpace{}%
\AgdaBound{P}\AgdaSpace{}%
\AgdaSymbol{(}\AgdaBound{x}\AgdaSpace{}%
\AgdaOperator{\AgdaFunction{∷}}\AgdaSpace{}%
\AgdaBound{xs}\AgdaSymbol{))}\AgdaSpace{}%
\AgdaSymbol{→}\<%
\end{code}
\begin{code}[inline*]%
\>[.][@{}l@{}]\<[1925I]%
\>[6]\AgdaFunction{Vec^{\mkern1mu\mathrm{F}}\mkern-2mu{}‐elim′}\AgdaSpace{}%
\AgdaBound{P}\AgdaSpace{}%
\AgdaBound{n}\AgdaSpace{}%
\AgdaBound{c}\AgdaSpace{}%
\AgdaFunction{[\ensuremath{\mkern1.5mu}]}\<%
\end{code}
is definitionally equal to
\begin{code}[hide]%
\>[6]\AgdaOperator{\AgdaFunction{≡}}\<%
\end{code}
\begin{code}[inline]%
\>[6]\AgdaBound{n}\<%
\end{code},
and
\begin{code}[hide]%
\>[2]\AgdaKeyword{postulate}\<%
\\
\>[2][@{}l@{\AgdaIndent{0}}]%
\>[4]\AgdaPostulate{\AgdaUnderscore{}}%
\>[1966I]\AgdaSymbol{:}%
\>[1967I]\AgdaSymbol{(}\AgdaBound{P}\AgdaSpace{}%
\AgdaSymbol{:}\AgdaSpace{}%
\AgdaSymbol{∀}\AgdaSpace{}%
\AgdaSymbol{\{}\AgdaSymbol{@0}\AgdaSpace{}%
\AgdaBound{n}\AgdaSymbol{\}}\AgdaSpace{}%
\AgdaSymbol{→}\AgdaSpace{}%
\AgdaDatatype{Vec\ensuremath{{}^{\mkern1mu\mathrm{F}}}}\AgdaSpace{}%
\AgdaGeneralizable{A}\AgdaSpace{}%
\AgdaBound{n}\AgdaSpace{}%
\AgdaSymbol{→}\AgdaSpace{}%
\AgdaPrimitive{Type}\AgdaSpace{}%
\AgdaGeneralizable{p}\AgdaSymbol{)}\<%
\\
\>[.][@{}l@{}]\<[1967I]%
\>[8]\AgdaSymbol{(}\AgdaBound{n}\AgdaSpace{}%
\AgdaSymbol{:}\AgdaSpace{}%
\AgdaBound{P}\AgdaSpace{}%
\AgdaFunction{[\ensuremath{\mkern1.5mu}]}\AgdaSymbol{)}\<%
\\
\>[8]\AgdaSymbol{(}\AgdaBound{c}\AgdaSpace{}%
\AgdaSymbol{:}\AgdaSpace{}%
\AgdaSymbol{∀}\AgdaSpace{}%
\AgdaSymbol{\{}\AgdaSymbol{@0}\AgdaSpace{}%
\AgdaBound{n}\AgdaSymbol{\}}\AgdaSpace{}%
\AgdaSymbol{(}\AgdaBound{x}\AgdaSpace{}%
\AgdaSymbol{:}\AgdaSpace{}%
\AgdaGeneralizable{A}\AgdaSymbol{)}\AgdaSpace{}%
\AgdaSymbol{(}\AgdaBound{xs}\AgdaSpace{}%
\AgdaSymbol{:}\AgdaSpace{}%
\AgdaDatatype{Vec\ensuremath{{}^{\mkern1mu\mathrm{F}}}}\AgdaSpace{}%
\AgdaGeneralizable{A}\AgdaSpace{}%
\AgdaBound{n}\AgdaSymbol{)}\AgdaSpace{}%
\AgdaSymbol{→}\AgdaSpace{}%
\AgdaBound{P}\AgdaSpace{}%
\AgdaBound{xs}\AgdaSpace{}%
\AgdaSymbol{→}\AgdaSpace{}%
\AgdaBound{P}\AgdaSpace{}%
\AgdaSymbol{(}\AgdaBound{x}\AgdaSpace{}%
\AgdaOperator{\AgdaFunction{∷}}\AgdaSpace{}%
\AgdaBound{xs}\AgdaSymbol{))}\AgdaSpace{}%
\AgdaSymbol{→}\<%
\end{code}
\begin{code}[inline*]%
\>[.][@{}l@{}]\<[1966I]%
\>[6]\AgdaFunction{Vec^{\mkern1mu\mathrm{F}}\mkern-2mu{}‐elim′}\AgdaSpace{}%
\AgdaBound{P}\AgdaSpace{}%
\AgdaBound{n}\AgdaSpace{}%
\AgdaBound{c}\AgdaSpace{}%
\AgdaSymbol{(}\AgdaGeneralizable{x}\AgdaSpace{}%
\AgdaOperator{\AgdaFunction{∷}}\AgdaSpace{}%
\AgdaGeneralizable{xs}\AgdaSymbol{)}\<%
\end{code}
is definitionally equal to
\begin{code}[hide]%
\>[6]\AgdaOperator{\AgdaFunction{≡}}\<%
\end{code}
\mbox{\begin{code}[inline]%
\>[6]\AgdaBound{c}\AgdaSpace{}%
\AgdaGeneralizable{x}\AgdaSpace{}%
\AgdaGeneralizable{xs}\AgdaSpace{}%
\AgdaSymbol{(}\AgdaFunction{Vec^{\mkern1mu\mathrm{F}}\mkern-2mu{}‐elim′}\AgdaSpace{}%
\AgdaBound{P}\AgdaSpace{}%
\AgdaBound{n}\AgdaSpace{}%
\AgdaBound{c}\AgdaSpace{}%
\AgdaGeneralizable{xs}\AgdaSymbol{)}\<%
\end{code}}.
Thus \AgdaDatatype{Vec\ensuremath{{}^{\mkern1mu\mathrm{F}}}}, \AgdaInductiveConstructor{[\ensuremath{\mkern1.5mu}]},
\AgdaInductiveConstructor{\AgdaUnderscore{}∷\AgdaUnderscore{}} and
\AgdaFunction{Vec^{\mkern1mu\mathrm{F}}\mkern-2mu{}‐elim′} provide an encoding of \AgdaDatatype{Vec\ensuremath{{}^{\mkern1mu\mathrm{D}}}}
that is, in some sense, reasonable.

If unrestricted erased matches are allowed for identity types, then
one can implement a variant of \AgdaFunction{Vec^{\mkern1mu\mathrm{F}}\mkern-2mu{}‐elim′} with an
erased motive \AgdaBound{P}.
On the other hand, if no erased matches are allowed, then
\AgdaFunction{Vec^{\mkern1mu\mathrm{F}}\mkern-2mu{}‐elim′} cannot be implemented (in certain type
theories) – using \AgdaFunction{Vec^{\mkern1mu\mathrm{F}}\mkern-2mu{}‐elim′} one can implement a
limited variant of \boxcong{} that, for certain type theories, cannot
be implemented in the absence of erased matches (see
Theorem~\ref{thm:no-box-cong}):
\begin{code}%
\>[2]\AgdaFunction{[\ensuremath{\mkern1.5mu}]‐cong\ensuremath{{}_{\mathrm{0}}}}\AgdaSpace{}%
\AgdaSymbol{:}\AgdaSpace{}%
\AgdaSymbol{\{}\AgdaSymbol{@}\AgdaNumber{0}\AgdaSpace{}%
\AgdaBound{n}\AgdaSpace{}%
\AgdaSymbol{:}\AgdaSpace{}%
\AgdaDatatype{ℕ}\AgdaSymbol{\}}\AgdaSpace{}%
\AgdaSymbol{→}\AgdaSpace{}%
\AgdaSymbol{@}\AgdaNumber{0}\AgdaSpace{}%
\AgdaDatatype{Id}\AgdaSpace{}%
\AgdaDatatype{ℕ}\AgdaSpace{}%
\AgdaInductiveConstructor{zero}\AgdaSpace{}%
\AgdaBound{n}\AgdaSpace{}%
\AgdaSymbol{→}\AgdaSpace{}%
\AgdaDatatype{Id}\AgdaSpace{}%
\AgdaSymbol{(}\AgdaDatatype{Erased}\AgdaSpace{}%
\AgdaDatatype{ℕ}\AgdaSymbol{)}\AgdaSpace{}%
\AgdaOperator{\AgdaInductiveConstructor{[}}\AgdaSpace{}%
\AgdaInductiveConstructor{zero}\AgdaSpace{}%
\AgdaOperator{\AgdaInductiveConstructor{]}}\AgdaSpace{}%
\AgdaOperator{\AgdaInductiveConstructor{[}}\AgdaSpace{}%
\AgdaBound{n}\AgdaSpace{}%
\AgdaOperator{\AgdaInductiveConstructor{]}}\<%
\\
\>[2]\AgdaFunction{[\ensuremath{\mkern1.5mu}]‐cong\ensuremath{{}_{\mathrm{0}}}}\AgdaSpace{}%
\AgdaBound{eq}\AgdaSpace{}%
\AgdaSymbol{=}\AgdaSpace{}%
\AgdaFunction{Vec^{\mkern1mu\mathrm{F}}\mkern-2mu{}‐elim′}\<%
\\
\>[2][@{}l@{\AgdaIndent{0}}]%
\>[4]\AgdaSymbol{(}\AgdaFunction{Vec^{\mkern1mu\mathrm{F}}\mkern-2mu{}‐elim}\AgdaSpace{}%
\AgdaSymbol{(λ}\AgdaSpace{}%
\AgdaBound{\AgdaUnderscore{}}\AgdaSpace{}%
\AgdaSymbol{→}\AgdaSpace{}%
\AgdaPrimitive{Type}\AgdaSymbol{)}\AgdaSpace{}%
\AgdaSymbol{(λ}\AgdaSpace{}%
\AgdaSymbol{\{}\AgdaBound{n}\AgdaSymbol{\}}\AgdaSpace{}%
\AgdaBound{\AgdaUnderscore{}}\AgdaSpace{}%
\AgdaSymbol{→}\AgdaSpace{}%
\AgdaDatatype{Id}\AgdaSpace{}%
\AgdaSymbol{(}\AgdaDatatype{Erased}\AgdaSpace{}%
\AgdaDatatype{ℕ}\AgdaSymbol{)}\AgdaSpace{}%
\AgdaOperator{\AgdaInductiveConstructor{[}}\AgdaSpace{}%
\AgdaInductiveConstructor{zero}\AgdaSpace{}%
\AgdaOperator{\AgdaInductiveConstructor{]}}\AgdaSpace{}%
\AgdaOperator{\AgdaInductiveConstructor{[}}\AgdaSpace{}%
\AgdaBound{n}\AgdaSpace{}%
\AgdaOperator{\AgdaInductiveConstructor{]}}\AgdaSymbol{)}\AgdaSpace{}%
\AgdaSymbol{(λ}\AgdaSpace{}%
\AgdaBound{\AgdaUnderscore{}}\AgdaSpace{}%
\AgdaBound{\AgdaUnderscore{}}\AgdaSpace{}%
\AgdaBound{\AgdaUnderscore{}}\AgdaSpace{}%
\AgdaBound{\AgdaUnderscore{}}\AgdaSpace{}%
\AgdaSymbol{→}\AgdaSpace{}%
\AgdaRecord{⊤}\AgdaSymbol{))}\<%
\\
\>[4]\AgdaInductiveConstructor{refl}\AgdaSpace{}%
\AgdaSymbol{(λ}\AgdaSpace{}%
\AgdaBound{\AgdaUnderscore{}}\AgdaSpace{}%
\AgdaBound{\AgdaUnderscore{}}\AgdaSpace{}%
\AgdaBound{\AgdaUnderscore{}}\AgdaSpace{}%
\AgdaSymbol{→}\AgdaSpace{}%
\AgdaInductiveConstructor{tt}\AgdaSymbol{)}\AgdaSpace{}%
\AgdaSymbol{(}\AgdaInductiveConstructor{nil}\AgdaSpace{}%
\AgdaSymbol{\{}\AgdaArgument{A}\AgdaSpace{}%
\AgdaSymbol{=}\AgdaSpace{}%
\AgdaRecord{⊤}\AgdaSymbol{\}}\AgdaSpace{}%
\AgdaBound{eq}\AgdaSymbol{)}\<%
\end{code}
(Here \AgdaRecord{⊤} is a unit type with constructor
\AgdaInductiveConstructor{tt}.
The notation
\begin{code}[hide]%
\>[2]\AgdaKeyword{postulate}\<%
\\
\>[2][@{}l@{\AgdaIndent{0}}]%
\>[4]\AgdaPostulate{\AgdaUnderscore{}}%
\>[2075I]\AgdaSymbol{:}\AgdaSpace{}%
\AgdaOperator{\AgdaGeneralizable{Has‐type}}\<%
\end{code}
\begin{code}[inline*]%
\>[.][@{}l@{}]\<[2075I]%
\>[6]\AgdaInductiveConstructor{nil}\AgdaSpace{}%
\AgdaSymbol{\{}\AgdaArgument{A}\AgdaSpace{}%
\AgdaSymbol{=}\AgdaSpace{}%
\AgdaGeneralizable{B}\AgdaSymbol{\}}\<%
\end{code}
means ``\AgdaInductiveConstructor{nil}, with the implicit argument
\AgdaArgument{A} instantiated to \AgdaBound{B}''.)
One could try to use the type \AgdaFunction{Vec\ensuremath{{}^{\mkern2mu\mathrm{L}}}} from
§ \ref{sec:erased-postulates-and-erased-matches} instead of
\AgdaFunction{Vec\ensuremath{{}^{\mkern1mu\mathrm{F}}}}, but we get a similar situation:
\AgdaFunction{Vec\ensuremath{{}^{\mkern2mu\mathrm{L}}}‐elim′} can be implemented (with propositional
``computation'' rules) using \boxcong{}, and \AgdaFunction{[\ensuremath{\mkern1.5mu}]‐cong\ensuremath{{}_{\mathrm{0}}}}
can be implemented using \AgdaFunction{Vec\ensuremath{{}^{\mkern2mu\mathrm{L}}}‐elim′}.
Thus, if a designer of a dependently typed programming language wants
to support \AgdaDatatype{Vec\ensuremath{{}^{\mkern1mu\mathrm{D}}}} in the surface language, but only more
basic data types in the core language, then it may make sense to allow
erased matches (of some kind) for identity types in the core language.

Here we have only looked at implementations of vectors: I expect that
similar ideas work for a larger class of inductive families, but that
remains to be investigated.

\section{Type Theory with Identity Types, Quotients and Erasure}
\label{sec:typing}

Let me now present a family of type theories with support for erasure.

The Agda formalisation supports identity types, set quotients,
Π-types, Σ-types with and without η-equality, unit types with and
without η-equality, an empty type, natural numbers, an infinite
hierarchy of universes, optional first-class universe levels, lift
types, and top-level, possibly opaque definitions.
However, here I will focus on identity types, quotients and Π-types.
Full typing rules (excluding rules for first-class universe levels and
top-level definitions) can be found in
\appendixA{}\ifAppendicesIncluded{}{ in the supplementary material}.

The formalisation also supports various \emph{modality semirings},
following \citet{mcbride-2016}, \citet{atkey-2018},
\citet{abel-et-al-2023} and others.
However, here I present the type system specialised to the erasure
semiring,\footnote{The definition of the general rules for identity
  types, with support for arbitrary modality semirings, was influenced
  by input from \ifAnonymous{[\emph{name omitted to make it harder to
      deanonymise the paper}]}{Oskar Eriksson}. Those rules are not
  presented here.} with the two grades $\zeroG$ and $\omegaG$:
$\zeroG$ stands for ``not used'', and $\omegaG$ for ``used an
arbitrary number of times''.
The semiring comes with a multiplication operator with $\zeroG$ as a
zero and $\omegaG$ as an identity.
There is also a total order with $\leqG{\omegaG}{\zeroG}$.

The formalisation uses well-scoped syntax, but such details have been
omitted here: the definitions should be read as if everything were
well-scoped.
(There are also other, minor differences between the text and the
formalisation.)
There are a number of judgements that are defined using mutual
induction:
\begin{itemize}
\item Well-typed and well-resourced terms: $\wrTm{γ}{Γ}{t}{p}{A}$.
  Here $γ$ is a grade context, $Γ$ a type context of equal length, and
  $p$ a grade.
  One can think of $γ$ as specifying how each variable is allowed to
  be used: variables with grade $\zeroG$ must only be used in
  \emph{erased contexts}, i.e.\ when the \emph{mode} $p$ is $\zeroG$.
  The notation $\wfTm{Γ}{t}{A}$ is an abbreviation for
  $\wrTm{\zeroGC}{Γ}{t}{\zeroG}{A}$, where $\zeroGC$ is a grade
  context where every entry is $\zeroG$ ($\wrTm{γ}{Γ}{t}{p}{A}$
  provably implies $\wfTm{Γ}{t}{A}$).
\item Well-formed and well-resourced types: $\wrTy{γ}{Γ}{p}{A}$.
  The notation $\wfTy{Γ}{A}$ is an abbreviation for
  $\wrTy{\zeroGC}{Γ}{\zeroG}{A}$ (and $\wrTy{γ}{Γ}{p}{A}$ implies
  $\wfTy{Γ}{A}$).
\item Well-formed contexts, $\wfCtxt{Γ}$, judgemental equality of
  types, $\eqTy{Γ}{A}{B}$, and judgemental equality of terms,
  $\eqTm{Γ}{t}{u}{A}$.
  These relations are not defined with reference to grade contexts or
  modes.
\end{itemize}

Here are the conversion rule and the typing rule for variables (de
Bruijn indices):
\begin{mathpar}
  \inferrule{\wrTm{γ}{Γ}{t}{p}{A} \\ \eqTy{Γ}{A}{B}}{%
    \wrTm{γ}{Γ}{t}{p}{B}}
  \and
  \inferrule{\wfCtxt{Γ} \\ \varTy{x}{A}{Γ} \\ \leqG{\lookup{γ}{x}}{p}}{%
    \wrTm{γ}{Γ}{x}{p}{A}}
\end{mathpar}
The judgemental equality rules have mostly been relegated to
\appendixA{}.
They are rather standard, and include references to $\wfTy{Γ}{A}$ and
$\wfTm{Γ}{t}{A}$, but not other instances of $\wrTy{γ}{Γ}{p}{A}$ and
$\wrTm{γ}{Γ}{t}{p}{A}$.
For the variable rule, if the mode is $p$, then the variable $x$ is
well-resourced if its grade is at most $p$: if the mode is $\omegaG$,
then the grade of the variable must be $\omegaG$, but in erased
contexts the grade can be anything.
(The definition of $\varTy{x}{A}{Γ}$ can be found in \appendixA{}.)

For Π-types we have the following rules, following
\citet{abel-et-al-2021} and \citet{danielsson-et-al-2026}:
\begin{mathpar}
  \inferrule{\wrTy{γ}{Γ}{\mul{q}{p}}{A} \\
    \wrTy{\consGC{γ}{p}}{\consC{Γ}{A}}{q}{B}}{%
    \wrTy{γ}{Γ}{q}{\PiT{p}{A}{B}}}
  \and
  \inferrule{%
    \wrTm{γ}{Γ}{A}{\mul{q}{p}}{\U{l}} \\
    \wrTm{\consGC{γ}{p}}{\consC{Γ}{A}}{B}{q}{\U{l}}}{%
    \wrTm{γ}{Γ}{\PiT{p}{A}{B}}{q}{\U{l}}}
  \and
  \inferrule{\wrTm{\consGC{γ}{p}}{\consC{Γ}{A}}{t}{q}{B}}{%
    \wrTm{γ}{Γ}{\lam{p}{t}}{q}{\PiT{p}{A}{B}}}
  \and
  \inferrule{\wrTm{γ}{Γ}{t}{q}{\PiT{p}{A}{B}} \\
    \wrTm{γ}{Γ}{u}{\mul{q}{p}}{A}}{%
    \wrTm{γ}{Γ}{\app{t}{p}{u}}{q}{\substZ{B}{u}}}
\end{mathpar}
The Π-type $\PiT{p}{A}{B}$ comes with a grade $p$.
If this grade is $\zeroG$, then the argument may only be used in
erased contexts in the codomain $B$, and furthermore the mode is
$\zeroG$ for the antecedents $\wrTy{γ}{Γ}{\mul{q}{p}}{A}$ and
$\wrTm{γ}{Γ}{A}{\mul{q}{p}}{\U{l}}$.
The first rule gives a condition for $\PiT{p}{A}{B}$ being a type, and
the second rule gives a condition for $\PiT{p}{A}{B}$ being a code in
the universe $\U{l}$.
Lambdas and applications are annotated with grades, and these grades
have to match the grade of the Π-type.
If the grade is $\zeroG$, then the bound variable may only be used in
erased contexts in the bodies of lambdas, and correspondingly the
argument of an application is in an erased context.

In some previous work introduction rules for the universe are only
given for the case where the mode is $\zeroG$
\citep{mcbride-2016,atkey-2018}.
Here I use rules based on those given by \citet{abel-et-al-2021}.
As mentioned above I do not prove correctness of erasure in the
presence of both erased, postulated univalence and \boxcong{}, but I
base the design on that of \citeauthor{abel-et-al-2021} in the hope
that such a proof will be possible.
This design is also used by Agda.

The rules for Σ and unit types are omitted, but note that one can
choose whether or not to allow erased matches for the weak variants of
those types (i.e.\ the variants with pattern matching instead of
projections and η-equality).
The rules for the empty type are also omitted.
Again one can choose whether or not to allow erased matches for this
type.
The statement of Theorem~\ref{thm:soundness-of-erasure} below contains
assumptions related to erased matches for these types.

Let us now consider identity types.
One can choose between the following two formation rules (and
corresponding rules for type formation):
\begin{mathpar}
  \inferrule{\wrTm{γ}{Γ}{A}{p}{\U{l}} \\\\ \wrTm{γ}{Γ}{t}{p}{A} \\
    \wrTm{γ}{Γ}{u}{p}{A}}{%
    \wrTm{γ}{Γ}{\IdT{A}{t}{u}}{p}{\U{l}}}
  \and
  \inferrule{\wfTm{Γ}{A}{\U{l}} \\ \wfTm{Γ}{t}{A} \\ \wfTm{Γ}{u}{A}}{%
    \wrTm{γ}{Γ}{\IdT{A}{t}{u}}{p}{\U{l}}}
\end{mathpar}
Cubical Agda allows something akin to the first rule with
$\wfTm{Γ}{A}{\U{l}}$ instead of \mbox{$\wrTm{γ}{Γ}{A}{p}{\U{l}}$}.
Here I have instead chosen a rule based on the one given for paths by
\citet{abel-et-al-2021}.
The second rule above is not compatible with erased univalence in the
presence of \boxcong{} (see §~\ref{sec:box-cong}).
However, all results presented below hold with either rule.

The following rules are used for reflexivity and equality reflection:
\begin{mathpar}
  \inferrule{\wfTm{Γ}{t}{A}}{\wrTm{γ}{Γ}{\rfl}{p}{\IdT{A}{t}{t}}}
  \and
  \inferrule{\wfTm{Γ}{v}{\IdT{A}{t}{u}}}{\eqTm{Γ}{t}{u}{A}}
\end{mathpar}
The reflexivity constructor does not take any arguments.
Equality reflection, which makes univalence inconsistent, is optional.

For the J rule – the main eliminator for the identity type – one can
choose between three different sets of rules.
In all cases the same rule, presented below, is used when the mode is
$\zeroG$.
If no erased matches are allowed for J, then the following rule is
used when the mode is $\omegaG$:
\begin{mathpar}
  \inferrule{\wfTy{Γ}{A} \\ \wrTm{γ}{Γ}{t}{\omegaG}{A} \\
    \wrTy{\consGC{\consGC{γ}{p}}{q}}{%
      \consC{\consC{Γ}{A}}{\IdT{(\wkO{A})}{(\wkO{t})}{\varZ}}}{%
      \omegaG}{B} \\
    \wrTm{γ}{Γ}{u}{\omegaG}{\substO{B}{t}{\rfl}} \\
    \wrTm{γ}{Γ}{v}{\omegaG}{A} \\
    \wrTm{γ}{Γ}{w}{\omegaG}{\IdT{A}{t}{v}}}{%
    \wrTm{γ}{Γ}{\J{p}{q}{A}{t}{B}{u}{v}{w}}{\omegaG}{\substO{B}{v}{w}}}
\end{mathpar}
The eliminator $\JName$ is annotated with two grades.
Those specify how the bound arguments are allowed to be used in the
motive \AgdaBound{B}.
(The notation $\wkO{A}$ stands for $A$, weakened one step.)
Note that the grade context $γ$ and the mode $\omegaG$ are used for
all antecedents except for the first one.

One can also allow limited forms of erased matches.
In that case there are two rules for the mode $\omegaG$ – the one
above, restricted to the case where at least one of $p$ and $q$ is
$\omegaG$, and the following one:
\begin{mathpar}
  \inferrule{\wfTy{Γ}{A} \\ \wfTm{Γ}{t}{A} \\
    \wrTy{\consGC{\consGC{γ}{\zeroG}}{\zeroG}}{%
      \consC{\consC{Γ}{A}}{\IdT{(\wkO{A})}{(\wkO{t})}{\varZ}}}{%
      \omegaG}{B} \\
    \wrTm{γ}{Γ}{u}{\omegaG}{\substO{B}{t}{\rfl}} \\
    \wfTm{Γ}{v}{A} \\
    \wfTm{Γ}{w}{\IdT{A}{t}{v}}}{%
    \wrTm{γ}{Γ}{\J{\zeroG}{\zeroG}{A}{t}{B}{u}{v}{w}}{\omegaG}{%
      \substO{B}{v}{w}}}
\end{mathpar}
Note that the grade context $γ$ and the mode $\omegaG$ are now only
used for two antecedents, those for $B$ and $u$.
This set of rules is equally expressive as the previous one plus
\boxcong{}, see § \ref{sec:no-box-cong}.

Finally one can allow unrestricted erased matches for J.
This rule is also used, in all cases, when the mode $r$ is $\zeroG$:
\begin{mathpar}
  \inferrule{\wfTy{Γ}{A} \\ \wfTm{Γ}{t}{A} \\
    \wfTy{\consC{\consC{Γ}{A}}{\IdT{(\wkO{A})}{(\wkO{t})}{\varZ}}}{B} \\
    \wrTm{γ}{Γ}{u}{r}{\substO{B}{t}{\rfl}} \\
    \wfTm{Γ}{v}{A} \\
    \wfTm{Γ}{w}{\IdT{A}{t}{v}}}{%
    \wrTm{γ}{Γ}{\J{p}{q}{A}{t}{B}{u}{v}{w}}{r}{%
      \substO{B}{v}{w}}}
\end{mathpar}
Here $γ$ and $r$ are only used for one antecedent, the one for $u$,
and $p$ and $q$ are ignored (for simplicity the same syntax is used
for all the rules).
The intuition here is that, because we have a judgemental equality
between $\J{p}{q}{A}{t}{B}{u}{t}{\rfl}$ and $u$, the only
computationally relevant argument is $u$.
However, this rule allows one to implement functions like
\AgdaFunction{subst′} and \AgdaFunction{subst″} from
§~\ref{sec:erased-postulates-and-erased-matches} and
§~\ref{sec:box-cong}, so the type theory we get with this rule is not
compatible with erased, postulated univalence.

The formalisation also has optional support for the K rule – optional
because the K rule makes univalence inconsistent – with three sets of
typing rules akin to the ones presented for J above.
Those rules can be found in \appendixA{}.

The formalisation supports the definition of a type like \Erased{}
using a graded Σ-type \citep{abel-et-al-2023}, a unit type and a lift
type.
Instead of presenting rules for Σ, unit and lift types I present some
admissible typing rules for a type $\EName$.
In the formalisation one can choose between two variants of $\EName$,
with or without η-equality; the following rules are admissible in
either case:
\begin{mathpar}
  \inferrule{\wfTm{Γ}{A}{\U{l}}}{\wrTm{γ}{Γ}{\E{l}{A}}{p}{\U{l}}}
  \and
  \inferrule{\wfTm{Γ}{t}{A}}{\wrTm{γ}{Γ}{\bx{t}}{p}{\E{l}{A}}}
  \and
  \inferrule{\wfTm{Γ}{t}{\E{l}{A}}}{\wfTm{Γ}{\erased{A}{t}}{A}}
\end{mathpar}
Note that the projection $\erasedName$ can only be used in erased
contexts, and that the argument $t$ of $\bx{t}$ is in an erased
context.
The application $\erased{A}{\bx{t}}$ is judgementally equal to $t$
(given $\wfTm{Γ}{t}{A}$).

The type theory has \emph{optional} support for \boxcong{}.
The typing rules for $\bcName$ refer to $\EName$ (in the formalisation
one can choose to support one or both of two variants of $\bcName$,
one for $\EName$ without η-equality, and one for $\EName$ with
η-equality):
\begin{mathpar}
  \inferrule{\wfTy{Γ}{A} \\ \wfTm{Γ}{t}{A} \\ \wfTm{Γ}{u}{A} \\
    \wfTm{Γ}{v}{\IdT{A}{t}{u}}}{%
    \wrTm{γ}{Γ}{\bc{l}{A}{t}{u}{v}}{p}{\IdT{(\E{l}{A})}{\bx{t}}{\bx{u}}}}
  \and
  \inferrule{\wfTm{Γ}{t}{A}}{%
    \eqTm{Γ}{\bc{l}{A}{t}{t}{\rfl}}{\rfl}{\IdT{(\E{l}{A})}{\bx{t}}{\bx{t}}}}
\end{mathpar}
The congruence rule has been omitted.
Note that all arguments of $\bcWithLevel{l}$ are in erased contexts.

The rules for quotients are based on the interface presented in
§~\ref{sec:quotients} and omitted; these rules can be found in
\appendixA{}.
However, note that quotient types are optional, and that the rules use
binders instead of Π-types.
The point constructor $\className$ only takes one argument.
The erased higher constructors are included as term formers
($\resp{A}{R}{t}{u}{v}$ and $\set{A}{R}{t}{u}{v}{w}$).
In the absence of equality reflection they are treated as neutral, but
in the presence of equality reflection well-typed applications of
these constructors reduce to $\rfl$: note that, in the presence of
equality reflection, every identity proof is judgementally equal to
$\rfl$.

There are six reduction judgements:
\begin{itemize}
\item First weak head reduction for terms is defined:
  $\redTm{Γ}{t}{u}{A}$ means that, under the context $Γ$, $t$ reduces
  in a single step to $u$ at type $A$.
  The rules of this judgement can be found in \appendixA{}.
  It is deterministic, and compatible with judgemental equality:
  \mbox{$\redTm{Γ}{t}{u}{A}$} implies $\eqTm{Γ}{t}{u}{A}$.
  A corresponding relation allowing zero or more steps is denoted by
  $\redsTm{Γ}{t}{u}{A}$.
\item Then weak head reduction for types is defined: $\redTy{Γ}{A}{B}$
  holds if $\redTm{Γ}{A}{B}{\U{l}}$, and $\redsTy{Γ}{A}{B}$ is the
  corresponding multi-step relation.
\item There is also a reduction relation that allows evaluation under
  successor constructors:
  \begin{mathpar}
    \inferrule{\redTm{Γ}{t}{u}{\Nat}}{\redSuc{Γ}{t}{u}}
    \and
    \inferrule{\redSuc{Γ}{t}{u}}{\redSuc{Γ}{\suc{t}}{\suc{u}}}
  \end{mathpar}
  The corresponding multi-step relation is denoted by
  $\redsSuc{Γ}{t}{u}$.
  This relation is used in the statement of
  Theorem~\ref{thm:soundness-of-erasure} below.
\end{itemize}

In § \ref{sec:soundness-erasure-postulates} and
§ \ref{sec:no-box-cong} a well-formed substitution relation is used:
$\wrSubst{δ}{Δ}{σ}{p}{Γ}$ means that $σ$ is a well-formed parallel
substitution from $Γ$ to $Δ$, where all terms in $σ$ are
well-resourced with respect to $δ$ and $p$.
The notation $\wfSubst{Δ}{σ}{Γ}$ stands for
$\wrSubst{\zeroGC}{Δ}{σ}{\zeroG}{Γ}$.
In § \ref{sec:no-box-cong} \emph{weakenings} are also used.
$\wfWk{Δ}{ρ}{Γ}$ means that $ρ$ is a well-formed weakening from $Γ$ to
$Δ$.

The family of type theories presented above includes a number of
options.
Some of them affect results presented below.
They are summarised here:
\begin{itemize}
\item Equality reflection, K, \boxcong{}, and quotient types are
  optional.
\item Erased matches for the empty type, weak Σ-types, and the weak
  unit type are optional.
\item For J one can choose between no erased matches, limited erased
  matches, and unrestricted erased matches, and similarly for K.
\end{itemize}

\section{Basic Meta-Theory}
\label{sec:meta-theory}

Let us now look at some basic meta-theory for this family of type
theories.
(I do not claim that the results about identity types are particularly
novel. See, for instance, \citet{adjedj-et-al-2024} for formalised
meta-theory of Martin-Löf type theory with intensional identity
types.)

\citet{abel-et-al-2017,abel-et-al-2023} prove normalisation,
consistency etc.\ using a logical relations argument.
Extending this argument to support identity types is straightforward
(in the absence of equality reflection).
I do not present a full argument here, but a key part of the
formalised proof is to define two terms as reducibly equal at a type
that reduces to $\IdT{A}{t}{u}$ if (roughly) they both reduce to
$\rfl$ and $t$ and $u$ are reducibly equal at type $A$, or if both
reduce to neutral, judgementally equal terms.\footnote{This definition
  is based on a suggestion from \ifAnonymous{[\emph{name omitted to
      make it harder to deanonymise the paper}]}{Andreas Abel}.}
In the absence of equality reflection the higher quotient constructors
are treated as neutral, and two terms are reducibly equal at a type
that reduces to a quotient type if (roughly) they both reduce to
neutral, judgementally equal terms, or if they both reduce to
applications of the point constructor and the arguments are reducibly
equal (\citet{pujet-tabareau-2022} use a similar definition).

The formalisation also supports type theories with equality
reflection.
For such theories the logical relations omit cases for neutral terms.
For quotients the point constructor case mentioned above is extended
with an option for the point constructor arguments to be related by
the symmetric, transitive closure of (roughly) ``two terms are related
if they are reducible and there is a reducible proof showing that they
are related by the quotient relation''.
If the quotient relation had been required to be an equivalence
relation, then one could presumably have skipped taking the symmetric,
transitive closure.

We obtain a number of meta-theoretical properties, including the
following ones related to empty contexts ($\emptyC$):
\begin{itemize}
\item Canonicity for $\IdName$ does not hold in the presence of the
  ``postulated'' higher quotient constructors, unless equality
  reflection is allowed, in which case well-typed applications of
  these constructors reduce to $\rfl$.
  We have that, if quotient types are not allowed or equality
  reflection is allowed, and $\wfTm{\emptyC}{v}{\IdT{A}{t}{u}}$, then
  $\redsTm{\emptyC}{v}{\rfl}{\IdT{A}{t}{u}}$ and
  $\eqTm{\emptyC}{t}{u}{A}$.
\item Similarly, if quotient types are not allowed or equality
  reflection is allowed, and $\wfTm{\emptyC}{t}{\Nat}$, then there is
  a natural number $n$ such that
  $\eqTm{\emptyC}{t}{\numeralAsTerm{n}}{\Nat}$, where
  $\numeralAsTerm{n}$ stands for $\numeralAsTermExplanation{n}$.
\item Consistency: there is no term $t$ such that
  $\wfTm{\emptyC}{t}{\Empty}$.
  From the logical relation we get that such a term must reduce to a
  closed, neutral term.
  In the absence of quotient types there are no closed, neutral terms,
  but if quotient types are allowed and equality reflection is
  disallowed, then the higher quotient constructors are treated as
  neutral.
  However, if $\wfTm{\emptyC}{t}{\Empty}$ holds in the absence of
  equality reflection, then it holds also when equality reflection is
  turned on, so consistency holds unconditionally.
\end{itemize}
The following properties hold in the absence of equality reflection,
or if the context $Γ$ is empty:
\begin{itemize}
\item Weak head normalisation for types and terms: if $\wfTy{Γ}{A}$
  then there is some weak head normal form (WHNF) $B$ such that
  $\redsTy{Γ}{A}{B}$, and if $\wfTm{Γ}{t}{A}$ then there is some WHNF
  $u$ such that $\redsTm{Γ}{t}{u}{A}$.
  If $\wfTm{Γ}{v}{\IdT{A}{t}{u}}$ then
  $\redsTm{Γ}{v}{w}{\IdT{A}{t}{u}}$ for some $w$ that is either $\rfl$
  (in which case $\eqTm{Γ}{t}{u}{A}$) or neutral.
\item If $\eqTy{Γ}{\IdT{A}{t}{u}}{B}$, then $B$ is not neutral and not
  $\U{l}$, a lift type, the empty type, a unit type, a Π-type, a
  Σ-type, $\Nat$, or a quotient type.
  A corresponding property holds for $\Quot{A}{B}$.
\item Injectivity of $\IdName$: If
  $\eqTy{Γ}{\IdT{A\ensuremath{{}_{\mathrm{1}}}}{t\ensuremath{{}_{\mathrm{1}}}}{u\ensuremath{{}_{\mathrm{1}}}}}{\IdT{A\ensuremath{{}_{\mathrm{2}}}}{t\ensuremath{{}_{\mathrm{2}}}}{u\ensuremath{{}_{\mathrm{2}}}}}$, then
  $\eqTy{Γ}{A\ensuremath{{}_{\mathrm{1}}}}{A\ensuremath{{}_{\mathrm{2}}}}$, $\eqTm{Γ}{t\ensuremath{{}_{\mathrm{1}}}}{t\ensuremath{{}_{\mathrm{2}}}}{A\ensuremath{{}_{\mathrm{1}}}}$ and
  $\eqTm{Γ}{u\ensuremath{{}_{\mathrm{1}}}}{u\ensuremath{{}_{\mathrm{2}}}}{A\ensuremath{{}_{\mathrm{1}}}}$.
\end{itemize}
The following properties hold in the absence of equality reflection:
\begin{itemize}
\item Injectivity of $\QuotName$: if
  $\eqTy{Γ}{\Quot{A\ensuremath{{}_{\mathrm{1}}}}{B\ensuremath{{}_{\mathrm{1}}}}}{\Quot{A\ensuremath{{}_{\mathrm{2}}}}{B\ensuremath{{}_{\mathrm{2}}}}}$, then $\eqTy{Γ}{A\ensuremath{{}_{\mathrm{1}}}}{A\ensuremath{{}_{\mathrm{2}}}}$
  and $\eqTy{\consC{\consC{Γ}{A\ensuremath{{}_{\mathrm{1}}}}}{\wkO{A\ensuremath{{}_{\mathrm{1}}}}}}{B\ensuremath{{}_{\mathrm{1}}}}{B\ensuremath{{}_{\mathrm{2}}}}$.
\item Decidability of judgemental equality of types and terms: if
  $\wfTy{Γ}{A}$ and $\wfTy{Γ}{B}$, then it is decidable whether
  $\eqTy{Γ}{A}{B}$ holds, and if $\wfTm{Γ}{t}{A}$ and
  $\wfTm{Γ}{u}{A}$, then it is decidable whether $\eqTm{Γ}{t}{u}{A}$
  holds.
\item Type-checking is decidable for a fragment of the language that
  excludes, for instance, β-redexes (because lambdas and applications
  are not annotated with types).
\end{itemize}

\section{Soundness of Erasure}
\label{sec:soundness-erasure}

\citet{abel-et-al-2023} define a translation to an untyped target
language.
This translation replaces erased arguments with a dummy value, and
this dummy value is also used for some other things.
\citeauthor{abel-et-al-2023} prove that the translation is correct in
the following sense: given certain assumptions a term $t$ of type
$\Nat$ reduces to a numeral $n$, and the translated term reduces to
the same numeral.
\citet{danielsson-et-al-2026} extend the translation with support for
call-by-value in addition to call-by-name, and in the case of
call-by-name erased arguments are removed entirely instead of being
replaced by a dummy value.

Here the translation of \citeauthor{danielsson-et-al-2026} is extended
with support for identity types and quotients.
For simplicity only the call-by-value translation is presented in the
text: the call-by-name translation can be found in the formalisation.
Note that the soundness theorem is stated in a slightly different way
when the call-by-name translation is used
\citep{danielsson-et-al-2026}.

The target language does not have any term formers directly related to
identity types or quotients.
The syntax of the target language is defined in the following way:
$t, u ∷= \dummy ~|~ \varT{x} ~|~ \lamT{t} ~|$\linebreak
$\appT{t}{u} ~|~ \zeroName ~|~ \sucT{t} ~|~ \ldots$ (one of the term
formers related to $\Nat$, and all of the term formers related to Σ
and unit types, have been omitted; they can be found in
\appendixB{}\ifAppendicesIncluded{}{ in the supplementary material}).
The formalisation uses well-scoped syntax, but details related to that
are omitted here.
Variables $x$ are natural number literals (de Bruijn indices).

The target language comes with an inductively defined call-by-value
small-step semantics, $\step{t}{u}$, without evaluation under
constructors (the relation can be found in \appendixB{}).
The statement of the soundness theorem below
(Theorem~\ref{thm:soundness-of-erasure}) uses the reflexive,
transitive closure of this relation, denoted by $\stepsName$.
For a natural number $n$ the notation $\numeralAsTerm{n}$ stands for
the corresponding term, $\numeralAsTermExplanation{n}$ (in the source
language as well as the target language).

\begin{figure}[t]
  \centering
  \begin{equation*}
    \begin{pmboxed}
      \>[][@{}l@{}]\extract{\var{x}}                      \>[][@{}l@{}]= \varT{x}                                           \>[][@{}l@{\quad}]
      \>[][@{}l@{}]\extract{(\set{A}{B}{t}{u}{v}{w})}     \>[][@{}l@{}]= \loopT                                             \\
      \>           \extract{(\IdT{A}{t}{u})}              \>           = \dummy                                             \>
      \>           \extract{(\qrec{C}{t}{u}{v}{w})}       \>           = \appT{(\lamT{(\extract{t})})}{(\extract{w})}       \\
      \>           \extract{\rfl}                         \>           = \dummy                                             \>
      \>           \extract{(\PiT{p}{A}{B})}              \>           = \dummy                                             \\
      \>           \extract{(\J{p}{q}{A}{t}{B}{u}{v}{w})} \>           = \extract{u}                                        \>
      \>           \extract{(\lam{p}{t})}                 \>           = \lamT{(\extract{t})}                               \\
      \>           \extract{(\bc{l}{A}{t}{u}{v})}         \>           = \dummy                                             \>
      \>           \extract{(\app{t}{\omegaG}{u})}        \>           = \appT{(\extract{t})}{(\extract{u})}                \\
      \>           \extract{(\Quot{A}{B})}                \>           = \dummy                                             \>
      \>           \extract{(\app{t}{\zeroG}{u})}         \>           = \appT{(\extract{t})}{\dummy}                       \\
      \>           \extract{(\class{t})}                  \>           = \extract{t}                                        \>
      \>           \extract{\zeroName}                    \>           = \zeroName                                          \\
      \>           \extract{(\resp{A}{B}{t}{u}{v})}       \>           = \loopT                                             \>
      \>           \extract{(\suc{t})}                    \>           = \appT{(\lamT{(\sucT{(\varT{0})})})}{(\extract{t})}
    \end{pmboxed}
  \end{equation*}
  \caption{The extraction function.}
  \label{fig:extraction}
\end{figure}

Let me now present the extraction function.
It is defined recursively in Figure \ref{fig:extraction} (some cases
have been omitted, see \appendixB{} for more details).
All type constructors are replaced by the dummy term $\dummy$, as are
erased function arguments.
The identity type primitives $\rfl$ and $\bcName$ are also replaced by
$\dummy$: no attempt is made to compute the values of identity proofs.
The higher quotient constructors are instead replaced by $\loopT$ –
which is a non-terminating term – to illustrate that their extracted
terms will not be reduced to WHNF.
For J all arguments except for one are thrown away (and K is treated
similarly).
There is no representation of $\className$ in the target language, and
applications of $\qrecName$ are turned into β-redexes.
The application $\suc{t}$ is translated to
``$(λx.\sucT{x})(\extract{t})$'': as mentioned above the semantics
does not include evaluation under constructors, and this construction
ensures that $\extract{t}$ is reduced before the successor constructor
is applied to the term.

\citet{abel-et-al-2023} and \citet{danielsson-et-al-2026} prove
soundness of erasure using logical relations.
Adapting this kind of argument to the present setting is
straightforward.
A key step in my proof is to define the logical relation in the
following way for identity types and quotients:
\begin{itemize}
\item A source term and a target term are related at a type that
  reduces to $\IdT{A}{t}{u}$ if the source term reduces to $\rfl$ and
  the target term reduces to $\dummy$.
\item A source term $t$ and a target term $v$ are related at a type
  that reduces to $\Quot{A}{B}$ if $t$ reduces to $\class{t′}$ and
  $t′$ is related to $v$ at type $A$.
\end{itemize}
We obtain the following theorem (``$Γ$ is consistent'' means that
there is no $t$ such that $\wfTm{Γ}{t}{\Empty}$):
\begin{theorem}[Soundness of erasure]
  \label{thm:soundness-of-erasure}
  If $\wrTm{\zeroGC}{Γ}{t}{\omegaG}{\Nat}$; $Γ$ is consistent or
  erased matches are disallowed for the empty type; and either
  \begin{itemize}
  \item $Γ$ is empty and either quotient types are disallowed or
    equality reflection is allowed, or
  \item $\bcName$ is disallowed, erased matches are disallowed for J
    and K as well as weak Σ and unit types, and equality reflection is
    disallowed,
  \end{itemize}
  then there is a natural number $n$ such that
  $\redsSuc{Γ}{t}{\numeralAsTerm{n}}$ and
  $\steps{\extract{t}}{\numeralAsTerm{n}}$.
\end{theorem}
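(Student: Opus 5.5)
The plan follows the logical-relations argument of \citet{abel-et-al-2023} and \citet{danielsson-et-al-2026}. I will define a relation $t \circledR v ∷ A$ between well-typed source terms and target terms, by recursion on the reducibility proof for $A$ from the logical relation of §~\ref{sec:meta-theory}. The relation is lifted to grade contexts and substitutions in the usual way. For $\Nat$, a source term $t$ and a target term $v$ are related if $t$ reduces (in the $\redsSucName$ sense, step by step) to $\zero$ or $\suc{t′}$ while $v \stepsName \zeroName$ or $\sucT{v′}$, with related subterms. For identity types and quotients I use the two clauses stated just before the theorem:
\begin{itemize}
\item $\IdT{A}{t}{u}$: the source term reduces to $\rfl$ and the target term reduces to $\dummy$;
\item $\Quot{A}{B}$: the source term reduces to $\class{t′}$ with $t′$ related to the target term at $A$.
\end{itemize}
The fundamental lemma states that if $\wrTm{γ}{Γ}{t}{p}{A}$, then for every closing substitution $σ$ related to a target substitution $σ′$ at the positions where $γ$ is $\omegaG$, the term $\subst{t}{σ}$ is related to $\subst{(\extract{t})}{σ′}$ at $A$ (trivially so when $p = \zeroG$). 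The theorem then follows by instantiating $Γ$ with the identity substitution and reading off the $\Nat$ clause.

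The key steps are as follows.
\begin{enumerate}
\item Prove the standard closure lemmas. The relation must be closed under source weak head expansion and target backward reduction, and stable under judgemental conversion of the type (using irrelevance of the reducibility proof).
\item Prove the fundamental lemma by induction on the typing derivation. For Π, applications and $\Nat$ the cases are as in \citet{danielsson-et-al-2026}.
\item Handle $\rfl$, $\IdName$, $\bcName$ and $\Quot{A}{B}$: these extract to $\dummy$ and are either types (trivially related) or reduce to $\rfl$.
\item Handle $\className$: its case is immediate from the quotient clause, since $\extract{(\class{t})} = \extract{t}$.
\item Handle $\qrecName$: the scrutinee's relatedness gives $w \redsTm{}{}{}{} \class{w′}$, the eliminator $\beta$-reduces to $t\,w′$, and the target $\appT{(\lamT{(\extract{t})})}{(\extract{w})}$ evaluates correspondingly once $\extract{w}$ is a value.
\item Handle $\JName$ in the non-erased-match rule: $w$ is well-resourced, so relatedness of $w$ gives that $\subst{w}{σ}$ reduces to $\rfl$, hence $\JName$ reduces to $u$, which matches $\extract{u}$.
\end{enumerate}

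The main obstacle is the set of cases where the typing rule gives no relatedness information for the scrutinee: erased J and K matches, $\bcName$'s computation rule, erased weak-Σ/unit matches, and erased $\emptyrecName$. Here we must know that the closed scrutinee actually reduces to a canonical form. I will handle this by the case split in the hypothesis.

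In the first disjunct, $Γ$ is empty and either quotients are absent or equality reflection is present. Here the meta-theory of §~\ref{sec:meta-theory} yields canonicity: closed identity proofs reduce to $\rfl$, and closed terms of weak Σ/unit types reduce to constructors. So erased matches and $\bcName$ reduce as needed, which is exactly what these cases require. The higher constructors $\respName$ and $\setName$ reduce to $\rfl$ under reflection, so their $\loopT$ extraction is never forced: they only occur at identity types, where the target term is irrelevant beyond $\dummy$. I will handle that last point by making the Id clause require only that the source reduces to $\rfl$, with the target side holding for $\dummy$ and vacuously unconstrained otherwise, and checking that target-side use is never needed.

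In the second disjunct, all such matches are disallowed, so the scrutinee is always well-resourced and the fundamental lemma's induction hypothesis supplies its reduct. For $\emptyrecName$ with erased matches, consistency of $Γ$ makes the case impossible, because a related term at $\Empty$ would witness inconsistency. Otherwise the scrutinee is itself related at $\Empty$, and that relation is empty.

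Finally, I expect the $\qrecName$ case on a higher-constructor scrutinee to be subtle. It cannot arise in the second disjunct: a related term at a quotient type reduces to $\class{t′}$, and well-resourced non-erased terms cannot be $\respName$ or $\setName$ since those are only at identity types.
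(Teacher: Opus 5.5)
Your overall strategy matches the paper's: a logical relation in the style of \citet{abel-et-al-2023} and \citet{danielsson-et-al-2026}, using the paper's clauses for identity and quotient types. As formulated, however, it does not prove the open-context half of the theorem.

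You state the fundamental lemma for \emph{closing} substitutions and then instantiate it with the identity substitution for $\Gamma$. That substitution is not closing when $\Gamma$ is non-empty, and closed instances cannot stand in for it. The conclusion $\redsSuc{\Gamma}{t}{\numeralAsTerm{n}}$ is about reduction in $\Gamma$ itself. Moreover, closing substitutions need not exist at all. $\Gamma$ may postulate univalence, which has no closed inhabitant, or it may be inconsistent, e.g.\ $\Gamma = (x : \Empty)$, which the theorem allows when erased matches for the empty type are disallowed. Your lemma would then be vacuous exactly in the erased-postulate situation the second disjunct is about. A symptom of this: with closed instances, the ban on erased matches in the second disjunct would look superfluous.

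What is needed, as in \citet{abel-et-al-2023}, has three parts:
\begin{itemize}
\item a relation parametrised by a fixed context $\Delta$ in which all source reductions take place;
\item a fundamental lemma for substitutions $\sigma$ with $\wfSubst{\Delta}{\sigma}{\Gamma}$, related to $\sigma'$ at the $\omegaG$-positions of $\gamma$;
\item the instantiation $\Delta := \Gamma$, $\sigma := \mathsf{id}$, where no positions need relating because the grade context is $\zeroGC$.
\end{itemize}
The hypotheses of the second disjunct then do precisely the required work. Without equality reflection, reducibility and weak head normalisation are available in the open context $\Delta$ (\S~\ref{sec:meta-theory}). Without erased matches and $\bcName$, every eliminator met at mode $\omegaG$ has a non-erased scrutinee, which is therefore related and hence canonical. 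An erased scrutinee, by contrast, could be a stuck neutral built from a postulate $x$, e.g.\ $\bc{l}{A}{t}{u}{x}$. Finally, an erased $\emptyrecName$ is excluded not because its scrutinee is related (it is not). It is excluded because, after substitution, that scrutinee is a term of type $\Empty$ in the consistent context $\Delta$.

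Second, do not adopt your proposed weakening of the identity-type clause. The target semantics is call-by-value, and the target $\beta$-rule only fires on values. So relatedness at a type must guarantee that the target evaluates to a value; otherwise the $\lambda$ case breaks. For example, take $\app{(\lam{\omegaG}{\zero})}{\omegaG}{w}$ with $w$ of identity type: its extraction $\appT{(\lamT{\zeroName})}{(\extract{w})}$ makes progress only once $\extract{w}$ is a value. The paper's clause (source reduces to $\rfl$, target reduces to $\dummy$) is the right one.

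The problem that motivated your change does not arise. $\respName$ and $\setName$ are typable only at mode $\zeroG$, so they never occur in non-erased positions and their $\loopT$ extraction is never evaluated. For the same reason the $\qrecName$ case is not subtle: its scrutinee has the same mode as the eliminator, so the induction hypothesis always yields a $\className$ form. Higher constructors cause trouble only through erased $\JName$/$\KName$/$\bcName$ matches on identity proofs built from $\respName$ or $\setName$. That is exactly what canonicity (first disjunct) or the ban on erased matches (second disjunct) rules out.
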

\noindent
The theorem would fail, even for closed terms, if quotient types were
allowed, equality reflection were disallowed, and (say) limited erased
matches were allowed for J, because then one could construct a term
$t$ for which source-level reduction gets stuck.
However, $\wrTm{\emptyGC}{\emptyC}{t}{\omegaG}{\Nat}$ suffices to
ensure that the extracted term does not get stuck: this follows from
Theorem~\ref{thm:soundness-reflection}.

The soundness theorem allows \emph{open} terms, as long as all free
variables are erased (have grade $\zeroG$), erased matches are
disallowed for identity types and weak Σ and unit types, $\bcName$ is
disallowed, equality reflection is disallowed, and either erased
matches are disallowed for the empty type or the context is
consistent.
The free variables can be seen as erased postulates.
One could for instance postulate function extensionality, univalence,
excluded middle and a choice principle, and use these assumptions in
erased contexts.
If one can prove that these postulates are mutually consistent (or if
erased matches for the empty type are disallowed), then one gets a
guarantee that programs of type $\Nat$ reduce to numerals, and that
the corresponding extracted terms reduce to equal numerals.
However, this guarantee does not hold if for instance $\bcName$ is
allowed: the term $\bc{l}{A}{t}{u}{\var{x}}$ is stuck, and it is
well-resourced even if $x$ has grade $\zeroG$ (and it is easy to
construct a similar example of type $\Nat$).

Before we leave this section, let me present a result about how it is
not in general possible to ``resurrect'' erased data, i.e.\ produce a
value that is equal to an erased value.
In the interest of readability the following definition uses a
notation with names instead of de Bruijn indices (such a notation is
also sometimes used below):
\begin{definition}
  The type $A$ is \emph{resurrectable} at universe level $l$ if there
  is a term $t$ such that
  $\wrTm{\emptyGC}{\emptyC}{t}{\omegaG}{\PiT{\zeroG}{(\hastype{x}{A})}{%
      (\SigmaT{\strongS}{\omegaG}{(\hastype{y}{A})}{%
        (\E{l}{(\IdT{A}{y}{x})})})}}$, where
  $\SigmaWith{\strongS}{\omegaG}$ stands for a Σ-type with η-equality
  (a strong Σ-type) and a non-erased first component.
\end{definition}
\begin{proposition} Unit types are resurrectable at level $\zeroL$,
  and if erased matches are allowed for the empty type, then it is
  also resurrectable at level $\zeroL$.
  However, the type of natural numbers is not resurrectable at any
  level $l$.
\end{proposition}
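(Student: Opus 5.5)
For the positive claims I would give explicit witnesses. For the unit type (with η-equality, or the weak unit type, at level $\zeroL$) take
$t = \lam{\zeroG}{\prodTm{\strongS}{\omegaG}{\starTm{\zeroL}}{\bx{\rfl}}}$.
The argument of $\bx{\cdot}$ is in an erased context, so it only needs $\wfTm{Γ}{\rfl}{\IdT{\Unit{\zeroL}}{\starTm{\zeroL}}{x}}$. For the η-unit type this holds because $\starTm{\zeroL}$ and $x$ are judgementally equal. If only the weak unit type is available, I would instead build the identity proof in the erased context by $\unitrecName$ on $x$. That is fine because the whole proof sits under $\bx{\cdot}$, so the mode is $\zeroG$ and no erased-match permission is needed.

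For the empty type, when erased matches are allowed, take $t = \lam{\zeroG}{\emptyrec{\zeroG}{C}{x}}$ with $C$ the target Σ-type. The erased-match typing rule for $\emptyrecName$ lets the argument $x$, which has grade $\zeroG$, appear in an erased position while the whole term is used at mode $\omegaG$. Checking these typing derivations against the rules in \appendixA{} is routine.

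For the negative claim about $\Nat$ I would argue by contradiction using soundness of erasure, Theorem~\ref{thm:soundness-of-erasure}, in its closed-term form. Suppose $t$ witnesses resurrectability at level $l$. Form the closed programs
$t_0 = \fst{\omegaG}{(\app{t}{\zeroG}{\zero})}$ and $t_1 = \fst{\omegaG}{(\app{t}{\zeroG}{(\suc{\zero})})}$.
Both satisfy $\wrTm{\emptyGC}{\emptyC}{t_i}{\omegaG}{\Nat}$. The context is empty, and we may take the instance of the theory without quotients (or note that the first bullet's condition is met), so the theorem gives numerals $n_0,n_1$ with $\redsSuc{\emptyC}{t_i}{\numeralAsTerm{n_i}}$ and $\steps{\extract{t_i}}{\numeralAsTerm{n_i}}$.

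Because the application has grade $\zeroG$, the extraction of $\app{t}{\zeroG}{u}$ is $\appT{(\extract{t})}{\dummy}$ regardless of $u$. Hence $\extract{t_0}$ and $\extract{t_1}$ are syntactically identical, and determinism of the target semantics forces $n_0 = n_1$.

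On the other side, the second component of $\app{t}{\zeroG}{\zero}$ yields $\wfTm{\emptyC}{\_}{\E{l}{(\IdT{\Nat}{t_0}{\zero})}}$. Applying the erased projection $\erasedName$ in the erased (type-level) judgement $\wfTm{\emptyC}{\cdot}{\cdot}$ gives a closed identity proof. Canonicity for $\IdName$ from \S~\ref{sec:meta-theory} then gives $\eqTm{\emptyC}{t_0}{\zero}{\Nat}$, so $\numeralAsTerm{n_0}$ is judgementally equal to $\zero$. Likewise $\numeralAsTerm{n_1}$ is judgementally equal to $\suc{\zero}$.

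Since $n_0 = n_1$, this yields $\eqTm{\emptyC}{\zero}{\suc{\zero}}{\Nat}$. That contradicts consistency, because from this judgement one builds a closed inhabitant of $\Empty$ via $\natrecName$ into $\U{\zeroL}$, sending $\zero$ to $\Unit{\zeroL}$ and successors to $\Empty$. Alternatively it contradicts the injectivity/normal-form facts from \S~\ref{sec:meta-theory}.

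The main obstacle is making sure the soundness theorem actually applies in the intended generality, since the proposition is stated for "the" type theory with all its options. If quotients are enabled without equality reflection, the first bullet fails. In that case I would use Theorem~\ref{thm:soundness-reflection} for the extracted-term half, or pass to the extension with equality reflection, where canonicity still holds, and rerun the argument there: judgements in the original theory remain valid in the extension.

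A second subtlety is that the extraction of $\fst{\omegaG}{\cdot}$ and of the argument position must really discard $\zero$ versus $\suc{\zero}$. That follows directly from the $\app{t}{\zeroG}{u}$ clause in Figure~\ref{fig:extraction}.
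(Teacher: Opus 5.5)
Your proof is correct, but for the negative claim you take a somewhat different route from the paper.

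The paper's sketch appeals directly to the fundamental lemma of the logical relation behind Theorem~\ref{thm:soundness-of-erasure}. At a $\Pi$-type with grade $\zeroG$ that relation already quantifies over all source arguments while fixing the target side. So one gets at once that the first components of all instances $\app{t}{\zeroG}{a}$ reduce to one and the same numeral, and the erased identity proof then gives the contradiction.

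You instead use Theorem~\ref{thm:soundness-of-erasure} as a black box at the two arguments $\zero$ and $\suc{\zero}$. You recover independence from the erased argument through two facts: the two extracted programs are syntactically equal, and the target semantics is deterministic. Determinism is not stated in the paper but is immediate from the rules, since numerals are values and there is no evaluation under constructors. Your route needs nothing about the internals of the logical relation, at the cost of the determinism step and of tracking which hypotheses of the soundness theorem hold.

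Your detour through the extension with equality reflection is a real advantage. Every witness in the original theory is also a witness there. In that extension, closed-term soundness, canonicity for $\IdName$ (indeed equality reflection itself) and consistency all hold. This covers configurations such as quotients without equality reflection but with erased matches for $\JName$. There the paper notes that Theorem~\ref{thm:soundness-of-erasure} fails even for closed terms, so the fundamental lemma presumably cannot be used as it stands. The paper's one-line sketch does not spell this case out.

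You could argue in that extension from the start, which removes the case split and gives a uniform proof. Your phrase ``we may take the instance of the theory without quotients'' describes only one branch of that case split, not a choice you are free to make. The explicit witnesses for the unit and empty types are fine.
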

\begin{proof}[Proof sketch]
  The first two cases are easy to prove.
  For the last case one can use the fundamental lemma of the logical
  relation used to prove Theorem~\ref{thm:soundness-of-erasure}.
\end{proof}

\section{Soundness of Erasure in the Presence of Erased Postulates and Erased Matches}
\label{sec:soundness-erasure-postulates}

Theorem \ref{thm:soundness-of-erasure} does not allow erased
postulates (i.e.\ a non-empty context) and erased matches at the same
time, with the exception of erased matches for the empty type.
If any of the conditions related to erased matches were removed, then
the theorem would not hold: in those cases one can construct terms $t$
for which it is not the case that $t$ reduces to a numeral.

Would it be possible to prove the theorem without some or all of the
restrictions related to erased matches if
``$\redsSuc{Γ}{t}{\numeralAsTerm{n}}$'' in the theorem's conclusion
were replaced by something else?
One idea is to replace the reduction with the statement that there is
a term $u$ such that $\wfTm{Γ}{u}{\IdT{\Nat}{t}{\numeralAsTerm{n}}}$.
If one also included suitable extensionality assumptions, then there
is perhaps some hope that the theorem could be proved without any
restrictions on \boxcong{}.
However, I have found it hard to prove something like this.
If $t$ reduces to $u$, or if $t$ is judgementally equal to $u$, then
one can freely replace $t$ with $u$: the type $P\,t$ is
interchangeable with $P\,u$.
This does not follow if the identity type $\IdT{A}{t}{u}$ is inhabited
(in the absence of something like equality reflection): one can use
$\JName$ to replace $t$ with $u$, but a term that has type $P\,t$
might not also have type $P\,u$.

Because I find it easier to work with judgemental equality than with
identity types I have taken a different approach: I replace
$\redsSuc{Γ}{t}{\numeralAsTerm{n}}$ with a judgemental equality in an
\emph{extended} type theory, in which the postulates can be
implemented.
For that type theory it suffices to prove soundness of erasure for
closed terms.
Note that, for closed terms, Theorem \ref{thm:soundness-of-erasure}
does not impose any restrictions related to erased matches or quotient
types if equality reflection is allowed.

Let us fix one type theory from the family of type theories from
§ \ref{sec:typing}.
\begin{definition}
  An \emph{extension} of the type theory is a type of terms
  $T$\footnote{The formalisation uses well-scoped syntax. Those
    details are elided here.} closed under application of parallel
  substitutions $σ$ containing terms from $T$ (application is denoted
  by $\appSubstExt{t}{σ}$), a translation $\trName$ taking regular
  terms to terms in $T$, an extraction function $\extractExtName$ from
  $T$ to the target language from §~\ref{sec:soundness-erasure},
  a typing relation $\wrTmExt{γ}{Γ}{t}{p}{A}$ (where $t, A : T$ and
  $Γ$ uses types from $T$), a judgemental equality relation
  $\eqTmExt{Γ}{t}{u}{A}$, and a well-formed substitution relation
  $\wrSubstExt{δ}{Δ}{σ}{p}{Γ}$, satisfying the following properties:
  \begin{itemize}
  \item Translation is type-preserving: $\wrTm{γ}{Γ}{t}{p}{A}$ implies
    $\wrTmExt{γ}{\tr{Γ}}{\tr{t}}{p}{\tr{A}}$ (where $\tr{Γ}$ is the
    pointwise application of $\trName$ to the types in $Γ$).
  \item A substitution lemma: $\wrTmExt{\zeroGC}{Γ}{t}{p}{A}$ and
    $\wrSubstExt{\emptyGC}{\emptyC}{σ}{\zeroG}{Γ}$ imply
    $\wrTmExt{\emptyGC}{\emptyC}{\appSubstExt{t}{σ}}{p}{\appSubstExt{A}{σ}}$.
  \item $\wrTmExt{\zeroGC}{Γ}{t}{\omegaG}{A}$ implies that
    $\extractExt{(\appSubstExt{t}{σ})}$ is equal to $\extractExt{t}$
    for closing substitutions $σ$.
  \item The target term $\extractExt{(\tr{t})}$ is equal to
    $\extract{t}$, and $\appSubstExt{(\tr{\Nat})}{σ}$ is equal to
    $\tr{\Nat}$.
  \item Soundness of erasure for closed terms: if
    $\wrTmExt{\emptyGC}{\emptyC}{t}{\omegaG}{\tr{\Nat}}$, then there
    is a natural number $n$ such that
    $\eqTmExt{\emptyC}{t}{\tr{\numeralAsTerm{n}}}{\tr{\Nat}}$ and
    $\steps{\extractExt{t}}{\numeralAsTerm{n}}$.
  \end{itemize}
\end{definition}
\noindent
Given an extension of the type theory it is easy to prove the
following theorem:
\begin{theorem}
  \label{thm:meta}
  Let $σ$ be a closing substitution satisfying
  $\wrSubstExt{\emptyGC}{\emptyC}{σ}{\zeroG}{\tr{Γ}}$.
  If $\wrTm{\zeroGC}{Γ}{t}{\omegaG}{\Nat}$, then there is a natural
  number $n$ such that
  $\eqTmExt{\emptyC}{\appSubstExt{(\tr{t})}{σ}}{\tr{\numeralAsTerm{n}}}{\tr{\Nat}}$
  and $\steps{\extract{t}}{\numeralAsTerm{n}}$.
\end{theorem}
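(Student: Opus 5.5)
The proof chains the properties in the definition of an extension, with no new induction. Start from $\wrTm{\zeroGC}{Γ}{t}{\omegaG}{\Nat}$. Type preservation of the translation gives $\wrTmExt{\zeroGC}{\tr{Γ}}{\tr{t}}{\omegaG}{\tr{\Nat}}$. The closing substitution satisfies $\wrSubstExt{\emptyGC}{\emptyC}{σ}{\zeroG}{\tr{Γ}}$, so the substitution lemma yields $\wrTmExt{\emptyGC}{\emptyC}{\appSubstExt{(\tr{t})}{σ}}{\omegaG}{\appSubstExt{(\tr{\Nat})}{σ}}$. By the fourth property $\appSubstExt{(\tr{\Nat})}{σ}$ is literally $\tr{\Nat}$, so the closed term $\appSubstExt{(\tr{t})}{σ}$ has type $\tr{\Nat}$ in the extension with mode $\omegaG$.

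Next apply the soundness-of-erasure property for closed terms to $\appSubstExt{(\tr{t})}{σ}$. It gives a natural number $n$ such that
\[
\eqTmExt{\emptyC}{\appSubstExt{(\tr{t})}{σ}}{\tr{\numeralAsTerm{n}}}{\tr{\Nat}}
\quad\text{and}\quad
\steps{\extractExt{(\appSubstExt{(\tr{t})}{σ})}}{\numeralAsTerm{n}}.
\]
The first conjunct is exactly the required judgemental equality.

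It remains to rewrite the extracted term. We already have $\wrTmExt{\zeroGC}{\tr{Γ}}{\tr{t}}{\omegaG}{\tr{\Nat}}$, so the third property applies to $\tr{t}$ with the closing substitution $σ$: $\extractExt{(\appSubstExt{(\tr{t})}{σ})}$ equals $\extractExt{(\tr{t})}$. The fourth property identifies $\extractExt{(\tr{t})}$ with $\extract{t}$. Hence $\steps{\extract{t}}{\numeralAsTerm{n}}$.

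There is no real obstacle. The only care needed is bookkeeping. The grade context must be $\zeroGC$ and the mode $\omegaG$ wherever the properties require them, and the translated context $\tr{Γ}$ must be the one the substitution is typed against. The two ``is equal to'' facts must be used as syntactic equalities of target terms, so that the reduction sequence is transported along them verbatim. In the formalisation, the well-scoped indexing makes the substitution application to $\tr{\Nat}$ and the identification of the scopes of $\tr{t}$ and $\appSubstExt{(\tr{t})}{σ}$ the fiddliest parts, but they are direct consequences of the stated properties.
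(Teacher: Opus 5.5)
Your proposal is correct and is exactly the intended argument: the paper only remarks that the theorem is ``easy to prove'' from an extension, and the chain you give (type preservation, the substitution lemma, rewriting $\appSubstExt{(\tr{\Nat})}{σ}$ to $\tr{\Nat}$, closed soundness, and then transporting the reduction along $\extractExt{(\appSubstExt{(\tr{t})}{σ})} = \extractExt{(\tr{t})} = \extract{t}$) is precisely how the listed properties combine.
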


Let us now instantiate the theorem.
Every type theory from § \ref{sec:typing} can be extended by
turning on equality reflection and making no other changes, letting
$\trName$ be the identity function.
Thus we obtain the following theorem, where the relations marked with
R refer to the extended type theory with equality reflection:
\begin{theorem}
  \label{thm:soundness-reflection}
  Let $σ$ be a closing substitution satisfying
  $\wrSubstR{\emptyGC}{\emptyC}{σ}{\zeroG}{Γ}$.
  If $\wrTm{\zeroGC}{Γ}{t}{\omegaG}{\Nat}$, then there is a natural
  number $n$ such that
  $\eqTmR{\emptyC}{\appSubstExt{t}{σ}}{\numeralAsTerm{n}}{\Nat}$ and
  $\steps{\extract{t}}{\numeralAsTerm{n}}$.
\end{theorem}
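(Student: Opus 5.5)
Theorem~\ref{thm:soundness-reflection} follows by instantiating Theorem~\ref{thm:meta} with a concrete extension. I take $T$ to be the ordinary term syntax. The extended typing relation $\wrTmExt{γ}{Γ}{t}{p}{A}$, the judgemental equality, and the substitution relation are those of the type theory obtained by switching on equality reflection and changing nothing else. I take $\trName$ to be the identity and $\extractExtName$ to be $\mathit{erase}$. With these choices the conclusion of Theorem~\ref{thm:meta}, $\eqTmExt{\emptyC}{\appSubstExt{(\tr{t})}{σ}}{\tr{\numeralAsTerm{n}}}{\tr{\Nat}}$, is literally $\eqTmR{\emptyC}{\appSubstExt{t}{σ}}{\numeralAsTerm{n}}{\Nat}$, and the hypothesis on $σ$ matches. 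It therefore remains to check the five properties in the definition of an extension.

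\emph{Type preservation.} I show by mutual induction on the typing, well-resourcedness and equality derivations that every rule of the original theory is also a rule of the reflection theory, so each derivation carries over. One point needs care. In the presence of reflection, $\mathsf{resp}$ and $\mathsf{set}$ acquire reduction rules to $\rfl$, and some rules may be stated relative to that. This is still harmless, because the inductive definitions are parametrised by the options and enabling an option only adds rules.

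\emph{Substitution lemma.} I use the standard substitution lemma for the reflection theory, applied with $δ = \emptyGC$ and mode $p$.

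\emph{Extraction is invariant under closing substitutions.} If $\wrTm{\zeroGC}{Γ}{t}{\omegaG}{A}$, then every free variable occurrence in $t$ sits in an erased position, since it has grade $\zeroG$ and so can only be used at mode $\zeroG$. Erased positions are mapped by $\mathit{erase}$ to $\dummy$, to $\loopT$, or are discarded. I would therefore prove by induction on the well-resourcedness derivation that $\extract{(\appSubstExt{t}{σ})}$ and $\extract{t}$ coincide. This needs a generalised statement for arbitrary $γ$, where only variables with grade $\zeroG$ are substituted and the mode may be $\zeroG$ (for the type case the result is $\dummy$ regardless).

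\emph{The remaining two items.} The fourth property is trivial since $\trName$ is the identity. Soundness of erasure for closed terms is exactly the first alternative of Theorem~\ref{thm:soundness-of-erasure} applied to the reflection theory. There $Γ$ is empty and equality reflection is allowed, so quotients cause no problem. Consistency of the empty context is also required, and it holds unconditionally by §~\ref{sec:meta-theory}. That theorem gives $\redsSuc{\emptyC}{t}{\numeralAsTerm{n}}$ and $\steps{\extract{t}}{\numeralAsTerm{n}}$, and the reduction implies the judgemental equality $\eqTmR{\emptyC}{t}{\numeralAsTerm{n}}{\Nat}$ because reduction is sound with respect to equality.

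The main obstacle is the third property. Its induction must run over every term former and must handle erased matches for J, K, weak $Σ$/unit, the empty type, and $\bcName$. In each such rule some premises are only $\wfTm{Γ}{\cdot}{\cdot}$, so the substituted variables could appear there. The argument works only if $\mathit{erase}$ drops exactly those premises (for instance, $\mathit{erase}$ of J keeps only $u$), so I would check each rule against Figure~\ref{fig:extraction} case by case.
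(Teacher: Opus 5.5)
Your proposal is the paper's proof: Theorem~\ref{thm:soundness-reflection} is obtained by instantiating Theorem~\ref{thm:meta} with the extension given by switching on equality reflection and letting $\trName$ be the identity, and your checks of the five properties spell out what the paper leaves implicit. One correction to the third property: the generalised induction hypothesis must not allow mode $\zeroG$, because it is false there (a variable $x$ of grade $\zeroG$ is well-resourced at mode $\zeroG$, yet $\extract{(\appSubstExt{x}{σ})}$ need not be $\varT{x}$), so keep the mode at $\omegaG$ and rely, as your final paragraph indicates, on the fact that $\mathit{erase}$ discards every premise that is only checked at mode $\zeroG$ (and that type formers erase to $\dummy$).
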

\noindent
This theorem can for instance be used to show that a program that uses
(only) postulated erased function extensionality will not get stuck,
even if erased matches and quotient types are used.

It is my hope that (some variant of) Theorem \ref{thm:meta} can also
be instantiated with a suitable cubical type theory, thus showing that
Computational Book HoTT is well-behaved even in the presence of
\boxcong{}, as discussed in §~\ref{sec:computational-book-hott}.
However, that is left for future work.

Note that propositional extensionality cannot be implemented as a
closed term in the type theories with equality reflection presented
here: propext would allow one to construct a closed term that does not
reduce to WHNF, contradicting the weak head normalisation result from
§~\ref{sec:meta-theory}.
However, propext can be implemented in cubical type theory (it follows
from univalence).

One might think that another way to show that erased propext is safe
is to instantiate Theorem \ref{thm:meta} with some kind of
observational type theory with support for quotients and propext.
\citet{pujet-tabareau-2022} discuss such a type theory.
However, their notion of propext is restricted to \emph{strict}
propositions in a specific definitionally proof-irrelevant universe,
unlike the notion of propext discussed here, which works for all types
that are propositions semantically (as specified by
\AgdaFunction{Is‐prop}).
The semantic notion of propext is presumably\footnote{The presentation
  of one of the reduction rules (\textsc{Eq-Π}) appears to contain
  some typos \citep[Figure 4]{pujet-tabareau-2022}.} inconsistent with
the type theory of \citeauthor{pujet-tabareau-2022}: observationally
equal proof-relevant Π-types necessarily have equal domains, so I
expect that this variant of propext could be used to prove that the
unit type is observationally equal to the empty type.
One can discuss which variant is ``better'':
\citet{sterling-et-al-2022} state that `short of adding equality
reflection, we must conclude that the weak notion of proposition is
the ``correct'' one, and the strict one is not particularly useful for
mathematics in an environment without equality reflection' (``the weak
notion'' refers to the semantic one).

%
%
%
%

\section{Box-Cong Can Sometimes, but Not Always, Be Defined}
\label{sec:no-box-cong}

\begin{code}[hide]%
\>[0]\AgdaKeyword{module}\AgdaSpace{}%
\AgdaModule{Sometimes}\AgdaSpace{}%
\AgdaKeyword{where}\<%
\\
\>[0][@{}l@{\AgdaIndent{0}}]%
\>[2]\AgdaKeyword{open}\AgdaSpace{}%
\AgdaModule{Introduction}\AgdaSpace{}%
\AgdaKeyword{using}\AgdaSpace{}%
\AgdaSymbol{(}\AgdaDatatype{Erased}\AgdaSymbol{;}\AgdaSpace{}%
\AgdaOperator{\AgdaInductiveConstructor{[\AgdaUnderscore{}]}}\AgdaSymbol{;}\AgdaSpace{}%
\AgdaFunction{erased}\AgdaSymbol{;}\AgdaSpace{}%
\AgdaFunction{subst\ensuremath{{}^{\mkern2mu\mathrm{E}}}}\AgdaSymbol{)}\<%
\end{code}

Let us now investigate under what circumstances one can define
something like $\bcName$.
``Something like'' is made precise in the following way:
\begin{definition}
  \label{def:boxcong}
  \emph{\boxcong{} is supported for the context $Γ$, the level $l$,
    the mode $p$, and the grades $q\ensuremath{{}_{\mathrm{1}}}$, $q\ensuremath{{}_{\mathrm{2}}}$ and $q\ensuremath{{}_{\mathrm{3}}}$} if
  $\wrTm{\zeroGC}{Γ}{\bcVar}{p}{%
    \PiT{q\ensuremath{{}_{\mathrm{1}}}}{(\hastype{A}{\U{l}})}{%
      \PiT{q\ensuremath{{}_{\mathrm{2}}}}{(\hastype{x}{A})}{%
        \PiT{q\ensuremath{{}_{\mathrm{3}}}}{(\hastype{y}{A})}{%
          \PiT{\zeroG}{(\IdT{A}{x}{y})}{%
            (\IdT{(\E{l}{A})}{\bx{x}}{\bx{y}})}}}}}$
  holds for some term $\bcVar$.
  \emph{\boxcong{} is supported computationally for the context $Γ$,
    the level $l$, the mode $p$, and the grades $q\ensuremath{{}_{\mathrm{1}}}$, $q\ensuremath{{}_{\mathrm{2}}}$ and $q\ensuremath{{}_{\mathrm{3}}}$}
  if additionally the following equality rule is admissible (note that
  $ρ$ is a weakening):
  \begin{mathpar}
    \inferrule{\wfWk{Δ}{ρ}{Γ} \\ \wfTm{Δ}{A}{\U{l}} \\ \wfTm{Δ}{t}{A}}{%
      \eqTm{Δ}{%
        \app{\app{\app{\app{(\wk{\bcVar}{ρ})}{q\ensuremath{{}_{\mathrm{1}}}}{A}}{q\ensuremath{{}_{\mathrm{2}}}}{t}}{q\ensuremath{{}_{\mathrm{3}}}}{t}}{%
          \zeroG}{\rfl}}{%
        \rfl}{\IdT{(\E{l}{A})}{\bx{t}}{\bx{t}}}}
  \end{mathpar}
\end{definition}
\noindent

It may appear as if $\bcVar$ is more useful if $q\ensuremath{{}_{\mathrm{1}}}$, $q\ensuremath{{}_{\mathrm{2}}}$ and $q\ensuremath{{}_{\mathrm{3}}}$ are
all $\zeroG$, but such a variant of $\bcVar$ can be defined using any
other variant:
\begin{proposition}\leavevmode
  \label{prop:boxcong-weak-boxcong}
  \boxcong{} is supported (respectively supported computationally) for
  the context $Γ$, the level $l$, the mode $p$, and the grades
  $\zeroG$, $\zeroG$ and $\zeroG$ if and only if \boxcong{} is
  supported (respectively supported computationally) for the context
  $Γ$, the level $l$, the mode $p$, and the grades $q\ensuremath{{}_{\mathrm{1}}}$, $q\ensuremath{{}_{\mathrm{2}}}$ and
  $q\ensuremath{{}_{\mathrm{3}}}$.
\end{proposition}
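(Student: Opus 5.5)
Both directions go by explicit term constructions, with the "computationally" part following from β-equality and the computation rule assumed for the given term. Throughout, write $\bcVar$ for the term that is assumed to exist.

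\emph{From grades $\zeroG,\zeroG,\zeroG$ to $q\ensuremath{{}_{\mathrm{1}}},q\ensuremath{{}_{\mathrm{2}}},q\ensuremath{{}_{\mathrm{3}}}$.} I would define
\[
  \bcVar' = \lam{q\ensuremath{{}_{\mathrm{1}}}}{\lam{q\ensuremath{{}_{\mathrm{2}}}}{\lam{q\ensuremath{{}_{\mathrm{3}}}}{\lam{\zeroG}{\app{\app{\app{\app{(\wkn{4}{\bcVar})}{\zeroG}{\var{3}}}{\zeroG}{\var{2}}}{\zeroG}{\var{1}}}{\zeroG}{\var{0}}}}}},
\]
that is, $\bcVar$ is weakened past the four new binders and then applied to all of them with grade $\zeroG$.

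For well-resourcedness under mode $p$ with grade context $\zeroGC$:
\begin{itemize}
\item Every application has grade $\zeroG$, so each argument is checked in mode $\mul{p}{\zeroG}=\zeroG$. The variable rule then holds for any grade, including $q\ensuremath{{}_{\mathrm{i}}}$.
\item The head $\wkn{4}{\bcVar}$ is checked in mode $p$ under the context $\consGC{\consGC{\consGC{\consGC{\zeroGC}{q\ensuremath{{}_{\mathrm{1}}}}}{q\ensuremath{{}_{\mathrm{2}}}}}{q\ensuremath{{}_{\mathrm{3}}}}}{\zeroG}$. Here I need a weakening lemma for the well-resourced judgement.
\item The grade context of $\bcVar$ is $\zeroGC$, so the extended context only has to be bounded below by $\zeroGC$ pointwise. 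This holds because $\omegaG \le \zeroG$. I would use the formalisation's subsumption property, that well-resourcedness is preserved when grades in $γ$ are made smaller. I expect this is available, since it is standard in the graded type theory development.
\end{itemize}
Typing is routine: the types substitute correctly because the variables occur in order.

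\emph{From $q\ensuremath{{}_{\mathrm{1}}},q\ensuremath{{}_{\mathrm{2}}},q\ensuremath{{}_{\mathrm{3}}}$ to $\zeroG,\zeroG,\zeroG$.} This is the harder direction. The obstacle is that $\bcVar$ may use its arguments $A,x,y$ with grade $\omegaG$, while the new lambdas only provide erased variables. In mode $\zeroG$ there is nothing to check. In mode $\omegaG$ I would proceed as follows:
\begin{itemize}
\item Do not pass $A,x,y$ themselves. Instead apply $\bcVar$ to the type $\E{l}{A}$ and the elements $\bx{x}$ and $\bx{y}$. By the admissible rules for $\EName$, these are well-resourced in any mode even though $A,x,y$ have grade $\zeroG$.
\item Supply an erased proof of $\IdT{(\E{l}{A})}{\bx{x}}{\bx{y}}$ by congruence with $\bx{\cdot}$. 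This is obtained via $\JName$ applied to the erased identity proof, entirely inside an erased argument, so it is allowed.
\item The result has type $\IdT{(\E{l}{(\E{l}{A})})}{\bx{\bx{x}}}{\bx{\bx{y}}}$. Map it back along the erased function $\lam{}{\bx{\erased{}{(\erased{}{z})}}}$ using non-erased $\JName$/cong with grades $\omegaG$. That function is well-resourced because $\erasedName$ appears under a box. This yields $\IdT{(\E{l}{A})}{\bx{x}}{\bx{y}}$ up to the judgemental equality $\erased{A}{\bx{t}} \equiv t$.
\end{itemize}
The delicate check is that this cong uses $\JName$ with grades $\omegaG$, so no erased-match rule is needed. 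It requires the motive and $u$ to be well-resourced in mode $\omegaG$, and the proof argument, which is the output of $\bcVar$, to be non-erased. Both hold.

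\emph{Computation.} For the computational variants, substitute $t$, $t$ and $\rfl$, and use β-equality together with the assumed rule for $\wk{\bcVar}{ρ}$.
\begin{itemize}
\item In the first direction, the equality is immediate after β-reducing.
\item In the second direction, the congruence proof reduces to $\rfl$ by the $\JName$ computation rule. The assumed rule, applied at the type $\E{l}{A}$, gives $\rfl$. Then cong of $\rfl$ is again $\rfl$, with types matching via the $\EName$ β-law.
\end{itemize}
The weakening $ρ$ commutes with all these constructions because they are built from $\bcVar$ by syntactic operations, so weakening the defined term equals the construction applied to $\wk{\bcVar}{ρ}$.
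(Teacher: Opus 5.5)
Your proposal is correct and follows the paper's proof. The left-to-right direction is the straightforward wrapping with erased applications. For the converse you instantiate the given term at $\E{l}{A}$, $\bx{x}$ and $\bx{y}$, supply the erased proof obtained by congruence with $\bx{\cdot}$, and map the result back along $z \mapsto \bx{\erased{}{(\erased{}{z})}}$; this is exactly the paper's Agda definition of the erased-argument variant. Your extra checks fill in details the paper's sketch leaves implicit: that the outer congruence only needs $\JName$ with grades $\omegaG$, and that the computation rule follows from the $\JName$ and $\EName$ $\beta$-laws.
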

\begin{proof}[Proof sketch]
  For the left-to-right direction this should be easy to see.
  For the other direction one can define \boxcong{} with erased
  arguments in terms of the other variant of \boxcong{} using code
  that is similar to the following Agda code:
  \begin{code}[hide]%
\>[2]\AgdaKeyword{module}\AgdaSpace{}%
\AgdaModule{\AgdaUnderscore{}}\<%
\\
\>[2][@{}l@{\AgdaIndent{0}}]%
\>[4]\AgdaSymbol{(}\AgdaBound{[\ensuremath{\mkern1.5mu}]‐cong}\AgdaSpace{}%
\AgdaSymbol{:}\AgdaSpace{}%
\AgdaSymbol{∀}\AgdaSpace{}%
\AgdaSymbol{\{}\AgdaBound{a}\AgdaSymbol{\}}\AgdaSpace{}%
\AgdaSymbol{(}\AgdaBound{A}\AgdaSpace{}%
\AgdaSymbol{:}\AgdaSpace{}%
\AgdaPrimitive{Type}\AgdaSpace{}%
\AgdaBound{a}\AgdaSymbol{)}\AgdaSpace{}%
\AgdaSymbol{(}\AgdaBound{x}\AgdaSpace{}%
\AgdaBound{y}\AgdaSpace{}%
\AgdaSymbol{:}\AgdaSpace{}%
\AgdaBound{A}\AgdaSymbol{)}\AgdaSpace{}%
\AgdaSymbol{→}\AgdaSpace{}%
\AgdaSymbol{@0}\AgdaSpace{}%
\AgdaDatatype{Id}\AgdaSpace{}%
\AgdaBound{A}\AgdaSpace{}%
\AgdaBound{x}\AgdaSpace{}%
\AgdaBound{y}\AgdaSpace{}%
\AgdaSymbol{→}\AgdaSpace{}%
\AgdaDatatype{Id}\AgdaSpace{}%
\AgdaSymbol{(}\AgdaDatatype{Erased}\AgdaSpace{}%
\AgdaBound{A}\AgdaSymbol{)}\AgdaSpace{}%
\AgdaOperator{\AgdaInductiveConstructor{[}}\AgdaSpace{}%
\AgdaBound{x}\AgdaSpace{}%
\AgdaOperator{\AgdaInductiveConstructor{]}}\AgdaSpace{}%
\AgdaOperator{\AgdaInductiveConstructor{[}}\AgdaSpace{}%
\AgdaBound{y}\AgdaSpace{}%
\AgdaOperator{\AgdaInductiveConstructor{]}}\AgdaSymbol{)}\<%
\\
\>[4]\AgdaKeyword{where}\<%
\end{code}
  \begin{code}%
\>[4]\AgdaFunction{[\ensuremath{\mkern1.5mu}]‐cong′}\AgdaSpace{}%
\AgdaSymbol{:}\AgdaSpace{}%
\AgdaSymbol{(}\AgdaSymbol{@}\AgdaNumber{0}\AgdaSpace{}%
\AgdaBound{A}\AgdaSpace{}%
\AgdaSymbol{:}\AgdaSpace{}%
\AgdaPrimitive{Type}\AgdaSpace{}%
\AgdaGeneralizable{a}\AgdaSymbol{)}\AgdaSpace{}%
\AgdaSymbol{(}\AgdaSymbol{@}\AgdaNumber{0}\AgdaSpace{}%
\AgdaBound{x}\AgdaSpace{}%
\AgdaBound{y}\AgdaSpace{}%
\AgdaSymbol{:}\AgdaSpace{}%
\AgdaBound{A}\AgdaSymbol{)}\AgdaSpace{}%
\AgdaSymbol{→}\AgdaSpace{}%
\AgdaSymbol{@}\AgdaNumber{0}\AgdaSpace{}%
\AgdaDatatype{Id}\AgdaSpace{}%
\AgdaBound{A}\AgdaSpace{}%
\AgdaBound{x}\AgdaSpace{}%
\AgdaBound{y}\AgdaSpace{}%
\AgdaSymbol{→}\AgdaSpace{}%
\AgdaDatatype{Id}\AgdaSpace{}%
\AgdaSymbol{(}\AgdaDatatype{Erased}\AgdaSpace{}%
\AgdaBound{A}\AgdaSymbol{)}\AgdaSpace{}%
\AgdaOperator{\AgdaInductiveConstructor{[}}\AgdaSpace{}%
\AgdaBound{x}\AgdaSpace{}%
\AgdaOperator{\AgdaInductiveConstructor{]}}\AgdaSpace{}%
\AgdaOperator{\AgdaInductiveConstructor{[}}\AgdaSpace{}%
\AgdaBound{y}\AgdaSpace{}%
\AgdaOperator{\AgdaInductiveConstructor{]}}\<%
\\
\>[4]\AgdaFunction{[\ensuremath{\mkern1.5mu}]‐cong′}\AgdaSpace{}%
\AgdaBound{A}\AgdaSpace{}%
\AgdaBound{x}\AgdaSpace{}%
\AgdaBound{y}\AgdaSpace{}%
\AgdaBound{eq}\AgdaSpace{}%
\AgdaSymbol{=}\<%
\\
\>[4][@{}l@{\AgdaIndent{0}}]%
\>[6]\AgdaFunction{cong}\AgdaSpace{}%
\AgdaSymbol{(λ}\AgdaSpace{}%
\AgdaBound{x}\AgdaSpace{}%
\AgdaSymbol{→}\AgdaSpace{}%
\AgdaOperator{\AgdaInductiveConstructor{[}}\AgdaSpace{}%
\AgdaFunction{erased}\AgdaSpace{}%
\AgdaSymbol{(}\AgdaFunction{erased}\AgdaSpace{}%
\AgdaBound{x}\AgdaSymbol{)}\AgdaSpace{}%
\AgdaOperator{\AgdaInductiveConstructor{]}}\AgdaSymbol{)}\AgdaSpace{}%
\AgdaSymbol{(}\AgdaBound{[\ensuremath{\mkern1.5mu}]‐cong}\AgdaSpace{}%
\AgdaSymbol{(}\AgdaDatatype{Erased}\AgdaSpace{}%
\AgdaBound{A}\AgdaSymbol{)}\AgdaSpace{}%
\AgdaOperator{\AgdaInductiveConstructor{[}}\AgdaSpace{}%
\AgdaBound{x}\AgdaSpace{}%
\AgdaOperator{\AgdaInductiveConstructor{]}}\AgdaSpace{}%
\AgdaOperator{\AgdaInductiveConstructor{[}}\AgdaSpace{}%
\AgdaBound{y}\AgdaSpace{}%
\AgdaOperator{\AgdaInductiveConstructor{]}}\AgdaSpace{}%
\AgdaSymbol{(}\AgdaFunction{cong}\AgdaSpace{}%
\AgdaSymbol{(λ}\AgdaSpace{}%
\AgdaBound{x}\AgdaSpace{}%
\AgdaSymbol{→}\AgdaSpace{}%
\AgdaOperator{\AgdaInductiveConstructor{[}}\AgdaSpace{}%
\AgdaBound{x}\AgdaSpace{}%
\AgdaOperator{\AgdaInductiveConstructor{]}}\AgdaSymbol{)}\AgdaSpace{}%
\AgdaBound{eq}\AgdaSymbol{))}\<%
\end{code}
  This code makes use of the fact that the type function $\Erased$
  takes an erased argument.
  It also uses the function \congType{}.
\end{proof}
\noindent
Below ``\boxcong{} is supported (computationally) for $Γ$, $l$ and
$p$'' stands for ``\boxcong{} is supported (computationally) for the
context $Γ$, the level $l$, the mode $p$, and the grades $\zeroG$,
$\zeroG$ and $\zeroG$''.

In some cases \boxcong{} is supported (see also
§~\ref{sec:box-cong-from-funext}):
\begin{proposition}
  \label{prop:boxcong-supported}
  \boxcong{} is supported computationally for $Γ$, $l$ and $p$ if\/
  $\wfCtxt{Γ}$ and one of the following four conditions hold: the
  $\bcName$ primitive is allowed; $p$ is $\zeroG$; equality reflection
  is allowed; or erased matches (limited or unrestricted) are allowed
  for J.
\end{proposition}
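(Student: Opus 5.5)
We treat the four conditions separately. In each case we exhibit a closed term $\bcVar$ (closed, so it lives in every well-formed $Γ$ and is stable under weakenings $ρ$), check that it is well-resourced with grade context $\zeroGC$ and mode $p$, and verify the computation rule. Since $\bcVar$ is closed, $\wk{\bcVar}{ρ}$ is $\bcVar$ itself. The computation rule therefore reduces, by β-reduction (three erased applications plus the erased identity proof), to an equation about the body.

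\emph{The $\bcName$ primitive is allowed.} Take $\bcVar = \lam{\zeroG}{\lam{\zeroG}{\lam{\zeroG}{\lam{\zeroG}{\bc{l}{\var{3}}{\var{2}}{\var{1}}{\var{0}}}}}}$. Every argument of $\bcWithLevel{l}$ is in an erased context, so its typing rule holds for any $γ$ and mode. Hence the body is well-resourced with all four variables at grade $\zeroG$, for every $p$. After β-reduction, the computation rule is exactly the $\bcName$ equality rule.

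\emph{$p$ is $\zeroG$.} Use the same lambda prefix with body $\J{\omegaG}{\omegaG}{A}{x}{(\IdT{(\E{l}{\wkT{A}})}{\bx{\wkT{x}}}{\bx{\var{1}}})}{\rfl}{y}{e}$. The J rule used in mode $\zeroG$ is the same in all variants, and in mode $\zeroG$ every variable may be used whatever its grade. The computation rule follows from the J β-rule $\J{p}{q}{A}{t}{B}{u}{t}{\rfl} \equiv u$.

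\emph{Erased matches (limited or unrestricted) are allowed for J.} Use the same term, but annotate J with grades $\zeroG,\zeroG$. The limited rule in mode $\omegaG$ requires only the motive and $u$ to be checked against $γ$ in mode $\omegaG$. The motive uses $A$ and $x$ only inside $\EName$ and $\bx{\cdot}$, which are erased positions, and uses the bound variables only there too. So $γ$ may be $\zeroGC$ extended with grades $\zeroG,\zeroG$, matching the rule. The term $u=\rfl$ is well-resourced anywhere. The unrestricted rule is weaker still. Mode $\zeroG$ is handled by the previous case. Computation is again the J β-rule.

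\emph{Equality reflection is allowed.} Take body $\rfl$. From the variable $e : \IdT{A}{x}{y}$, equality reflection gives $x \equiv y$, hence $\bx{x}\equiv\bx{y}$ by congruence. Conversion then turns $\rfl : \IdT{(\E{l}{A})}{\bx{x}}{\bx{x}}$ into the required type. The reflexivity rule has no resource constraints, so this works for any $p$, and computation holds by β-reduction alone.

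The main obstacle is the resource bookkeeping in the J case: one must check that the motive can be typed in mode $\omegaG$ with every variable at grade $\zeroG$. This relies on the admissible rules for $\EName$ and $\bx{\cdot}$ accepting arbitrary $γ$ and mode, and on the weakened types in the motive being handled correctly.
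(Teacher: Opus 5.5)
Your proposal is correct and follows essentially the same route as the paper: the same four-way case split, the same term $\lam{\zeroG}{\lam{\zeroG}{\lam{\zeroG}{\lam{\zeroG}{\rfl}}}}$ for equality reflection, and for erased matches a $\JWithGrades{\zeroG}{\zeroG}$ transport along the erased proof with motive $\IdT{(\E{l}{A})}{\bx{x}}{\bx{y}}$, which is the paper's $\mathsf{subst}^{\mathrm{E}'}$-based definition inlined. Your explicit J-term for $p = \zeroG$ and your resource bookkeeping via the admissible $\EName$/$\bx{\cdot}$ rules fill in details that the paper's sketch dismisses as ``easy''.
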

\begin{proof}[Proof sketch]
  If the $\bcName$ primitive is allowed, then it is easy to show that
  \boxcong{} is supported computationally.
  If $p$ is $\zeroG$, then it is also easy, because erased variables
  can be used freely in erased contexts.
  If equality reflection is allowed, then one can let $\bcVar$ be
  $\lam{\zeroG}{\lam{\zeroG}{\lam{\zeroG}{\lam{\zeroG}{\rfl}}}}$.
  Finally, if erased matches are allowed for J, then one can define
  $\bcVar$ in roughly the following way:
  \begin{code}[hide]%
\>[2]\AgdaKeyword{postulate}\<%
\\
\>[2][@{}l@{\AgdaIndent{0}}]%
\>[4]\AgdaPostulate{subst\ensuremath{{}^{\mkern2mu\mathrm{E}'}}}\AgdaSpace{}%
\AgdaSymbol{:}%
\>[2174I]\AgdaSymbol{\{}\AgdaSymbol{@0}\AgdaSpace{}%
\AgdaBound{A}\AgdaSpace{}%
\AgdaSymbol{:}\AgdaSpace{}%
\AgdaPrimitive{Type}\AgdaSpace{}%
\AgdaGeneralizable{a}\AgdaSymbol{\}}\AgdaSpace{}%
\AgdaSymbol{\{}\AgdaSymbol{@0}\AgdaSpace{}%
\AgdaBound{x}\AgdaSpace{}%
\AgdaBound{y}\AgdaSpace{}%
\AgdaSymbol{:}\AgdaSpace{}%
\AgdaBound{A}\AgdaSymbol{\}}\<%
\\
\>[.][@{}l@{}]\<[2174I]%
\>[14]\AgdaSymbol{(}\AgdaBound{P}\AgdaSpace{}%
\AgdaSymbol{:}\AgdaSpace{}%
\AgdaSymbol{@0}\AgdaSpace{}%
\AgdaBound{A}\AgdaSpace{}%
\AgdaSymbol{→}\AgdaSpace{}%
\AgdaPrimitive{Type}\AgdaSpace{}%
\AgdaGeneralizable{p}\AgdaSymbol{)}\AgdaSpace{}%
\AgdaSymbol{→}\AgdaSpace{}%
\AgdaSymbol{@0}\AgdaSpace{}%
\AgdaDatatype{Id}\AgdaSpace{}%
\AgdaBound{A}\AgdaSpace{}%
\AgdaBound{x}\AgdaSpace{}%
\AgdaBound{y}\AgdaSpace{}%
\AgdaSymbol{→}\AgdaSpace{}%
\AgdaBound{P}\AgdaSpace{}%
\AgdaBound{x}\AgdaSpace{}%
\AgdaSymbol{→}\AgdaSpace{}%
\AgdaBound{P}\AgdaSpace{}%
\AgdaBound{y}\<%
\end{code}
  \begin{code}%
\>[2]\AgdaFunction{[\ensuremath{\mkern1.5mu}]‐cong}\AgdaSpace{}%
\AgdaSymbol{:}\AgdaSpace{}%
\AgdaSymbol{(}\AgdaSymbol{@}\AgdaNumber{0}\AgdaSpace{}%
\AgdaBound{A}\AgdaSpace{}%
\AgdaSymbol{:}\AgdaSpace{}%
\AgdaPrimitive{Type}\AgdaSpace{}%
\AgdaGeneralizable{a}\AgdaSymbol{)}\AgdaSpace{}%
\AgdaSymbol{(}\AgdaSymbol{@}\AgdaNumber{0}\AgdaSpace{}%
\AgdaBound{x}\AgdaSpace{}%
\AgdaBound{y}\AgdaSpace{}%
\AgdaSymbol{:}\AgdaSpace{}%
\AgdaBound{A}\AgdaSymbol{)}\AgdaSpace{}%
\AgdaSymbol{→}\AgdaSpace{}%
\AgdaSymbol{@}\AgdaNumber{0}\AgdaSpace{}%
\AgdaDatatype{Id}\AgdaSpace{}%
\AgdaBound{A}\AgdaSpace{}%
\AgdaBound{x}\AgdaSpace{}%
\AgdaBound{y}\AgdaSpace{}%
\AgdaSymbol{→}\AgdaSpace{}%
\AgdaDatatype{Id}\AgdaSpace{}%
\AgdaSymbol{(}\AgdaDatatype{Erased}\AgdaSpace{}%
\AgdaBound{A}\AgdaSymbol{)}\AgdaSpace{}%
\AgdaOperator{\AgdaInductiveConstructor{[}}\AgdaSpace{}%
\AgdaBound{x}\AgdaSpace{}%
\AgdaOperator{\AgdaInductiveConstructor{]}}\AgdaSpace{}%
\AgdaOperator{\AgdaInductiveConstructor{[}}\AgdaSpace{}%
\AgdaBound{y}\AgdaSpace{}%
\AgdaOperator{\AgdaInductiveConstructor{]}}\<%
\\
\>[2]\AgdaFunction{[\ensuremath{\mkern1.5mu}]‐cong}\AgdaSpace{}%
\AgdaBound{A}\AgdaSpace{}%
\AgdaBound{x}\AgdaSpace{}%
\AgdaBound{y}\AgdaSpace{}%
\AgdaBound{eq}\AgdaSpace{}%
\AgdaSymbol{=}\AgdaSpace{}%
\AgdaPostulate{subst\ensuremath{{}^{\mkern2mu\mathrm{E}'}}}\AgdaSpace{}%
\AgdaSymbol{(λ}\AgdaSpace{}%
\AgdaBound{y}\AgdaSpace{}%
\AgdaSymbol{→}\AgdaSpace{}%
\AgdaDatatype{Id}\AgdaSpace{}%
\AgdaSymbol{(}\AgdaDatatype{Erased}\AgdaSpace{}%
\AgdaBound{A}\AgdaSymbol{)}\AgdaSpace{}%
\AgdaOperator{\AgdaInductiveConstructor{[}}\AgdaSpace{}%
\AgdaBound{x}\AgdaSpace{}%
\AgdaOperator{\AgdaInductiveConstructor{]}}\AgdaSpace{}%
\AgdaOperator{\AgdaInductiveConstructor{[}}\AgdaSpace{}%
\AgdaBound{y}\AgdaSpace{}%
\AgdaOperator{\AgdaInductiveConstructor{]}}\AgdaSymbol{)}\AgdaSpace{}%
\AgdaBound{eq}\AgdaSpace{}%
\AgdaInductiveConstructor{refl}\<%
\end{code}
  Note the use of \AgdaFunction{subst\ensuremath{{}^{\mkern2mu\mathrm{E}'}}}, a function with the same
  type as the function \AgdaFunction{subst\ensuremath{{}^{\mkern2mu\mathrm{E}}}} from
  § \ref{sec:box-cong}.
  This variant can be defined using J if (limited or unrestricted)
  erased matches are allowed for J.
\end{proof}

Above it is shown that \boxcong{} is supported if erased matches are
allowed for J.
In the presence of the $\bcName$ primitive one can also define a term
former $\JEName$ that is a drop-in replacement for
$\JWithGrades{0}{0}$ with (only) limited erased matches; the
definition is similar to the implementation of \AgdaFunction{J\ensuremath{{}^{\mkern2mu\mathrm{E}}}} in
§~\ref{sec:fording}:
\begin{proposition}
  \label{prop:JE-supported}
  If the $\bcName$ primitive is allowed, then there is a term former
  $\JEName$ for which the following rule is admissible:
  \begin{mathpar}
    \inferrule{%
      \wrTy{\consGC{\consGC{γ}{\zeroG}}{\zeroG}}{%
        \consC{\consC{Γ}{A}}{\IdT{(\wkO{A})}{(\wkO{t})}{\varZ}}}{p}{B} \\
      \wrTm{γ}{Γ}{u}{p}{\substO{B}{t}{\rfl}} \\
      \wfTm{Γ}{w}{\IdT{A}{t}{v}}}{%
      \wrTm{γ}{Γ}{\JE{A}{t}{B}{u}{v}{w}}{p}{\substO{B}{v}{w}}}
  \end{mathpar}
  This term former also satisfies the substitution equation, the
  congruence rule, and the β rule of $\JWithGrades{0}{0}$ (those
  equations/rules were omitted above).
\end{proposition}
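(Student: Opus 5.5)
I will define $\JEName$ by transcribing the Agda definition of \AgdaFunction{J\ensuremath{{}^{\mkern2mu\mathrm{E}}}} from §~\ref{sec:fording} into the core syntax. Let $S$ be the strong Σ-type $\SigmaT{\strongS}{\omegaG}{(\hastype{y}{A})}{(\IdT{(\wkO{A})}{(\wkO{t})}{y})}$, and let $c_0 = (t,\rfl)$ and $c_1 = (v,w)$. First, using a non-erased J (grades $\omegaG,\omegaG$, available under every option), I build the singleton-contractibility proof
\[
e = \J{\omegaG}{\omegaG}{A}{t}{(\IdT{S}{c_0}{(\var{1},\var{0})})}{\rfl}{v}{w} : \IdT{S}{c_0}{c_1}.
\]
Next I apply the $\bcName$ primitive at type $S$ to get $\bc{l}{S}{c_0}{c_1}{e} : \IdT{(\E{l}{S})}{\bx{c_0}}{\bx{c_1}}$. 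Finally I transport $u$ along this proof with an ordinary J whose motive is $B$ instantiated at $\fst{}{(\erased{S}{z})}$ and $\snd{}{(\erased{S}{z})}$, where $z : \E{l}{S}$. That is,
\[
\JE{A}{t}{B}{u}{v}{w} = \J{\omegaG}{\zeroG}{(\E{l}{S})}{\bx{c_0}}{B'}{u}{\bx{c_1}}{(\bc{l}{S}{c_0}{c_1}{e})},
\]
with $B' = B[\fst{}{(\erased{S}{\var{1}})}/\var{1}, \snd{}{(\erased{S}{\var{1}})}/\var{0}]$. The level $l$ is chosen as the universe level of $A$; this may need Lift if levels are first-class. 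The type of the result reduces, via $\erased{}{\bx{c_1}} \equiv c_1$ and the Σ β-rules, to $\substO{B}{v}{w}$, and the type of $u$ similarly becomes $\substO{B}{t}{\rfl}$.

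For the typing rule I check the resources. Here $e$, $\bcName$'s arguments, $A$, $t$ and $v$ all occur only in erased positions: the arguments of $\bcName$ are erased, and the $\bx{\cdot}$ argument is erased. The one subtlety is the J with mode $p$. Its motive $B'$ uses the J-bound variable $\var{1}$ only under $\erasedName$, so $\var{1}$ gets grade $\zeroG$. The equality variable $\var{0}$ is unused, so it can also get grade $\zeroG$.

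That would force the $\JWithGrades{0}{0}$ limited rule, which is unavailable. So instead I use a motive that is non-dependent on the proof and pick grades $\omegaG$ for the unused variable. Unused variables can have grade $\omegaG$, since $\omegaG\le\zeroG$ means subsumption goes the right way. This means the ordinary non-erased rule applies with $p,q$ not both $\zeroG$. Its premises then require $\wrTm{γ}{Γ}{\bx{c_0}}{\omegaG}{\cdot}$ and $\wrTm{γ}{Γ}{\bc{l}{S}{c_0}{c_1}{e}}{\omegaG}{\cdot}$, which hold because the box and $\bcName$ rules accept any $γ$ and mode. The premise $\wrTy{\consGC{\consGC{γ}{\zeroG}}{\zeroG}}{\ldots}{p}{B}$ transfers to $B'$ by a substitution lemma for well-resourcedness, since the substituted terms only sit in erased positions.

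The mode-$\zeroG$ case is handled by the common rule.

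Next come the equations:
\begin{itemize}
\item The substitution equation holds because all constructs used commute with substitution.
\item The congruence rule follows from the congruence rules of J, $\bcName$, pairs and $\EName$.
\item For β I take $v \equiv t$ and $w \equiv \rfl$. Then $e$ reduces to $\rfl$ by J's β rule. So $\bc{l}{S}{c_0}{c_0}{\rfl}\equiv\rfl$ by the $\bcName$ β rule, and the outer J reduces to $u$.
\end{itemize}

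I expect the main obstacle to be the well-resourcedness bookkeeping for the motive $B'$. Concretely, I need to show that substituting erased projections for the two bound variables of $B$ (graded $\zeroG$) yields a motive whose new bound variables may be graded so that the ordinary $\omegaG$-mode J rule applies. I would first try to prove a general lemma: if $\wrTy{\consGC{\consGC{γ}{\zeroG}}{\zeroG}}{\ldots}{p}{B}$, then for any substitution whose images are well-resourced at grade context $\zeroGC$ in mode $\zeroG$, the substituted type is well-resourced with the new variables graded arbitrarily. In the Σ-projections, fst/snd on a strong $\Sigma_{\strongS,\omegaG}$ under $\erasedName$ is fine because everything is in an erased context.
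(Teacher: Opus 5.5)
Your construction is essentially the paper's: it transcribes the Agda definition of $\JEName$ from \S~\ref{sec:fording} into the core syntax. Singleton contractibility comes from an ordinary $\JName$, then $\bcName$ is applied at the singleton $\Sigma$-type, then $\JWithGrades{\omegaG}{\zeroG}$ transports along it with your motive $B'$; the resource check goes through because one motive grade is $\omegaG$ and $\bx{c_0}$, $\bx{c_1}$ and the $\bcName$ term are well-resourced for any grade context and mode. Two small points. The level $l$ need not be a universe level of $A$, since the $\bcName$ rule only requires $\wfTy{\Gamma}{A}$; fix it (say $l=\zeroL$), which also keeps $\JEName$ a syntactic term former for the substitution equation. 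Your definition uses grades $\omegaG,\zeroG$ while your resource paragraph gives $\var{1}$ grade $\zeroG$ and $\var{0}$ grade $\omegaG$, but either choice works.
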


Using ideas from Propositions \ref{prop:boxcong-supported}
and \ref{prop:JE-supported} one can show that limited erased matches
for $\JName$ have the same expressive power as $\bcName$ by
constructing two type-preserving translations: one that replaces
$\bcName$ with a term former implemented using $\JName$, and one that
replaces $\JWithGrades{\zeroG}{\zeroG}$ with a term former implemented
using $\bcName$.
(The translations do not affect extraction.)

Even if something like \boxcong{} can sometimes be defined, this is
not always the case.
Before we prove this, let me state a lemma that can be proved by
induction.
Let ``erased matches are disallowed for $Γ$'' mean that the $\bcName$
primitive is disallowed, erased matches are disallowed for J and K as
well as weak Σ and unit types, and either erased matches are
disallowed for the empty type or\/ $Γ$ is consistent:
\begin{lemma}[No neutral terms]
  \label{lem:no-neutral-terms}
  If erased matches are disallowed for $Γ$ and $t$ is neutral, then it
  is not the case that $\wrTm{\zeroGC}{Γ}{t}{\omegaG}{A}$.
\end{lemma}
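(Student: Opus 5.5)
We prove the statement by induction on the neutral term $t$ (equivalently, on its neutral spine). Before the induction we generalise the typing derivation so that the conversion rule is absorbed. Concretely, we show that for every derivation of $\wrTm{γ}{Γ}{t}{\omegaG}{A}$ with $t$ neutral, every variable in the head position of the spine must have grade $\omegaG$ in $γ$. Moreover, whenever the head is an eliminator whose scrutinee is in a mode-$\omegaG$ position, that scrutinee is itself well-resourced at mode $\omegaG$ with the same $γ$. For $γ = \zeroGC$ this yields a contradiction. Invoking the induction hypothesis on the scrutinee requires only that erased matches are disallowed for $Γ$, a condition that does not depend on $γ$. The conversion rule is handled trivially, since it leaves $t$, $γ$ and the mode unchanged. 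Inverting the remaining, syntax-directed rules therefore amounts to an inner induction on the typing derivation, peeling off conversions.

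The cases follow the grammar of neutrals.
\begin{itemize}
\item \emph{Variable $x$.} The variable rule gives $\leqG{\lookup{\zeroGC}{x}}{\omegaG}$, i.e.\ $\leqG{\zeroG}{\omegaG}$, which fails because the order has only $\leqG{\omegaG}{\zeroG}$.
\item \emph{Application $\app{t}{p}{u}$ with $t$ neutral.} The function is typed at mode $\omegaG$ with the same grade context, so the induction hypothesis applies.
\item \emph{Eliminators of the other types.} This covers fst, snd, natrec, emptyrec, unitrec, prodrec, qrec and J/K. Because erased matches are disallowed for J, K and weak Σ and unit types, the only applicable $\omegaG$-mode rules type the scrutinee ($w$ for J, the pair for prodrec, and so on) at mode $\omegaG$ under $\zeroGC$ scaled by a non-zero grade, which is again $\zeroGC$. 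The induction hypothesis then gives the contradiction. For qrec the eliminated argument is non-erased by the interface, so the same argument applies.
\item \emph{$\bcName$.} Applications of $\bcName$ would be well-resourced even with erased scrutinee, but this case is excluded by hypothesis.
\item \emph{The higher constructors $\respName$ and $\setName$.} These are treated as neutral, and their arguments are all erased, so they do not fit the pattern above. They must be ruled out separately: either their rules only allow mode $\zeroG$, being erased term formers, or we argue as for $\bcName$. I would check the appendix rules, and I expect them to be admissible only in erased contexts.
\item \emph{The empty type.} If erased matches for the empty type are allowed, then $\emptyrecName$ with an erased argument is well-resourced at mode $\omegaG$. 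Here we use consistency of $Γ$ instead. The scrutinee $u$ satisfies $\wfTm{Γ}{u}{\Empty}$, because well-resourced typing implies plain typing, and this contradicts consistency.
\end{itemize}

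The main obstacle is bookkeeping in the inversion. Typing of each eliminator at mode $\omegaG$ may come from several rules, namely the ordinary rule and possibly the erased-match variants that were just disabled. Grade contexts may also appear multiplied or added: the $\zeroGC$ context must be shown to stay $\zeroGC$ under scaling, and sums of contexts that equal $\zeroGC$ must have zero summands. In the formalisation's general-modality setting this requires a lemma that the zero context is preserved downward. For the erasure semiring, however, this is immediate, so I would state that lemma first and then perform the case analysis mechanically.
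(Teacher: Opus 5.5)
Your proposal is correct and takes the same route as the paper, which only says the lemma "can be proved by induction": induct on the neutral term, invert the mode-$\omegaG$ typing rules, and use consistency of $Γ$ in the $\emptyrecName$ case. The case you left open is settled by the appendix rules, which type $\respName$ and $\setName$ only at mode $\zeroG$. Also, in this presentation all premises share the same $γ$ and only the mode is multiplied, so your bookkeeping about scaled or summed grade contexts is not needed here.
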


\begin{theorem}[\boxcong{} cannot be defined]
  \label{thm:no-box-cong}
  If erased matches are disallowed for $Γ$ and equality reflection is
  disallowed, then \boxcong{} is not supported for\/ $Γ$, $l$ and
  $\omegaG$, and there is no term $\bcVarZero$ such that
  $\wrTm{\zeroGC}{Γ}{\bcVarZero}{\omegaG}{%
    \PiT{\zeroG}{(\hastype{n}{\Nat})}{%
      \PiT{\zeroG}{(\IdT{\Nat}{\zero}{n})}{%
        (\IdT{(\E{\zeroL}{\Nat})}{\bx{\zero}}{\bx{n}})}}}$.
\end{theorem}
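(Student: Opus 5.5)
First reduce the two claims to one. By Proposition~\ref{prop:boxcong-weak-boxcong} we may assume the grades are $\zeroG$, $\zeroG$, $\zeroG$. Suppose $\bcVar$ witnessed support of \boxcong{} for $Γ$, $l$ and $\omegaG$. Then we could define $\bcVarZero$ by instantiating $A$ with a code for $\Nat$ at level $l$, $x$ with $\zero$ and $y$ with $n$. The level may not match $\zeroL$; in that case I would go through $\LiftWith{}$ or use a level-polymorphic version of the argument below. So it suffices to show that no well-resourced closed-at-mode-$\omegaG$ term of the $\bcVarZero$ type exists under $Γ$. In fact I would run the argument directly for a general $\bcVar$: the same reasoning applies to any term whose type ends in an identity type between boxes.

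The main idea is to apply the term to arguments that cannot reduce, producing a stuck identity proof. Assume $\wrTm{\zeroGC}{Γ}{\bcVarZero}{\omegaG}{\ldots}$. Extend the context to $Δ = \consC{\consC{Γ}{\Nat}}{\IdT{\Nat}{\zero}{\var{0}}}$ with grade context $\consGC{\consGC{\zeroGC}{\zeroG}}{\zeroG}$. Weakening preserves well-resourcedness. Since the two Π-grades are $\zeroG$, the arguments sit in erased contexts, so the application $u = \app{\app{(\wkT{\bcVarZero})}{\zeroG}{\var{1}}}{\zeroG}{\var{0}}$ satisfies $\wrTm{\zeroGC}{Δ}{u}{\omegaG}{\IdT{(\E{\zeroL}{\Nat})}{\bx{\zero}}{\bx{\var{1}}}}$.

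Next, check that erased matches are still disallowed for $Δ$. The only nontrivial point is consistency when erased matches for the empty type are allowed. Here $Δ$ is consistent whenever $Γ$ is: substitute $\zero$ for $n$ and $\rfl$ for the proof, which is a well-formed substitution from $Γ$ to $Δ$, so a proof of $\Empty$ in $Δ$ would yield one in $Γ$.

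Now use weak head normalisation, valid since equality reflection is disallowed. The term $u$ reduces to a WHNF $w$ that is either $\rfl$ or neutral. Subject reduction together with preservation of well-resourcedness under reduction (from the formalisation underlying Theorem~\ref{thm:soundness-of-erasure}) gives $\wrTm{\zeroGC}{Δ}{w}{\omegaG}{\ldots}$. Lemma~\ref{lem:no-neutral-terms} then rules out the neutral case, so $w = \rfl$. In that case $\eqTm{Δ}{\bx{\zero}}{\bx{\var{1}}}{\E{\zeroL}{\Nat}}$.

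The remaining step, which I expect to be the main obstacle, is to derive a contradiction from this judgemental equality between distinct variables-vs-constructors under a box. First apply $\erasedName$ to both sides, since $\erased{A}{\bx{t}} \equiv t$; this gives $\eqTm{Δ}{\zero}{\var{1}}{\Nat}$. Then use that judgemental equality is sound for neutral/WHNF comparison in the absence of equality reflection: via the logical relation or decidability of conversion, a $\zero$ and a neutral variable are never judgementally equal. This relies on the syntactic-equality / inversion results from the basic meta-theory section.

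The delicate points are the following:
\begin{itemize}
\item Well-resourcedness must be preserved by weak head reduction, so that the lemma applies to $w$ rather than to $u$.
\item For $\E{}$ with η-equality, projecting via $\erasedName$ still works, because the η-variant also satisfies $\erased{A}{\bx{t}} \equiv t$.
\end{itemize}
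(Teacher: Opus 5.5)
Your proposal is correct and is essentially the paper's proof. You extend the context with fresh variables, and that context is consistent via a substitution from $\Gamma$. Weak head normalisation then gives $\rfl$ or a neutral term, and Lemma~\ref{lem:no-neutral-terms} rules out the neutral case; you rightly make explicit the preservation of well-resourcedness under reduction that this step needs. Finally you refute the resulting judgemental equality.

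The only difference is bookkeeping. The paper proves the first claim directly, with $A$, $x$, $y$ and the identity proof all variables, so the $\rfl$ case yields two distinct variables judgementally equal at a neutral type; it then calls the $\Nat$ claim analogous. You instead make the $\Nat$ claim primary, using that $\zero$ is not judgementally equal to a variable. You then derive the first claim from it via $\LiftT{l}{\Nat}$ and $\mathsf{cong}$, which is legitimate because $\mathsf{cong}$ needs no erased matches.
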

\noindent
Note that the empty context is consistent.
I do not see why \boxcong{} would be supported if erased matches were
allowed for weak Σ or unit types, but my proof makes use of
Lemma~\ref{lem:no-neutral-terms}.
The second part of the conclusion establishes that functions like
\AgdaFunction{[\ensuremath{\mkern1.5mu}]‐cong\ensuremath{{}_{\mathrm{0}}}} from §~\ref{sec:fording} cannot be
implemented.
\begin{proof}[Proof sketch]
  Let us focus on the first part of the conclusion: the second part
  can be proved in a similar way.
  If \boxcong{} is supported for $Γ$, $l$ and $\omegaG$, then there is
  some term $\bcVar$ such that\linebreak
  $\wrTm{\zeroGC}{Γ}{\bcVar}{\omegaG}{%
    \PiT{\zeroG}{(\hastype{A}{\U{l}})}{%
      \PiT{\zeroG}{(\hastype{x}{A})}{%
        \PiT{\zeroG}{(\hastype{y}{A})}{%
          \PiT{\zeroG}{(\IdT{A}{x}{y})}{%
            (\IdT{(\E{l}{A})}{\bx{x}}{\bx{y}})}}}}}$.
  We get that
  \mbox{$\wrTm{\zeroGC}{Δ}{t}{\omegaG}{\IdT{(\E{l}{A})}{\bx{x}}{\bx{y}}}$}
  holds for
  $Δ =
  \consC{\consC{\consC{\consC{Γ}{\hastype{A}{\U{l}}}}{\hastype{x}{A}}}{%
      \hastype{y}{A}}}{\hastype{\identifier{eq}}{\IdT{A}{x}{y}}}$
  and
  $t =
  \app{\app{\app{\app{(\wk{\bcVar}{ρ})}{\zeroG}{A}}{\zeroG}{x}}{%
      \zeroG}{y}}{%
    \zeroG}{\identifier{eq}}$
  (for a suitable weakening $ρ$).
  Equality reflection is disallowed, so the term $t$ must have a WHNF,
  and that WHNF must be either $\rfl$ or a neutral term (see
  §~\ref{sec:meta-theory}):
  \begin{itemize}
  \item If it is $\rfl$, then we must have
    $\eqTm{Δ}{\bx{x}}{\bx{y}}{\E{l}{A}}$, and thus
    $\eqTm{Δ}{x}{y}{A}$, but two distinct variables are not
    judgementally equal at a neutral type (this can be proved using
    the formalisation's algorithmic equality judgement
    \citep{abel-et-al-2017}).
  \item The WHNF cannot be neutral by
    Lemma~\ref{lem:no-neutral-terms}: if $Γ$ is consistent, then $Δ$
    is consistent because there is a substitution $σ$ satisfying
    $\wfSubst{Γ}{σ}{Δ}$.\qedhere
  \end{itemize}
\end{proof}

\boxcong{} can be defined using function extensionality (see
§~\ref{sec:box-cong-from-funext}).
However, \boxcong{} cannot be defined using (only) postulated
\emph{erased} function extensionality: a context with a single
variable with the type of function extensionality is consistent, so
the theorem above applies.

\section{Modalities}
\label{sec:modalities}

\begin{code}[hide]%
\>[2]\AgdaKeyword{open}\AgdaSpace{}%
\AgdaModule{Introduction}\AgdaSpace{}%
\AgdaKeyword{using}\AgdaSpace{}%
\AgdaSymbol{(}\AgdaFunction{Is‐prop}\AgdaSymbol{)}\<%
\\
\>[2]\AgdaKeyword{open}\AgdaSpace{}%
\AgdaModule{Eliminators}\AgdaSpace{}%
\AgdaKeyword{using}\AgdaSpace{}%
\AgdaSymbol{(}\AgdaOperator{\AgdaFunction{\AgdaUnderscore{}≡\AgdaUnderscore{}}}\AgdaSymbol{)}\<%
\\
\\[\AgdaEmptyExtraSkip]%
\>[2]\AgdaKeyword{postulate}\<%
\\
\>[2][@{}l@{\AgdaIndent{0}}]%
\>[4]\AgdaPostulate{Funext}%
\>[22]\AgdaSymbol{:}\AgdaSpace{}%
\AgdaSymbol{(}\AgdaBound{a}\AgdaSpace{}%
\AgdaBound{p}\AgdaSpace{}%
\AgdaSymbol{:}\AgdaSpace{}%
\AgdaPostulate{Level}\AgdaSymbol{)}\AgdaSpace{}%
\AgdaSymbol{→}\AgdaSpace{}%
\AgdaPrimitive{Type}\AgdaSpace{}%
\AgdaSymbol{(}\AgdaPrimitive{lsuc}\AgdaSpace{}%
\AgdaSymbol{(}\AgdaBound{a}\AgdaSpace{}%
\AgdaOperator{\AgdaPrimitive{⊔}}\AgdaSpace{}%
\AgdaBound{p}\AgdaSymbol{))}\<%
\\
\>[4]\AgdaOperator{\AgdaPostulate{\AgdaUnderscore{}≃\AgdaUnderscore{}}}\AgdaSpace{}%
\AgdaOperator{\AgdaPostulate{\AgdaUnderscore{}↠\AgdaUnderscore{}}}\AgdaSpace{}%
\AgdaPostulate{Embedding}\AgdaSpace{}%
\AgdaSymbol{:}\AgdaSpace{}%
\AgdaPrimitive{Type}\AgdaSpace{}%
\AgdaGeneralizable{a}\AgdaSpace{}%
\AgdaSymbol{→}\AgdaSpace{}%
\AgdaPrimitive{Type}\AgdaSpace{}%
\AgdaGeneralizable{b}\AgdaSpace{}%
\AgdaSymbol{→}\AgdaSpace{}%
\AgdaPrimitive{Type}\AgdaSpace{}%
\AgdaSymbol{(}\AgdaGeneralizable{a}\AgdaSpace{}%
\AgdaOperator{\AgdaPrimitive{⊔}}\AgdaSpace{}%
\AgdaGeneralizable{b}\AgdaSymbol{)}\<%
\end{code}

Let us now discuss idempotent, monadic \emph{modalities} in the style
of \citet{rijke-et-al-2020}.
This discussion is included because
\begin{code}[hide]%
\>[2]\AgdaKeyword{postulate}\<%
\\
\>[2][@{}l@{\AgdaIndent{0}}]%
\>[4]\AgdaPostulate{\AgdaUnderscore{}}%
\>[2278I]\AgdaSymbol{:}\AgdaSpace{}%
\AgdaOperator{\AgdaGeneralizable{Has‐type[}}\AgdaSpace{}%
\AgdaSymbol{(∀}\AgdaSpace{}%
\AgdaSymbol{(}\AgdaSymbol{@ω}\AgdaSpace{}%
\AgdaBound{\AgdaUnderscore{}}\AgdaSymbol{)}\AgdaSpace{}%
\AgdaSymbol{→}\AgdaSpace{}%
\AgdaSymbol{\AgdaUnderscore{})}\AgdaSpace{}%
\AgdaOperator{\AgdaGeneralizable{]}}\<%
\end{code}
\begin{code}[inline*]%
\>[.][@{}l@{}]\<[2278I]%
\>[6]\AgdaSymbol{λ}\AgdaSpace{}%
\AgdaBound{A}\AgdaSpace{}%
\AgdaSymbol{→}\AgdaSpace{}%
\AgdaDatatype{Erased}\AgdaSpace{}%
\AgdaBound{A}\<%
\end{code}
and
\begin{code}[hide]%
\>[4]\AgdaPostulate{\AgdaUnderscore{}}%
\>[2290I]\AgdaSymbol{:}\AgdaSpace{}%
\AgdaOperator{\AgdaGeneralizable{Has‐type[}}\AgdaSpace{}%
\AgdaSymbol{((}\AgdaSymbol{@ω}\AgdaSpace{}%
\AgdaBound{\AgdaUnderscore{}}\AgdaSpace{}%
\AgdaSymbol{:}\AgdaSpace{}%
\AgdaGeneralizable{A}\AgdaSymbol{)}\AgdaSpace{}%
\AgdaSymbol{→}\AgdaSpace{}%
\AgdaSymbol{\AgdaUnderscore{})}\AgdaSpace{}%
\AgdaOperator{\AgdaGeneralizable{]}}\<%
\end{code}
\begin{code}[inline*]%
\>[.][@{}l@{}]\<[2290I]%
\>[6]\AgdaSymbol{λ}\AgdaSpace{}%
\AgdaBound{x}\AgdaSpace{}%
\AgdaSymbol{→}\AgdaSpace{}%
\AgdaOperator{\AgdaInductiveConstructor{[}}\AgdaSpace{}%
\AgdaBound{x}\AgdaSpace{}%
\AgdaOperator{\AgdaInductiveConstructor{]}}\<%
\end{code}
form a modality if and only if \boxcong{} can be defined,\footnote{The
  functions are η-expanded to give them suitable types:
  \begin{code}[hide]%
\>[4]\AgdaPostulate{\AgdaUnderscore{}}%
\>[2304I]\AgdaSymbol{:}\<%
\end{code}
  \begin{code}[inline*]%
\>[.][@{}l@{}]\<[2304I]%
\>[6]\AgdaSymbol{@0}\AgdaSpace{}%
\AgdaPrimitive{Type}\AgdaSpace{}%
\AgdaGeneralizable{a}\AgdaSpace{}%
\AgdaSymbol{→}\AgdaSpace{}%
\AgdaPrimitive{Type}\AgdaSpace{}%
\AgdaGeneralizable{a}\<%
\end{code}
  is not the same type as
  \begin{code}[hide]%
\>[4]\AgdaPostulate{\AgdaUnderscore{}}%
\>[2310I]\AgdaSymbol{:}\<%
\end{code}
  \begin{code}[inline]%
\>[.][@{}l@{}]\<[2310I]%
\>[6]\AgdaPrimitive{Type}\AgdaSpace{}%
\AgdaGeneralizable{a}\AgdaSpace{}%
\AgdaSymbol{→}\AgdaSpace{}%
\AgdaPrimitive{Type}\AgdaSpace{}%
\AgdaGeneralizable{a}\<%
\end{code}.}
and furthermore \citet{rijke-et-al-2020} present a number of
properties that hold for all modalities: thus, if \boxcong{} can be
defined, then these properties hold for \Erased{} as well (with the
caveat that \citeauthor{rijke-et-al-2020} make use of function
extensionality, univalence and higher inductive types).

In this and the following sections I switch from the type theory of
the formalisation to Agda (with the K rule turned off).
I also mostly write
\begin{code}[hide]%
\>[2]\AgdaKeyword{postulate}\<%
\\
\>[2][@{}l@{\AgdaIndent{0}}]%
\>[4]\AgdaPostulate{\AgdaUnderscore{}}%
\>[2315I]\AgdaSymbol{:}\<%
\end{code}
\begin{code}[inline*]%
\>[.][@{}l@{}]\<[2315I]%
\>[6]\AgdaGeneralizable{x}\AgdaSpace{}%
\AgdaOperator{\AgdaFunction{≡}}\AgdaSpace{}%
\AgdaGeneralizable{y}\<%
\end{code}
instead of
\begin{code}[hide]%
\>[2]\AgdaKeyword{postulate}\<%
\\
\>[2][@{}l@{\AgdaIndent{0}}]%
\>[4]\AgdaPostulate{\AgdaUnderscore{}}%
\>[2318I]\AgdaSymbol{:}\<%
\end{code}
\begin{code}[inline]%
\>[.][@{}l@{}]\<[2318I]%
\>[6]\AgdaDatatype{Id}\AgdaSpace{}%
\AgdaGeneralizable{A}\AgdaSpace{}%
\AgdaGeneralizable{x}\AgdaSpace{}%
\AgdaGeneralizable{y}\<%
\end{code}.
Some definitions from homotopy type theory \citep{hott-2013} are used.
A type is \emph{contractible} if there is one element that is equal to
every other:
\begin{code}[hide]%
\>[2]\AgdaFunction{Contractible}\AgdaSpace{}%
\AgdaSymbol{:}\AgdaSpace{}%
\AgdaPrimitive{Type}\AgdaSpace{}%
\AgdaGeneralizable{a}\AgdaSpace{}%
\AgdaSymbol{→}\AgdaSpace{}%
\AgdaPrimitive{Type}\AgdaSpace{}%
\AgdaGeneralizable{a}\<%
\end{code}
\begin{code}[inline]%
\>[2]\AgdaFunction{Contractible}\AgdaSpace{}%
\AgdaBound{A}\AgdaSpace{}%
\AgdaSymbol{=}\AgdaSpace{}%
\AgdaSymbol{(}%
\AgdaBound{x}\AgdaSpace{}%
\AgdaSymbol{:}\AgdaSpace{}%
\AgdaBound{A}\AgdaSymbol{)}\AgdaSpace{}%
\AgdaFunction{×}\AgdaSpace{}%
\AgdaSymbol{((}\AgdaBound{y}\AgdaSpace{}%
\AgdaSymbol{:}\AgdaSpace{}%
\AgdaBound{A}\AgdaSymbol{)}\AgdaSpace{}%
\AgdaSymbol{→}\AgdaSpace{}%
\AgdaBound{x}\AgdaSpace{}%
\AgdaOperator{\AgdaFunction{≡}}\AgdaSpace{}%
\AgdaBound{y}\AgdaSymbol{)}\<%
\end{code}.
A function
\begin{code}[hide]%
\>[2]\AgdaKeyword{postulate}\<%
\\
\>[2][@{}l@{\AgdaIndent{0}}]%
\>[4]\AgdaPostulate{\AgdaUnderscore{}}%
\>[2343I]\AgdaSymbol{:}%
\>[2344I]\AgdaSymbol{(}\<%
\end{code}
\mbox{\begin{code}[inline]%
\>[2344I][@{}l@{\AgdaIndent{1}}]%
\>[10]\AgdaBound{f\,}\AgdaSpace{}%
\AgdaSymbol{:}\AgdaSpace{}%
\AgdaGeneralizable{A}\AgdaSpace{}%
\AgdaSymbol{→}\AgdaSpace{}%
\AgdaGeneralizable{B}\<%
\end{code}}
is an \emph{equivalence} if and only if it is bijective, and the type
\begin{code}[hide]%
\>[.][@{}l@{}]\<[2344I]%
\>[8]\AgdaSymbol{)}\AgdaSpace{}%
\AgdaSymbol{→}\<%
\end{code}
\begin{code}[inline*]%
\>[.][@{}l@{}]\<[2343I]%
\>[6]\AgdaFunction{Is‐equivalence}\AgdaSpace{}%
\AgdaBound{f\,}\<%
\end{code}
is – given funext – necessarily a proposition.
If there is an equivalence from \AgdaBound{A} to \AgdaBound{B}, then
the types are \emph{equivalent}, written
\begin{code}[hide]%
\>[2]\AgdaKeyword{postulate}\<%
\\
\>[2][@{}l@{\AgdaIndent{0}}]%
\>[4]\AgdaPostulate{\AgdaUnderscore{}}%
\>[2351I]\AgdaSymbol{:}\<%
\end{code}
\begin{code}[inline]%
\>[2351I][@{}l@{\AgdaIndent{1}}]%
\>[8]\AgdaGeneralizable{A}\AgdaSpace{}%
\AgdaOperator{\AgdaPostulate{≃}}\AgdaSpace{}%
\AgdaGeneralizable{B}\<%
\end{code}.
\begin{code}[hide]%
\>[2]\AgdaKeyword{postulate}\<%
\\
\>[2][@{}l@{\AgdaIndent{0}}]%
\>[4]\AgdaPostulate{\AgdaUnderscore{}}%
\>[2354I]\AgdaSymbol{:}\<%
\end{code}
\begin{code}[inline*]%
\>[2354I][@{}l@{\AgdaIndent{1}}]%
\>[8]\AgdaPostulate{Funext}\AgdaSpace{}%
\AgdaGeneralizable{a}\AgdaSpace{}%
\AgdaGeneralizable{p}\<%
\end{code}
is funext for functions of type
\begin{code}[hide]%
\>[2]\AgdaKeyword{postulate}\<%
\\
\>[2][@{}l@{\AgdaIndent{0}}]%
\>[4]\AgdaPostulate{\AgdaUnderscore{}}%
\>[2357I]\AgdaSymbol{:}\<%
\end{code}
\begin{code}[inline]%
\>[2357I][@{}l@{\AgdaIndent{1}}]%
\>[8]\AgdaSymbol{(}\AgdaBound{x}\AgdaSpace{}%
\AgdaSymbol{:}\AgdaSpace{}%
\AgdaGeneralizable{A}\AgdaSymbol{)}\AgdaSpace{}%
\AgdaSymbol{→}\AgdaSpace{}%
\AgdaGeneralizable{P}\AgdaSpace{}%
\AgdaBound{x}\<%
\end{code},
where
\begin{code}[hide]%
\>[2]\AgdaKeyword{postulate}\<%
\\
\>[2][@{}l@{\AgdaIndent{0}}]%
\>[4]\AgdaPostulate{\AgdaUnderscore{}}\AgdaSpace{}%
\AgdaSymbol{:}%
\>[2364I]\AgdaSymbol{(}\<%
\end{code}
\begin{code}[inline*]%
\>[.][@{}l@{}]\<[2364I]%
\>[8]\AgdaBound{A}\AgdaSpace{}%
\AgdaSymbol{:}\AgdaSpace{}%
\AgdaPrimitive{Type}\AgdaSpace{}%
\AgdaGeneralizable{a}\<%
\end{code}
and
\begin{code}[hide]%
\>[8]\AgdaSymbol{)}\AgdaSpace{}%
\AgdaSymbol{(}\<%
\end{code}
\begin{code}[inline]%
\>[8]\AgdaBound{P}\AgdaSpace{}%
\AgdaSymbol{:}\AgdaSpace{}%
\AgdaBound{A}\AgdaSpace{}%
\AgdaSymbol{→}\AgdaSpace{}%
\AgdaPrimitive{Type}\AgdaSpace{}%
\AgdaGeneralizable{p}\<%
\end{code},
\begin{code}[hide]%
\>[8]\AgdaSymbol{)}\AgdaSpace{}%
\AgdaSymbol{→}\AgdaSpace{}%
\AgdaPrimitive{Type}\<%
\end{code}
expressed as ``certain functions of type
\begin{code}[hide]%
\>[2]\AgdaKeyword{postulate}\<%
\\
\>[2][@{}l@{\AgdaIndent{0}}]%
\>[4]\AgdaPostulate{\AgdaUnderscore{}}\AgdaSpace{}%
\AgdaSymbol{:}%
\>[2377I]\AgdaSymbol{\{}\AgdaBound{f\,}\AgdaSpace{}%
\AgdaBound{g}\AgdaSpace{}%
\AgdaSymbol{:}\AgdaSpace{}%
\AgdaSymbol{(}\AgdaBound{x}\AgdaSpace{}%
\AgdaSymbol{:}\AgdaSpace{}%
\AgdaGeneralizable{A}\AgdaSymbol{)}\AgdaSpace{}%
\AgdaSymbol{→}\AgdaSpace{}%
\AgdaGeneralizable{P}\AgdaSpace{}%
\AgdaBound{x}\AgdaSymbol{\}}\AgdaSpace{}%
\AgdaSymbol{→}\<%
\end{code}
\begin{code}[inline*]%
\>[.][@{}l@{}]\<[2377I]%
\>[8]\AgdaBound{f\,}\AgdaSpace{}%
\AgdaOperator{\AgdaFunction{≡}}\AgdaSpace{}%
\AgdaBound{g}\AgdaSpace{}%
\AgdaSymbol{→}\AgdaSpace{}%
\AgdaSymbol{(∀}\AgdaSpace{}%
\AgdaBound{x}\AgdaSpace{}%
\AgdaSymbol{→}\AgdaSpace{}%
\AgdaBound{f\,}\AgdaSpace{}%
\AgdaBound{x}\AgdaSpace{}%
\AgdaOperator{\AgdaFunction{≡}}\AgdaSpace{}%
\AgdaBound{g}\AgdaSpace{}%
\AgdaBound{x}\AgdaSymbol{)}\<%
\end{code}
are equivalences''.

The definition of modality that I use is not taken from the text of
\citet{rijke-et-al-2020}, but is based on a definition in the
associated formalisation.
The definition is aimed at being usable also in the absence of funext.
One property is stated in the form ``if funext holds, then…'', because
the ``then'' part is often proved using funext.\footnote{See Mike
  Shulman's commit
  \url{https://github.com/HoTT/Coq-HoTT/commit/a4ac2269f389be21f2c2385b99b3be467c0c4484}.}
Another part uses the notion of being \emph{∞-extendable along a
  function} \citep{shulman-2014}.
\begin{code}[hide]%
\>[2]\AgdaOperator{\AgdaFunction{Is‐[\AgdaUnderscore{}]\mkern-2mu{}‐extendable‐along‐[\AgdaUnderscore{}]}}\AgdaSpace{}%
\AgdaSymbol{:}\<%
\\
\>[2][@{}l@{\AgdaIndent{0}}]%
\>[4]\AgdaSymbol{\{}\AgdaBound{A}\AgdaSpace{}%
\AgdaSymbol{:}\AgdaSpace{}%
\AgdaPrimitive{Type}\AgdaSpace{}%
\AgdaGeneralizable{a}\AgdaSymbol{\}}\AgdaSpace{}%
\AgdaSymbol{\{}\AgdaBound{B}\AgdaSpace{}%
\AgdaSymbol{:}\AgdaSpace{}%
\AgdaPrimitive{Type}\AgdaSpace{}%
\AgdaGeneralizable{b}\AgdaSymbol{\}}\AgdaSpace{}%
\AgdaSymbol{→}\AgdaSpace{}%
\AgdaDatatype{ℕ}\AgdaSpace{}%
\AgdaSymbol{→}\AgdaSpace{}%
\AgdaSymbol{(}\AgdaBound{A}\AgdaSpace{}%
\AgdaSymbol{→}\AgdaSpace{}%
\AgdaBound{B}\AgdaSymbol{)}\AgdaSpace{}%
\AgdaSymbol{→}\AgdaSpace{}%
\AgdaSymbol{(}\AgdaBound{B}\AgdaSpace{}%
\AgdaSymbol{→}\AgdaSpace{}%
\AgdaPrimitive{Type}\AgdaSpace{}%
\AgdaGeneralizable{c}\AgdaSymbol{)}\AgdaSpace{}%
\AgdaSymbol{→}\AgdaSpace{}%
\AgdaPrimitive{Type}\AgdaSpace{}%
\AgdaSymbol{(}\AgdaGeneralizable{a}\AgdaSpace{}%
\AgdaOperator{\AgdaPrimitive{⊔}}\AgdaSpace{}%
\AgdaGeneralizable{b}\AgdaSpace{}%
\AgdaOperator{\AgdaPrimitive{⊔}}\AgdaSpace{}%
\AgdaGeneralizable{c}\AgdaSymbol{)}\<%
\\
\>[2]\AgdaOperator{\AgdaFunction{Is‐[}}\AgdaSpace{}%
\AgdaInductiveConstructor{zero}%
\>[14]\AgdaOperator{\AgdaFunction{]‐extendable‐along‐[}}\AgdaSpace{}%
\AgdaBound{f\,}\AgdaSpace{}%
\AgdaOperator{\AgdaFunction{]}}\AgdaSpace{}%
\AgdaBound{P}\AgdaSpace{}%
\AgdaSymbol{=}\AgdaSpace{}%
\AgdaFunction{Lift}\AgdaSpace{}%
\AgdaRecord{⊤}\<%
\\
\>[2]\AgdaOperator{\AgdaFunction{Is‐[}}\AgdaSpace{}%
\AgdaInductiveConstructor{suc}\AgdaSpace{}%
\AgdaBound{n}%
\>[14]\AgdaOperator{\AgdaFunction{]‐extendable‐along‐[}}\AgdaSpace{}%
\AgdaBound{f\,}\AgdaSpace{}%
\AgdaOperator{\AgdaFunction{]}}\AgdaSpace{}%
\AgdaBound{P}\AgdaSpace{}%
\AgdaSymbol{=}\<%
\\
\>[2][@{}l@{\AgdaIndent{0}}]%
\>[4]\AgdaSymbol{((}\AgdaBound{g}\AgdaSpace{}%
\AgdaSymbol{:}\AgdaSpace{}%
\AgdaSymbol{∀}\AgdaSpace{}%
\AgdaBound{x}\AgdaSpace{}%
\AgdaSymbol{→}\AgdaSpace{}%
\AgdaBound{P}\AgdaSpace{}%
\AgdaSymbol{(}\AgdaBound{f\,}\AgdaSpace{}%
\AgdaBound{x}\AgdaSymbol{))}\AgdaSpace{}%
\AgdaSymbol{→}\AgdaSpace{}%
\AgdaSymbol{(}%
\AgdaBound{h}\AgdaSpace{}%
\AgdaSymbol{:}\AgdaSpace{}%
\AgdaSymbol{∀}\AgdaSpace{}%
\AgdaBound{x}\AgdaSpace{}%
\AgdaSymbol{→}\AgdaSpace{}%
\AgdaBound{P}\AgdaSpace{}%
\AgdaBound{x}\AgdaSymbol{)}\AgdaSpace{}%
\AgdaFunction{×}\AgdaSpace{}%
\AgdaSymbol{(∀}\AgdaSpace{}%
\AgdaBound{x}\AgdaSpace{}%
\AgdaSymbol{→}\AgdaSpace{}%
\AgdaBound{h}\AgdaSpace{}%
\AgdaSymbol{(}\AgdaBound{f\,}\AgdaSpace{}%
\AgdaBound{x}\AgdaSymbol{)}\AgdaSpace{}%
\AgdaOperator{\AgdaFunction{≡}}\AgdaSpace{}%
\AgdaBound{g}\AgdaSpace{}%
\AgdaBound{x}\AgdaSymbol{))}\AgdaSpace{}%
\AgdaOperator{\AgdaFunction{×}}\<%
\\
\>[4]\AgdaSymbol{((}\AgdaBound{g}\AgdaSpace{}%
\AgdaBound{h}\AgdaSpace{}%
\AgdaSymbol{:}\AgdaSpace{}%
\AgdaSymbol{∀}\AgdaSpace{}%
\AgdaBound{x}\AgdaSpace{}%
\AgdaSymbol{→}\AgdaSpace{}%
\AgdaBound{P}\AgdaSpace{}%
\AgdaBound{x}\AgdaSymbol{)}\AgdaSpace{}%
\AgdaSymbol{→}\AgdaSpace{}%
\AgdaOperator{\AgdaFunction{Is‐[}}\AgdaSpace{}%
\AgdaBound{n}\AgdaSpace{}%
\AgdaOperator{\AgdaFunction{]‐extendable‐along‐[}}\AgdaSpace{}%
\AgdaBound{f\,}\AgdaSpace{}%
\AgdaOperator{\AgdaFunction{]}}\AgdaSpace{}%
\AgdaSymbol{(λ}\AgdaSpace{}%
\AgdaBound{x}\AgdaSpace{}%
\AgdaSymbol{→}\AgdaSpace{}%
\AgdaBound{g}\AgdaSpace{}%
\AgdaBound{x}\AgdaSpace{}%
\AgdaOperator{\AgdaFunction{≡}}\AgdaSpace{}%
\AgdaBound{h}\AgdaSpace{}%
\AgdaBound{x}\AgdaSymbol{))}\<%
\\
\\[\AgdaEmptyExtraSkip]%
\>[2]\AgdaOperator{\AgdaFunction{Is‐∞‐extendable‐along‐[\AgdaUnderscore{}]}}\AgdaSpace{}%
\AgdaSymbol{:}\<%
\\
\>[2][@{}l@{\AgdaIndent{0}}]%
\>[4]\AgdaSymbol{\{}\AgdaBound{A}\AgdaSpace{}%
\AgdaSymbol{:}\AgdaSpace{}%
\AgdaPrimitive{Type}\AgdaSpace{}%
\AgdaGeneralizable{a}\AgdaSymbol{\}}\AgdaSpace{}%
\AgdaSymbol{\{}\AgdaBound{B}\AgdaSpace{}%
\AgdaSymbol{:}\AgdaSpace{}%
\AgdaPrimitive{Type}\AgdaSpace{}%
\AgdaGeneralizable{b}\AgdaSymbol{\}}\AgdaSpace{}%
\AgdaSymbol{→}\AgdaSpace{}%
\AgdaSymbol{(}\AgdaBound{A}\AgdaSpace{}%
\AgdaSymbol{→}\AgdaSpace{}%
\AgdaBound{B}\AgdaSymbol{)}\AgdaSpace{}%
\AgdaSymbol{→}\AgdaSpace{}%
\AgdaSymbol{(}\AgdaBound{B}\AgdaSpace{}%
\AgdaSymbol{→}\AgdaSpace{}%
\AgdaPrimitive{Type}\AgdaSpace{}%
\AgdaGeneralizable{c}\AgdaSymbol{)}\AgdaSpace{}%
\AgdaSymbol{→}\AgdaSpace{}%
\AgdaPrimitive{Type}\AgdaSpace{}%
\AgdaSymbol{(}\AgdaGeneralizable{a}\AgdaSpace{}%
\AgdaOperator{\AgdaPrimitive{⊔}}\AgdaSpace{}%
\AgdaGeneralizable{b}\AgdaSpace{}%
\AgdaOperator{\AgdaPrimitive{⊔}}\AgdaSpace{}%
\AgdaGeneralizable{c}\AgdaSymbol{)}\<%
\\
\>[2]\AgdaOperator{\AgdaFunction{Is‐∞‐extendable‐along‐[}}\AgdaSpace{}%
\AgdaBound{f\,}\AgdaSpace{}%
\AgdaOperator{\AgdaFunction{]}}\AgdaSpace{}%
\AgdaBound{P}\AgdaSpace{}%
\AgdaSymbol{=}\AgdaSpace{}%
\AgdaSymbol{∀}\AgdaSpace{}%
\AgdaBound{n}\AgdaSpace{}%
\AgdaSymbol{→}\AgdaSpace{}%
\AgdaOperator{\AgdaFunction{Is‐[}}\AgdaSpace{}%
\AgdaBound{n}\AgdaSpace{}%
\AgdaOperator{\AgdaFunction{]‐extendable‐along‐[}}\AgdaSpace{}%
\AgdaBound{f\,}\AgdaSpace{}%
\AgdaOperator{\AgdaFunction{]}}\AgdaSpace{}%
\AgdaBound{P}\<%
\end{code}
This definition is rather technical and omitted.
In the presence of funext ``\AgdaBound{P} is ∞-extendable along
\AgdaBound{f}\,'' is equivalent to ``precomposition with \AgdaBound{f\,}
for functions of type
\begin{code}[hide]%
\>[2]\AgdaKeyword{postulate}\<%
\\
\>[2][@{}l@{\AgdaIndent{0}}]%
\>[4]\AgdaPostulate{\AgdaUnderscore{}}%
\>[2523I]\AgdaSymbol{:}\<%
\end{code}
\mbox{\begin{code}[inline]%
\>[.][@{}l@{}]\<[2523I]%
\>[6]\AgdaSymbol{∀}\AgdaSpace{}%
\AgdaBound{x}\AgdaSpace{}%
\AgdaSymbol{→}\AgdaSpace{}%
\AgdaGeneralizable{P}\AgdaSpace{}%
\AgdaBound{x}\<%
\end{code}}
is an equivalence'':
\begin{code}[hide]%
\>[2]\AgdaKeyword{postulate}\<%
\\
\>[2][@{}l@{\AgdaIndent{0}}]%
\>[4]\AgdaPostulate{\AgdaUnderscore{}}%
\>[2528I]\AgdaSymbol{:}\<%
\end{code}
\begin{code}[inline]%
\>[.][@{}l@{}]\<[2528I]%
\>[6]\AgdaOperator{\AgdaFunction{Is‐∞‐extendable‐along‐[}}\AgdaSpace{}%
\AgdaGeneralizable{f\,}\AgdaSpace{}%
\AgdaOperator{\AgdaFunction{]}}\AgdaSpace{}%
\AgdaGeneralizable{P}\AgdaSpace{}%
\AgdaOperator{\AgdaPostulate{≃}}\AgdaSpace{}%
\AgdaFunction{Is‐equivalence}\AgdaSpace{}%
\AgdaSymbol{(λ}\AgdaSpace{}%
\AgdaSymbol{(}\AgdaBound{g}\AgdaSpace{}%
\AgdaSymbol{:}\AgdaSpace{}%
\AgdaSymbol{∀}\AgdaSpace{}%
\AgdaBound{x}\AgdaSpace{}%
\AgdaSymbol{→}\AgdaSpace{}%
\AgdaGeneralizable{P}\AgdaSpace{}%
\AgdaBound{x}\AgdaSymbol{)}\AgdaSpace{}%
\AgdaSymbol{→}\AgdaSpace{}%
\AgdaBound{g}\AgdaSpace{}%
\AgdaOperator{\AgdaFunction{∘}}\AgdaSpace{}%
\AgdaGeneralizable{f\,}\AgdaSymbol{)}\<%
\end{code}.

\begin{code}[hide]%
\>[2]\AgdaKeyword{postulate}\<%
\\
\>[2][@{}l@{\AgdaIndent{0}}]%
\>[4]\AgdaPostulate{\AgdaUnderscore{}}%
\>[2546I]\AgdaSymbol{:}\AgdaSpace{}%
\AgdaSymbol{(}\AgdaBound{◯}\AgdaSpace{}%
\AgdaSymbol{:}\AgdaSpace{}%
\AgdaPrimitive{Type}\AgdaSpace{}%
\AgdaGeneralizable{a}\AgdaSpace{}%
\AgdaSymbol{→}\AgdaSpace{}%
\AgdaPrimitive{Type}\AgdaSpace{}%
\AgdaGeneralizable{a}\AgdaSymbol{)}\AgdaSpace{}%
\AgdaSymbol{→}\<%
\end{code}
\newcommand{\OA}{%
\begin{code}[inline]%
\>[.][@{}l@{}]\<[2546I]%
\>[6]\AgdaBound{◯}\AgdaSpace{}%
\AgdaGeneralizable{A}\<%
\end{code}}

\begin{code}[hide]%
\>[2]\AgdaKeyword{postulate}\<%
\\
\>[2][@{}l@{\AgdaIndent{0}}]%
\>[4]\AgdaPostulate{Is‐modality}\AgdaSpace{}%
\AgdaSymbol{:}\<%
\\
\>[4][@{}l@{\AgdaIndent{0}}]%
\>[6]\AgdaSymbol{∀}\AgdaSpace{}%
\AgdaBound{a}\AgdaSpace{}%
\AgdaSymbol{→}\AgdaSpace{}%
\AgdaSymbol{(}\AgdaBound{◯}\AgdaSpace{}%
\AgdaSymbol{:}\AgdaSpace{}%
\AgdaPrimitive{Type}\AgdaSpace{}%
\AgdaBound{a}\AgdaSpace{}%
\AgdaSymbol{→}\AgdaSpace{}%
\AgdaPrimitive{Type}\AgdaSpace{}%
\AgdaBound{a}\AgdaSymbol{)}\AgdaSpace{}%
\AgdaSymbol{→}\AgdaSpace{}%
\AgdaSymbol{(\{}\AgdaBound{A}\AgdaSpace{}%
\AgdaSymbol{:}\AgdaSpace{}%
\AgdaPrimitive{Type}\AgdaSpace{}%
\AgdaBound{a}\AgdaSymbol{\}}\AgdaSpace{}%
\AgdaSymbol{→}\AgdaSpace{}%
\AgdaBound{A}\AgdaSpace{}%
\AgdaSymbol{→}\AgdaSpace{}%
\AgdaBound{◯}\AgdaSpace{}%
\AgdaBound{A}\AgdaSymbol{)}\AgdaSpace{}%
\AgdaSymbol{→}\AgdaSpace{}%
\AgdaPrimitive{Type}\AgdaSpace{}%
\AgdaSymbol{(}\AgdaPrimitive{lsuc}\AgdaSpace{}%
\AgdaBound{a}\AgdaSymbol{)}\<%
\\
\>[4]\AgdaOperator{\AgdaPostulate{\AgdaUnderscore{}-connected\AgdaUnderscore{}}}\AgdaSpace{}%
\AgdaSymbol{:}\<%
\\
\>[4][@{}l@{\AgdaIndent{0}}]%
\>[6]\AgdaSymbol{(}\AgdaPrimitive{Type}\AgdaSpace{}%
\AgdaGeneralizable{a}\AgdaSpace{}%
\AgdaSymbol{→}\AgdaSpace{}%
\AgdaPrimitive{Type}\AgdaSpace{}%
\AgdaGeneralizable{a}\AgdaSymbol{)}\AgdaSpace{}%
\AgdaSymbol{→}\AgdaSpace{}%
\AgdaPrimitive{Type}\AgdaSpace{}%
\AgdaGeneralizable{a}\AgdaSpace{}%
\AgdaSymbol{→}\AgdaSpace{}%
\AgdaPrimitive{Type}\AgdaSpace{}%
\AgdaGeneralizable{a}\<%
\end{code}

Now we can define what a modality is:
\begin{definition}
  \begin{code}[hide]%
\>[2]\AgdaKeyword{record}\AgdaSpace{}%
\AgdaRecord{Modality‐for}\<%
\\
\>[2][@{}l@{\AgdaIndent{0}}]%
\>[4]\AgdaSymbol{\{}\<%
\end{code}
  The functions
  \mbox{\begin{code}[inline*]%
\>[4][@{}l@{\AgdaIndent{1}}]%
\>[6]\AgdaBound{◯}\AgdaSpace{}%
\AgdaSymbol{:}\AgdaSpace{}%
\AgdaPrimitive{Type}\AgdaSpace{}%
\AgdaGeneralizable{a}\AgdaSpace{}%
\AgdaSymbol{→}\AgdaSpace{}%
\AgdaPrimitive{Type}\AgdaSpace{}%
\AgdaGeneralizable{a}\<%
\end{code}}
  \begin{code}[hide]%
\>[4]\AgdaSymbol{\}}%
\>[2598I]\AgdaSymbol{(}\<%
\end{code}
  and
  \begin{code}[inline*]%
\>[.][@{}l@{}]\<[2598I]%
\>[6]\AgdaBound{η}\AgdaSpace{}%
\AgdaSymbol{:}\AgdaSpace{}%
\AgdaSymbol{\{}\AgdaBound{A}\AgdaSpace{}%
\AgdaSymbol{:}\AgdaSpace{}%
\AgdaPrimitive{Type}\AgdaSpace{}%
\AgdaGeneralizable{a}\AgdaSymbol{\}}\AgdaSpace{}%
\AgdaSymbol{→}\AgdaSpace{}%
\AgdaBound{A}\AgdaSpace{}%
\AgdaSymbol{→}\AgdaSpace{}%
\AgdaBound{◯}\AgdaSpace{}%
\AgdaBound{A}\<%
\end{code}
  \begin{code}[hide]%
\>[4]\AgdaSymbol{)}\AgdaSpace{}%
\AgdaSymbol{:}\AgdaSpace{}%
\AgdaPrimitive{Type}\AgdaSpace{}%
\AgdaSymbol{(}\AgdaPrimitive{lsuc}\AgdaSpace{}%
\AgdaBound{a}\AgdaSymbol{)}\AgdaSpace{}%
\AgdaKeyword{where}\<%
\\
\>[4]\AgdaKeyword{field}\<%
\end{code}
  \emph{form a modality} for the level \AgdaBound{a} if they
  satisfy
  \begin{code}[hide]%
\>[4][@{}l@{\AgdaIndent{1}}]%
\>[6]\AgdaField{dummy}\AgdaSpace{}%
\AgdaSymbol{:}\<%
\end{code}
  \begin{code}[inline]%
\>[6][@{}l@{\AgdaIndent{1}}]%
\>[8]\AgdaPostulate{Is‐modality}\AgdaSpace{}%
\AgdaBound{a}\AgdaSpace{}%
\AgdaBound{◯}\AgdaSpace{}%
\AgdaBound{η}\<%
\end{code},
  a type with the following components:
  \begin{itemize}
  \item A function
    \begin{code}[inline*]%
\>[6]\AgdaField{Modal}\AgdaSpace{}%
\AgdaSymbol{:}\AgdaSpace{}%
\AgdaPrimitive{Type}\AgdaSpace{}%
\AgdaBound{a}\AgdaSpace{}%
\AgdaSymbol{→}\AgdaSpace{}%
\AgdaPrimitive{Type}\AgdaSpace{}%
\AgdaBound{a}\<%
\end{code}
    that classifies modal types.
  \item Funext (for \AgdaBound{a}) implies that \AgdaField{Modal} is
    propositional:
    \begin{code}[hide]%
\>[6]\AgdaField{Modal-propositional}\AgdaSpace{}%
\AgdaSymbol{:}\<%
\end{code}
    \begin{code}[inline]%
\>[6][@{}l@{\AgdaIndent{1}}]%
\>[8]\AgdaPostulate{Funext}\AgdaSpace{}%
\AgdaBound{a}\AgdaSpace{}%
\AgdaBound{a}\AgdaSpace{}%
\AgdaSymbol{→}\AgdaSpace{}%
\AgdaFunction{Is‐prop}\AgdaSpace{}%
\AgdaSymbol{(}\AgdaField{Modal}\AgdaSpace{}%
\AgdaGeneralizable{A}\AgdaSymbol{)}\<%
\end{code}.
  \item%
    \begin{code}[hide]%
\>[6]\AgdaField{Modal-◯}\AgdaSpace{}%
\AgdaSymbol{:}\<%
\end{code}
    \mbox{\begin{code}[inline]%
\>[6][@{}l@{\AgdaIndent{1}}]%
\>[8]\AgdaField{Modal}\AgdaSpace{}%
\AgdaSymbol{(}\AgdaBound{◯}\AgdaSpace{}%
\AgdaGeneralizable{A}\AgdaSymbol{)}\<%
\end{code}}
    always holds.
  \item \AgdaField{Modal} respects equivalences:
    \begin{code}[hide]%
\>[6]\AgdaField{Modal-respects-≃}\AgdaSpace{}%
\AgdaSymbol{:}\<%
\end{code}
    \mbox{\begin{code}[inline]%
\>[6][@{}l@{\AgdaIndent{1}}]%
\>[8]\AgdaGeneralizable{A}\AgdaSpace{}%
\AgdaOperator{\AgdaPostulate{≃}}\AgdaSpace{}%
\AgdaGeneralizable{B}\AgdaSpace{}%
\AgdaSymbol{→}\AgdaSpace{}%
\AgdaField{Modal}\AgdaSpace{}%
\AgdaGeneralizable{A}\AgdaSpace{}%
\AgdaSymbol{→}\AgdaSpace{}%
\AgdaField{Modal}\AgdaSpace{}%
\AgdaGeneralizable{B}\<%
\end{code}}.
  \item Every pointwise modal family of type
    \begin{code}[hide]%
\>[4]\AgdaFunction{\AgdaUnderscore{}}\AgdaSpace{}%
\AgdaSymbol{:}\AgdaSpace{}%
\AgdaPrimitive{Type}\AgdaSpace{}%
\AgdaSymbol{(}\AgdaPrimitive{lsuc}\AgdaSpace{}%
\AgdaBound{a}\AgdaSymbol{)}\<%
\\
\>[4]\AgdaSymbol{\AgdaUnderscore{}}\AgdaSpace{}%
\AgdaSymbol{=}\AgdaSpace{}%
\AgdaSymbol{∀}\AgdaSpace{}%
\AgdaBound{A}%
\>[2650I]\AgdaSymbol{→}\<%
\end{code}
    \begin{code}[inline*]%
\>[2650I][@{}l@{\AgdaIndent{1}}]%
\>[16]\AgdaBound{◯}\AgdaSpace{}%
\AgdaBound{A}\AgdaSpace{}%
\AgdaSymbol{→}\AgdaSpace{}%
\AgdaPrimitive{Type}\AgdaSpace{}%
\AgdaBound{a}\<%
\end{code}
    \begin{code}[hide]%
\>[4]\AgdaKeyword{field}\<%
\end{code}
    is ∞-extendable along~\AgdaBound{η}:
    \begin{code}[hide]%
\>[4][@{}l@{\AgdaIndent{1}}]%
\>[6]\AgdaField{extendable-along-η}\AgdaSpace{}%
\AgdaSymbol{:}\<%
\end{code}
    \begin{code}%
\>[6][@{}l@{\AgdaIndent{1}}]%
\>[8]\AgdaSymbol{\{}\AgdaBound{P}\AgdaSpace{}%
\AgdaSymbol{:}\AgdaSpace{}%
\AgdaBound{◯}\AgdaSpace{}%
\AgdaGeneralizable{A}\AgdaSpace{}%
\AgdaSymbol{→}\AgdaSpace{}%
\AgdaPrimitive{Type}\AgdaSpace{}%
\AgdaBound{a}\AgdaSymbol{\}}\AgdaSpace{}%
\AgdaSymbol{→}\AgdaSpace{}%
\AgdaSymbol{(∀}\AgdaSpace{}%
\AgdaBound{x}\AgdaSpace{}%
\AgdaSymbol{→}\AgdaSpace{}%
\AgdaField{Modal}\AgdaSpace{}%
\AgdaSymbol{(}\AgdaBound{P}\AgdaSpace{}%
\AgdaBound{x}\AgdaSymbol{))}\AgdaSpace{}%
\AgdaSymbol{→}\AgdaSpace{}%
\AgdaOperator{\AgdaFunction{Is‐∞‐extendable‐along‐[}}\AgdaSpace{}%
\AgdaBound{η}\AgdaSpace{}%
\AgdaOperator{\AgdaFunction{]}}\AgdaSpace{}%
\AgdaBound{P}\<%
\end{code}
  \end{itemize}
  A \emph{modality} for the level \AgdaBound{a} consists of two
  functions \AgdaBound{◯} and \AgdaBound{η} and a proof of
  \begin{code}[hide]%
\>[4]\AgdaKeyword{postulate}\<%
\\
\>[4][@{}l@{\AgdaIndent{0}}]%
\>[6]\AgdaPostulate{\AgdaUnderscore{}}%
\>[2674I]\AgdaSymbol{:}\<%
\end{code}
  \begin{code}[inline]%
\>[.][@{}l@{}]\<[2674I]%
\>[8]\AgdaPostulate{Is‐modality}\AgdaSpace{}%
\AgdaBound{a}\AgdaSpace{}%
\AgdaBound{◯}\AgdaSpace{}%
\AgdaBound{η}\<%
\end{code}.
  The modality is \emph{left exact} if
  \begin{code}[hide]%
\>[4]\AgdaKeyword{postulate}\<%
\\
\>[4][@{}l@{\AgdaIndent{0}}]%
\>[6]\AgdaPostulate{\AgdaUnderscore{}}%
\>[2678I]\AgdaSymbol{:}\<%
\end{code}
  \begin{code}[inline]%
\>[.][@{}l@{}]\<[2678I]%
\>[8]\AgdaSymbol{\{}\AgdaBound{A}\AgdaSpace{}%
\AgdaSymbol{:}\AgdaSpace{}%
\AgdaPrimitive{Type}\AgdaSpace{}%
\AgdaBound{a}\AgdaSymbol{\}}\AgdaSpace{}%
\AgdaSymbol{\{}\AgdaBound{x}\AgdaSpace{}%
\AgdaBound{y}\AgdaSpace{}%
\AgdaSymbol{:}\AgdaSpace{}%
\AgdaBound{A}\AgdaSymbol{\}}\AgdaSpace{}%
\AgdaSymbol{→}\AgdaSpace{}%
\AgdaFunction{Contractible}\AgdaSpace{}%
\AgdaSymbol{(}\AgdaBound{◯}\AgdaSpace{}%
\AgdaBound{A}\AgdaSymbol{)}\AgdaSpace{}%
\AgdaSymbol{→}\AgdaSpace{}%
\AgdaFunction{Contractible}\AgdaSpace{}%
\AgdaSymbol{(}\AgdaBound{◯}\AgdaSpace{}%
\AgdaSymbol{(}\AgdaBound{x}\AgdaSpace{}%
\AgdaOperator{\AgdaFunction{≡}}\AgdaSpace{}%
\AgdaBound{y}\AgdaSymbol{))}\<%
\end{code}.
\end{definition}

\begin{code}[hide]%
\>[4]\AgdaKeyword{postulate}\<%
\\
\>[4][@{}l@{\AgdaIndent{0}}]%
\>[6]\AgdaPostulate{\AgdaUnderscore{}}%
\>[2696I]\AgdaSymbol{:}\<%
\end{code}
\newcommand{\VeryModal}{%
  \begin{code}[inline]%
\>[2696I][@{}l@{\AgdaIndent{1}}]%
\>[10]\AgdaSymbol{\{}\AgdaBound{A}\AgdaSpace{}%
\AgdaSymbol{:}\AgdaSpace{}%
\AgdaPrimitive{Type}\AgdaSpace{}%
\AgdaBound{a}\AgdaSymbol{\}}\AgdaSpace{}%
\AgdaSymbol{→}\AgdaSpace{}%
\AgdaBound{◯}\AgdaSpace{}%
\AgdaSymbol{(}\AgdaField{Modal}\AgdaSpace{}%
\AgdaBound{A}\AgdaSymbol{)}\<%
\end{code}}

One example of a modality is propositional truncation.
The variant of propositional truncation that can be defined using the
quotients discussed in this text is also an example.
In the presence of funext double-negation is a modality.

\citet{rijke-et-al-2020} present a number of properties that hold for
every modality (or left exact modality).
For every modality there is a function
\begin{code}[hide]%
\>[4]\AgdaKeyword{postulate}\<%
\end{code}
\begin{code}[inline*]%
\>[4][@{}l@{\AgdaIndent{1}}]%
\>[6]\AgdaPostulate{η‐cong}\AgdaSpace{}%
\AgdaSymbol{:}\AgdaSpace{}%
\AgdaBound{◯}\AgdaSpace{}%
\AgdaSymbol{(}\AgdaGeneralizable{x}\AgdaSpace{}%
\AgdaOperator{\AgdaFunction{≡}}\AgdaSpace{}%
\AgdaGeneralizable{y}\AgdaSymbol{)}\AgdaSpace{}%
\AgdaSymbol{→}\AgdaSpace{}%
\AgdaBound{η}\AgdaSpace{}%
\AgdaGeneralizable{x}\AgdaSpace{}%
\AgdaOperator{\AgdaFunction{≡}}\AgdaSpace{}%
\AgdaBound{η}\AgdaSpace{}%
\AgdaGeneralizable{y}\<%
\end{code}
satisfying
\begin{code}[hide]%
\>[6]\AgdaPostulate{\AgdaUnderscore{}}%
\>[2715I]\AgdaSymbol{:}\<%
\end{code}
\mbox{\begin{code}[inline]%
\>[.][@{}l@{}]\<[2715I]%
\>[8]\AgdaPostulate{η‐cong}\AgdaSpace{}%
\AgdaSymbol{(}\AgdaBound{η}\AgdaSpace{}%
\AgdaGeneralizable{eq}\AgdaSymbol{)}\AgdaSpace{}%
\AgdaOperator{\AgdaFunction{≡}}\AgdaSpace{}%
\AgdaFunction{cong}\AgdaSpace{}%
\AgdaBound{η}\AgdaSpace{}%
\AgdaGeneralizable{eq}\<%
\end{code}}.
Notice the similarity to \boxcong{}: if
\begin{code}[hide]%
\>[6]\AgdaPostulate{\AgdaUnderscore{}}%
\>[2722I]\AgdaSymbol{:}\AgdaSpace{}%
\AgdaOperator{\AgdaGeneralizable{Has‐type[}}\AgdaSpace{}%
\AgdaSymbol{(∀}\AgdaSpace{}%
\AgdaSymbol{(}\AgdaSymbol{@ω}\AgdaSpace{}%
\AgdaBound{\AgdaUnderscore{}}\AgdaSymbol{)}\AgdaSpace{}%
\AgdaSymbol{→}\AgdaSpace{}%
\AgdaSymbol{\AgdaUnderscore{})}\AgdaSpace{}%
\AgdaOperator{\AgdaGeneralizable{]}}\<%
\end{code}
\begin{code}[inline*]%
\>[.][@{}l@{}]\<[2722I]%
\>[8]\AgdaSymbol{λ}\AgdaSpace{}%
\AgdaBound{A}\AgdaSpace{}%
\AgdaSymbol{→}\AgdaSpace{}%
\AgdaDatatype{Erased}\AgdaSpace{}%
\AgdaBound{A}\<%
\end{code}
and
\begin{code}[hide]%
\>[6]\AgdaPostulate{\AgdaUnderscore{}}%
\>[2734I]\AgdaSymbol{:}\AgdaSpace{}%
\AgdaOperator{\AgdaGeneralizable{Has‐type[}}\AgdaSpace{}%
\AgdaSymbol{((}\AgdaSymbol{@ω}\AgdaSpace{}%
\AgdaBound{\AgdaUnderscore{}}\AgdaSpace{}%
\AgdaSymbol{:}\AgdaSpace{}%
\AgdaGeneralizable{A}\AgdaSymbol{)}\AgdaSpace{}%
\AgdaSymbol{→}\AgdaSpace{}%
\AgdaSymbol{\AgdaUnderscore{})}\AgdaSpace{}%
\AgdaOperator{\AgdaGeneralizable{]}}\<%
\end{code}
\begin{code}[inline*]%
\>[.][@{}l@{}]\<[2734I]%
\>[8]\AgdaSymbol{λ}\AgdaSpace{}%
\AgdaBound{x}\AgdaSpace{}%
\AgdaSymbol{→}\AgdaSpace{}%
\AgdaOperator{\AgdaInductiveConstructor{[}}\AgdaSpace{}%
\AgdaBound{x}\AgdaSpace{}%
\AgdaOperator{\AgdaInductiveConstructor{]}}\<%
\end{code}
form a modality, then \boxcong{} can be defined.

A type \AgdaBound{A} is modal if and only if
\begin{code}[hide]%
\>[6]\AgdaPostulate{\AgdaUnderscore{}}%
\>[2748I]\AgdaSymbol{:}\AgdaSpace{}%
\AgdaOperator{\AgdaGeneralizable{Has‐type}}\<%
\end{code}
\begin{code}[inline*]%
\>[.][@{}l@{}]\<[2748I]%
\>[8]\AgdaBound{η}\AgdaSpace{}%
\AgdaSymbol{\{}\AgdaArgument{A}\AgdaSpace{}%
\AgdaSymbol{=}\AgdaSpace{}%
\AgdaGeneralizable{A}\AgdaSymbol{\}}\<%
\end{code}
is an equivalence \citep{rijke-et-al-2020}.
A type is \emph{separated} \citep{christensen-et-al-2020} if all its
identity types are modal:
\begin{code}[hide]%
\>[4]\AgdaFunction{Separated}\AgdaSpace{}%
\AgdaSymbol{:}\AgdaSpace{}%
\AgdaPrimitive{Type}\AgdaSpace{}%
\AgdaBound{a}\AgdaSpace{}%
\AgdaSymbol{→}\AgdaSpace{}%
\AgdaPrimitive{Type}\AgdaSpace{}%
\AgdaBound{a}\<%
\end{code}
\begin{code}[inline]%
\>[4]\AgdaFunction{Separated}\AgdaSpace{}%
\AgdaBound{A}\AgdaSpace{}%
\AgdaSymbol{=}\AgdaSpace{}%
\AgdaSymbol{(}\AgdaBound{x}\AgdaSpace{}%
\AgdaBound{y}\AgdaSpace{}%
\AgdaSymbol{:}\AgdaSpace{}%
\AgdaBound{A}\AgdaSymbol{)}\AgdaSpace{}%
\AgdaSymbol{→}\AgdaSpace{}%
\AgdaField{Modal}\AgdaSpace{}%
\AgdaSymbol{(}\AgdaBound{x}\AgdaSpace{}%
\AgdaOperator{\AgdaFunction{≡}}\AgdaSpace{}%
\AgdaBound{y}\AgdaSymbol{)}\<%
\end{code}.
In light of this, let me use the following definitions:\MCbreak{}
\begin{minipage}[t]{0.49\linewidth}
\begin{code}%
\>[2]\AgdaFunction{Modal\ensuremath{{}^{\mkern2mu\mathrm{E}}}}\AgdaSpace{}%
\AgdaSymbol{:}\AgdaSpace{}%
\AgdaPrimitive{Type}\AgdaSpace{}%
\AgdaGeneralizable{a}\AgdaSpace{}%
\AgdaSymbol{→}\AgdaSpace{}%
\AgdaPrimitive{Type}\AgdaSpace{}%
\AgdaGeneralizable{a}\<%
\\
\>[2]\AgdaFunction{Modal\ensuremath{{}^{\mkern2mu\mathrm{E}}}}\AgdaSpace{}%
\AgdaBound{A}\AgdaSpace{}%
\AgdaSymbol{=}\AgdaSpace{}%
\AgdaFunction{Is‐equivalence}\AgdaSpace{}%
\AgdaSymbol{(λ}\AgdaSpace{}%
\AgdaSymbol{(}\AgdaBound{x}\AgdaSpace{}%
\AgdaSymbol{:}\AgdaSpace{}%
\AgdaBound{A}\AgdaSymbol{)}\AgdaSpace{}%
\AgdaSymbol{→}\AgdaSpace{}%
\AgdaOperator{\AgdaInductiveConstructor{[}}\AgdaSpace{}%
\AgdaBound{x}\AgdaSpace{}%
\AgdaOperator{\AgdaInductiveConstructor{]}}\AgdaSymbol{)}\<%
\end{code}
\end{minipage}
\begin{minipage}[t]{0.49\linewidth}
\begin{code}%
\>[2]\AgdaFunction{Separated\ensuremath{{}^{\mkern2mu\mathrm{E}}}}\AgdaSpace{}%
\AgdaSymbol{:}\AgdaSpace{}%
\AgdaPrimitive{Type}\AgdaSpace{}%
\AgdaGeneralizable{a}\AgdaSpace{}%
\AgdaSymbol{→}\AgdaSpace{}%
\AgdaPrimitive{Type}\AgdaSpace{}%
\AgdaGeneralizable{a}\<%
\\
\>[2]\AgdaFunction{Separated\ensuremath{{}^{\mkern2mu\mathrm{E}}}}\AgdaSpace{}%
\AgdaBound{A}\AgdaSpace{}%
\AgdaSymbol{=}\AgdaSpace{}%
\AgdaSymbol{(}\AgdaBound{x}\AgdaSpace{}%
\AgdaBound{y}\AgdaSpace{}%
\AgdaSymbol{:}\AgdaSpace{}%
\AgdaBound{A}\AgdaSymbol{)}\AgdaSpace{}%
\AgdaSymbol{→}\AgdaSpace{}%
\AgdaFunction{Modal\ensuremath{{}^{\mkern2mu\mathrm{E}}}}\AgdaSpace{}%
\AgdaSymbol{(}\AgdaBound{x}\AgdaSpace{}%
\AgdaOperator{\AgdaFunction{≡}}\AgdaSpace{}%
\AgdaBound{y}\AgdaSymbol{)}\<%
\end{code}
\end{minipage}\\
I also use
\begin{code}[hide]%
\>[2]\AgdaFunction{Stable\ensuremath{{}^{\mkern2mu\mathrm{E}}}}\AgdaSpace{}%
\AgdaSymbol{:}\AgdaSpace{}%
\AgdaPrimitive{Type}\AgdaSpace{}%
\AgdaGeneralizable{a}\AgdaSpace{}%
\AgdaSymbol{→}\AgdaSpace{}%
\AgdaPrimitive{Type}\AgdaSpace{}%
\AgdaGeneralizable{a}\<%
\end{code}
\begin{code}[inline*]%
\>[2]\AgdaFunction{Stable\ensuremath{{}^{\mkern2mu\mathrm{E}}}}\AgdaSpace{}%
\AgdaBound{A}\AgdaSpace{}%
\AgdaSymbol{=}\AgdaSpace{}%
\AgdaDatatype{Erased}\AgdaSpace{}%
\AgdaBound{A}\AgdaSpace{}%
\AgdaSymbol{→}\AgdaSpace{}%
\AgdaBound{A}\<%
\end{code}
and
\begin{code}[hide]%
\>[2]\AgdaFunction{Stable‐equality\ensuremath{{}^{\mkern2mu\mathrm{E}}}}\AgdaSpace{}%
\AgdaSymbol{:}\AgdaSpace{}%
\AgdaPrimitive{Type}\AgdaSpace{}%
\AgdaGeneralizable{a}\AgdaSpace{}%
\AgdaSymbol{→}\AgdaSpace{}%
\AgdaPrimitive{Type}\AgdaSpace{}%
\AgdaGeneralizable{a}\<%
\end{code}
\begin{code}[inline]%
\>[2]\AgdaFunction{Stable‐equality\ensuremath{{}^{\mkern2mu\mathrm{E}}}}\AgdaSpace{}%
\AgdaBound{A}\AgdaSpace{}%
\AgdaSymbol{=}\AgdaSpace{}%
\AgdaSymbol{(}\AgdaBound{x}\AgdaSpace{}%
\AgdaBound{y}\AgdaSpace{}%
\AgdaSymbol{:}\AgdaSpace{}%
\AgdaBound{A}\AgdaSymbol{)}\AgdaSpace{}%
\AgdaSymbol{→}\AgdaSpace{}%
\AgdaFunction{Stable\ensuremath{{}^{\mkern2mu\mathrm{E}}}}\AgdaSpace{}%
\AgdaSymbol{(}\AgdaBound{x}\AgdaSpace{}%
\AgdaOperator{\AgdaFunction{≡}}\AgdaSpace{}%
\AgdaBound{y}\AgdaSymbol{)}\<%
\end{code}.
Note that if
\begin{code}[hide]%
\>[2]\AgdaKeyword{postulate}\<%
\\
\>[2][@{}l@{\AgdaIndent{0}}]%
\>[4]\AgdaPostulate{\AgdaUnderscore{}}%
\>[2833I]\AgdaSymbol{:}\<%
\end{code}
\begin{code}[inline*]%
\>[.][@{}l@{}]\<[2833I]%
\>[6]\AgdaFunction{Stable\ensuremath{{}^{\mkern2mu\mathrm{E}}}}\AgdaSpace{}%
\AgdaGeneralizable{A}\<%
\end{code}
holds, then one can convert from
\ErasedA{} to \AgdaBound{A}, and if
\begin{code}[hide]%
\>[2]\AgdaKeyword{postulate}\<%
\\
\>[2][@{}l@{\AgdaIndent{0}}]%
\>[4]\AgdaPostulate{\AgdaUnderscore{}}%
\>[2835I]\AgdaSymbol{:}\<%
\end{code}
\begin{code}[inline*]%
\>[.][@{}l@{}]\<[2835I]%
\>[6]\AgdaFunction{Modal\ensuremath{{}^{\mkern2mu\mathrm{E}}}}\AgdaSpace{}%
\AgdaGeneralizable{A}\<%
\end{code}
holds, then there is an equivalence between \ErasedA{} and
\AgdaBound{A}: for modal types one can resurrect erased values.

\section{Characterisations of \texorpdfstring{``Box-Cong Can Be
    Defined''}{"Box-Cong Can Be Defined"}}
\label{sec:box-cong-iff}

Let me now present a number of logically equivalent characterisations
of ``\boxcong{} can be defined for the universe level
\AgdaBound{a}''.
Due to lack of space the proof is omitted, but note that it employs a
technique used by \citet{altenkirch-et-al-2017}.
The proof can be found in the accompanying code, along with even more
equivalent characterisations.
\begin{theorem}
  \label{thm:equivalent-1}
  \setlength{\mathindent}{0em}%
  \AgdaNoSpaceAroundCode{}%
  For a universe level \AgdaBound{a} the following definitions are
  logically equivalent.
  In the presence of function extensionality they are propositions,
  and thus equivalent, and in light of
  Theorem~\ref{thm:box-cong-from-funext} even contractible.
  Every definition given below has the type
  \begin{code}[hide]%
\>[2]\AgdaKeyword{module}\AgdaSpace{}%
\AgdaModule{\AgdaUnderscore{}}\AgdaSpace{}%
\AgdaSymbol{\{}\AgdaBound{a}\AgdaSpace{}%
\AgdaSymbol{:}\AgdaSpace{}%
\AgdaPostulate{Level}\AgdaSymbol{\}}\AgdaSpace{}%
\AgdaKeyword{where}\<%
\\
\>[2][@{}l@{\AgdaIndent{0}}]%
\>[4]\AgdaFunction{[\ensuremath{\mkern1.5mu}]‐cong‐ax\ensuremath{{}_{\mathrm{1}}}}\<%
\\
\>[4][@{}l@{\AgdaIndent{0}}]%
\>[6]\AgdaComment{--\ []-cong-ax\ensuremath{{}_{\mathrm{2}}}}\<%
\\
\>[6]\AgdaFunction{[\ensuremath{\mkern1.5mu}]‐cong‐ax\ensuremath{{}_{\mathrm{3}}}}\<%
\\
\>[6]\AgdaComment{--\ []-cong-ax\ensuremath{{}_{\mathrm{4}}}}\<%
\\
\>[6]\AgdaFunction{[\ensuremath{\mkern1.5mu}]‐cong‐ax\ensuremath{{}_{\mathrm{5}}}}\<%
\\
\>[6]\AgdaComment{--\ []-cong-ax\ensuremath{{}_{\mathrm{6}}}}\<%
\\
\>[6]\AgdaFunction{[\ensuremath{\mkern1.5mu}]‐cong\ensuremath{{}^{\mkern1mu\mathrm{-1}}}‐ax\ensuremath{{}_{\mathrm{1}}}}\<%
\\
\>[6]\AgdaFunction{[\ensuremath{\mkern1.5mu}]‐cong\ensuremath{{}^{\mkern1mu\mathrm{-1}}}‐ax\ensuremath{{}_{\mathrm{2}}}}\<%
\\
\>[6]\AgdaFunction{Erased‐Separated\ensuremath{{}^{\mkern2mu\mathrm{E}}}‐ax}\<%
\\
\>[6]\AgdaFunction{Modal\ensuremath{{}^{\mkern2mu\mathrm{E}}}‐Erased×Modal\ensuremath{{}^{\mkern2mu\mathrm{E}}}→Separated\ensuremath{{}^{\mkern2mu\mathrm{E}}}‐ax}\<%
\\
\>[6]\AgdaFunction{Subst\ensuremath{{}^{\mkern2mu\mathrm{E}}}‐ax}\<%
\\
\>[6]\AgdaComment{--\ J\ensuremath{{}^{\mkern2mu\mathrm{E}}}-ax\ensuremath{{}_{\mathrm{1}}}}\<%
\\
\>[6]\AgdaFunction{J\ensuremath{{}^{\mkern2mu\mathrm{E}}}‐ax\ensuremath{{}_{\mathrm{2}}}}\<%
\\
\>[6]\AgdaComment{--\ 2-ext-ax}\<%
\\
\>[6]\AgdaFunction{∞‐ext‐ax}\AgdaSpace{}%
\AgdaSymbol{:}\<%
\end{code}
  \begin{code}[inline]%
\>[6]\AgdaPrimitive{Type}\AgdaSpace{}%
\AgdaSymbol{(}\AgdaPrimitive{lsuc}\AgdaSpace{}%
\AgdaBound{a}\AgdaSymbol{)}\<%
\end{code}.

  The first definitions are variations of
  Definition~\ref{def:boxcong}, using propositional instead of
  judgemental equality (there are also variants with
  \begin{code}[hide]%
\>[4]\AgdaKeyword{postulate}\<%
\\
\>[4][@{}l@{\AgdaIndent{0}}]%
\>[6]\AgdaPostulate{\AgdaUnderscore{}}%
\>[2845I]\AgdaSymbol{:}\<%
\end{code}
  \begin{code}[inline*]%
\>[.][@{}l@{}]\<[2845I]%
\>[8]\AgdaDatatype{Erased}\AgdaSpace{}%
\AgdaSymbol{(}\AgdaGeneralizable{x}\AgdaSpace{}%
\AgdaOperator{\AgdaFunction{≡}}\AgdaSpace{}%
\AgdaGeneralizable{y}\AgdaSymbol{)}\<%
\end{code}
  instead of
  \begin{code}[hide]%
\>[4]\AgdaKeyword{postulate}\<%
\\
\>[4][@{}l@{\AgdaIndent{0}}]%
\>[6]\AgdaPostulate{\AgdaUnderscore{}}%
\>[2849I]\AgdaSymbol{:}\<%
\end{code}
  \begin{code}[inline]%
\>[.][@{}l@{}]\<[2849I]%
\>[8]\AgdaSymbol{@0}\AgdaSpace{}%
\AgdaGeneralizable{x}\AgdaSpace{}%
\AgdaOperator{\AgdaFunction{≡}}\AgdaSpace{}%
\AgdaGeneralizable{y}\<%
\end{code};
  \begin{code}[hide]%
\>[8]\AgdaSymbol{→}\AgdaSpace{}%
\AgdaPrimitive{Type}\<%
\\
\\[\AgdaEmptyExtraSkip]%
\>[4]\AgdaFunction{\AgdaUnderscore{}}\AgdaSpace{}%
\AgdaSymbol{:}\AgdaSpace{}%
\AgdaSymbol{(}\AgdaBound{x}\AgdaSpace{}%
\AgdaSymbol{:}\AgdaSpace{}%
\AgdaGeneralizable{A}\AgdaSymbol{)}\AgdaSpace{}%
\AgdaSymbol{→}\AgdaSpace{}%
\AgdaBound{x}\AgdaSpace{}%
\AgdaOperator{\AgdaFunction{≡}}\AgdaSpace{}%
\AgdaBound{x}\<%
\\
\>[4]\AgdaSymbol{\AgdaUnderscore{}}%
\>[2862I]\AgdaSymbol{=}\AgdaSpace{}%
\AgdaSymbol{λ}\AgdaSpace{}%
\AgdaBound{x}\AgdaSpace{}%
\AgdaSymbol{→}\<%
\end{code}
  \begin{code}[inline*]%
\>[.][@{}l@{}]\<[2862I]%
\>[6]\AgdaInductiveConstructor{refl}\AgdaSpace{}%
\AgdaSymbol{\{}\AgdaArgument{x}\AgdaSpace{}%
\AgdaSymbol{=}\AgdaSpace{}%
\AgdaBound{x}\AgdaSymbol{\}}\<%
\end{code}
  has type
  \begin{code}[hide]%
\>[4]\AgdaKeyword{postulate}\<%
\\
\>[4][@{}l@{\AgdaIndent{0}}]%
\>[6]\AgdaPostulate{\AgdaUnderscore{}}%
\>[2869I]\AgdaSymbol{:}\<%
\end{code}
  \begin{code}[inline]%
\>[.][@{}l@{}]\<[2869I]%
\>[8]\AgdaGeneralizable{x}\AgdaSpace{}%
\AgdaOperator{\AgdaFunction{≡}}\AgdaSpace{}%
\AgdaGeneralizable{x}\<%
\end{code}):
  \begin{code}[hide]%
\>[4]\AgdaFunction{[\ensuremath{\mkern1.5mu}]‐cong‐ax\ensuremath{{}_{\mathrm{1}}}}\AgdaSpace{}%
\AgdaSymbol{=}\<%
\end{code}
  \begin{enumerate}
  \item
    \label{ax:1}
    \begin{code}%
\>[4][@{}l@{\AgdaIndent{1}}]%
\>[6]\AgdaSymbol{(}%
\AgdaBound{[\ensuremath{\mkern1.5mu}]‐cong}\AgdaSpace{}%
\AgdaSymbol{:}\AgdaSpace{}%
\AgdaSymbol{\{}\AgdaSymbol{@0}\AgdaSpace{}%
\AgdaBound{A}\AgdaSpace{}%
\AgdaSymbol{:}\AgdaSpace{}%
\AgdaPrimitive{Type}\AgdaSpace{}%
\AgdaBound{a}\AgdaSymbol{\}}\AgdaSpace{}%
\AgdaSymbol{\{}\AgdaSymbol{@0}\AgdaSpace{}%
\AgdaBound{x}\AgdaSpace{}%
\AgdaBound{y}\AgdaSpace{}%
\AgdaSymbol{:}\AgdaSpace{}%
\AgdaBound{A}\AgdaSymbol{\}}\AgdaSpace{}%
\AgdaSymbol{→}\AgdaSpace{}%
\AgdaSymbol{@0}\AgdaSpace{}%
\AgdaBound{x}\AgdaSpace{}%
\AgdaOperator{\AgdaFunction{≡}}\AgdaSpace{}%
\AgdaBound{y}\AgdaSpace{}%
\AgdaSymbol{→}\AgdaSpace{}%
\AgdaOperator{\AgdaInductiveConstructor{[}}\AgdaSpace{}%
\AgdaBound{x}\AgdaSpace{}%
\AgdaOperator{\AgdaInductiveConstructor{]}}\AgdaSpace{}%
\AgdaOperator{\AgdaFunction{≡}}\AgdaSpace{}%
\AgdaOperator{\AgdaInductiveConstructor{[}}\AgdaSpace{}%
\AgdaBound{y}\AgdaSpace{}%
\AgdaOperator{\AgdaInductiveConstructor{]}}\AgdaSymbol{)}\AgdaSpace{}%
\AgdaFunction{×}\<%
\\
\>[6]\AgdaSymbol{(\{}\AgdaSymbol{@0}\AgdaSpace{}%
\AgdaBound{A}\AgdaSpace{}%
\AgdaSymbol{:}\AgdaSpace{}%
\AgdaPrimitive{Type}\AgdaSpace{}%
\AgdaBound{a}\AgdaSymbol{\}}\AgdaSpace{}%
\AgdaSymbol{\{}\AgdaSymbol{@0}\AgdaSpace{}%
\AgdaBound{x}\AgdaSpace{}%
\AgdaSymbol{:}\AgdaSpace{}%
\AgdaBound{A}\AgdaSymbol{\}}\AgdaSpace{}%
\AgdaSymbol{→}\AgdaSpace{}%
\AgdaBound{[\ensuremath{\mkern1.5mu}]‐cong}\AgdaSpace{}%
\AgdaSymbol{(}\AgdaInductiveConstructor{refl}\AgdaSpace{}%
\AgdaSymbol{\{}\AgdaArgument{x}\AgdaSpace{}%
\AgdaSymbol{=}\AgdaSpace{}%
\AgdaBound{x}\AgdaSymbol{\})}\AgdaSpace{}%
\AgdaOperator{\AgdaFunction{≡}}\AgdaSpace{}%
\AgdaInductiveConstructor{refl}\AgdaSymbol{)}\<%
\end{code}
    \begin{code}[hide]%
\>[4]\AgdaFunction{[\ensuremath{\mkern1.5mu}]‐cong‐ax\ensuremath{{}_{\mathrm{3}}}}\AgdaSpace{}%
\AgdaSymbol{=}\<%
\end{code}
  \item
    \label{ax:3}
    \begin{code}%
\>[4][@{}l@{\AgdaIndent{1}}]%
\>[6]\AgdaSymbol{(}%
\AgdaBound{[\ensuremath{\mkern1.5mu}]‐cong}\AgdaSpace{}%
\AgdaSymbol{:}\AgdaSpace{}%
\AgdaSymbol{\{}\AgdaBound{A}\AgdaSpace{}%
\AgdaSymbol{:}\AgdaSpace{}%
\AgdaPrimitive{Type}\AgdaSpace{}%
\AgdaBound{a}\AgdaSymbol{\}}\AgdaSpace{}%
\AgdaSymbol{\{}\AgdaBound{x}\AgdaSpace{}%
\AgdaBound{y}\AgdaSpace{}%
\AgdaSymbol{:}\AgdaSpace{}%
\AgdaBound{A}\AgdaSymbol{\}}\AgdaSpace{}%
\AgdaSymbol{→}\AgdaSpace{}%
\AgdaSymbol{@0}\AgdaSpace{}%
\AgdaBound{x}\AgdaSpace{}%
\AgdaOperator{\AgdaFunction{≡}}\AgdaSpace{}%
\AgdaBound{y}\AgdaSpace{}%
\AgdaSymbol{→}\AgdaSpace{}%
\AgdaOperator{\AgdaInductiveConstructor{[}}\AgdaSpace{}%
\AgdaBound{x}\AgdaSpace{}%
\AgdaOperator{\AgdaInductiveConstructor{]}}\AgdaSpace{}%
\AgdaOperator{\AgdaFunction{≡}}\AgdaSpace{}%
\AgdaOperator{\AgdaInductiveConstructor{[}}\AgdaSpace{}%
\AgdaBound{y}\AgdaSpace{}%
\AgdaOperator{\AgdaInductiveConstructor{]}}\AgdaSymbol{)}\AgdaSpace{}%
\AgdaFunction{×}\<%
\\
\>[6]\AgdaSymbol{(\{}\AgdaBound{A}\AgdaSpace{}%
\AgdaSymbol{:}\AgdaSpace{}%
\AgdaPrimitive{Type}\AgdaSpace{}%
\AgdaBound{a}\AgdaSymbol{\}}\AgdaSpace{}%
\AgdaSymbol{\{}\AgdaBound{x}\AgdaSpace{}%
\AgdaSymbol{:}\AgdaSpace{}%
\AgdaBound{A}\AgdaSymbol{\}}\AgdaSpace{}%
\AgdaSymbol{→}\AgdaSpace{}%
\AgdaBound{[\ensuremath{\mkern1.5mu}]‐cong}\AgdaSpace{}%
\AgdaSymbol{(}\AgdaInductiveConstructor{refl}\AgdaSpace{}%
\AgdaSymbol{\{}\AgdaArgument{x}\AgdaSpace{}%
\AgdaSymbol{=}\AgdaSpace{}%
\AgdaBound{x}\AgdaSymbol{\})}\AgdaSpace{}%
\AgdaOperator{\AgdaFunction{≡}}\AgdaSpace{}%
\AgdaInductiveConstructor{refl}\AgdaSymbol{)}\<%
\end{code}
  \end{enumerate}
  We also have the following variant, stated using
  \AgdaFunction{Stable‐equality\ensuremath{{}^{\mkern2mu\mathrm{E}}}}:
  \begin{code}[hide]%
\>[4]\AgdaFunction{[\ensuremath{\mkern1.5mu}]‐cong‐ax\ensuremath{{}_{\mathrm{5}}}}\AgdaSpace{}%
\AgdaSymbol{=}\<%
\end{code}
  \begin{enumerate}
    \setcounter{enumi}{2}
  \item
    \label{ax:5}
    \begin{code}%
\>[4][@{}l@{\AgdaIndent{1}}]%
\>[6]\AgdaSymbol{(}%
\AgdaBound{f\,}\AgdaSpace{}%
\AgdaSymbol{:}\AgdaSpace{}%
\AgdaSymbol{\{}\AgdaSymbol{@0}\AgdaSpace{}%
\AgdaBound{A}\AgdaSpace{}%
\AgdaSymbol{:}\AgdaSpace{}%
\AgdaPrimitive{Type}\AgdaSpace{}%
\AgdaBound{a}\AgdaSymbol{\}}\AgdaSpace{}%
\AgdaSymbol{→}\AgdaSpace{}%
\AgdaFunction{Stable‐equality\ensuremath{{}^{\mkern2mu\mathrm{E}}}}\AgdaSpace{}%
\AgdaSymbol{(}\AgdaDatatype{Erased}\AgdaSpace{}%
\AgdaBound{A}\AgdaSymbol{))}\AgdaSymbol{)}\AgdaSpace{}%
\AgdaFunction{×}\<%
\\
\>[6]\AgdaSymbol{(\{}\AgdaSymbol{@0}\AgdaSpace{}%
\AgdaBound{A}\AgdaSpace{}%
\AgdaSymbol{:}\AgdaSpace{}%
\AgdaPrimitive{Type}\AgdaSpace{}%
\AgdaBound{a}\AgdaSymbol{\}}\AgdaSpace{}%
\AgdaSymbol{\{}\AgdaBound{x}\AgdaSpace{}%
\AgdaSymbol{:}\AgdaSpace{}%
\AgdaDatatype{Erased}\AgdaSpace{}%
\AgdaBound{A}\AgdaSymbol{\}}\AgdaSpace{}%
\AgdaSymbol{→}\AgdaSpace{}%
\AgdaBound{f\,}\AgdaSpace{}%
\AgdaBound{x}\AgdaSpace{}%
\AgdaBound{x}\AgdaSpace{}%
\AgdaOperator{\AgdaInductiveConstructor{[}}\AgdaSpace{}%
\AgdaInductiveConstructor{refl}\AgdaSpace{}%
\AgdaOperator{\AgdaInductiveConstructor{]}}\AgdaSpace{}%
\AgdaOperator{\AgdaFunction{≡}}\AgdaSpace{}%
\AgdaInductiveConstructor{refl}\AgdaSymbol{)}\<%
\end{code}
  \end{enumerate}
  Then we have some definitions of the form ``certain functions are
  equivalences'':
  \begin{code}[hide]%
\>[4]\AgdaFunction{[\ensuremath{\mkern1.5mu}]‐cong\ensuremath{{}^{\mkern1mu\mathrm{-1}}}‐ax\ensuremath{{}_{\mathrm{1}}}}\AgdaSpace{}%
\AgdaSymbol{=}\<%
\end{code}
  \begin{enumerate}
    \setcounter{enumi}{3}
  \item
    \label{ax:7}
    \begin{code}%
\>[4][@{}l@{\AgdaIndent{1}}]%
\>[6]\AgdaSymbol{\{}\AgdaSymbol{@0}\AgdaSpace{}%
\AgdaBound{A}\AgdaSpace{}%
\AgdaSymbol{:}\AgdaSpace{}%
\AgdaPrimitive{Type}\AgdaSpace{}%
\AgdaBound{a}\AgdaSymbol{\}}\AgdaSpace{}%
\AgdaSymbol{\{}\AgdaSymbol{@0}\AgdaSpace{}%
\AgdaBound{x}\AgdaSpace{}%
\AgdaBound{y}\AgdaSpace{}%
\AgdaSymbol{:}\AgdaSpace{}%
\AgdaBound{A}\AgdaSymbol{\}}\AgdaSpace{}%
\AgdaSymbol{→}\<%
\\
\>[6]\AgdaFunction{Is‐equivalence}\AgdaSpace{}%
\AgdaSymbol{(λ}\AgdaSpace{}%
\AgdaSymbol{(}\AgdaBound{eq}\AgdaSpace{}%
\AgdaSymbol{:}\AgdaSpace{}%
\AgdaOperator{\AgdaInductiveConstructor{[}}\AgdaSpace{}%
\AgdaBound{x}\AgdaSpace{}%
\AgdaOperator{\AgdaInductiveConstructor{]}}\AgdaSpace{}%
\AgdaOperator{\AgdaFunction{≡}}\AgdaSpace{}%
\AgdaOperator{\AgdaInductiveConstructor{[}}\AgdaSpace{}%
\AgdaBound{y}\AgdaSpace{}%
\AgdaOperator{\AgdaInductiveConstructor{]}}\AgdaSymbol{)}\AgdaSpace{}%
\AgdaSymbol{→}\AgdaSpace{}%
\AgdaOperator{\AgdaInductiveConstructor{[}}\AgdaSpace{}%
\AgdaFunction{cong}\AgdaSpace{}%
\AgdaFunction{erased}\AgdaSpace{}%
\AgdaBound{eq}\AgdaSpace{}%
\AgdaOperator{\AgdaInductiveConstructor{]}}\AgdaSymbol{)}\<%
\end{code}
    \begin{code}[hide]%
\>[4]\AgdaFunction{[\ensuremath{\mkern1.5mu}]‐cong\ensuremath{{}^{\mkern1mu\mathrm{-1}}}‐ax\ensuremath{{}_{\mathrm{2}}}}\AgdaSpace{}%
\AgdaSymbol{=}\<%
\end{code}
  \item
    \label{ax:8}
    \begin{code}%
\>[4][@{}l@{\AgdaIndent{1}}]%
\>[6]\AgdaSymbol{\{}\AgdaSymbol{@0}\AgdaSpace{}%
\AgdaBound{A}\AgdaSpace{}%
\AgdaSymbol{:}\AgdaSpace{}%
\AgdaPrimitive{Type}\AgdaSpace{}%
\AgdaBound{a}\AgdaSymbol{\}}\AgdaSpace{}%
\AgdaSymbol{\{}\AgdaBound{x}\AgdaSpace{}%
\AgdaBound{y}\AgdaSpace{}%
\AgdaSymbol{:}\AgdaSpace{}%
\AgdaDatatype{Erased}\AgdaSpace{}%
\AgdaBound{A}\AgdaSymbol{\}}\AgdaSpace{}%
\AgdaSymbol{→}\AgdaSpace{}%
\AgdaFunction{Is‐equivalence}\AgdaSpace{}%
\AgdaSymbol{(λ}\AgdaSpace{}%
\AgdaSymbol{(}\AgdaBound{eq}\AgdaSpace{}%
\AgdaSymbol{:}\AgdaSpace{}%
\AgdaBound{x}\AgdaSpace{}%
\AgdaOperator{\AgdaFunction{≡}}\AgdaSpace{}%
\AgdaBound{y}\AgdaSymbol{)}\AgdaSpace{}%
\AgdaSymbol{→}\AgdaSpace{}%
\AgdaOperator{\AgdaInductiveConstructor{[}}\AgdaSpace{}%
\AgdaFunction{cong}\AgdaSpace{}%
\AgdaFunction{erased}\AgdaSpace{}%
\AgdaBound{eq}\AgdaSpace{}%
\AgdaOperator{\AgdaInductiveConstructor{]}}\AgdaSymbol{)}\<%
\end{code}
    \begin{code}[hide]%
\>[4]\AgdaFunction{Erased‐Separated\ensuremath{{}^{\mkern2mu\mathrm{E}}}‐ax}\AgdaSpace{}%
\AgdaSymbol{=}\<%
\end{code}
  \item
    \label{ax:9}
    \begin{code}%
\>[4][@{}l@{\AgdaIndent{1}}]%
\>[6]\AgdaSymbol{\{}\AgdaSymbol{@0}\AgdaSpace{}%
\AgdaBound{A}\AgdaSpace{}%
\AgdaSymbol{:}\AgdaSpace{}%
\AgdaPrimitive{Type}\AgdaSpace{}%
\AgdaBound{a}\AgdaSymbol{\}}\AgdaSpace{}%
\AgdaSymbol{→}\AgdaSpace{}%
\AgdaFunction{Separated\ensuremath{{}^{\mkern2mu\mathrm{E}}}}\AgdaSpace{}%
\AgdaSymbol{(}\AgdaDatatype{Erased}\AgdaSpace{}%
\AgdaBound{A}\AgdaSymbol{)}\<%
\end{code}
    \begin{code}[hide]%
\>[4]\AgdaFunction{Modal\ensuremath{{}^{\mkern2mu\mathrm{E}}}‐Erased×Modal\ensuremath{{}^{\mkern2mu\mathrm{E}}}→Separated\ensuremath{{}^{\mkern2mu\mathrm{E}}}‐ax}\AgdaSpace{}%
\AgdaSymbol{=}\<%
\end{code}
  \item
    \label{ax:9b}
    \begin{code}%
\>[4][@{}l@{\AgdaIndent{1}}]%
\>[6]\AgdaSymbol{(\{}\AgdaSymbol{@0}\AgdaSpace{}%
\AgdaBound{A}\AgdaSpace{}%
\AgdaSymbol{:}\AgdaSpace{}%
\AgdaPrimitive{Type}\AgdaSpace{}%
\AgdaBound{a}\AgdaSymbol{\}}\AgdaSpace{}%
\AgdaSymbol{→}\AgdaSpace{}%
\AgdaFunction{Modal\ensuremath{{}^{\mkern2mu\mathrm{E}}}}\AgdaSpace{}%
\AgdaSymbol{(}\AgdaDatatype{Erased}\AgdaSpace{}%
\AgdaBound{A}\AgdaSymbol{))}\AgdaSpace{}%
\AgdaOperator{\AgdaFunction{×}}\AgdaSpace{}%
\AgdaSymbol{(\{}\AgdaBound{A}\AgdaSpace{}%
\AgdaSymbol{:}\AgdaSpace{}%
\AgdaPrimitive{Type}\AgdaSpace{}%
\AgdaBound{a}\AgdaSymbol{\}}\AgdaSpace{}%
\AgdaSymbol{→}\AgdaSpace{}%
\AgdaFunction{Modal\ensuremath{{}^{\mkern2mu\mathrm{E}}}}\AgdaSpace{}%
\AgdaBound{A}\AgdaSpace{}%
\AgdaSymbol{→}\AgdaSpace{}%
\AgdaFunction{Separated\ensuremath{{}^{\mkern2mu\mathrm{E}}}}\AgdaSpace{}%
\AgdaBound{A}\AgdaSymbol{)}\<%
\end{code}
  \end{enumerate}
  The next batch are of the form
  ``\AgdaFunction{subst\ensuremath{{}^{\mkern2mu\mathrm{E}}}}/\AgdaFunction{J\ensuremath{{}^{\mkern2mu\mathrm{E}}}} can be defined'':
  \begin{code}[hide]%
\>[4]\AgdaFunction{Subst\ensuremath{{}^{\mkern2mu\mathrm{E}}}‐ax}\AgdaSpace{}%
\AgdaSymbol{=}\<%
\end{code}
  \begin{enumerate}
    \setcounter{enumi}{7}
  \item
    \label{ax:10}
    \begin{code}%
\>[4][@{}l@{\AgdaIndent{1}}]%
\>[6]\AgdaSymbol{(}%
\AgdaBound{subst\ensuremath{{}^{\mkern2mu\mathrm{E}}}}\AgdaSpace{}%
\AgdaSymbol{:}\AgdaSpace{}%
\>[3071I]\AgdaSymbol{\{}\AgdaSymbol{@0}\AgdaSpace{}%
\AgdaBound{A}\AgdaSpace{}%
\AgdaSymbol{:}\AgdaSpace{}%
\AgdaPrimitive{Type}\AgdaSpace{}%
\AgdaBound{a}\AgdaSymbol{\}}\AgdaSpace{}%
\AgdaSymbol{\{}\AgdaSymbol{@0}\AgdaSpace{}%
\AgdaBound{x}\AgdaSpace{}%
\AgdaBound{y}\AgdaSpace{}%
\AgdaSymbol{:}\AgdaSpace{}%
\AgdaBound{A}\AgdaSymbol{\}}\AgdaSpace{}%
\AgdaSymbol{(}\AgdaBound{P}\AgdaSpace{}%
\AgdaSymbol{:}\AgdaSpace{}%
\AgdaSymbol{@0}\AgdaSpace{}%
\AgdaBound{A}\AgdaSpace{}%
\AgdaSymbol{→}\AgdaSpace{}%
\AgdaPrimitive{Type}\AgdaSpace{}%
\AgdaBound{a}\AgdaSymbol{)}\AgdaSpace{}%
\AgdaSymbol{→}\<%
\\
\>[.][@{}l@{}]\<[3071I]%
\>[19]\AgdaSymbol{@0}\AgdaSpace{}%
\AgdaBound{x}\AgdaSpace{}%
\AgdaOperator{\AgdaFunction{≡}}\AgdaSpace{}%
\AgdaBound{y}\AgdaSpace{}%
\AgdaSymbol{→}\AgdaSpace{}%
\AgdaBound{P}\AgdaSpace{}%
\AgdaBound{x}\AgdaSpace{}%
\AgdaSymbol{→}\AgdaSpace{}%
\AgdaBound{P}\AgdaSpace{}%
\AgdaBound{y}\AgdaSymbol{)}\AgdaSpace{}%
\AgdaFunction{×}\<%
\\
\>[6]\AgdaSymbol{(\{}\AgdaSymbol{@0}\AgdaSpace{}%
\AgdaBound{A}\AgdaSpace{}%
\AgdaSymbol{:}\AgdaSpace{}%
\AgdaPrimitive{Type}\AgdaSpace{}%
\AgdaBound{a}\AgdaSymbol{\}}\AgdaSpace{}%
\AgdaSymbol{\{}\AgdaSymbol{@0}\AgdaSpace{}%
\AgdaBound{x}\AgdaSpace{}%
\AgdaSymbol{:}\AgdaSpace{}%
\AgdaBound{A}\AgdaSymbol{\}}\AgdaSpace{}%
\AgdaSymbol{\{}\AgdaBound{P}\AgdaSpace{}%
\AgdaSymbol{:}\AgdaSpace{}%
\AgdaSymbol{@0}\AgdaSpace{}%
\AgdaBound{A}\AgdaSpace{}%
\AgdaSymbol{→}\AgdaSpace{}%
\AgdaPrimitive{Type}\AgdaSpace{}%
\AgdaBound{a}\AgdaSymbol{\}}\AgdaSpace{}%
\AgdaSymbol{\{}\AgdaBound{p}\AgdaSpace{}%
\AgdaSymbol{:}\AgdaSpace{}%
\AgdaBound{P}\AgdaSpace{}%
\AgdaBound{x}\AgdaSymbol{\}}\AgdaSpace{}%
\AgdaSymbol{→}\AgdaSpace{}%
\AgdaBound{subst\ensuremath{{}^{\mkern2mu\mathrm{E}}}}\AgdaSpace{}%
\AgdaBound{P}\AgdaSpace{}%
\AgdaInductiveConstructor{refl}\AgdaSpace{}%
\AgdaBound{p}\AgdaSpace{}%
\AgdaOperator{\AgdaFunction{≡}}\AgdaSpace{}%
\AgdaBound{p}\AgdaSymbol{)}\<%
\end{code}
    \begin{code}[hide]%
\>[4]\AgdaFunction{J\ensuremath{{}^{\mkern2mu\mathrm{E}}}‐ax\ensuremath{{}_{\mathrm{2}}}}\AgdaSpace{}%
\AgdaSymbol{=}\<%
\end{code}
  \item
    \label{ax:12}
    \begin{code}%
\>[4][@{}l@{\AgdaIndent{1}}]%
\>[6]\AgdaSymbol{(}%
\AgdaBound{J\ensuremath{{}^{\mkern2mu\mathrm{E}}}}\AgdaSpace{}%
\AgdaSymbol{:}\AgdaSpace{}%
\>[3130I]\AgdaSymbol{\{}\AgdaSymbol{@0}\AgdaSpace{}%
\AgdaBound{A}\AgdaSpace{}%
\AgdaSymbol{:}\AgdaSpace{}%
\AgdaPrimitive{Type}\AgdaSpace{}%
\AgdaBound{a}\AgdaSymbol{\}}\AgdaSpace{}%
\AgdaSymbol{\{}\AgdaSymbol{@0}\AgdaSpace{}%
\AgdaBound{x}\AgdaSpace{}%
\AgdaBound{y}\AgdaSpace{}%
\AgdaSymbol{:}\AgdaSpace{}%
\AgdaBound{A}\AgdaSymbol{\}}\AgdaSpace{}%
\AgdaSymbol{(}\AgdaBound{P}\AgdaSpace{}%
\AgdaSymbol{:}\AgdaSpace{}%
\AgdaSymbol{\{}\AgdaSymbol{@0}\AgdaSpace{}%
\AgdaBound{y}\AgdaSpace{}%
\AgdaSymbol{:}\AgdaSpace{}%
\AgdaBound{A}\AgdaSymbol{\}}\AgdaSpace{}%
\AgdaSymbol{→}\AgdaSpace{}%
\AgdaSymbol{@0}\AgdaSpace{}%
\AgdaBound{x}\AgdaSpace{}%
\AgdaOperator{\AgdaFunction{≡}}\AgdaSpace{}%
\AgdaBound{y}\AgdaSpace{}%
\AgdaSymbol{→}\AgdaSpace{}%
\AgdaPrimitive{Type}\AgdaSpace{}%
\AgdaBound{a}\AgdaSymbol{)}\AgdaSpace{}%
\AgdaSymbol{→}\<%
\\
\>[.][@{}l@{}]\<[3130I]%
\>[15]\AgdaBound{P}\AgdaSpace{}%
\AgdaInductiveConstructor{refl}\AgdaSpace{}%
\AgdaSymbol{→}\AgdaSpace{}%
\AgdaSymbol{(}\AgdaSymbol{@0}\AgdaSpace{}%
\AgdaBound{eq}\AgdaSpace{}%
\AgdaSymbol{:}\AgdaSpace{}%
\AgdaBound{x}\AgdaSpace{}%
\AgdaOperator{\AgdaFunction{≡}}\AgdaSpace{}%
\AgdaBound{y}\AgdaSymbol{)}\AgdaSpace{}%
\AgdaSymbol{→}\AgdaSpace{}%
\AgdaBound{P}\AgdaSpace{}%
\AgdaBound{eq}\AgdaSymbol{)}\AgdaSpace{}%
\AgdaFunction{×}\<%
\\
\>[6]\AgdaSymbol{(}%
\>[3168I]\AgdaSymbol{\{}\AgdaSymbol{@0}\AgdaSpace{}%
\AgdaBound{A}\AgdaSpace{}%
\AgdaSymbol{:}\AgdaSpace{}%
\AgdaPrimitive{Type}\AgdaSpace{}%
\AgdaBound{a}\AgdaSymbol{\}}\AgdaSpace{}%
\AgdaSymbol{\{}\AgdaSymbol{@0}\AgdaSpace{}%
\AgdaBound{x}\AgdaSpace{}%
\AgdaSymbol{:}\AgdaSpace{}%
\AgdaBound{A}\AgdaSymbol{\}}\AgdaSpace{}%
\AgdaSymbol{\{}\AgdaBound{P}\AgdaSpace{}%
\AgdaSymbol{:}\AgdaSpace{}%
\AgdaSymbol{\{}\AgdaSymbol{@0}\AgdaSpace{}%
\AgdaBound{y}\AgdaSpace{}%
\AgdaSymbol{:}\AgdaSpace{}%
\AgdaBound{A}\AgdaSymbol{\}}\AgdaSpace{}%
\AgdaSymbol{→}\AgdaSpace{}%
\AgdaSymbol{@0}\AgdaSpace{}%
\AgdaBound{x}\AgdaSpace{}%
\AgdaOperator{\AgdaFunction{≡}}\AgdaSpace{}%
\AgdaBound{y}\AgdaSpace{}%
\AgdaSymbol{→}\AgdaSpace{}%
\AgdaPrimitive{Type}\AgdaSpace{}%
\AgdaBound{a}\AgdaSymbol{\}}\<%
\\
\>[.][@{}l@{}]\<[3168I]%
\>[8]\AgdaSymbol{(}\AgdaBound{r}\AgdaSpace{}%
\AgdaSymbol{:}\AgdaSpace{}%
\AgdaBound{P}\AgdaSpace{}%
\AgdaInductiveConstructor{refl}\AgdaSymbol{)}\AgdaSpace{}%
\AgdaSymbol{→}\AgdaSpace{}%
\AgdaBound{J\ensuremath{{}^{\mkern2mu\mathrm{E}}}}\AgdaSpace{}%
\AgdaBound{P}\AgdaSpace{}%
\AgdaBound{r}\AgdaSpace{}%
\AgdaInductiveConstructor{refl}\AgdaSpace{}%
\AgdaOperator{\AgdaFunction{≡}}\AgdaSpace{}%
\AgdaBound{r}\AgdaSymbol{)}\<%
\end{code}
  \end{enumerate}
  Finally we have a definition involving extendability:
  \begin{enumerate}
    \setcounter{enumi}{9}
    \begin{code}[hide]%
\>[4]\AgdaFunction{∞‐ext‐ax}\AgdaSpace{}%
\AgdaSymbol{=}\<%
\end{code}
  \item
    \label{ax:14}
    \begin{code}%
\>[4][@{}l@{\AgdaIndent{1}}]%
\>[6]\AgdaSymbol{\{}\AgdaBound{A}\AgdaSpace{}%
\AgdaBound{B}\AgdaSpace{}%
\AgdaSymbol{:}\AgdaSpace{}%
\AgdaPrimitive{Type}\AgdaSpace{}%
\AgdaBound{a}\AgdaSymbol{\}}\AgdaSpace{}%
\AgdaSymbol{→}\AgdaSpace{}%
\AgdaOperator{\AgdaFunction{Is‐∞‐extendable‐along‐[}}\AgdaSpace{}%
\AgdaSymbol{(λ}\AgdaSpace{}%
\AgdaBound{x}\AgdaSpace{}%
\AgdaSymbol{→}\AgdaSpace{}%
\AgdaOperator{\AgdaInductiveConstructor{[}}\AgdaSpace{}%
\AgdaBound{x}\AgdaSpace{}%
\AgdaOperator{\AgdaInductiveConstructor{]}}\AgdaSymbol{)}\AgdaSpace{}%
\AgdaOperator{\AgdaFunction{]}}\AgdaSpace{}%
\AgdaSymbol{(λ}\AgdaSpace{}%
\AgdaSymbol{(}\AgdaBound{\AgdaUnderscore{}}\AgdaSpace{}%
\AgdaSymbol{:}\AgdaSpace{}%
\AgdaDatatype{Erased}\AgdaSpace{}%
\AgdaBound{A}\AgdaSymbol{)}\AgdaSpace{}%
\AgdaSymbol{→}\AgdaSpace{}%
\AgdaDatatype{Erased}\AgdaSpace{}%
\AgdaBound{B}\AgdaSymbol{)}\<%
\end{code}
  \end{enumerate}
\end{theorem}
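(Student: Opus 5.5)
I would prove the logical equivalences by a web of implications centred on item~\ref{ax:1}, and then handle propositionality separately.

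\textbf{The \boxcong{}, \AgdaFunction{subst\ensuremath{{}^{\mkern2mu\mathrm{E}}}} and \AgdaFunction{J\ensuremath{{}^{\mkern2mu\mathrm{E}}}} variants.} First I relate items~\ref{ax:1}, \ref{ax:3}, \ref{ax:5}, \ref{ax:10} and~\ref{ax:12}.
\begin{itemize}
\item \ref{ax:3} $\Rightarrow$ \ref{ax:1} is instantiation. Non-erased implicit arguments can always be supplied in erased positions only when they are erased, so I would instead go \ref{ax:1} $\Rightarrow$ \ref{ax:3} by instantiation. For \ref{ax:3} $\Rightarrow$ \ref{ax:1} I would use the trick from Proposition~\ref{prop:boxcong-weak-boxcong}: apply the non-erased version at $\Erased\,A$ and then map with $\lambda x\to[\,\AgdaFunction{erased}\,(\AgdaFunction{erased}\,x)\,]$.
\item \ref{ax:1} $\Rightarrow$ \ref{ax:10} $\Rightarrow$ \ref{ax:12} $\Rightarrow$ \ref{ax:1} reuses the definitions already given. \AgdaFunction{subst\ensuremath{{}^{\mkern2mu\mathrm{E}}}} comes from §~\ref{sec:box-cong}, \AgdaFunction{J\ensuremath{{}^{\mkern2mu\mathrm{E}}}} from §~\ref{sec:fording}, and \boxcong{} is \AgdaFunction{J\ensuremath{{}^{\mkern2mu\mathrm{E}}}} at the motive $\lambda\{y\}\,\_\to[x]\equiv[y]$, which is well formed because that motive only needs $y$ erased.
\item In each step the propositional computation rule transfers from the previous one, using \AgdaFunction{cong} and the β-rule for \AgdaFunction{subst}.
\item \ref{ax:1} $\Leftrightarrow$ \ref{ax:5}: given \boxcong{}, match $x=[x']$, $y=[y']$ (a non-erased match on a single-constructor type whose argument becomes erased) and return $\boxcong\,(\AgdaFunction{cong}\,\AgdaFunction{erased}\,eq)$. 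Conversely, define $\boxcong\,eq = f\,[x]\,[y]\,[eq]$.
\end{itemize}

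\textbf{The equivalence variants.} For \ref{ax:1} $\Rightarrow$ \ref{ax:8} I would give an explicit inverse: match on $x,y$ and apply \boxcong{} to the erased proof.
\begin{itemize}
\item One round trip, $\boxcong\,(\AgdaFunction{cong}\,\AgdaFunction{erased}\,eq)\equiv eq$ for a non-erased $eq$, follows by J on $eq$ plus the computation rule.
\item The other round trip lives entirely under $[\_]$, so there J can be used freely on the erased proof.
\item \ref{ax:8} $\Rightarrow$ \ref{ax:7} is instantiation at $[x],[y]$.
\item \ref{ax:7} $\Rightarrow$ \ref{ax:1} takes the inverse map as \boxcong{}. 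Its computation rule is the left-inverse law at $\rfl$, since the forward map sends $\rfl$ to $[\rfl]$ judgementally.
\end{itemize}

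\textbf{The modal and separated variants.}
\begin{itemize}
\item \ref{ax:8} $\Leftrightarrow$ \ref{ax:9}: in erased contexts $x\equiv y\simeq(\AgdaFunction{erased}\,x\equiv\AgdaFunction{erased}\,y)$, so I compose with this equivalence under $\Erased$. I then check that the two maps agree up to homotopy by matching on the identity proof.
\item \ref{ax:9b} $\Rightarrow$ \ref{ax:9} is immediate from the two components.
\item For \ref{ax:1} $\Rightarrow$ \ref{ax:9b}, the first component is $\eta$ being an equivalence on $\Erased\,A$ with inverse $\AgdaFunction{erased}$. Only one of its round trips needs \boxcong{}: $[[\AgdaFunction{erased}\,z]]\equiv[z]$ holds because $[\AgdaFunction{erased}\,z]\equiv z$ is provable in erased contexts after matching.
\item The second component of \ref{ax:9b} transports \ref{ax:8} along $A\simeq\Erased\,A$, namely $x\equiv y\simeq[x]\equiv[y]\simeq\Erased(x\equiv y)$, and then checks that the composite is $\lambda eq\to[eq]$.
\end{itemize}

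\textbf{Extendability, which I expect to be the main obstacle.} For \ref{ax:1} $\Rightarrow$ \ref{ax:14} I would proceed by induction on $n$, generalising over all families pointwise of the form ``\Erased{}-modal type''. The base case uses the first component of \ref{ax:9b} to give an extension $h\,x=\AgdaFunction{erased}$-based map. The step case uses separatedness (\ref{ax:9b}) to show that the identity types $g\,x\equiv h\,x$ are again modal; this requires generalising the statement to modal families, not just constant $\Erased\,B$. For \ref{ax:14} $\Rightarrow$ \ref{ax:1} I would use level~$2$ with $A:=\top$-like data, or more directly: take $g,h:\Erased\,A\to\Erased\,A$ to be $\lambda x\to[\AgdaFunction{erased}\,x]$ and $\mathsf{id}$. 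These agree on $[\_]$ by $\rfl$, so the extension of that homotopy yields $\forall x\to [\AgdaFunction{erased}\,x]\equiv x$. Applying this to $x=[x']$ and $[y']$ transports an erased equality into $[x']\equiv[y']$. If this does not directly produce a computing \boxcong{}, I would first produce item~\ref{ax:5}, whose computation rule can be repaired by composing with an inverse of the value at $\rfl$ (a standard ``adjust by the value at refl'' trick).

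\textbf{Propositionality under funext.} Items~\ref{ax:7}, \ref{ax:8} and~\ref{ax:9} are products of \AgdaFunction{Is‐equivalence} types and hence propositions. Items~\ref{ax:9b} and~\ref{ax:14} are propositions by the standard results that \AgdaFunction{Is‐equivalence} and ∞-extendability are propositional under funext. For the pairs (function, computation rule), I would show each is equivalent, via funext, to a based-path-space type that is contractible once inhabited, by J-like uniqueness of maps determined at $\rfl$. A logically equivalent proposition to a proposition is then equivalent to it, and contractibility follows from Theorem~\ref{thm:box-cong-from-funext}.
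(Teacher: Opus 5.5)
The main gap is the direction back from extendability, \ref{ax:14}$\Rightarrow$\ref{ax:1}; there is also a circular step in \ref{ax:9}$\Rightarrow$\ref{ax:8}. The rest of your web holds up: the \boxcong{}/$\mathit{subst}^{\mathrm{E}}$/$\mathit{J}^{\mathrm{E}}$ cycle, items~\ref{ax:5}, \ref{ax:7} and~\ref{ax:8}, and \ref{ax:1}$\Rightarrow$\ref{ax:14} via generalising to pointwise modal families. The paper omits its own proof, deferring to the Agda code and noting only that it uses a technique of \citet{altenkirch-et-al-2017}, so I can only judge your argument on its own terms.

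\textbf{Extendability back to \boxcong{}.} With $g=\lambda x\to[\mathit{erased}\,x]$ and $h=\mathsf{id}$ you only get $\forall x\to[\mathit{erased}\,x]\equiv x$ for non-erased $x:\mathsf{Erased}\,A$. That is provable outright by matching on $x$, so it carries no information. Nothing in it turns an erased $x'\equiv y'$ into $[x']\equiv[y']$. A $\top$-like domain does not help either. The extension component alone is also worthless, because it holds trivially for families $\lambda\_\to\mathsf{Erased}\,B$: take $h\,[a]=[\mathit{erased}\,(g\,a)]$.

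The missing idea is to put the equality into the \emph{domain}:
\begin{itemize}
\item For non-erased $C$ and $x,y:C$, instantiate item~\ref{ax:14} with $A:=x\equiv y$ and $B:=C$.
\item Apply $2$-extendability to the constant maps $g=\lambda\_\to[x]$ and $h=\lambda\_\to[y]$. On the image of $[\_]$ these are related by $\lambda\,eq\to\mathit{cong}\,[\_]\,eq$.
\item This yields $\mathit{ext}:\mathsf{Erased}\,(x\equiv y)\to[x]\equiv[y]$ with $\mathit{ext}\,[eq]\equiv\mathit{cong}\,[\_]\,eq$.
\item Sending an erased $eq$ to $\mathit{ext}\,[eq]$ inhabits item~\ref{ax:3}. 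Its computation rule is that equation at $\mathsf{refl}$.
\end{itemize}
Your \ref{ax:3}$\Rightarrow$\ref{ax:1} step then finishes the job, and no ``adjust at $\mathsf{refl}$'' repair is needed.

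\textbf{Lifting erased equivalences.} In \ref{ax:8}$\Leftrightarrow$\ref{ax:9} you turn an erased equivalence into an equivalence between $\mathsf{Erased}$ types. That is not free: the round-trip laws are equalities between boxed values, so they need \boxcong{}. This is harmless for \ref{ax:8}$\Rightarrow$\ref{ax:9}. It is circular for \ref{ax:9}$\Rightarrow$\ref{ax:8}, which is your only route from items~\ref{ax:9} and~\ref{ax:9b} back to~\ref{ax:1}. Fix it by going \ref{ax:9}$\Rightarrow$\ref{ax:5} directly. The inverse of $\lambda\,eq\to[eq]$ on identity types of $\mathsf{Erased}\,A$ already has the type in item~\ref{ax:5}, and the left-inverse law at $\mathsf{refl}$ is its computation rule.

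Two related slips:
\begin{itemize}
\item The ``under $[\_]$'' round trip in \ref{ax:1}$\Rightarrow$\ref{ax:8} likewise needs \boxcong{} applied to the erased J-proof; J alone gives only an erased proof.
\item The inverse in the first component of~\ref{ax:9b} must be $\lambda w\to[\mathit{erased}\,(\mathit{erased}\,w)]$, not the erased function $\mathit{erased}$.
\end{itemize}

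\textbf{Propositionality of the (function, computation rule) items.} Ordinary J cannot eliminate the erased identity proof in a non-erased context. For two inhabitants with functions $f_1,f_2$, do the ``J-like uniqueness'' step with the assumed inhabitant's $\mathit{J}^{\mathrm{E}}$ (item~\ref{ax:12}). Use the motive $\lambda\,eq\to f_1\,eq\equiv f_2\,eq$, and its computation rule for the coherence at $\mathsf{refl}$. This also needs funext for functions with erased (implicit) arguments.
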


\begin{code}[hide]%
\>[2]\AgdaKeyword{postulate}\<%
\\
\>[2][@{}l@{\AgdaIndent{0}}]%
\>[4]\AgdaPostulate{[\ensuremath{\mkern1.5mu}]‐cong‐axiomatisation}\AgdaSpace{}%
\AgdaSymbol{:}\AgdaSpace{}%
\AgdaSymbol{(}\AgdaBound{a}\AgdaSpace{}%
\AgdaSymbol{:}\AgdaSpace{}%
\AgdaPostulate{Level}\AgdaSymbol{)}\AgdaSpace{}%
\AgdaSymbol{→}\AgdaSpace{}%
\AgdaPrimitive{Type}\AgdaSpace{}%
\AgdaSymbol{(}\AgdaPrimitive{lsuc}\AgdaSpace{}%
\AgdaBound{a}\AgdaSymbol{)}\<%
\\
\>[4]\AgdaPostulate{\AgdaUnderscore{}}%
\>[3231I]\AgdaSymbol{:}\<%
\end{code}
\newcommand{\boxcongax}{%
\begin{code}[inline]%
\>[.][@{}l@{}]\<[3231I]%
\>[6]\AgdaPostulate{[\ensuremath{\mkern1.5mu}]‐cong‐axiomatisation}\AgdaSpace{}%
\AgdaGeneralizable{a}\<%
\end{code}}

Let \boxcongax{} denote any of the logically equivalent definitions
from Theorem \ref{thm:equivalent-1}.
If one of these definitions is inhabited for a given universe level,
then it is also inhabited for smaller universe levels, and all
definitions are inhabited in erased contexts.
Another definition is given in the following theorem:
\begin{theorem}
  \label{thm:equivalent-2}
  \boxcongax{} is logically equivalent to
  \begin{code}[hide]%
\>[2]\AgdaFunction{Is‐modality‐ax}\AgdaSpace{}%
\AgdaSymbol{:}\AgdaSpace{}%
\AgdaSymbol{(}\AgdaBound{a}\AgdaSpace{}%
\AgdaSymbol{:}\AgdaSpace{}%
\AgdaPostulate{Level}\AgdaSymbol{)}\AgdaSpace{}%
\AgdaSymbol{→}\AgdaSpace{}%
\AgdaPrimitive{Type}\AgdaSpace{}%
\AgdaSymbol{(}\AgdaPrimitive{lsuc}\AgdaSpace{}%
\AgdaBound{a}\AgdaSymbol{)}\<%
\\
\>[2]\AgdaFunction{Is‐modality‐ax}\AgdaSpace{}%
\AgdaBound{a}\AgdaSpace{}%
\AgdaSymbol{=}\<%
\end{code}
  \mbox{\begin{code}[inline]%
\>[2][@{}l@{\AgdaIndent{1}}]%
\>[4]\AgdaPostulate{Is‐modality}\AgdaSpace{}%
\AgdaBound{a}\AgdaSpace{}%
\AgdaSymbol{(λ}\AgdaSpace{}%
\AgdaBound{A}\AgdaSpace{}%
\AgdaSymbol{→}\AgdaSpace{}%
\AgdaDatatype{Erased}\AgdaSpace{}%
\AgdaBound{A}\AgdaSymbol{)}\<%
\end{code}}
  \begin{code}[inline]%
\>[4][@{}l@{\AgdaIndent{1}}]%
\>[6]\AgdaSymbol{(λ}\AgdaSpace{}%
\AgdaBound{x}\AgdaSpace{}%
\AgdaSymbol{→}\AgdaSpace{}%
\AgdaOperator{\AgdaInductiveConstructor{[}}\AgdaSpace{}%
\AgdaBound{x}\AgdaSpace{}%
\AgdaOperator{\AgdaInductiveConstructor{]}}\AgdaSymbol{)}\<%
\end{code}.
  In the presence of funext and propext this type is a proposition,
  and thus equivalent to the other definitions.
\end{theorem}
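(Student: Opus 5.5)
The plan is to prove the two directions of the logical equivalence separately, choosing in each direction whichever characterisation from Theorem~\ref{thm:equivalent-1} is most convenient. Then I would show that the modality type is a proposition under funext and propext.

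\emph{From a modality to \boxcong{}.} Assume a proof of \AgdaPostulate{Is‐modality} for $\lambda A \to \Erased\,A$ and $\lambda x \to [\,x\,]$.
\begin{enumerate}
\item I would first re-derive the generic function $\eta$-cong : $\bigcirc(x \equiv y) \to \eta\,x \equiv \eta\,y$ directly from the \AgdaField{extendable-along-η} component, rather than importing it from \citet{rijke-et-al-2020}, since that paper uses funext.
\item To do so, apply extendability to the constant family $P\,z = (\eta\,x \equiv \eta\,y)$ over $\bigcirc(x\equiv y)$, with the point function $\mathsf{cong}\,\eta$.
\item This needs the lemma that identity types of modal types are modal. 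I plan to prove it from the second (``equalities'') layer of $\infty$-extendability together with \AgdaField{Modal-◯} and \AgdaField{Modal-respects-≃}.
\end{enumerate}
The extension property then yields $\eta$-cong together with the propositional computation rule $\eta\text{-cong}\,(\eta\,\mathit{eq}) \equiv \mathsf{cong}\,\eta\,\mathit{eq}$. Specialised to $\mathit{eq} = \mathsf{refl}$, this gives exactly the $\Erased(x\equiv y)$ variant of item~\ref{ax:1}, which Theorem~\ref{thm:equivalent-1} lists among the equivalent forms.

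\emph{From \boxcong{} to a modality.} Take \AgdaField{Modal} to be \AgdaFunction{Modal\ensuremath{{}^{\mkern2mu\mathrm{E}}}} and check the four components.
\begin{enumerate}
\item Propositionality under funext holds because \AgdaFunction{Is‐equivalence} is a proposition.
\item $\mathsf{Modal}(\Erased\,A)$ is the first component of item~\ref{ax:9b}.
\item Respecting equivalences is a standard transport argument.
\item Extendability is the main work; see below.
\end{enumerate}
For extendability, let $P : \Erased\,A \to \mathrm{Type}$ be pointwise modal and $g : (x:A) \to P\,[\,x\,]$. I would define
\[ h\,y = [\,\_\,]^{-1}_{P\,y}\,[\,\mathsf{subst}\,P\,e\,(g\,(\mathsf{erased}\,y))\,], \]
where $e : [\,\mathsf{erased}\,y\,] \equiv y$ is constructed inside the erased argument, where matching on $y$ is allowed. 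The equation $h\,[\,x\,] \equiv g\,x$ then follows from the inverse laws, provided $e$ reduces to $\mathsf{refl}$ in that case. For the higher levels I would proceed by induction on $n$. The families of identity types $h_1\,y \equiv h_2\,y$ are again pointwise modal by the second component of item~\ref{ax:9b} (modal types are separated), so the same construction applies at every level.

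\emph{Propositionality.} Under funext, every component other than \AgdaField{Modal} is a $\Pi$-type of propositions, or becomes one once \AgdaField{Modal} is fixed. For \AgdaField{Modal} itself, the standard argument shows that any valid modal predicate is logically equivalent to ``$\eta$ is an equivalence''. Both predicates are propositions, so propext identifies them. Hence the whole record is a proposition, and being logically equivalent to \boxcongax{}, it is equivalent to the other definitions.

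\emph{Main obstacle.} I expect the hardest step to be the dependent, higher-level extendability in the second direction, in particular making the erased transport $e$ behave coherently at every level. If the direct construction becomes unwieldy, I would first derive \AgdaFunction{J\ensuremath{{}^{\mkern2mu\mathrm{E}}}} (item~\ref{ax:12}) and use its computation rule to discharge the identity-type obligations uniformly.
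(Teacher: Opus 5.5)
Your overall plan matches the paper's sketch. Propositionality comes from the fact that any admissible $\mathsf{Modal}$ is logically equivalent to ``$[\,\_\,]$ is an equivalence''. The forward direction goes via $\eta$-cong, and the backward direction takes $\mathsf{Modal} := \mathsf{Modal}^{\mathrm{E}}$. Your extendability argument correctly supplies the part the paper omits: box $g\,(\mathsf{erased}\,y)$, apply the inverse of $[\,\_\,]$ on the modal fibre $P\,y$, and handle higher levels by induction on $n$ over all pointwise modal families, using the separatedness half of item~\ref{ax:9b}.

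The step that fails as stated is ``respecting equivalences is a standard transport argument''. There is no univalence here, so you cannot transport along $A \simeq B$. Suppose you instead build the inverse directly, say $r_B\,[\,y\,] = f\,(r_A\,[\,f^{-1}\,y\,])$, or show that the induced map $\Erased\,A \to \Erased\,B$ is an equivalence. Either way you must prove $[\,f\,(f^{-1}\,y)\,] \equiv [\,y\,]$ for an \emph{erased} $y$, i.e.\ turn an erased equality into an equality of boxes. That is exactly what \boxcong{} does, which is why the paper says this component ``can be proved using \boxcong{}''.

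This is not an artefact of the method: respect for equivalences by itself yields \boxcong{}. Apply it to the equivalence between the unit type (modal thanks to $\eta$-equality) and the contractible type $\Sigma\,(e : \Erased\,A)\,(e \equiv [\,x\,])$. The inverse of the latter's $[\,\_\,]$, together with its section law, then produces $[\,x\,] \equiv [\,y\,]$ from an erased $x \equiv y$. Since \boxcong{} is assumed in this direction the repair is easy, but it has to use \boxcong{}.

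Some smaller points:
\begin{itemize}
\item In the forward direction the lemma that identity types of modal types are modal is unnecessary. Apply the second layer of extendability to the pointwise modal constant family $\lambda\,\_ \to \Erased\,A$ over $\Erased\,(x \equiv y)$, with $h_1 = \lambda\,\_ \to [\,x\,]$ and $h_2 = \lambda\,\_ \to [\,y\,]$; this extends $\mathsf{cong}\,[\,\_\,]$ directly.
\item The modality quantifies over non-erased $A$, $x$ and $y$, so what you obtain is the $\Erased$ variant of item~\ref{ax:3}, not of item~\ref{ax:1}. This is harmless by Theorem~\ref{thm:equivalent-1}.
\item The ``main obstacle'' you anticipate does not arise. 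Your induction on $n$ already ranges over all pointwise modal families, so each level is just the first-level construction applied to a new family, and no coherence of $e$ across levels is needed. You could even define $h\,[\,x\,] = [\,\_\,]^{-1}\,[\,g\,x\,]$ by an ordinary match on the non-erased $y$.
\end{itemize}
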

\begin{proof}[Proof sketch]
  Let us start with the last part.
  Using funext one can prove that, with \AgdaField{Modal} fixed, the
  remaining parts of the type form a proposition.
  Thus it suffices to prove that any two implementations of
  \AgdaField{Modal} are equal.
  In the presence of funext and propext this follows because
  \mbox{$\AgdaField{Modal} \AgdaBound{A}$} is a proposition that is
  logically equivalent to
  \begin{code}[hide]%
\>[2]\AgdaKeyword{postulate}\<%
\\
\>[2][@{}l@{\AgdaIndent{0}}]%
\>[4]\AgdaPostulate{\AgdaUnderscore{}}%
\>[3254I]\AgdaSymbol{:}\<%
\end{code}
  \begin{code}[inline]%
\>[.][@{}l@{}]\<[3254I]%
\>[6]\AgdaFunction{Is‐equivalence}\AgdaSpace{}%
\AgdaSymbol{(λ}\AgdaSpace{}%
\AgdaSymbol{(}\AgdaBound{x}\AgdaSpace{}%
\AgdaSymbol{:}\AgdaSpace{}%
\AgdaGeneralizable{A}\AgdaSymbol{)}\AgdaSpace{}%
\AgdaSymbol{→}\AgdaSpace{}%
\AgdaOperator{\AgdaInductiveConstructor{[}}\AgdaSpace{}%
\AgdaBound{x}\AgdaSpace{}%
\AgdaOperator{\AgdaInductiveConstructor{]}}\AgdaSymbol{)}\<%
\end{code},
  see §~\ref{sec:modalities}.

  If \AgdaDatatype{Erased} and \boxop{} (η-expanded) form a modality,
  then \boxcong{} can be defined (see §~\ref{sec:modalities}).

  For the other direction we can proceed in the following way:
  Let \AgdaField{Modal} be \AgdaFunction{Modal\ensuremath{{}^{\mkern2mu\mathrm{E}}}}.
  Given funext this definition is propositional.
  \ErasedA{} is modal in the presence of \boxcong{}, see
  Theorem~\ref{thm:equivalent-1}.
  \AgdaFunction{Modal\ensuremath{{}^{\mkern2mu\mathrm{E}}}} also respects equivalences: this can be proved
  using \boxcong{}.
  Finally ∞-extendability holds (this part of the proof is
  omitted).
\end{proof}

\section{The Erasure Modality}
\label{sec:erasure-modality}

Let us call the modality defined in (the proof of)
Theorem~\ref{thm:equivalent-2} the \emph{erasure modality}.
This modality, which is defined under the assumption that \boxcongax{}
holds, is \emph{left exact}, and thereby satisfies all properties that
left exact modalities satisfy \citep{rijke-et-al-2020}.

\begin{code}[hide]%
\>[2]\AgdaKeyword{module}\AgdaSpace{}%
\AgdaModule{\AgdaUnderscore{}}\AgdaSpace{}%
\AgdaSymbol{(}\AgdaBound{◯}\AgdaSpace{}%
\AgdaSymbol{:}\AgdaSpace{}%
\AgdaPrimitive{Type}\AgdaSpace{}%
\AgdaGeneralizable{a}\AgdaSpace{}%
\AgdaSymbol{→}\AgdaSpace{}%
\AgdaPrimitive{Type}\AgdaSpace{}%
\AgdaGeneralizable{a}\AgdaSymbol{)}\AgdaSpace{}%
\AgdaKeyword{where}\<%
\end{code}

There are also properties that hold for \Erased{} that do not hold for
every left exact modality.
\begin{code}[hide]%
\>[2][@{}l@{\AgdaIndent{1}}]%
\>[4]\AgdaKeyword{postulate}\<%
\\
\>[4][@{}l@{\AgdaIndent{0}}]%
\>[6]\AgdaPostulate{\AgdaUnderscore{}}%
\>[3272I]\AgdaSymbol{:}\<%
\end{code}
\mbox{\begin{code}[inline]%
\>[.][@{}l@{}]\<[3272I]%
\>[8]\AgdaFunction{Modal\ensuremath{{}^{\mkern2mu\mathrm{E}}}}\AgdaSpace{}%
\AgdaFunction{⊥}\<%
\end{code}}
holds (also in the absence of \boxcong{}), but the empty type is not
modal for every left exact (or even ``topological''
\citep{rijke-et-al-2020}) modality: it is not modal for the (left
exact and topological) zero modality, for which the modal operator
\AgdaBound{◯} maps every type to a unit type.
For every modality for which the empty type is modal one can prove
properties like the following ones:\MCbreak{}
\begin{code}[hide]%
\>[4]\AgdaKeyword{postulate}\<%
\\
\>[4][@{}l@{\AgdaIndent{0}}]%
\>[6]\AgdaPostulate{Separated}\AgdaSpace{}%
\AgdaSymbol{:}\AgdaSpace{}%
\AgdaPrimitive{Type}\AgdaSpace{}%
\AgdaBound{a}\AgdaSpace{}%
\AgdaSymbol{→}\AgdaSpace{}%
\AgdaPrimitive{Type}\AgdaSpace{}%
\AgdaBound{a}\<%
\end{code}
\begin{minipage}[t]{0.57\linewidth}
\begin{AgdaMultiCode}
  \begin{code}[hide]%
\>[4]\AgdaKeyword{postulate}\<%
\\
\>[4][@{}l@{\AgdaIndent{0}}]%
\>[6]\AgdaPostulate{\AgdaUnderscore{}}%
\>[3280I]\AgdaSymbol{:}\<%
\end{code}
  \begin{code}%
\>[3280I][@{}l@{\AgdaIndent{1}}]%
\>[10]\AgdaPostulate{Separated}\AgdaSpace{}%
\AgdaGeneralizable{A}\AgdaSpace{}%
\AgdaSymbol{→}\AgdaSpace{}%
\AgdaPostulate{Separated}\AgdaSpace{}%
\AgdaGeneralizable{B}\AgdaSpace{}%
\AgdaSymbol{→}\AgdaSpace{}%
\AgdaPostulate{Separated}\AgdaSpace{}%
\AgdaSymbol{(}\AgdaGeneralizable{A}\AgdaSpace{}%
\AgdaOperator{\AgdaDatatype{⊎}}\AgdaSpace{}%
\AgdaGeneralizable{B}\AgdaSymbol{)}\<%
\end{code}
  \begin{code}[hide]%
\>[4]\AgdaKeyword{postulate}\<%
\\
\>[4][@{}l@{\AgdaIndent{0}}]%
\>[6]\AgdaPostulate{\AgdaUnderscore{}}%
\>[3290I]\AgdaSymbol{:}\<%
\end{code}
  \begin{code}%
\>[3290I][@{}l@{\AgdaIndent{1}}]%
\>[10]\AgdaPostulate{Separated}\AgdaSpace{}%
\AgdaGeneralizable{A}\AgdaSpace{}%
\AgdaSymbol{→}\AgdaSpace{}%
\AgdaPostulate{Separated}\AgdaSpace{}%
\AgdaSymbol{(}\AgdaDatatype{List}\AgdaSpace{}%
\AgdaGeneralizable{A}\AgdaSymbol{)}\<%
\end{code}
\end{AgdaMultiCode}
\end{minipage}
\begin{minipage}[t]{0.425\linewidth}
  \begin{code}[hide]%
\>[4]\AgdaKeyword{postulate}\<%
\\
\>[4][@{}l@{\AgdaIndent{0}}]%
\>[6]\AgdaPostulate{\AgdaUnderscore{}}%
\>[3296I]\AgdaSymbol{:}\<%
\end{code}
  \begin{code}%
\>[3296I][@{}l@{\AgdaIndent{1}}]%
\>[10]\AgdaFunction{Decidable‐equality}\AgdaSpace{}%
\AgdaGeneralizable{A}\AgdaSpace{}%
\AgdaSymbol{→}\AgdaSpace{}%
\AgdaPostulate{Separated}\AgdaSpace{}%
\AgdaGeneralizable{A}\<%
\end{code}
\end{minipage}
This implies, for instance, that the type of lists of natural numbers
is separated.

As another example of a property that holds for \Erased{} that does
not hold for every left exact modality, take
\begin{code}[hide]%
\>[4]\AgdaKeyword{postulate}\<%
\\
\>[4][@{}l@{\AgdaIndent{0}}]%
\>[6]\AgdaPostulate{\AgdaUnderscore{}}%
\>[3301I]\AgdaSymbol{:}\<%
\end{code}
\begin{code}[inline]%
\>[.][@{}l@{}]\<[3301I]%
\>[8]\AgdaSymbol{\{}\AgdaSymbol{@0}\AgdaSpace{}%
\AgdaBound{A}\AgdaSpace{}%
\AgdaSymbol{:}\AgdaSpace{}%
\AgdaPrimitive{Type}\AgdaSpace{}%
\AgdaBound{a}\AgdaSymbol{\}}\AgdaSpace{}%
\AgdaSymbol{→}\AgdaSpace{}%
\AgdaDatatype{Erased}\AgdaSpace{}%
\AgdaSymbol{(}\AgdaFunction{Modal\ensuremath{{}^{\mkern2mu\mathrm{E}}}}\AgdaSpace{}%
\AgdaBound{A}\AgdaSymbol{)}\<%
\end{code},
which can be proved without the use of \boxcong{}.
The property \VeryModal{} holds for the (left exact and topological)
closed modality associated to a proposition \AgdaBound{A}
\citep{rijke-et-al-2020} – which can be defined in Cubical Agda – if
and only if
\begin{code}[hide]%
\>[4]\AgdaKeyword{postulate}\<%
\\
\>[4][@{}l@{\AgdaIndent{0}}]%
\>[6]\AgdaPostulate{\AgdaUnderscore{}}%
\>[3310I]\AgdaSymbol{:}\<%
\end{code}
\begin{code}[inline*]%
\>[3310I][@{}l@{\AgdaIndent{1}}]%
\>[10]\AgdaGeneralizable{A}\AgdaSpace{}%
\AgdaOperator{\AgdaDatatype{⊎}}\AgdaSpace{}%
\AgdaOperator{\AgdaFunction{¬}}\AgdaSpace{}%
\AgdaGeneralizable{A}\<%
\end{code}
holds, and it holds for the closed modalities associated to every
proposition
\begin{code}[hide]%
\>[4]\AgdaKeyword{postulate}\<%
\\
\>[4][@{}l@{\AgdaIndent{0}}]%
\>[6]\AgdaPostulate{\AgdaUnderscore{}}\AgdaSpace{}%
\AgdaSymbol{:}%
\>[3315I]\AgdaSymbol{(}\<%
\end{code}
\begin{code}[inline*]%
\>[3315I][@{}l@{\AgdaIndent{1}}]%
\>[12]\AgdaBound{A}\AgdaSpace{}%
\AgdaSymbol{:}\AgdaSpace{}%
\AgdaPrimitive{Type}\AgdaSpace{}%
\AgdaBound{a}\<%
\end{code}
\begin{code}[hide]%
\>[.][@{}l@{}]\<[3315I]%
\>[10]\AgdaSymbol{)}\AgdaSpace{}%
\AgdaSymbol{→}\AgdaSpace{}%
\AgdaPrimitive{Type}\<%
\end{code}
if and only if excluded middle holds for \AgdaBound{a}.\footnote{This
  observation related to closed modalities is due to Christian Sattler
  and David Wärn.}

For every modality one can define a ``map'' function
\begin{code}[hide]%
\>[4]\AgdaKeyword{postulate}\<%
\end{code}
\begin{code}[inline]%
\>[4][@{}l@{\AgdaIndent{1}}]%
\>[6]\AgdaPostulate{◯‐map}\AgdaSpace{}%
\AgdaSymbol{:}\AgdaSpace{}%
\AgdaSymbol{(}\AgdaGeneralizable{A}\AgdaSpace{}%
\AgdaSymbol{→}\AgdaSpace{}%
\AgdaGeneralizable{B}\AgdaSymbol{)}\AgdaSpace{}%
\AgdaSymbol{→}\AgdaSpace{}%
\AgdaBound{◯}\AgdaSpace{}%
\AgdaGeneralizable{A}\AgdaSpace{}%
\AgdaSymbol{→}\AgdaSpace{}%
\AgdaBound{◯}\AgdaSpace{}%
\AgdaGeneralizable{B}\<%
\end{code}.
For any (necessarily left exact) modality that satisfies \VeryModal{}
one can prove the following logical equivalences (and given funext
they can be strengthened to equivalences):\MCbreak{}
\begin{minipage}[t]{0.35\linewidth}
\begin{AgdaMultiCode}
\begin{code}[hide]%
\>[4]\AgdaKeyword{postulate}\<%
\\
\>[4][@{}l@{\AgdaIndent{0}}]%
\>[6]\AgdaPostulate{\AgdaUnderscore{}}%
\>[3331I]\AgdaSymbol{:}\<%
\end{code}
\begin{code}%
\>[3331I][@{}l@{\AgdaIndent{1}}]%
\>[10]\AgdaBound{◯}\AgdaSpace{}%
\AgdaSymbol{(}\AgdaGeneralizable{A}\AgdaSpace{}%
\AgdaSymbol{→}\AgdaSpace{}%
\AgdaGeneralizable{B}\AgdaSymbol{)}%
\>[34]\AgdaOperator{\AgdaRecord{⇔}}\AgdaSpace{}%
\AgdaSymbol{(}\AgdaBound{◯}\AgdaSpace{}%
\AgdaGeneralizable{A}\AgdaSpace{}%
\AgdaSymbol{→}\AgdaSpace{}%
\AgdaBound{◯}\AgdaSpace{}%
\AgdaGeneralizable{B}\AgdaSymbol{)}\<%
\end{code}
\begin{code}[hide]%
\>[4]\AgdaKeyword{postulate}\<%
\\
\>[4][@{}l@{\AgdaIndent{0}}]%
\>[6]\AgdaPostulate{\AgdaUnderscore{}}%
\>[3340I]\AgdaSymbol{:}\<%
\end{code}
\begin{code}%
\>[3340I][@{}l@{\AgdaIndent{1}}]%
\>[10]\AgdaBound{◯}\AgdaSpace{}%
\AgdaSymbol{(}\AgdaGeneralizable{A}\AgdaSpace{}%
\AgdaOperator{\AgdaPostulate{≃}}\AgdaSpace{}%
\AgdaGeneralizable{B}\AgdaSymbol{)}%
\>[34]\AgdaOperator{\AgdaRecord{⇔}}\AgdaSpace{}%
\AgdaSymbol{(}\AgdaBound{◯}\AgdaSpace{}%
\AgdaGeneralizable{A}\AgdaSpace{}%
\AgdaOperator{\AgdaPostulate{≃}}\AgdaSpace{}%
\AgdaBound{◯}\AgdaSpace{}%
\AgdaGeneralizable{B}\AgdaSymbol{)}\<%
\end{code}
\end{AgdaMultiCode}
\end{minipage}
\begin{minipage}[t]{0.6\linewidth}
\begin{AgdaMultiCode}
\begin{code}[hide]%
\>[4]\AgdaKeyword{postulate}\<%
\\
\>[4][@{}l@{\AgdaIndent{0}}]%
\>[6]\AgdaPostulate{\AgdaUnderscore{}}%
\>[3349I]\AgdaSymbol{:}\<%
\end{code}
\begin{code}%
\>[3349I][@{}l@{\AgdaIndent{1}}]%
\>[10]\AgdaBound{◯}\AgdaSpace{}%
\AgdaSymbol{(}\AgdaFunction{Is‐equivalence}\AgdaSpace{}%
\AgdaGeneralizable{f\,}\AgdaSymbol{)}%
\>[34]\AgdaOperator{\AgdaRecord{⇔}}\AgdaSpace{}%
\AgdaFunction{Is‐equivalence}\AgdaSpace{}%
\AgdaSymbol{(}\AgdaPostulate{◯‐map}\AgdaSpace{}%
\AgdaGeneralizable{f\,}\AgdaSymbol{)}\<%
\end{code}
\begin{code}[hide]%
\>[4]\AgdaKeyword{postulate}\<%
\\
\>[4][@{}l@{\AgdaIndent{0}}]%
\>[6]\AgdaPostulate{\AgdaUnderscore{}}%
\>[3355I]\AgdaSymbol{:}\<%
\end{code}
\begin{code}%
\>[3355I][@{}l@{\AgdaIndent{1}}]%
\>[10]\AgdaBound{◯}\AgdaSpace{}%
\AgdaSymbol{(}\AgdaFunction{H‐level}\AgdaSpace{}%
\AgdaGeneralizable{n}\AgdaSpace{}%
\AgdaGeneralizable{A}\AgdaSymbol{)}%
\>[34]\AgdaOperator{\AgdaRecord{⇔}}\AgdaSpace{}%
\AgdaFunction{H‐level}\AgdaSpace{}%
\AgdaGeneralizable{n}\AgdaSpace{}%
\AgdaSymbol{(}\AgdaBound{◯}\AgdaSpace{}%
\AgdaGeneralizable{A}\AgdaSymbol{)}\<%
\end{code}
\end{AgdaMultiCode}
\end{minipage}\\
\AgdaFunction{H‐level} generalises \AgdaFunction{Contractible},
\AgdaFunction{Is‐prop} and \AgdaFunction{Is‐set}, so we get that
\ErasedA{} is a proposition (or set) if there is an erased proof
showing that \AgdaBound{A} is a proposition (set).
We also get that an erased equivalence between \AgdaBound{A} and
\AgdaBound{B} can be turned into a non-erased equivalence between
\ErasedA{} and
\begin{code}[hide]%
\>[2]\AgdaKeyword{postulate}\<%
\\
\>[2][@{}l@{\AgdaIndent{0}}]%
\>[4]\AgdaPostulate{\AgdaUnderscore{}}%
\>[3363I]\AgdaSymbol{:}\<%
\end{code}
\begin{code}[inline]%
\>[3363I][@{}l@{\AgdaIndent{1}}]%
\>[8]\AgdaDatatype{Erased}\AgdaSpace{}%
\AgdaGeneralizable{B}\<%
\end{code}.

\section{Box-Cong Can Be Defined Using Function Extensionality}
\label{sec:box-cong-from-funext}

Let us now see how one can use function extensionality to define
\boxcong{}.
First we have a lemma:
\begin{lemma}
  \label{lem:uniquely-eliminating}
  If function extensionality holds, then a certain form of
  precomposition with \boxop{} is an equivalence (the notation
  \AgdaSymbol{@ω} ensures that the argument is non-erased):
  \begin{code}[hide]%
\>[2]\AgdaKeyword{postulate}\<%
\\
\>[2][@{}l@{\AgdaIndent{0}}]%
\>[4]\AgdaPostulate{\AgdaUnderscore{}}%
\>[3365I]\AgdaSymbol{:}\<%
\end{code}
  \setlength{\belowdisplayskip}{0pt}%
  \begin{code}%
\>[.][@{}l@{}]\<[3365I]%
\>[6]\AgdaSymbol{\{}\AgdaBound{A}\AgdaSpace{}%
\AgdaSymbol{:}\AgdaSpace{}%
\AgdaPrimitive{Type}\AgdaSpace{}%
\AgdaGeneralizable{a}\AgdaSymbol{\}}\AgdaSpace{}%
\AgdaSymbol{\{}\AgdaSymbol{@0}\AgdaSpace{}%
\AgdaBound{P}\AgdaSpace{}%
\AgdaSymbol{:}\AgdaSpace{}%
\AgdaDatatype{Erased}\AgdaSpace{}%
\AgdaBound{A}\AgdaSpace{}%
\AgdaSymbol{→}\AgdaSpace{}%
\AgdaPrimitive{Type}\AgdaSpace{}%
\AgdaGeneralizable{p}\AgdaSymbol{\}}\AgdaSpace{}%
\AgdaSymbol{→}\AgdaSpace{}%
\AgdaPostulate{Funext}\AgdaSpace{}%
\AgdaGeneralizable{a}\AgdaSpace{}%
\AgdaGeneralizable{p}\AgdaSpace{}%
\AgdaSymbol{→}\<%
\\
\>[6]\AgdaFunction{Is‐equivalence}\AgdaSpace{}%
\AgdaSymbol{(λ}\AgdaSpace{}%
\AgdaSymbol{(}\AgdaBound{f\,}\AgdaSpace{}%
\AgdaSymbol{:}\AgdaSpace{}%
\AgdaSymbol{(}\AgdaBound{x}\AgdaSpace{}%
\AgdaSymbol{:}\AgdaSpace{}%
\AgdaDatatype{Erased}\AgdaSpace{}%
\AgdaBound{A}\AgdaSymbol{)}\AgdaSpace{}%
\AgdaSymbol{→}\AgdaSpace{}%
\AgdaDatatype{Erased}\AgdaSpace{}%
\AgdaSymbol{(}\AgdaBound{P}\AgdaSpace{}%
\AgdaBound{x}\AgdaSymbol{))}\AgdaSpace{}%
\AgdaSymbol{→}\AgdaSpace{}%
\AgdaSymbol{λ}\AgdaSpace{}%
\AgdaSymbol{(}\AgdaSymbol{@ω}\AgdaSpace{}%
\AgdaBound{x}\AgdaSpace{}%
\AgdaSymbol{:}\AgdaSpace{}%
\AgdaBound{A}\AgdaSymbol{)}\AgdaSpace{}%
\AgdaSymbol{→}\AgdaSpace{}%
\AgdaBound{f\,}\AgdaSpace{}%
\AgdaOperator{\AgdaInductiveConstructor{[}}\AgdaSpace{}%
\AgdaBound{x}\AgdaSpace{}%
\AgdaOperator{\AgdaInductiveConstructor{]}}\AgdaSymbol{)}\<%
\end{code}
\end{lemma}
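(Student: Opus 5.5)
The plan is to write down an explicit inverse of precomposition with \boxop{}, check both round trips pointwise, and then use function extensionality to turn the pointwise equalities into equalities of functions. Since \AgdaFunction{Is‐equivalence} follows from having a quasi-inverse, this gives the lemma. Write $F$ for the map in the statement, so $F\,f = λ\,(@ω\,x : A) → f\,[x]$.

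\emph{The inverse.} Given $g : (@ω\,x : A) → \Erased\,(P\,[x])$, define $G\,g : (y : \Erased\,A) → \Erased\,(P\,y)$ by matching on $y$, which is non-erased:
\[ G\,g\,[x] = [\,\mathit{erased}\,(g\,x)\,]. \]
This match on a single-constructor type with a non-erased scrutinee is an ordinary match. In the right-hand side $x$ is erased, but it is only used under the box \boxop{}, which is an erased context. There both $g\,x$ and the erased projection $\mathit{erased}$ may be used. The right-hand side has type $\Erased\,(P\,[x])$, which is the required type in the case $y = [x]$. This avoids needing η-equality for \Erased{}: the pattern match makes the type line up definitionally.

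\emph{Round trips.} First, $F\,(G\,g)\,x$ reduces to $[\,\mathit{erased}\,(g\,x)\,]$, and we must show this equals $g\,x$. This is propositional η for \Erased{}: for any $z : \Erased\,B$ we have $[\,\mathit{erased}\,z\,] ≡ z$, proved by matching $z$ as $[b]$, after which \AgdaInductiveConstructor{refl} suffices. Second, for $G\,(F\,f)\,y$ we match $y = [x]$. The goal becomes $[\,\mathit{erased}\,(f\,[x])\,] ≡ f\,[x]$, which is the same η lemma. Both statements are pointwise. We then apply the hypothesis $\AgdaPostulate{Funext}\ \AgdaBound{a}\ \AgdaBound{p}$ to get $F\,(G\,g) ≡ g$ and $G\,(F\,f) ≡ f$. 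The codomain $\Erased\,(P\,\_)$ lives at level $\AgdaBound{p}$ and the domains at level $\AgdaBound{a}$, so the levels match. Finally we package $(G, \text{both homotopies})$ as a quasi-inverse and convert it to \AgdaFunction{Is‐equivalence} with the standard library function.

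\emph{Expected obstacle.} The delicate step is the first round trip. Its domain is the Π-type with a non-erased \AgdaSymbol{@ω} argument, $(@ω\,x : A) → \Erased\,(P\,[x])$, which differs from the ordinary $(x : A) → \dots$ used in the statement of funext. I would handle this by noting that in Agda the \AgdaSymbol{@ω} annotation is the default modality, so this is the ordinary Π-type. If it is not, I would transport along the evident definitional isomorphism $g \mapsto λ\,x → g\,x$, so that $\AgdaPostulate{Funext}\ \AgdaBound{a}\ \AgdaBound{p}$ applies directly. A second, minor point is that $P$ itself is erased. This is harmless because $P$ occurs only in types and under boxes.
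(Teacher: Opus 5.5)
Your proposal is correct and follows the route the paper indicates; the paper does not spell out the proof, which is only in the accompanying code. You define the inverse by matching on the box and re-boxing $\mathit{erased}\,(g\,x)$ in the erased context, obtain both round trips pointwise from the propositional $\eta$-law for \Erased{}, and use function extensionality only to turn these pointwise equalities into equalities of functions, which is exactly why the paper remarks that for a variant of \Erased{} with $\eta$-equality the round trips hold judgementally and the funext assumption can be dropped.
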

\noindent
\citeauthor{rijke-et-al-2020} use (roughly) ``a function \AgdaBound{◯}
and a function \AgdaBound{η} that satisfy the conclusion of
Lemma~\ref{lem:uniquely-eliminating}'' as one of their definitions of
``modality'' \citep[Definition~1.2]{rijke-et-al-2020}.
If \Erased{} is defined as a record type with η-equality, then one can
prove Lemma~\ref{lem:uniquely-eliminating} without the assumption of
function extensionality.
However, note that Theorem~\ref{thm:no-box-cong} (``\boxcong{} cannot
be defined'') holds also if a variant of $\EName$ with η-equality is
used (see the accompanying code), and that – as discussed in
§~\ref{sec:modalities} – the definition of modality used in this text
is aimed at being usable in the absence of function extensionality.

\citeauthor{rijke-et-al-2020} prove that modal types are separated for
any \emph{reflective subuniverse}
\citep[Lemma~1.25]{rijke-et-al-2020}.
(A reflective subuniverse for which \AgdaField{Modal} is closed under
Σ is a modality.
Note that \AgdaFunction{Modal\ensuremath{{}^{\mkern2mu\mathrm{E}}}} is closed under Σ also in the absence
of \boxcong{}.)
My proof below is based on their lemma and the code that accompanies
their paper:
\begin{theorem}
  \label{thm:box-cong-from-funext}
  The following type is inhabited:
  \begin{code}[hide]%
\>[2]\AgdaKeyword{postulate}\<%
\\
\>[2][@{}l@{\AgdaIndent{0}}]%
\>[4]\AgdaPostulate{\AgdaUnderscore{}}%
\>[3404I]\AgdaSymbol{:}\<%
\end{code}
  \begin{code}[inline]%
\>[.][@{}l@{}]\<[3404I]%
\>[6]\AgdaPostulate{Funext}\AgdaSpace{}%
\AgdaGeneralizable{a}\AgdaSpace{}%
\AgdaGeneralizable{a}\AgdaSpace{}%
\AgdaSymbol{→}\AgdaSpace{}%
\AgdaPostulate{[\ensuremath{\mkern1.5mu}]‐cong‐axiomatisation}\AgdaSpace{}%
\AgdaGeneralizable{a}\<%
\end{code}.
\end{theorem}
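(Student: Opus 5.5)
Since all the characterisations in Theorem~\ref{thm:equivalent-1} are logically equivalent, it suffices to inhabit one of them from \AgdaPostulate{Funext}~\AgdaBound{a}~\AgdaBound{a}. I would target characterisation~\ref{ax:9}, that \AgdaFunction{Separated\ensuremath{{}^{\mkern2mu\mathrm{E}}}}~(\ErasedA{}) holds for erased \AgdaBound{A}. The plan follows Lemma~1.25 of \citet{rijke-et-al-2020}: modal types of a reflective subuniverse are separated. Lemma~\ref{lem:uniquely-eliminating} provides exactly the needed universal property for \boxop{}, with \ErasedA{} playing the role of a modal type.

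Fix \AgdaBound{x}, \AgdaBound{y} of type \ErasedA{}, and write $I = \AgdaBound{x} \equiv \AgdaBound{y}$. The goal is that $\lambda\, eq \to [\,eq\,] : I \to \AgdaDatatype{Erased}\,I$ is an equivalence. First I construct an inverse $r : \AgdaDatatype{Erased}\,I \to I$. The idea is to apply Lemma~\ref{lem:uniquely-eliminating} with base type $\AgdaDatatype{Erased}\,I$; there is no problem with the level, since $I$ lives at level \AgdaBound{a}. Then consider the two functions $\AgdaDatatype{Erased}\,I \to \ErasedA{}$ given by $\lambda\, \_ \to \AgdaBound{x}$ and $\lambda\, \_ \to \AgdaBound{y}$. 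After precomposition with \boxop{}, they become $\lambda\, eq \to \AgdaBound{x}$ and $\lambda\, eq \to \AgdaBound{y}$, and these are pointwise equal via $eq$ itself. Using funext, they are equal as functions $I \to \ErasedA{}$.

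Here the codomain is the constant family \ErasedA{}. To match the lemma's shape $\AgdaDatatype{Erased}\,(P\,z)$, I would instantiate $P\,z$ as \AgdaBound{A} and view \ErasedA{} as $\AgdaDatatype{Erased}\,(P\,z)$. Precomposition with \boxop{} is an equivalence, hence an embedding. So equality of the precomposed functions lifts to an equality $(\lambda\, \_ \to \AgdaBound{x}) \equiv (\lambda\, \_ \to \AgdaBound{y})$ of functions on $\AgdaDatatype{Erased}\,I$. Applying this equality at a given $e : \AgdaDatatype{Erased}\,I$ (by \AgdaFunction{cong} with evaluation at $e$) yields $r\,e : \AgdaBound{x} \equiv \AgdaBound{y}$.

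Next I show $r\,[\,eq\,] \equiv eq$. One route is to compute through the embedding, using that the inverse of \AgdaFunction{cong}~(precomposition) sends \AgdaFunction{cong}-images back, together with the funext computation rule. Alternatively I would generalise \AgdaBound{y} and $eq$ and use J, reducing to the case $eq = \AgdaInductiveConstructor{refl}$, where both functions coincide and everything reduces to \AgdaInductiveConstructor{refl}. The J route seems cleaner, but it needs $r$ to be defined uniformly in \AgdaBound{y}; since $r$ is built from \AgdaBound{x} and \AgdaBound{y} directly, this should be fine.

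The other composite, $[\,r\,e\,] \equiv e$, I would get from the uniqueness half of Lemma~\ref{lem:uniquely-eliminating}, applied again over base type $\AgdaDatatype{Erased}\,I$ with the family $\lambda\, e \to \AgdaDatatype{Erased}\,I$. The maps $\lambda\, e \to [\,r\,e\,]$ and the identity agree after precomposition with \boxop{}, by the previous step. Hence they are equal, and funext turns this into pointwise equality.

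The main obstacle I expect is the bookkeeping of erasure annotations. Lemma~\ref{lem:uniquely-eliminating} quantifies over functions out of $\AgdaDatatype{Erased}\,A$ with non-erased \AgdaBound{A}, while here $I$ depends on erased data. I would check carefully that $I$ itself can be passed non-erased, since \AgdaBound{x} and \AgdaBound{y} are non-erased values of type \ErasedA{}, even though \AgdaBound{A} is erased. I also need to confirm that the type families are allowed to be erased in the required positions. If that fails, I would fall back to characterisation~\ref{ax:14} (∞-extendability), proving it directly by iterating Lemma~\ref{lem:uniquely-eliminating}.
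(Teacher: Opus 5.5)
Your construction of $r : \mathsf{Erased}\,I \to I$ is exactly the paper's. The paper also applies Lemma~\ref{lem:uniquely-eliminating} with base type $x \equiv y$ and the erased constant family $\lambda\,\_ \to A$. It proves the precomposed constant functions equal by funext from the trivial witness $\lambda\,eq \to eq$, and pulls this equality back along the injective precomposition map. This also settles your bookkeeping worry: $x \equiv y$ is a non-erased type even though $A$ is erased, and the family is only used in an erased position, so no fallback to characterisation~\ref{ax:14} is needed.

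Where you differ is the target. The paper aims at characterisation~\ref{ax:5}, where $r$ by itself is the required stable-equality function. Its only remaining obligation is the diagonal computation rule $r\,[\,\mathsf{refl}\,] \equiv \mathsf{refl}$, which it leaves to the accompanying code. You aim at characterisation~\ref{ax:9}, which forces both inverse laws: the general law $r\,[\,eq\,] \equiv eq$ (which subsumes the paper's computation rule) and $[\,r\,e\,] \equiv e$. For the latter, your second application of the lemma, to the family $\lambda\,\_ \to I$, is the right move. Matching $e$ as $[\,eq'\,]$ and using $\mathsf{cong}\,(\lambda\,z \to [\,z\,])$ would fail, because $eq'$ is then erased, which is exactly the problem \boxcong{} solves. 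So your route costs one more use of the lemma but yields the separatedness statement of Lemma~1.25 of \citet{rijke-et-al-2020} directly, while the paper's route is shorter.

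One correction: drop the J route for $r\,[\,eq\,] \equiv eq$. After J you are left with $r\,[\,\mathsf{refl}\,] \equiv \mathsf{refl}$ at $y := x$, and this does not reduce to $\mathsf{refl}$. The two functions do coincide, but the path between them is funext applied to $\lambda\,eq \to eq$, which is the identity on $x \equiv x$ rather than $\lambda\,\_ \to \mathsf{refl}$; postulated funext does not compute anyway. That residual goal is precisely the computation rule the paper treats as a separate step.

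Your first route is the one that works, for arbitrary non-erased $eq$. Write $p$ for the funext proof and $q$ for the lifted equality, so that $r\,e = \mathsf{cong}\,(\lambda\,f \to f\,e)\,q$. Then
$r\,[\,eq\,] \equiv \mathsf{cong}\,(\lambda\,g \to g\,eq)\,(\mathsf{cong}\,(\lambda\,f\,z \to f\,[\,z\,])\,q) \equiv \mathsf{cong}\,(\lambda\,g \to g\,eq)\,p \equiv eq$,
using the section law of the embedding and the $\beta$-rule of funext.
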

\begin{proof}[Proof sketch]
  One can prove that variant~\ref{ax:5} from
  Theorem~\ref{thm:equivalent-1} is inhabited.
  Let us first define a function with the type
  \begin{code}[hide]%
\>[2]\AgdaKeyword{postulate}\<%
\\
\>[2][@{}l@{\AgdaIndent{0}}]%
\>[4]\AgdaPostulate{\AgdaUnderscore{}}%
\>[3410I]\AgdaSymbol{:}\<%
\end{code}
  \begin{code}[inline]%
\>[.][@{}l@{}]\<[3410I]%
\>[6]\AgdaSymbol{\{}\AgdaSymbol{@0}\AgdaSpace{}%
\AgdaBound{A}\AgdaSpace{}%
\AgdaSymbol{:}\AgdaSpace{}%
\AgdaPrimitive{Type}\AgdaSpace{}%
\AgdaGeneralizable{a}\AgdaSymbol{\}}\AgdaSpace{}%
\AgdaSymbol{→}\AgdaSpace{}%
\AgdaFunction{Stable‐equality\ensuremath{{}^{\mkern2mu\mathrm{E}}}}\AgdaSpace{}%
\AgdaSymbol{(}\AgdaDatatype{Erased}\AgdaSpace{}%
\AgdaBound{A}\AgdaSymbol{)}\<%
\end{code}.
  We are given two values
  \begin{code}[hide]%
\>[2]\AgdaKeyword{postulate}\<%
\\
\>[2][@{}l@{\AgdaIndent{0}}]%
\>[4]\AgdaPostulate{\AgdaUnderscore{}}%
\>[3419I]\AgdaSymbol{:}\<%
\end{code}
  \begin{code}[inline*]%
\>[3419I][@{}l@{\AgdaIndent{1}}]%
\>[8]\AgdaGeneralizable{x}\<%
\end{code}
  and
  \begin{code}[hide]%
\>[.][@{}l@{}]\<[3419I]%
\>[6]\AgdaOperator{\AgdaFunction{≡}}\<%
\end{code}
  \begin{code}[inline*]%
\>[6][@{}l@{\AgdaIndent{1}}]%
\>[8]\AgdaGeneralizable{y}\<%
\end{code}
  of type \ErasedA{}, as well as a proof of
  \begin{code}[hide]%
\>[2]\AgdaKeyword{postulate}\<%
\\
\>[2][@{}l@{\AgdaIndent{0}}]%
\>[4]\AgdaPostulate{\AgdaUnderscore{}}%
\>[3420I]\AgdaSymbol{:}\<%
\end{code}
  \mbox{\begin{code}[inline]%
\>[3420I][@{}l@{\AgdaIndent{1}}]%
\>[8]\AgdaDatatype{Erased}\AgdaSpace{}%
\AgdaSymbol{(}\AgdaGeneralizable{x}\AgdaSpace{}%
\AgdaOperator{\AgdaFunction{≡}}\AgdaSpace{}%
\AgdaGeneralizable{y}\AgdaSymbol{)}\<%
\end{code}},
  and the goal is to prove
  \begin{code}[hide]%
\>[2]\AgdaKeyword{postulate}\<%
\\
\>[2][@{}l@{\AgdaIndent{0}}]%
\>[4]\AgdaPostulate{\AgdaUnderscore{}}%
\>[3424I]\AgdaSymbol{:}\<%
\end{code}
  \mbox{\begin{code}[inline]%
\>[3424I][@{}l@{\AgdaIndent{1}}]%
\>[8]\AgdaGeneralizable{x}\AgdaSpace{}%
\AgdaOperator{\AgdaFunction{≡}}\AgdaSpace{}%
\AgdaGeneralizable{y}\<%
\end{code}}.
  This follows if we can prove that
  \begin{code}[hide]%
\>[2]\AgdaKeyword{postulate}\<%
\\
\>[2][@{}l@{\AgdaIndent{0}}]%
\>[4]\AgdaPostulate{\AgdaUnderscore{}}\AgdaSpace{}%
\AgdaSymbol{:}%
\>[3428I]\AgdaOperator{\AgdaFunction{\AgdaUnderscore{}≡\AgdaUnderscore{}}}%
\>[3429I]\AgdaSymbol{\{}\AgdaArgument{A}\AgdaSpace{}%
\AgdaSymbol{=}\AgdaSpace{}%
\AgdaSymbol{∀}\AgdaSpace{}%
\AgdaSymbol{(}\AgdaSymbol{@ω}\AgdaSpace{}%
\AgdaBound{\AgdaUnderscore{}}\AgdaSymbol{)}\AgdaSpace{}%
\AgdaSymbol{→}\AgdaSpace{}%
\AgdaSymbol{\AgdaUnderscore{}\}}\AgdaSpace{}%
\AgdaSymbol{(}\<%
\end{code}
  \begin{code}[inline*]%
\>[3429I][@{}l@{\AgdaIndent{1}}]%
\>[14]\AgdaSymbol{λ}\AgdaSpace{}%
\AgdaSymbol{(}\AgdaBound{\AgdaUnderscore{}}\AgdaSpace{}%
\AgdaSymbol{:}\AgdaSpace{}%
\AgdaDatatype{Erased}\AgdaSpace{}%
\AgdaSymbol{(}\AgdaGeneralizable{x}\AgdaSpace{}%
\AgdaOperator{\AgdaFunction{≡}}\AgdaSpace{}%
\AgdaGeneralizable{y}\AgdaSymbol{))}\AgdaSpace{}%
\AgdaSymbol{→}\AgdaSpace{}%
\AgdaGeneralizable{x}\<%
\end{code}
  is equal to
  \begin{code}[hide]%
\>[3428I][@{}l@{\AgdaIndent{2}}]%
\>[10]\AgdaSymbol{)}\<%
\end{code}
  \mbox{\begin{code}[inline]%
\>[10][@{}l@{\AgdaIndent{1}}]%
\>[14]\AgdaSymbol{λ}\AgdaSpace{}%
\AgdaBound{\AgdaUnderscore{}}\AgdaSpace{}%
\AgdaSymbol{→}\AgdaSpace{}%
\AgdaGeneralizable{y}\<%
\end{code}}.
  By Lemma~\ref{lem:uniquely-eliminating} we know that
  \begin{code}[hide]%
\>[2]\AgdaKeyword{postulate}\<%
\\
\>[2][@{}l@{\AgdaIndent{0}}]%
\>[4]\AgdaPostulate{\AgdaUnderscore{}}\AgdaSpace{}%
\AgdaSymbol{:}%
\>[3449I]\AgdaOperator{\AgdaGeneralizable{Has‐type[}}\AgdaSpace{}%
\AgdaSymbol{(∀}\AgdaSpace{}%
\AgdaSymbol{(}\AgdaSymbol{@ω}\AgdaSpace{}%
\AgdaBound{\AgdaUnderscore{}}\AgdaSpace{}%
\AgdaBound{\AgdaUnderscore{}}\AgdaSymbol{)}\AgdaSpace{}%
\AgdaSymbol{→}\AgdaSpace{}%
\AgdaSymbol{\AgdaUnderscore{})}\AgdaSpace{}%
\AgdaOperator{\AgdaGeneralizable{]}}\<%
\end{code}
  \begin{code}[inline*]%
\>[3449I][@{}l@{\AgdaIndent{1}}]%
\>[10]\AgdaSymbol{λ}\AgdaSpace{}%
\AgdaSymbol{(}\AgdaBound{f\,}\AgdaSpace{}%
\AgdaSymbol{:}\AgdaSpace{}%
\AgdaDatatype{Erased}\AgdaSpace{}%
\AgdaSymbol{(}\AgdaGeneralizable{x}\AgdaSpace{}%
\AgdaOperator{\AgdaFunction{≡}}\AgdaSpace{}%
\AgdaGeneralizable{y}\AgdaSymbol{)}\AgdaSpace{}%
\AgdaSymbol{→}\AgdaSpace{}%
\AgdaDatatype{Erased}\AgdaSpace{}%
\AgdaGeneralizable{A}\AgdaSymbol{)}\AgdaSpace{}%
\AgdaSymbol{→}\AgdaSpace{}%
\AgdaSymbol{λ}\AgdaSpace{}%
\AgdaSymbol{(}\AgdaBound{eq}\AgdaSpace{}%
\AgdaSymbol{:}\AgdaSpace{}%
\AgdaGeneralizable{x}\AgdaSpace{}%
\AgdaOperator{\AgdaFunction{≡}}\AgdaSpace{}%
\AgdaGeneralizable{y}\AgdaSymbol{)}\AgdaSpace{}%
\AgdaSymbol{→}\AgdaSpace{}%
\AgdaBound{f\,}\AgdaSpace{}%
\AgdaOperator{\AgdaInductiveConstructor{[}}\AgdaSpace{}%
\AgdaBound{eq}\AgdaSpace{}%
\AgdaOperator{\AgdaInductiveConstructor{]}}\<%
\end{code}
  is an equivalence and thus injective, so it suffices to prove that
  \begin{code}[hide]%
\>[2]\AgdaKeyword{postulate}\<%
\\
\>[2][@{}l@{\AgdaIndent{0}}]%
\>[4]\AgdaPostulate{\AgdaUnderscore{}}\AgdaSpace{}%
\AgdaSymbol{:}\AgdaSpace{}%
\AgdaOperator{\AgdaFunction{\AgdaUnderscore{}≡\AgdaUnderscore{}}}%
\>[3480I]\AgdaSymbol{\{}\AgdaArgument{A}\AgdaSpace{}%
\AgdaSymbol{=}\AgdaSpace{}%
\AgdaSymbol{∀}\AgdaSpace{}%
\AgdaSymbol{(}\AgdaSymbol{@ω}\AgdaSpace{}%
\AgdaBound{\AgdaUnderscore{}}\AgdaSymbol{)}\AgdaSpace{}%
\AgdaSymbol{→}\AgdaSpace{}%
\AgdaSymbol{\AgdaUnderscore{}\}}\AgdaSpace{}%
\AgdaSymbol{(}\<%
\end{code}
  \mbox{\begin{code}[inline]%
\>[3480I][@{}l@{\AgdaIndent{1}}]%
\>[14]\AgdaSymbol{λ}\AgdaSpace{}%
\AgdaSymbol{(}\AgdaBound{\AgdaUnderscore{}}\AgdaSpace{}%
\AgdaSymbol{:}\AgdaSpace{}%
\AgdaGeneralizable{x}\AgdaSpace{}%
\AgdaOperator{\AgdaFunction{≡}}\AgdaSpace{}%
\AgdaGeneralizable{y}\AgdaSymbol{)}\AgdaSpace{}%
\AgdaSymbol{→}\AgdaSpace{}%
\AgdaGeneralizable{x}\<%
\end{code}}
  is equal to
  \begin{code}[hide]%
\>[.][@{}l@{}]\<[3480I]%
\>[12]\AgdaSymbol{)}\<%
\end{code}
  \mbox{\begin{code}[inline]%
\>[12][@{}l@{\AgdaIndent{1}}]%
\>[14]\AgdaSymbol{λ}\AgdaSpace{}%
\AgdaBound{\AgdaUnderscore{}}\AgdaSpace{}%
\AgdaSymbol{→}\AgdaSpace{}%
\AgdaGeneralizable{y}\<%
\end{code}}.
  By function extensionality these two functions are equal if the
  trivial property
  \begin{code}[hide]%
\>[2]\AgdaKeyword{postulate}\<%
\\
\>[2][@{}l@{\AgdaIndent{0}}]%
\>[4]\AgdaPostulate{\AgdaUnderscore{}}%
\>[3498I]\AgdaSymbol{:}\<%
\end{code}
  \mbox{\begin{code}[inline]%
\>[3498I][@{}l@{\AgdaIndent{1}}]%
\>[8]\AgdaGeneralizable{x}\AgdaSpace{}%
\AgdaOperator{\AgdaFunction{≡}}\AgdaSpace{}%
\AgdaGeneralizable{y}\AgdaSpace{}%
\AgdaSymbol{→}\AgdaSpace{}%
\AgdaGeneralizable{x}\AgdaSpace{}%
\AgdaOperator{\AgdaFunction{≡}}\AgdaSpace{}%
\AgdaGeneralizable{y}\<%
\end{code}}
  holds.

  One should also prove that the constructed function ``computes'' in
  the correct way.
  For that part of the proof I refer to the accompanying code.
\end{proof}

\section{Related Work}
\label{sec:related}

\citet{coquand-et-al-2017} discuss how postulating negated types does
not lead to loss of canonicity.

There has been quite a bit of work on erasure for dependently typed
languages.
One can erase types \citep{coquand-huet-1988,augustsson-1998} or other
things that lack computational content \citep{hayashi-nakano-1988}, or
let the system figure out things that can be erased automatically
\citep{brady-et-al-2004,brady-2005,mishra-linger-2008,
  fredriksson-gustafsson-2011,tejiscak-2020}.
One can also let the programmer mark parts that should be erased, and
let the type-checker check that erased parts cannot influence the
results of run-time computations
\citep{paulin-mohring-1989,paulin-mohring-werner-1993,
  raamsdonk-severi-2002,letouzey-2003,fernandez-et-al-2003,
  barras-bernardo-2008,mishra-linger-sheard-2008,mishra-linger-2008,
  gundry-mcbride-2013,bernardy-moulin-2013,gundry-2013,mcbride-2016,
  weirich-et-al-2017,atkey-2018,choudhury-et-al-2021,moon-et-al-2021,
  abel-et-al-2021,choudhury-et-al-2022,abel-et-al-2023,
  liu-weirich-2023,danielsson-et-al-2026}.

\citet{dijkstra-2013} discusses erasure in the context of homotopy
type theory.
The concept of an erased constructor seems to have been introduced by
\citet{abel-et-al-2021}, who discuss uses for both erased point
constructors and erased higher constructors.
\citet{theocaris-brady-2026} also mention erased constructors, but
mostly leave this concept to future work.
Except for the draft paper of \citet{abel-et-al-2021}, which has been
discussed above, I am not aware of any work that discusses
\boxcong{}.

\citet[Theorem 9]{letouzey-2004} proves a form of correctness of
erasure for open terms in the presence of singleton elimination, a
feature that is similar to the erased matches discussed here.
The correctness statement is, very roughly, ``for every type-correct
term there is a proof term showing that the extracted term is
correct''.
However, I am not convinced that the correctness statement is
correctly formulated.
The proof term uses a transformed context, and there is no proof that,
if the original context is consistent, then the transformed context is
consistent.
The type
$\mathit{Id}\,\mathit{Set}\,\mathit{Unit\ensuremath{{}_{\mathrm{1}}}}\,\mathit{Unit\ensuremath{{}_{\mathrm{2}}}}:\mathit{Set}$,
where $\mathit{Unit\ensuremath{{}_{\mathrm{1}}}}, \mathit{Unit\ensuremath{{}_{\mathrm{2}}}} : \mathit{Set}$ are two distinct
unit types (data types with single nullary constructors), is perhaps
consistent with \citeauthor{letouzey-2004}'s type theory – a variant
of the Calculus of Inductive Constructions.
However, the transformation of this type appears to be inconsistent
(the type theory is not fully specified, and I may have misinterpreted
something).

\citeauthor{atkey-2018}'s Quantitative Type Theory (QTT) does not
include an identity type \citep{atkey-2018}.
Later work by \citet{nakov-forsberg-2022} and \citet{atkey-2024}
includes identity types.
In the case of \citeauthor{atkey-2024}'s theory equality reflection is
available in the erased fragment of the language, but not in the
non-erased fragment.
\citeauthor{atkey-2024} also states that there is an η-rule for the
identity type, but does not state whether that rule is available in
the non-erased fragment or not, so it is unclear to me exactly what
rules the type theory has, and what one can do with equalities in the
non-erased fragment.
Equality reflection is also available in the theory discussed by
\citeauthor{nakov-forsberg-2022}, but some rules were omitted from
their paper, so it is unclear to me if equality reflection is
restricted in some way.

\citet{sozeau-et-al-2025} present a mechanised proof of correctness of
erasure for a type theory – a ``slightly simplified version of Coq's
core language'' – that includes intensional identity types (identity
types without equality reflection).
Their correctness statements have the form ``if a term has a value,
then…'' (roughly), unlike Theorem~\ref{thm:soundness-of-erasure},
which has the form ``every well-typed term has a value and…''
(roughly).

\citet{winterhalter-2024} describes a type theory with a type
constructor $\mathsf{erased}$ that takes a type to a \emph{ghost
  type}.
There is also a family of definitionally proof-irrelevant identity
types, but only for ghost types.
If one has such an identity between $t$ and $u$, then one can
transport from $P\,t$ to $P\,u$, given certain assumptions.
The conversion rule compares terms after such transports have been
removed.
There is also an empty type with something like erased matches.
\citeauthor{winterhalter-2024} proves relative consistency, and his
paper is accompanied by machine-checked proofs.

\citet{liu-et-al-2024,liu-et-al-2025} present type theories with
support for both compile-time and run-time erasure.
These theories come with identity types that are indexed by
\emph{observer levels}: if $t$ is equal to $u$ at a low level, then
``high parts'' of $t$ and $u$ are not necessarily equal.
\citet{liu-et-al-2025} prove things like consistency and
normalisation, and their work is accompanied by machine-checked
proofs.
Their paper does not contain anything like
Theorem~\ref{thm:soundness-of-erasure}: there is no ``erasure
function''.
However, there is a \emph{simulation lemma} that states that if $t$
and $u$ are definitionally equal for observers at level $p$, and $t$
reduces to $t′$, then $u$ reduces to some term $u′$ that is
definitionally equal to $t′$ for observers at level $p$.
The theory includes an empty type with support for something like
erased matches.

\citet{altenkirch-et-al-2007} present observational type theory.
The type theory does not use identity types but rather
proof-irrelevant observational equality.
One can cast between two equal, proof-relevant types: that feature is
reminiscent of the erased matches discussed here.
\citeauthor{altenkirch-et-al-2007} state that ``we may
extend the language of proofs with whatever propositional laws we
like, as long as they are consistent, and yet retain canonicity''.
They prove consistency by embedding into extensional type theory.
They also discuss erasure, and plans for quotient types.

\citet{pujet-tabareau-2022} present observational type theory with
quotients and identity types.
Propositional extensionality is built-in, but not for all types that
are propositions semantically, only for strict propositions.
Parts of the meta-theory are mechanised, but not the results about
quotients.
\citeauthor{pujet-tabareau-2022} use a logical relation that, for
quotients, is similar to the one discussed for type theory without
equality reflection in §~\ref{sec:meta-theory}.
They state that ``we could postulate any consistent proof-irrelevant
axiom and the normalization proof would carry through''.

\citet{pujet-leray-tabareau-2025} discuss fording using observational
equality.
In that setting it can be tricky to get the right computation rules.
In the setting of this paper, with a J eliminator that computes in the
usual way when applied to reflexivity, that is not an issue.

In concurrent work \citet{felicissimo-et-al-2026} present an
observational type theory with a proof-irrelevant accessibility
predicate with singleton elimination.
The type theory supports postulates, restricted to a proof-irrelevant
universe.
\citeauthor{felicissimo-et-al-2026} prove that, if the postulates are
validated in a certain set-theoretic model (and satisfy another
assumption), then every closed term $t$ of type $ℕ$ reduces to a
correct numeral.
The proof is partly mechanised.
Erasure/extraction is left for future work.

\citet{felicissimo-et-al-2026-2} – who cite a previous version of the
present paper – present a type theory with definitionally
proof-irrelevant equality and an enhanced judgemental equality rule
for transports (which allow one to cast between proof-relevant types).
Postulates in a proof-irrelevant universe are allowed.
It is proved that, if the postulates include function extensionality
as well as propositional injectivity for type formers of
proof-relevant types, and one can prove that propositionally equal
proof-relevant types have equal heads, then a program obtained by
extraction from a closed term of type $ℕ$ reduces to a correct numeral
in the call-by-name target language.

\section{Discussion}
\label{sec:discussion}

I have presented a family of type theories with erasure, identity
types with optional equality reflection, optional quotients, and
optional support for erased matches/\boxcong{}.
Theorem \ref{thm:soundness-of-erasure} shows that (consistent) erased
postulates are safe in the absence of erased matches/equality
reflection, and that erased matches are safe in the absence of
postulates and quotients.
Unlike a number of pieces of related work discussed above these
results are not restricted to proof-irrelevant identity types.
Note that univalence is stated using \emph{proof-relevant} identity
types \citep{hott-2013}, and that identity types with J become
proof-irrelevant in the presence of equality
reflection.

Theorem \ref{thm:soundness-reflection} shows that it is safe to
combine erased matches with erased postulates if the postulates can be
implemented in an extension of the type theory with equality
reflection, and Theorem \ref{thm:meta} provides a framework for
proving similar results for other kinds of extensions.
Instantiating Theorem \ref{thm:meta} with cubical type theories, thus
hopefully showing that Computational Book HoTT is well-behaved in the
presence of \boxcong{}, is left for future work.

I also tried to motivate having support for \boxcong{} (or,
equivalently, limited erased matches for J).
I showed that \boxcong{} cannot be defined in the absence of erased
matches/equality reflection (Theorem \ref{thm:no-box-cong}), I gave a
number of logically equivalent characterisations of ``\boxcong{} can
be defined'' (Theorems \ref{thm:equivalent-1}
and \ref{thm:equivalent-2}), and I adapted the technique of fording
for use with erased identity proofs, and the adapted technique uses
\boxcong{} (or other forms of erased matches for identity
types).

\begin{acks}
  I would like to thank Andreas Abel, Jesper Cockx, Oskar Eriksson,
  Thiago Felicissimo, Loïc Pujet, Christian Sattler, Mike Shulman,
  Andrea Vezzosi and David Wärn for discussions about topics related
  to this paper.
  Specifically, the investigation of fording reported in
  § \ref{sec:fording} was prompted by a discussion with Jesper Cockx
  about pattern matching in Agda in the presence of erasure.
  I would also like to thank some anonymous reviewers.

  I acknowledge financial support from
  \grantsponsor{vr}{Vetenskapsrådet}{https://doi.org/10.13039/501100004359}
  (\grantnum{vr}{2023-04538}).
\end{acks}

\bibliography{References}

\ifAppendicesIncluded{\appendix

\section{Full Typing and Reduction Rules}
\label{sec:full-typing-rules}

This section contains full typing and reduction rules matching the
formalisation, except that first-class and transfinite universe levels
(which are optional in the formalisation) and top-level definitions
have been omitted, and some other smaller differences.
The rules have been specialised to the erasure semiring; such a
specialisation is also available in the formalisation.
The formalisation uses well-scoped syntax, but such details have been
omitted here: the definitions should be read as if everything were
well-scoped.

As mentioned in the main text the formalism is parametrised, and one
can choose whether or not to allow, for instance, $\bcName$.
This is expressed using conditions like ``$\emptyrecWithLevel{p}$ is
allowed when the mode is $q$''.
Note that $\emptyrecName$, $\unitrecName$ and $\prodrecName$ cannot be
disallowed when the mode is $\zeroG$.

The notation $\wkn{n}{t}$ stands for $t$, weakened $n$ steps, and
$\wklift{n}{x}{t}$ stands for $t$, with variable $x$ and higher
weakened $n$ steps.
The notation $\substZ{t}{u}$ stands for $t$ with $u$ substituted for
variable $\var{0}$ (and the other free variables decreased by one),
and $\substO{t}{u}{v}$ stands for $t$ with $u$ substituted for
variable $\var{1}$ and $v$ substituted for variable $\var{0}$ (and the
other free variables decreased by two).
The notation $\substZUpn{1+n}{t}{u}$ stands for $t$, with $u$
substituted for variable $\var{0}$, and the other free variables
increased by $n$.

Universe level literals $l$ are natural numbers.

\subsection{Typing}

Recall that the notation $\wfTm{Γ}{t}{A}$ is an abbreviation for
$\wrTm{\zeroGC}{Γ}{t}{\zeroG}{A}$, and that $\wfTy{Γ}{A}$ is an
abbreviation for $\wrTy{\zeroGC}{Γ}{\zeroG}{A}$.

Rules for contexts:
\begin{mathparpagebreakable}
  \inferrule{ }{\wfCtxt{\emptyC}}
  \and
  \inferrule{\wfTy{Γ}{A}}{\wfCtxt{\consC{Γ}{A}}}
\end{mathparpagebreakable}

The conversion rule:
\begin{mathparpagebreakable}
  \inferrule{\wrTm{γ}{Γ}{t}{p}{A} \\ \eqTy{Γ}{A}{B}}{%
    \wrTm{γ}{Γ}{t}{p}{B}}
\end{mathparpagebreakable}

Reflexivity, symmetry and transitivity:
\begin{mathparpagebreakable}
  \inferrule{\wfTy{Γ}{A}}{\eqTy{Γ}{A}{A}}
  \and
  \inferrule{\eqTy{Γ}{A₁}{A₂}}{\eqTy{Γ}{A₂}{A₁}}
  \and
  \inferrule{\eqTy{Γ}{A₁}{A₂} \\ \eqTy{Γ}{A₂}{A₃}}{\eqTy{Γ}{A₁}{A₃}}
\ifAppendicesIncluded{\\}{
\end{mathparpagebreakable}
\begin{mathparpagebreakable}}
  \inferrule{\wfTm{Γ}{t}{A}}{\eqTm{Γ}{t}{t}{A}}
  \and
  \inferrule{\eqTm{Γ}{t₁}{t₂}{A}}{\eqTm{Γ}{t₂}{t₁}{A}}
  \and
  \inferrule{\eqTm{Γ}{t₁}{t₂}{A} \\ \eqTm{Γ}{t₂}{t₃}{A}}{
    \eqTm{Γ}{t₁}{t₃}{A}}
\end{mathparpagebreakable}

Rules for variables:
\begin{mathparpagebreakable}
  \inferrule{ }{\varTy{\varZ}{\wkO{A}}{\consC{Γ}{A}}}
  \and
  \inferrule{\varTy{x}{A}{Γ}}{\varTy{\varSuc{x}}{\wkO{A}}{\consC{Γ}{B}}}
  \and
  \inferrule{\wfCtxt{Γ} \\ \varTy{x}{A}{Γ} \\ \leqG{\lookup{γ}{x}}{p}}{%
    \wrTm{γ}{Γ}{x}{p}{A}}
\end{mathparpagebreakable}

\ifAnonymous{\newpage}{}

Rules for $\UName$:
\begin{mathparpagebreakable}
  \inferrule{\wrTm{γ}{Γ}{A}{p}{\U{l}}}{\wrTy{γ}{Γ}{p}{A}}
  \and
  \inferrule{\eqTm{Γ}{A₁}{A₂}{\U{l}}}{\eqTy{Γ}{A₁}{A₂}}
  \and
  \inferrule{\wfCtxt{Γ}}{\wrTm{γ}{Γ}{\U{l}}{p}{\U{\sucL{l}}}}
\end{mathparpagebreakable}

Rules for lift types.
The type constructor $\LiftWith{l₂}$ lifts types in $\U{l₁}$ to
$\U{\maxL{l₁}{l₂}}$, where $\maxL{l₁}{l₂}$ is the maximum of $l₁$ and
$l₂$:
\begin{mathparpagebreakable}
  \inferrule{\wrTy{γ}{Γ}{p}{A}}{\wrTy{γ}{Γ}{p}{\LiftT{l}{A}}}
  \and
  \inferrule{\eqTy{Γ}{A₁}{A₂}}{\eqTy{Γ}{\LiftT{l}{A₁}}{\LiftT{l}{A₂}}}
  \\
  \inferrule{\wrTm{γ}{Γ}{A}{p}{\U{l₁}}}{
    \wrTm{γ}{Γ}{\LiftT{l₂}{A}}{p}{\U{\maxL{l₁}{l₂}}}}
  \and
  \inferrule{\eqTm{Γ}{A₁}{A₂}{\U{l₁}}}{
    \eqTm{Γ}{\LiftT{l₂}{A₁}}{\LiftT{l₂}{A₂}}{\U{\maxL{l₁}{l₂}}}}
  \and
  \inferrule{\wrTm{γ}{Γ}{t}{p}{A}}{
    \wrTm{γ}{Γ}{\lift{t}}{p}{\LiftT{l}{A}}}
  \\
  \inferrule{\wrTm{γ}{Γ}{t}{p}{\LiftT{l}{A}}}{
    \wrTm{γ}{Γ}{\lowerTm{t}}{p}{A}}
  \and
  \inferrule{\eqTm{Γ}{t₁}{t₂}{\LiftT{l}{A}}}{
    \eqTm{Γ}{\lowerTm{t₁}}{\lowerTm{t₂}}{A}}
  \and
  \inferrule{\wfTm{Γ}{t}{A}}{\eqTm{Γ}{\lowerTm{(\lift{t})}}{t}{A}}
  \and
  \inferrule{\wfTm{Γ}{t₁}{\LiftT{l}{A}} \\
    \wfTm{Γ}{t₂}{\LiftT{l}{A}} \\
    \eqTm{Γ}{\lowerTm{t₁}}{\lowerTm{t₂}}{A}}{
    \eqTm{Γ}{t₁}{t₂}{A}}
\end{mathparpagebreakable}

Rules for the empty type.
An application of $\emptyrecWithLevel{p}$ is an erased match if $p$ is
$\zeroG$ and the mode is $\omegaG$:
\begin{mathparpagebreakable}
  \inferrule{\wfCtxt{Γ}}{\wrTm{γ}{Γ}{\Empty}{p}{\U{\zeroL}}}
  \and
  \inferrule{
    \text{$\emptyrecWithLevel{p}$ is allowed when the mode is $q$} \\\\
    \wfTy{Γ}{A} \\ \wrTm{γ}{Γ}{t}{\mul{q}{p}}{\Empty}}{
    \wrTm{γ}{Γ}{\emptyrec{p}{A}{t}}{q}{A}}
  \and
  \inferrule{\eqTy{Γ}{A₁}{A₂} \\ \eqTm{Γ}{t₁}{t₂}{\Empty}}{
    \eqTm{Γ}{\emptyrec{p}{A₁}{t₁}}{\emptyrec{p}{A₂}{t₂}}{A₁}}
\end{mathparpagebreakable}

Rules for unit types.
An application of $\unitrecWithLevel{p}$ is an erased match if $p$ is
$\zeroG$ and the mode is $\omegaG$.
The parameter $s$ is a \emph{strength}: $\strongS$ (``strong'') for
the unit type with η-equality, and $\weak$ (``weak'') for the unit
type without η-equality:
\begin{mathparpagebreakable}
  \inferrule{\excludeAllowed{\text{$\Unit{s}$ is allowed} \\}
    \wfCtxt{Γ}}{
    \wrTm{γ}{Γ}{\Unit{s}}{p}{\U{\zeroL}}}
  \and
  \inferrule{\excludeAllowed{\text{$\Unit{s}$ is allowed} \\}
    \wfCtxt{Γ}}{
    \wrTm{γ}{Γ}{\starTm{s}}{p}{\Unit{s}}}
  \and
  \inferrule{\wfTm{Γ}{t₁}{\Unit{\strongS}} \\
    \wfTm{Γ}{t₂}{\Unit{\strongS}}}{
    \eqTm{Γ}{t₁}{t₂}{\Unit{\strongS}}}
  \\
  \inferrule{
    \text{$\unitrecWithLevel{p}$ is allowed when the mode is $q$} \\
    \wfTy{\consC{Γ}{\Unit{\weak}}}{A} \\
    \wrTm{γ}{Γ}{t}{\mul{q}{p}}{\Unit{\weak}} \\
    \wrTm{γ}{Γ}{u}{q}{\substZ{A}{\starTm{\weak}}}}{
    \wrTm{γ}{Γ}{\unitrec{p}{A}{t}{u}}{q}{\substZ{A}{t}}}
  \\
  \inferrule{\eqTy{\consC{Γ}{\Unit{\weak}}}{A₁}{A₂} \\
    \eqTm{Γ}{t₁}{t₂}{\Unit{\weak}} \\
    \eqTm{Γ}{u₁}{u₂}{\substZ{A₁}{\starTm{\weak}}}}{
    \eqTm{Γ}{\unitrec{p}{A₁}{t₁}{u₁}}{\unitrec{p}{A₂}{t₂}{u₂}}{
      \substZ{A₁}{t₁}}}
  \\
  \inferrule{\wfTy{\consC{Γ}{\Unit{\weak}}}{A} \\
    \wfTm{Γ}{t}{\substZ{A}{\starTm{\weak}}}}{
    \eqTm{Γ}{\unitrec{p}{A}{\starTm{\weak}}{t}}{t}{
      \substZ{A}{\starTm{\weak}}}}
\end{mathparpagebreakable}

\ifAnonymous{\newpage}{}

Rules for Π-types:
\begin{mathparpagebreakable}
  \inferrule{
    \excludeAllowed{\text{$\PiWithLevel{p}$ is allowed} \\\\}
    \wrTy{γ}{Γ}{\mul{q}{p}}{A} \\
    \wrTy{\consGC{γ}{p}}{\consC{Γ}{A}}{q}{B}}{%
    \wrTy{γ}{Γ}{q}{\PiT{p}{A}{B}}}
  \and
  \inferrule{
    \excludeAllowed{\text{$\PiWithLevel{p}$ is allowed} \\\\}
    \eqTy{Γ}{A₁}{A₂} \\ \eqTy{\consC{Γ}{A₁}}{B₁}{B₂}}{%
    \eqTy{Γ}{\PiT{p}{A₁}{B₁}}{\PiT{p}{A₂}{B₂}}}
\ifAppendicesIncluded{
\end{mathparpagebreakable}
\begin{mathparpagebreakable}}{\\}
  \inferrule{
    \excludeAllowed{\text{$\PiWithLevel{p}$ is allowed} \\\\}
    \wrTm{γ}{Γ}{A}{\mul{q}{p}}{\U{l}} \\
    \wrTm{\consGC{γ}{p}}{\consC{Γ}{A}}{B}{q}{\U{l}}}{%
    \wrTm{γ}{Γ}{\PiT{p}{A}{B}}{q}{\U{l}}}
  \and
  \inferrule{
    \excludeAllowed{\text{$\PiWithLevel{p}$ is allowed} \\\\}
    \eqTm{Γ}{A₁}{A₂}{\U{l}} \\ \eqTm{\consC{Γ}{A₁}}{B₁}{B₂}{\U{l}}}{%
    \eqTm{Γ}{\PiT{p}{A₁}{B₁}}{\PiT{p}{A₂}{B₂}}{\U{l}}}
  \and
  \inferrule{
    \excludeAllowed{\text{$\PiWithLevel{p}$ is allowed} \\}
    \wrTm{\consGC{γ}{p}}{\consC{Γ}{A}}{t}{q}{B}}{%
    \wrTm{γ}{Γ}{\lam{p}{t}}{q}{\PiT{p}{A}{B}}}
  \and
  \inferrule{\wrTm{γ}{Γ}{t}{q}{\PiT{p}{A}{B}} \\
    \wrTm{γ}{Γ}{u}{\mul{q}{p}}{A}}{%
    \wrTm{γ}{Γ}{\app{t}{p}{u}}{q}{\substZ{B}{u}}}
  \and
  \inferrule{\eqTm{Γ}{t₁}{t₂}{\PiT{p}{A}{B}} \\
    \eqTm{Γ}{u₁}{u₂}{A}}{%
    \eqTm{Γ}{\app{t₁}{p}{u₁}}{\app{t₂}{p}{u₂}}{\substZ{B}{u₁}}}
  \and
  \inferrule{
    \excludeAllowed{\text{$\PiWithLevel{p}$ is allowed} \\}
    \wfTm{\consC{Γ}{A}}{t}{B} \\ \wfTm{Γ}{u}{A}}{%
    \eqTm{Γ}{\app{(\lam{p}{t})}{p}{u}}{\substZ{t}{u}}{\substZ{B}{u}}}
  \and
  \inferrule{\wfTm{Γ}{t₁}{\PiT{p}{A}{B}} \\
    \wfTm{Γ}{t₂}{\PiT{p}{A}{B}} \\
    \eqTm{\consC{Γ}{A}}{\app{(\wkO{t₁})}{p}{\varZ}}{
      \app{(\wkO{t₂})}{p}{\varZ}}{B}}{%
    \eqTm{Γ}{t₁}{t₂}{\PiT{p}{A}{B}}}
\end{mathparpagebreakable}

Rules for graded Σ-types.
The first component of a value of type $\SigmaT{s}{p}{A}{B}$ is erased
if $p$ is $\zeroG$.
An application of $\prodrecWith{p}{q}$ is an erased match if $q$ is
$\zeroG$ and the mode is $\omegaG$.
Note that $\fstWith{p}$ can only be used if $\leqG{p}{q}$, where $q$
is the mode – erased first components can only be accessed in erased
contexts:
\begin{mathparpagebreakable}
  \inferrule{
    \excludeAllowed{\text{$\SigmaWith{s}{p}$ is allowed} \\\\}
    \wrTy{γ}{Γ}{\mul{q}{p}}{A} \\
    \wrTy{\consGC{γ}{p}}{\consC{Γ}{A}}{q}{B}}{%
    \wrTy{γ}{Γ}{q}{\SigmaT{s}{p}{A}{B}}}
  \and
  \inferrule{
    \excludeAllowed{\text{$\SigmaWith{s}{p}$ is allowed} \\\\}
    \eqTy{Γ}{A₁}{A₂} \\ \eqTy{\consC{Γ}{A₁}}{B₁}{B₂}}{%
    \eqTy{Γ}{\SigmaT{s}{p}{A₁}{B₁}}{\SigmaT{s}{p}{A₂}{B₂}}}
  \and
  \inferrule{
    \excludeAllowed{\text{$\SigmaWith{s}{p}$ is allowed} \\\\}
    \wrTm{γ}{Γ}{A}{\mul{q}{p}}{\U{l}} \\
    \wrTm{\consGC{γ}{p}}{\consC{Γ}{A}}{B}{q}{\U{l}}}{%
    \wrTm{γ}{Γ}{\SigmaT{s}{p}{A}{B}}{q}{\U{l}}}
  \and
  \inferrule{
    \excludeAllowed{\text{$\SigmaWith{s}{p}$ is allowed} \\\\}
    \eqTm{Γ}{A₁}{A₂}{\U{l}} \\ \eqTm{\consC{Γ}{A₁}}{B₁}{B₂}{\U{l}}}{%
    \eqTm{Γ}{\SigmaT{s}{p}{A₁}{B₁}}{\SigmaT{s}{p}{A₂}{B₂}}{\U{l}}}
  \and
  \inferrule{
    \excludeAllowed{\text{$\SigmaWith{s}{p}$ is allowed} \\}
    \wfTy{\consC{Γ}{A}}{B} \\\\
    \wrTm{γ}{Γ}{t}{\mul{q}{p}}{A} \\
    \wrTm{γ}{Γ}{u}{q}{\substZ{B}{t}}}{%
    \wrTm{γ}{Γ}{\prodTm{s}{p}{t}{u}}{q}{\SigmaT{s}{p}{A}{B}}}
  \and
  \inferrule{
    \excludeAllowed{\text{$\SigmaWith{s}{p}$ is allowed} \\}
    \wfTy{\consC{Γ}{A}}{B} \\\\
    \eqTm{Γ}{t₁}{t₂}{A} \\
    \eqTm{Γ}{u₁}{u₂}{\substZ{B}{t₁}}}{%
    \eqTm{Γ}{\prodTm{s}{p}{t₁}{u₁}}{\prodTm{s}{p}{t₂}{u₂}}{
      \SigmaT{s}{p}{A}{B}}}
  \and
  \inferrule{\wrTm{γ}{Γ}{t}{q}{\SigmaT{\strongS}{p}{A}{B}} \\
    \leqG{p}{q}}{%
    \wrTm{γ}{Γ}{\fst{p}{t}}{q}{A}}
  \and
  \inferrule{\eqTm{Γ}{t₁}{t₂}{\SigmaT{\strongS}{p}{A}{B}}}{%
    \eqTm{Γ}{\fst{p}{t₁}}{\fst{p}{t₂}}{A}}
  \and
  \inferrule{
    \excludeAllowed{\text{$\SigmaWith{\strongS}{p}$ is allowed} \\}
    \wfTy{\consC{Γ}{A}}{B} \\\\
    \wfTm{Γ}{t}{A} \\ \wfTm{Γ}{u}{\substZ{B}{t}}}{%
    \eqTm{Γ}{\fst{p}{\prodTm{\strongS}{p}{t}{u}}}{t}{A}}
  \and
  \inferrule{\wrTm{γ}{Γ}{t}{q}{\SigmaT{\strongS}{p}{A}{B}}}{%
    \wrTm{γ}{Γ}{\snd{p}{t}}{q}{\substZ{B}{\fst{p}{t}}}}
  \and
  \inferrule{\eqTm{Γ}{t₁}{t₂}{\SigmaT{\strongS}{p}{A}{B}}}{%
    \eqTm{Γ}{\snd{p}{t₁}}{\snd{p}{t₂}}{\substZ{B}{\fst{p}{t₁}}}}
  \and
  \inferrule{
    \excludeAllowed{\text{$\SigmaWith{\strongS}{p}$ is allowed} \\}
    \wfTy{\consC{Γ}{A}}{B} \\\\
    \wfTm{Γ}{t}{A} \\ \wfTm{Γ}{u}{\substZ{B}{t}}}{%
    \eqTm{Γ}{\snd{p}{\prodTm{\strongS}{p}{t}{u}}}{u}{\substZ{B}{t}}}
  \and
  \inferrule{\wfTm{Γ}{t₁}{\SigmaT{\strongS}{p}{A}{B}} \\
    \wfTm{Γ}{t₂}{\SigmaT{\strongS}{p}{A}{B}} \\\\
    \eqTm{Γ}{\fst{p}{t₁}}{\fst{p}{t₂}}{A} \\\\
    \eqTm{Γ}{\snd{p}{t₁}}{\snd{p}{t₂}}{\substZ{B}{\fst{p}{t₁}}}}{%
    \eqTm{Γ}{t₁}{t₂}{\SigmaT{\strongS}{p}{A}{B}}}
  \and
  \inferrule{
    \text{$\prodrecWith{p}{q}$ is allowed when the mode is $r$} \\
    \wfTy{\consC{Γ}{\SigmaT{\weak}{p}{A}{B}}}{C} \\\\
    \wrTm{γ}{Γ}{t}{\mul{r}{q}}{\SigmaT{\weak}{p}{A}{B}} \\
    \wrTm{\consGC{\consGC{γ}{\mul{q}{p}}}{q}}{\consC{\consC{Γ}{A}}{B}}{
      u}{r}{\substZUpT{C}{\prodTm{\weak}{p}{\varO}{\varZ}}}}{%
    \wrTm{γ}{Γ}{\prodrec{p}{q}{C}{t}{u}}{r}{\substZ{C}{t}}}
  \and
  \inferrule{\eqTy{\consC{Γ}{\SigmaT{\weak}{p}{A}{B}}}{C₁}{C₂} \\
    \eqTm{Γ}{t₁}{t₂}{\SigmaT{\weak}{p}{A}{B}} \\
    \eqTm{\consC{\consC{Γ}{A}}{B}}{u₁}{u₂}{
      \substZUpT{C₁}{\prodTm{\weak}{p}{\varO}{\varZ}}}}{%
    \eqTm{Γ}{\prodrec{p}{q}{C₁}{t₁}{u₁}}{\prodrec{p}{q}{C₂}{t₂}{u₂}}{
      \substZ{C₁}{t₁}}}
  \and
  \inferrule{\wfTy{\consC{Γ}{\SigmaT{\weak}{p}{A}{B}}}{C} \\
    \wfTm{Γ}{t}{A} \\ \wfTm{Γ}{u}{\substZ{B}{t}} \\
    \wfTm{\consC{\consC{Γ}{A}}{B}}{v}{
      \substZUpT{C}{\prodTm{\weak}{p}{\varO}{\varZ}}}}{%
    \eqTm{Γ}{\prodrec{p}{q}{C}{\prodTm{\weak}{p}{t}{u}}{v}}{
      \substO{v}{t}{u}}{\substZ{C}{\prodTm{\weak}{p}{t}{u}}}}
\end{mathparpagebreakable}
One can use unit types and graded Σ-types to encode $\EName$.
$\EName$ and the box constructor, which are used in the typing rules
for $\bcName$ below, are encoded in the following way:
\begin{equation*}
  \begin{pmboxed}
    \>[][@{}l@{}]\ES{s}{l}{A} \>[][@{}l@{}]= \SigmaT{s}{\zeroG}{A}{(\LiftT{l}{\Unit{s}})}\\
    \>           \bxS{s}{t}   \>           = \prodTm{s}{\zeroG}{t}{\lift{\starTm{s}}}
  \end{pmboxed}
\end{equation*}
Note that the first component of the Σ-type is erased.
Unlike in the main body of the text the encoding is parametrised by a
strength, so that one can choose whether or not to allow η-equality
for $\EName$.
(Note that η-equality is always allowed for lift types.)

Rules for natural numbers:
\begin{mathparpagebreakable}
  \inferrule{\wfCtxt{Γ}}{\wrTm{γ}{Γ}{\Nat}{p}{\U{\zeroL}}}
  \and
  \inferrule{\wfCtxt{Γ}}{\wrTm{γ}{Γ}{\zero}{p}{\Nat}}
  \and
  \inferrule{\wrTm{γ}{Γ}{t}{p}{\Nat}}{\wrTm{γ}{Γ}{\suc{t}}{p}{\Nat}}
  \and
  \inferrule{\eqTm{Γ}{t₁}{t₂}{\Nat}}{\eqTm{Γ}{\suc{t₁}}{\suc{t₂}}{\Nat}}
  \and
  \inferrule{\wfTy{\consC{Γ}{\Nat}}{A} \\
    \wrTm{γ}{Γ}{t}{r}{\substZ{A}{\zero}} \\\\
    \wrTm{\consGC{\consGC{γ}{p}}{q}}{\consC{\consC{Γ}{\Nat}}{A}}{
      u}{r}{\substZUpT{A}{\suc{\varO}}} \\
    \wrTm{γ}{Γ}{v}{r}{\Nat}}{%
    \wrTm{γ}{Γ}{\natrec{p}{q}{A}{t}{u}{v}}{r}{\substZ{A}{v}}}
  \and
  \inferrule{\eqTy{\consC{Γ}{\Nat}}{A₁}{A₂} \\
    \eqTm{Γ}{t₁}{t₂}{\substZ{A₁}{\zero}} \\\\
    \eqTm{\consC{\consC{Γ}{\Nat}}{A₁}}{u₁}{u₂}{
      \substZUpT{A₁}{\suc{\varO}}} \\
    \eqTm{Γ}{v₁}{v₂}{\Nat}}{%
    \eqTm{Γ}{\natrec{p}{q}{A₁}{t₁}{u₁}{v₁}}{
      \natrec{p}{q}{A₂}{t₂}{u₂}{v₂}}{\substZ{A₁}{v₁}}}
  \and
  \inferrule{\wfTm{Γ}{t}{\substZ{A}{\zero}} \\
    \wfTm{\consC{\consC{Γ}{\Nat}}{A}}{u}{\substZUpT{A}{\suc{\varO}}}}{%
    \eqTm{Γ}{\natrec{p}{q}{A}{t}{u}{\zero}}{t}{\substZ{A}{\zero}}}
  \and
  \inferrule{\wfTm{Γ}{t}{\substZ{A}{\zero}} \\
    \wfTm{\consC{\consC{Γ}{\Nat}}{A}}{u}{\substZUpT{A}{\suc{\varO}}} \\
    \wfTm{Γ}{v}{\Nat}}{%
    \eqTm{Γ}{\natrec{p}{q}{A}{t}{u}{(\suc{v})}}{
      \substO{u}{v}{\natrec{p}{q}{A}{t}{u}{v}}}{\substZ{A}{\suc{v}}}}
\end{mathparpagebreakable}

Rules for identity types.
As noted in the main text one can choose between two sets of typing
rules for $\IdName$.
Below the antecedent $\IdRules{1}$ is satisfied if the first set has
been chosen, and $\IdRules{2}$ is satisfied if the second set has been
chosen.
For J and K there are three sets of rules.
The antecedent $\JRules{n}$ is satisfied if the $n$-th set of rules
for J has been chosen, and similarly for K\excludeAllowed{.
If $\bcSWith{s}$ is allowed, then $\Unit{s}$ and
$\SigmaWith{s}{\zeroG}$ must be allowed}:
\begin{mathparpagebreakable}
  \inferrule{\IdRules{1} \\ \wrTy{γ}{Γ}{p}{A} \\\\
    \wrTm{γ}{Γ}{t}{p}{A} \\ \wrTm{γ}{Γ}{u}{p}{A}}{%
    \wrTy{γ}{Γ}{p}{\IdT{A}{t}{u}}}
  \and
  \inferrule{\IdRules{1} \\ \wrTm{γ}{Γ}{A}{p}{\U{l}} \\\\
    \wrTm{γ}{Γ}{t}{p}{A} \\ \wrTm{γ}{Γ}{u}{p}{A}}{%
    \wrTm{γ}{Γ}{\IdT{A}{t}{u}}{p}{\U{l}}}
  \and
  \inferrule{\IdRules{2} \\ \wfTy{Γ}{A} \\\\
    \wfTm{Γ}{t}{A} \\ \wfTm{Γ}{u}{A}}{%
    \wrTy{γ}{Γ}{p}{\IdT{A}{t}{u}}}
  \and
  \inferrule{\IdRules{2} \\ \wfTm{Γ}{A}{\U{l}} \\\\
    \wfTm{Γ}{t}{A} \\ \wfTm{Γ}{u}{A}}{%
    \wrTm{γ}{Γ}{\IdT{A}{t}{u}}{p}{\U{l}}}
  \and
  \inferrule{\eqTy{Γ}{A₁}{A₂} \\\\ \eqTm{Γ}{t₁}{t₂}{A₁} \\
    \eqTm{Γ}{u₁}{u₂}{A₁}}{%
    \eqTy{Γ}{\IdT{A₁}{t₁}{u₁}}{\IdT{A₂}{t₂}{u₂}}}
  \and
  \inferrule{\eqTm{Γ}{A₁}{A₂}{\U{l}} \\\\ \eqTm{Γ}{t₁}{t₂}{A₁} \\
    \eqTm{Γ}{u₁}{u₂}{A₁}}{%
    \eqTm{Γ}{\IdT{A₁}{t₁}{u₁}}{\IdT{A₂}{t₂}{u₂}}{\U{l}}}
  \and
  \inferrule{\wfTm{Γ}{t}{A}}{\wrTm{γ}{Γ}{\rfl}{p}{\IdT{A}{t}{t}}}
  \and
  \inferrule{
    \text{$\JRules{1}$ or, if $p$ or $q$ is $\omegaG$, $\JRules{2}$} \\
    \wfTy{Γ}{A} \\ \wrTm{γ}{Γ}{t}{\omegaG}{A} \\
    \wrTy{\consGC{\consGC{γ}{p}}{q}}{%
      \consC{\consC{Γ}{A}}{\IdT{(\wkO{A})}{(\wkO{t})}{\varZ}}}{%
      \omegaG}{B} \\\\
    \wrTm{γ}{Γ}{u}{\omegaG}{\substO{B}{t}{\rfl}} \\
    \wrTm{γ}{Γ}{v}{\omegaG}{A} \\
    \wrTm{γ}{Γ}{w}{\omegaG}{\IdT{A}{t}{v}}}{%
    \wrTm{γ}{Γ}{\J{p}{q}{A}{t}{B}{u}{v}{w}}{\omegaG}{\substO{B}{v}{w}}}
  \and
  \inferrule{\JRules{2} \\ \wfTy{Γ}{A} \\ \wfTm{Γ}{t}{A} \\
    \wrTy{\consGC{\consGC{γ}{\zeroG}}{\zeroG}}{%
      \consC{\consC{Γ}{A}}{\IdT{(\wkO{A})}{(\wkO{t})}{\varZ}}}{%
      \omegaG}{B} \\\\
    \wrTm{γ}{Γ}{u}{\omegaG}{\substO{B}{t}{\rfl}} \\
    \wfTm{Γ}{v}{A} \\
    \wfTm{Γ}{w}{\IdT{A}{t}{v}}}{%
    \wrTm{γ}{Γ}{\J{\zeroG}{\zeroG}{A}{t}{B}{u}{v}{w}}{\omegaG}{%
      \substO{B}{v}{w}}}
  \and
  \inferrule{\text{If $r=\omegaG$, $\JRules{3}$} \\ \wfTy{Γ}{A} \\
    \wfTm{Γ}{t}{A} \\
    \wfTy{\consC{\consC{Γ}{A}}{\IdT{(\wkO{A})}{(\wkO{t})}{\varZ}}}{B} \\\\
    \wrTm{γ}{Γ}{u}{r}{\substO{B}{t}{\rfl}} \\
    \wfTm{Γ}{v}{A} \\
    \wfTm{Γ}{w}{\IdT{A}{t}{v}}}{%
    \wrTm{γ}{Γ}{\J{p}{q}{A}{t}{B}{u}{v}{w}}{r}{%
      \substO{B}{v}{w}}}
  \and
  \inferrule{\eqTy{Γ}{A₁}{A₂} \\ \eqTm{Γ}{t₁}{t₂}{A₁} \\
    \eqTy{\consC{\consC{Γ}{A₁}}{\IdT{(\wkO{A₁})}{(\wkO{t₁})}{\varZ}}}{B₁}{
      B₂} \\\\
    \eqTm{Γ}{u₁}{u₂}{\substO{B₁}{t₁}{\rfl}} \\
    \eqTm{Γ}{v₁}{v₂}{A₁} \\
    \eqTm{Γ}{w₁}{w₂}{\IdT{A₁}{t₁}{v₁}}}{%
    \eqTm{Γ}{\J{p}{q}{A₁}{t₁}{B₁}{u₁}{v₁}{w₁}}{
      \J{p}{q}{A₂}{t₂}{B₂}{u₂}{v₂}{w₂}}{\substO{B₁}{v₁}{w₁}}}
  \and
  \inferrule{\wfTm{Γ}{t}{A} \\
    \wfTy{\consC{\consC{Γ}{A}}{\IdT{(\wkO{A})}{(\wkO{t})}{\varZ}}}{B} \\
    \wfTm{Γ}{u}{\substO{B}{t}{\rfl}}}{
    \eqTm{Γ}{\J{p}{q}{A}{t}{B}{u}{t}{\rfl}}{u}{\substO{B}{t}{\rfl}}}
  \and
  \inferrule{\text{K is allowed} \\
    \text{$\KRules{1}$ or, if $p$ is $\omegaG$, $\KRules{2}$} \\
    \wfTy{Γ}{A} \\ \wrTm{γ}{Γ}{t}{\omegaG}{A} \\\\
    \wrTy{\consGC{γ}{p}}{\consC{Γ}{\IdT{A}{t}{t}}}{\omegaG}{B} \\
    \wrTm{γ}{Γ}{u}{\omegaG}{\substZ{B}{\rfl}} \\
    \wrTm{γ}{Γ}{v}{\omegaG}{\IdT{A}{t}{t}}}{%
    \wrTm{γ}{Γ}{\K{p}{A}{t}{B}{u}{v}}{\omegaG}{\substZ{B}{v}}}
  \and
  \inferrule{\text{K is allowed} \\ \KRules{2} \\
    \wfTy{Γ}{A} \\ \wfTm{Γ}{t}{A} \\\\
    \wrTy{\consGC{γ}{\zeroG}}{\consC{Γ}{\IdT{A}{t}{t}}}{\omegaG}{B} \\
    \wrTm{γ}{Γ}{u}{\omegaG}{\substZ{B}{\rfl}} \\
    \wfTm{Γ}{v}{\IdT{A}{t}{t}}}{%
    \wrTm{γ}{Γ}{\K{\zeroG}{A}{t}{B}{u}{v}}{\omegaG}{\substZ{B}{v}}}
  \and
  \inferrule{\text{K is allowed} \\
    \text{If $q=\omegaG$, $\KRules{3}$} \\
    \wfTy{Γ}{A} \\ \wfTm{Γ}{t}{A} \\\\
    \wfTy{\consC{Γ}{\IdT{A}{t}{t}}}{B} \\
    \wrTm{γ}{Γ}{u}{q}{\substZ{B}{\rfl}} \\
    \wfTm{Γ}{v}{\IdT{A}{t}{t}}}{%
    \wrTm{γ}{Γ}{\K{p}{A}{t}{B}{u}{v}}{q}{\substZ{B}{v}}}
  \and
  \inferrule{\text{K is allowed} \\
    \eqTy{Γ}{A₁}{A₂} \\ \eqTm{Γ}{t₁}{t₂}{A₁} \\\\
    \eqTy{\consC{Γ}{\IdT{A₁}{t₁}{t₁}}}{B₁}{B₂} \\
    \eqTm{Γ}{u₁}{u₂}{\substZ{B₁}{\rfl}} \\
    \eqTm{Γ}{v₁}{v₂}{\IdT{A₁}{t₁}{t₁}}}{%
    \eqTm{Γ}{\K{p}{A₁}{t₁}{B₁}{u₁}{v₁}}{\K{p}{A₂}{t₂}{B₂}{u₂}{v₂}}{
      \substZ{B₁}{v₁}}}
  \and
  \inferrule{\text{K is allowed} \\
    \wfTy{\consC{Γ}{\IdT{A}{t}{t}}}{B} \\
    \wfTm{Γ}{u}{\substZ{B}{\rfl}}}{
    \eqTm{Γ}{\K{p}{A}{t}{B}{u}{\rfl}}{u}{\substZ{B}{\rfl}}}
  \and
  \inferrule{\text{$\bcSWith{s}$ is allowed} \\
    \wfTy{Γ}{A} \\ \wfTm{Γ}{t}{A} \\ \wfTm{Γ}{u}{A} \\
    \wfTm{Γ}{v}{\IdT{A}{t}{u}}}{%
    \wrTm{γ}{Γ}{\bcS{s}{l}{A}{t}{u}{v}}{p}{
      \IdT{(\ES{s}{l}{A})}{\bxS{s}{t}}{\bxS{s}{u}}}}
  \and
  \inferrule{\text{$\bcSWith{s}$ is allowed} \\
    \eqTy{Γ}{A₁}{A₂} \\ \eqTm{Γ}{t₁}{t₂}{A₁} \\
    \eqTm{Γ}{u₁}{u₂}{A₁} \\ \eqTm{Γ}{v₁}{v₂}{\IdT{A₁}{t₁}{u₁}}}{%
    \eqTm{Γ}{\bcS{s}{l}{A₁}{t₁}{u₁}{v₁}}{\bcS{s}{l}{A₂}{t₂}{u₂}{v₂}}{%
      \IdT{(\ES{s}{l}{A₁})}{\bxS{s}{t₁}}{\bxS{s}{u₁}}}}
  \and
  \inferrule{\text{$\bcSWith{s}$ is allowed} \\ \wfTm{Γ}{t}{A}}{%
    \eqTm{Γ}{\bcS{s}{l}{A}{t}{t}{\rfl}}{\rfl}{
      \IdT{(\ES{s}{l}{A})}{\bxS{s}{t}}{\bxS{s}{t}}}}
  \and
  \inferrule{\text{Equality reflection is allowed} \\\\
    \wfTm{Γ}{v}{\IdT{A}{t}{u}}}{
    \eqTm{Γ}{t}{u}{A}}
\end{mathparpagebreakable}

Rules for set quotients.
The point constructor is denoted by $\className$.
In the typing rules for $\qrecName$ the abbreviation
$\respType{A}{B}{C}{t}$ stands for the following term:
\begin{equation*}
  \begin{pmboxed}
    \>\IdT{(\substZUpn{3}{C}{\var{1}})}{}{}\\
    \>\quad\>(\J{\omegaG}{\zeroG}{(\wkn{3}{(\Quot{A}{B})})}{(\class{\var{2}})}{%
      (\wkO{(\substZUpn{4}{C}{\var{0}})})}{(\wkT{t})}{(\class{\var{1}})}{%
      (\resp{(\wkn{3}{A})}{(\wklift{3}{2}{B})}{\var{2}}{\var{1}}{\var{0}})})\\
    \>\>(\substZUpn{3}{t}{\var{1}})
  \end{pmboxed}
\end{equation*}
Note that one can choose whether or not to allow quotient types:
\begin{mathparpagebreakable}
  \inferrule{\text{Quotient types are allowed} \\\\ \wrTy{γ}{Γ}{p}{A} \\
    \wfTy{\consC{\consC{Γ}{A}}{\wkO{A}}}{B}}{%
    \wrTy{γ}{Γ}{p}{\Quot{A}{B}}}
  \and
  \inferrule{\text{Quotient types are allowed} \\\\
    \wrTm{γ}{Γ}{A}{p}{\U{l}} \\
    \wfTm{\consC{\consC{Γ}{A}}{\wkO{A}}}{B}{\U{l}}}{%
    \wrTm{γ}{Γ}{\Quot{A}{B}}{p}{\U{l}}}
  \and
  \inferrule{\text{Quotient types are allowed} \\\\ \eqTy{Γ}{A₁}{A₂} \\
    \eqTy{\consC{\consC{Γ}{A₁}}{\wkO{A₁}}}{B₁}{B₂}}{%
    \eqTy{Γ}{\Quot{A₁}{B₁}}{\Quot{A₂}{B₂}}}
  \and
  \inferrule{\text{Quotient types are allowed} \\\\
    \eqTm{Γ}{A₁}{A₂}{\U{l}} \\
    \eqTm{\consC{\consC{Γ}{A₁}}{\wkO{A₁}}}{B₁}{B₂}{\U{l}}}{%
    \eqTm{Γ}{\Quot{A₁}{B₁}}{\Quot{A₂}{B₂}}{\U{l}}}
  \and
  \inferrule{\text{Quotient types are allowed} \\\\
    \wfTy{Γ}{\Quot{A}{B}} \\ \wrTm{γ}{Γ}{t}{p}{A}}{%
    \wrTm{γ}{Γ}{\class{t}}{p}{\Quot{A}{B}}}
  \and
  \inferrule{\wfTy{Γ}{\Quot{A}{B}} \\ \eqTm{Γ}{t₁}{t₂}{A}}{%
    \eqTm{Γ}{\class{t₁}}{\class{t₂}}{\Quot{A}{B}}}
  \and
  \inferrule{\wfTy{Γ}{\Quot{A}{B}} \\ \wfTm{Γ}{t}{A} \\
    \wfTm{Γ}{u}{A} \\ \wfTm{Γ}{v}{\substO{B}{t}{u}}}{%
    \wrTm{γ}{Γ}{\resp{A}{B}{t}{u}{v}}{\zeroG}{%
      \IdT{(\Quot{A}{B})}{(\class{t})}{(\class{u})}}}
  \and
  \inferrule{\text{Quotient types are allowed} \\ \eqTy{Γ}{A₁}{A₂} \\
    \eqTy{\consC{\consC{Γ}{A₁}}{\wkO{A₁}}}{B₁}{B₂} \\
    \eqTm{Γ}{t₁}{t₂}{A₁} \\ \eqTm{Γ}{u₁}{u₂}{A₁} \\
    \eqTm{Γ}{v₁}{v₂}{\substO{B₁}{t₁}{u₁}}}{%
    \eqTm{Γ}{\resp{A₁}{B₁}{t₁}{u₁}{v₁}}{\resp{A₂}{B₂}{t₂}{u₂}{v₂}}{%
      \IdT{(\Quot{A₁}{B₁})}{(\class{t₁})}{(\class{u₁})}}}
  \and
  \inferrule{\wfTy{Γ}{\Quot{A}{B}} \\ \wfTm{Γ}{t}{\Quot{A}{B}} \\
    \wfTm{Γ}{u}{\Quot{A}{B}} \\ \wfTm{Γ}{v}{\IdT{(\Quot{A}{B})}{t}{u}} \\
    \wfTm{Γ}{w}{\IdT{(\Quot{A}{B})}{t}{u}}}{%
    \wrTm{γ}{Γ}{\set{A}{B}{t}{u}{v}{w}}{\zeroG}{%
      \IdT{(\IdT{(\Quot{A}{B})}{t}{u})}{v}{w}}}
  \and
  \inferrule{\eqTy{Γ}{A₁}{A₂} \\
    \eqTy{\consC{\consC{Γ}{A₁}}{\wkO{A₁}}}{B₁}{B₂} \\
    \eqTm{Γ}{t₁}{t₂}{\Quot{A₁}{B₁}} \\ \eqTm{Γ}{u₁}{u₂}{\Quot{A₁}{B₁}} \\
    \eqTm{Γ}{v₁}{v₂}{\IdT{(\Quot{A₁}{B₁})}{t₁}{u₁}} \\
    \eqTm{Γ}{w₁}{w₂}{\IdT{(\Quot{A₁}{B₁})}{t₁}{u₁}}}{%
    \eqTm{Γ}{\set{A₁}{B₁}{t₁}{u₁}{v₁}{w₁}}{\set{A₂}{B₂}{t₂}{u₂}{v₂}{w₂}}{%
      \IdT{(\IdT{(\Quot{A₁}{B₁})}{t₁}{u₁})}{v₁}{w₁}}}
  \and
  \inferrule{\text{Quotient types are allowed} \\
    \wrTy{\consGC{γ}{\omegaG}}{\consC{Γ}{\Quot{A}{B}}}{p}{C} \\\\
    \wrTm{\consGC{γ}{\omegaG}}{\consC{Γ}{A}}{t}{p}{%
      \substZUp{C}{\class{\var{0}}}} \\
    \wfTm{\consC{\consC{\consC{Γ}{A}}{\wkO{A}}}{B}}{u}{%
      \respType{A}{B}{C}{t}} \\
    \wfTm{\consC{\consC{\consC{\consC{\consC{Γ}{\Quot{A}{B}}}{C}}{%
            \wkO{C}}}{%
          \IdT{(\wkT{C})}{\var{1}}{\var{0}}}}{%
        \IdT{(\wkn{3}{C})}{\var{2}}{\var{1}}}}{v}{%
      \IdT{(\IdT{(\wkn{4}{C})}{\var{3}}{\var{2}})}{\var{1}}{\var{0}}} \\
    \wrTm{γ}{Γ}{w}{p}{\Quot{A}{B}}}{%
    \wrTm{γ}{Γ}{\qrec{C}{t}{u}{v}{w}}{p}{\substZ{C}{w}}}
  \and
  \inferrule{\eqTy{\consC{Γ}{\Quot{A}{B}}}{C₁}{C₂} \\
    \eqTm{\consC{Γ}{A}}{t₁}{t₂}{\substZUp{C₁}{\class{\var{0}}}} \\
    \eqTm{\consC{\consC{\consC{Γ}{A}}{\wkO{A}}}{B}}{u₁}{u₂}{%
      \respType{A}{B}{C₁}{t₁}} \\
    \eqTm{\consC{\consC{\consC{\consC{\consC{Γ}{\Quot{A}{B}}}{C₁}}{%
            \wkO{C₁}}}{%
          \IdT{(\wkT{C₁})}{\var{1}}{\var{0}}}}{%
        \IdT{(\wkn{3}{C₁})}{\var{2}}{\var{1}}}}{v₁}{v₂}{%
      \IdT{(\IdT{(\wkn{4}{C₁})}{\var{3}}{\var{2}})}{\var{1}}{\var{0}}} \\
    \eqTm{Γ}{w₁}{w₂}{\Quot{A}{B}}}{%
    \eqTm{Γ}{\qrec{C₁}{t₁}{u₁}{v₁}{w₁}}{\qrec{C₂}{t₂}{u₂}{v₂}{w₂}}{%
      \substZ{C₁}{w₁}}}
  \and
  \inferrule{\wfTy{\consC{Γ}{\Quot{A}{B}}}{C} \\\\
    \wfTm{\consC{Γ}{A}}{t}{\substZUp{C}{\class{\var{0}}}} \\
    \wfTm{\consC{\consC{\consC{Γ}{A}}{\wkO{A}}}{B}}{u}{%
      \respType{A}{B}{C}{t}} \\
    \wfTm{\consC{\consC{\consC{\consC{\consC{Γ}{\Quot{A}{B}}}{C}}{%
            \wkO{C}}}{%
          \IdT{(\wkT{C})}{\var{1}}{\var{0}}}}{%
        \IdT{(\wkn{3}{C})}{\var{2}}{\var{1}}}}{v}{%
      \IdT{(\IdT{(\wkn{4}{C})}{\var{3}}{\var{2}})}{\var{1}}{\var{0}}} \\
    \wfTm{Γ}{w}{A}}{%
    \eqTm{Γ}{\qrec{C}{t}{u}{v}{(\class{w})}}{\substZ{t}{w}}{%
      \substZ{C}{\class{w}}}}
\end{mathparpagebreakable}

\subsection{Substitutions}

The definition of parallel substitutions and the function applying a
substitution to a term are not included here, but the following
relation for well-formed substitutions is:
\begin{mathparpagebreakable}
  \inferrule{\wfCtxt{Δ}}{\wrSubst{δ}{Δ}{σ}{p}{\emptyC}}
  \and
  \inferrule{\wrSubst{δ}{Δ}{\tail{σ}}{p}{Γ} \\
    \wrTm{δ}{Δ}{\head{σ}}{p}{\subst{A}{(\tail{σ})}}}{%
    \wrSubst{δ}{Δ}{σ}{p}{\consC{Γ}{A}}}
\end{mathparpagebreakable}
For a substitution $σ$ from a non-empty context $\head{σ}$ refers to
the last term in the substitution, and $\tail{σ}$ refers to the rest
of the substitution.

\subsection{Reduction}

Single-step reduction for terms.
Note that the β-rules for $\JName$ and $\bcName$ have assumptions that
are judgemental equalities:
\begin{mathparpagebreakable}
  \inferrule{\redTm{Γ}{t}{u}{A} \\ \eqTy{Γ}{A}{B}}{\redTm{Γ}{t}{u}{B}}
  \and
  \inferrule{\redTm{Γ}{t₁}{t₂}{\LiftT{l}{A}}}{
    \redTm{Γ}{\lowerTm{t₁}}{\lowerTm{t₂}}{A}}
  \and
  \inferrule{\wfTy{Γ}{A} \\ \wfTm{Γ}{t}{A}}{
    \redTm{Γ}{\lowerTm{(\lift{t})}}{t}{A}}
  \and
  \inferrule{\wfTy{Γ}{A} \\ \redTm{Γ}{t₁}{t₂}{\Empty}}{
    \redTm{Γ}{\emptyrec{p}{A}{t₁}}{\emptyrec{p}{A}{t₂}}{A}}
  \and
  \inferrule{\wfTy{\consC{Γ}{\Unit{\weak}}}{A} \\
    \redTm{Γ}{t₁}{t₂}{\Unit{\weak}} \\\\
    \wfTm{Γ}{u}{\substZ{A}{\starTm{\weak}}}}{
    \redTm{Γ}{\unitrec{p}{A}{t₁}{u}}{\unitrec{p}{A}{t₂}{u}}{
      \substZ{A}{t₁}}}
  \and
  \inferrule{\wfTy{\consC{Γ}{\Unit{\weak}}}{A} \\
    \wfTm{Γ}{t}{\substZ{A}{\starTm{\weak}}}}{
    \redTm{Γ}{\unitrec{p}{A}{\starTm{\weak}}{t}}{t}{
      \substZ{A}{\starTm{\weak}}}}
  \and
  \inferrule{\redTm{Γ}{t₁}{t₂}{\PiT{p}{A}{B}} \\
    \wfTm{Γ}{u}{A}}{%
    \redTm{Γ}{\app{t₁}{p}{u}}{\app{t₂}{p}{u}}{\substZ{B}{u}}}
  \and
  \inferrule{
    \excludeAllowed{\text{$\PiWithLevel{p}$ is allowed} \\}
    \wfTy{\consC{Γ}{A}}{B} \\\\
    \wfTm{\consC{Γ}{A}}{t}{B} \\ \wfTm{Γ}{u}{A}}{%
    \redTm{Γ}{\app{(\lam{p}{t})}{p}{u}}{\substZ{t}{u}}{\substZ{B}{u}}}
  \and
  \inferrule{\wfTy{\consC{Γ}{A}}{B} \\
    \redTm{Γ}{t₁}{t₂}{\SigmaT{\strongS}{p}{A}{B}}}{%
    \redTm{Γ}{\fst{p}{t₁}}{\fst{p}{t₂}}{A}}
  \and
  \inferrule{
    \excludeAllowed{\text{$\SigmaWith{\strongS}{p}$ is allowed} \\}
    \wfTy{\consC{Γ}{A}}{B} \\\\
    \wfTm{Γ}{t}{A} \\ \wfTm{Γ}{u}{\substZ{B}{t}}}{%
    \redTm{Γ}{\fst{p}{\prodTm{\strongS}{p}{t}{u}}}{t}{A}}
  \and
  \inferrule{\wfTy{\consC{Γ}{A}}{B} \\
    \redTm{Γ}{t₁}{t₂}{\SigmaT{\strongS}{p}{A}{B}}}{%
    \redTm{Γ}{\snd{p}{t₁}}{\snd{p}{t₂}}{\substZ{B}{\fst{p}{t₁}}}}
  \and
  \inferrule{
    \excludeAllowed{\text{$\SigmaWith{\strongS}{p}$ is allowed} \\}
    \wfTy{\consC{Γ}{A}}{B} \\\\
    \wfTm{Γ}{t}{A} \\ \wfTm{Γ}{u}{\substZ{B}{t}}}{%
    \redTm{Γ}{\snd{p}{\prodTm{\strongS}{p}{t}{u}}}{u}{\substZ{B}{t}}}
  \and
  \inferrule{\wfTy{\consC{Γ}{\SigmaT{\weak}{p}{A}{B}}}{C} \\
    \redTm{Γ}{t₁}{t₂}{\SigmaT{\weak}{p}{A}{B}} \\
    \wfTm{\consC{\consC{Γ}{A}}{B}}{u}{
      \substZUpT{C}{\prodTm{\weak}{p}{\varO}{\varZ}}}}{%
    \redTm{Γ}{\prodrec{p}{q}{C}{t₁}{u}}{\prodrec{p}{q}{C}{t₂}{u}}{
      \substZ{C}{t₁}}}
  \and
  \inferrule{\wfTy{\consC{Γ}{\SigmaT{\weak}{p}{A}{B}}}{C} \\
    \wfTm{Γ}{t}{A} \\ \wfTm{Γ}{u}{\substZ{B}{t}} \\
    \wfTm{\consC{\consC{Γ}{A}}{B}}{v}{
      \substZUpT{C}{\prodTm{\weak}{p}{\varO}{\varZ}}}}{%
    \redTm{Γ}{\prodrec{p}{q}{C}{\prodTm{\weak}{p}{t}{u}}{v}}{
      \substO{v}{t}{u}}{\substZ{C}{\prodTm{\weak}{p}{t}{u}}}}
  \and
  \inferrule{\wfTm{Γ}{t}{\substZ{A}{\zero}} \\
    \wfTm{\consC{\consC{Γ}{\Nat}}{A}}{u}{\substZUpT{A}{\suc{\varO}}} \\
    \redTm{Γ}{v₁}{v₂}{\Nat}}{%
    \redTm{Γ}{\natrec{p}{q}{A}{t}{u}{v₁}}{
      \natrec{p}{q}{A}{t}{u}{v₂}}{\substZ{A}{v₁}}}
  \and
  \inferrule{\wfTm{Γ}{t}{\substZ{A}{\zero}} \\
    \wfTm{\consC{\consC{Γ}{\Nat}}{A}}{u}{\substZUpT{A}{\suc{\varO}}}}{%
    \redTm{Γ}{\natrec{p}{q}{A}{t}{u}{\zero}}{t}{\substZ{A}{\zero}}}
  \and
  \inferrule{\wfTm{Γ}{t}{\substZ{A}{\zero}} \\
    \wfTm{\consC{\consC{Γ}{\Nat}}{A}}{u}{\substZUpT{A}{\suc{\varO}}} \\
    \wfTm{Γ}{v}{\Nat}}{%
    \redTm{Γ}{\natrec{p}{q}{A}{t}{u}{(\suc{v})}}{
      \substO{u}{v}{\natrec{p}{q}{A}{t}{u}{v}}}{\substZ{A}{\suc{v}}}}
  \and
  \inferrule{\wfTm{Γ}{t}{A} \\
    \wfTy{\consC{\consC{Γ}{A}}{\IdT{(\wkO{A})}{(\wkO{t})}{\varZ}}}{B} \\
    \wfTm{Γ}{u}{\substO{B}{t}{\rfl}} \\
    \wfTm{Γ}{v}{A} \\
    \redTm{Γ}{w₁}{w₂}{\IdT{A}{t}{v}}}{%
    \redTm{Γ}{\J{p}{q}{A}{t}{B}{u}{v}{w₁}}{
      \J{p}{q}{A}{t}{B}{u}{v}{w₂}}{\substO{B}{v}{w₁}}}
  \and
  \inferrule{\wfTm{Γ}{t₁}{A} \\ \wfTm{Γ}{t₂}{A} \\
    \eqTm{Γ}{t₁}{t₂}{A} \\
    \wfTy{\consC{\consC{Γ}{A}}{\IdT{(\wkO{A})}{(\wkO{t₁})}{\varZ}}}{B} \\
    \eqTy{Γ}{\substO{B}{t₁}{\rfl}}{\substO{B}{t₂}{\rfl}} \\
    \wfTm{Γ}{u}{\substO{B}{t₁}{\rfl}}}{
    \redTm{Γ}{\J{p}{q}{A}{t₁}{B}{u}{t₂}{\rfl}}{u}{\substO{B}{t₁}{\rfl}}}
  \and
  \inferrule{\text{K is allowed} \\
    \wfTy{\consC{Γ}{\IdT{A}{t}{t}}}{B} \\\\
    \wfTm{Γ}{u}{\substZ{B}{\rfl}} \\
    \redTm{Γ}{v₁}{v₂}{\IdT{A}{t}{t}}}{%
    \redTm{Γ}{\K{p}{A}{t}{B}{u}{v₁}}{\K{p}{A}{t}{B}{u}{v₂}}{
      \substZ{B}{v₁}}}
  \and
  \inferrule{\text{K is allowed} \\
    \wfTy{\consC{Γ}{\IdT{A}{t}{t}}}{B} \\\\
    \wfTm{Γ}{u}{\substZ{B}{\rfl}}}{
    \redTm{Γ}{\K{p}{A}{t}{B}{u}{\rfl}}{u}{\substZ{B}{\rfl}}}
  \and
  \inferrule{\text{$\bcSWith{s}$ is allowed} \\
    \redTm{Γ}{v₁}{v₂}{\IdT{A}{t}{u}}}{%
    \redTm{Γ}{\bcS{s}{l}{A}{t}{u}{v₁}}{\bcS{s}{l}{A}{t}{u}{v₂}}{%
      \IdT{(\ES{s}{l}{A})}{\bxS{s}{t}}{\bxS{s}{u}}}}
  \and
  \inferrule{\text{$\bcSWith{s}$ is allowed} \\ \eqTm{Γ}{t₁}{t₂}{A}}{%
    \redTm{Γ}{\bcS{s}{l}{A}{t₁}{t₂}{\rfl}}{\rfl}{
      \IdT{(\ES{s}{l}{A})}{\bxS{s}{t₁}}{\bxS{s}{t₂}}}}
  \and
  \inferrule{\text{Equality reflection is allowed} \\
    \wfTy{Γ}{\Quot{A}{B}} \\ \wfTm{Γ}{t}{A} \\
    \wfTm{Γ}{u}{A} \\ \wfTm{Γ}{v}{\substO{B}{t}{u}}}{%
    \redTm{Γ}{\resp{A}{B}{t}{u}{v}}{\rfl}{%
      \IdT{(\Quot{A}{B})}{(\class{t})}{(\class{u})}}}
  \and
  \inferrule{\text{Equality reflection is allowed} \\
    \wfTm{Γ}{t}{\Quot{A}{B}} \\ \wfTm{Γ}{u}{\Quot{A}{B}} \\
    \wfTm{Γ}{v}{\IdT{(\Quot{A}{B})}{t}{u}} \\
    \wfTm{Γ}{w}{\IdT{(\Quot{A}{B})}{t}{u}}}{%
    \redTm{Γ}{\set{A}{B}{t}{u}{v}{w}}{\rfl}{%
      \IdT{(\IdT{(\Quot{A}{B})}{t}{u})}{v}{w}}}
  \and
  \inferrule{\wfTy{\consC{Γ}{\Quot{A}{B}}}{C} \\\\
    \wfTm{\consC{Γ}{A}}{t}{\substZUp{C}{\class{\var{0}}}} \\
    \wfTm{\consC{\consC{\consC{Γ}{A}}{\wkO{A}}}{B}}{u}{%
      \respType{A}{B}{C}{t}} \\
    \wfTm{\consC{\consC{\consC{\consC{\consC{Γ}{\Quot{A}{B}}}{C}}{%
            \wkO{C}}}{%
          \IdT{(\wkT{C})}{\var{1}}{\var{0}}}}{%
        \IdT{(\wkn{3}{C})}{\var{2}}{\var{1}}}}{v}{%
      \IdT{(\IdT{(\wkn{4}{C})}{\var{3}}{\var{2}})}{\var{1}}{\var{0}}} \\
    \redTm{Γ}{w₁}{w₂}{\Quot{A}{B}}}{%
    \redTm{Γ}{\qrec{C}{t}{u}{v}{w₁}}{\qrec{C}{t}{u}{v}{w₂}}{\substZ{C}{w₁}}}
  \and
  \inferrule{\wfTy{\consC{Γ}{\Quot{A}{B}}}{C} \\\\
    \wfTm{\consC{Γ}{A}}{t}{\substZUp{C}{\class{\var{0}}}} \\
    \wfTm{\consC{\consC{\consC{Γ}{A}}{\wkO{A}}}{B}}{u}{%
      \respType{A}{B}{C}{t}} \\
    \wfTm{\consC{\consC{\consC{\consC{\consC{Γ}{\Quot{A}{B}}}{C}}{%
            \wkO{C}}}{%
          \IdT{(\wkT{C})}{\var{1}}{\var{0}}}}{%
        \IdT{(\wkn{3}{C})}{\var{2}}{\var{1}}}}{v}{%
      \IdT{(\IdT{(\wkn{4}{C})}{\var{3}}{\var{2}})}{\var{1}}{\var{0}}} \\
    \wfTm{Γ}{w}{A}}{%
    \redTm{Γ}{\qrec{C}{t}{u}{v}{(\class{w})}}{\substZ{t}{w}}{%
      \substZ{C}{\class{w}}}}
\end{mathparpagebreakable}

Multi-step reduction for terms:
\begin{mathparpagebreakable}
  \inferrule{\wfTm{Γ}{t}{A}}{\redsTm{Γ}{t}{t}{A}}
  \and
  \inferrule{\redTm{Γ}{t}{u}{A} \\ \redsTm{Γ}{u}{v}{A}}{
    \redsTm{Γ}{t}{v}{A}}
\end{mathparpagebreakable}

Single-step reduction for types:
\begin{mathparpagebreakable}
  \inferrule{\redTm{Γ}{A}{B}{\U{l}}}{\redTy{Γ}{A}{B}}
\end{mathparpagebreakable}

Multi-step reduction for types:
\begin{mathparpagebreakable}
  \inferrule{\wfTy{Γ}{A}}{\redsTy{Γ}{A}{A}}
  \and
  \inferrule{\redTy{Γ}{A}{B} \\ \redsTy{Γ}{B}{C}}{\redsTy{Γ}{A}{C}}
\end{mathparpagebreakable}

\ifAppendicesIncluded{}{\newpage}

A single-step reduction relation that allows evaluation under
successor constructors:
\begin{mathparpagebreakable}
  \inferrule{\redTm{Γ}{t}{u}{\Nat}}{\redSuc{Γ}{t}{u}}
  \and
  \inferrule{\redSuc{Γ}{t}{u}}{\redSuc{Γ}{\suc{t}}{\suc{u}}}
\end{mathparpagebreakable}

A multi-step reduction relation that allows evaluation under successor
constructors:
\begin{mathparpagebreakable}
  \inferrule{\wfTm{Γ}{t}{\Nat}}{\redsSuc{Γ}{t}{t}}
  \and
  \inferrule{\redSuc{Γ}{t}{u} \\ \redsSuc{Γ}{u}{v}}{\redsSuc{Γ}{t}{v}}
\end{mathparpagebreakable}

\section{The Target Language}
\label{sec:target-language}

This section contains more details about the target language and the
extraction function presented in the main text.
In addition to what is presented here the formalisation supports
non-strict applications and top-level definitions.

Syntax:
\begin{equation*}
  \begin{pmboxed}
    \>t, u, v ∷=\>~\> \dummy ~|~ \varT{x} ~|~ \starT ~|~ \unitrecT{t}{u} ~|~ \lamT{t} ~|~ \appT{t}{u} ~|~ \prodT{t}{u} ~|~ \fstT{t} ~|~ \sndT{t} ~|~ \prodrecT{t}{u}\\
    \>\hfill|\>\> \zeroName ~|~ \sucT{t} ~|~ \natrecT{t}{u}{v}
  \end{pmboxed}
\end{equation*}
The formalisation uses well-scoped syntax, but such details have been
omitted here: just like in Appendix \ref{sec:full-typing-rules} the
definitions should be read as if everything were well-scoped.

\newcommand{\Value}[1]{\mathsf{Value}\,#1}

Certain terms, including $\dummy$, are classified as \emph{values}:
\begin{mathparpagebreakable}
  \inferrule{ }{\Value{\dummy}}
  \and
  \inferrule{ }{\Value{\starT}}
  \and
  \inferrule{ }{\Value{(\lamT{t})}}
  \and
  \inferrule{ }{\Value{(\prodT{t}{u})}}
  \and
  \inferrule{ }{\Value{\zeroName}}
  \and
  \inferrule{ }{\Value{(\sucT{t})}}
\end{mathparpagebreakable}
Single-step reduction is then defined in the following way, using
substitution:
\begin{mathparpagebreakable}
  \inferrule{\step{t₁}{t₂}}{\step{\unitrecT{t₁}{u}}{\unitrecT{t₂}{u}}}
  \and
  \inferrule{ }{\step{\unitrecT{\starT}{t}}{t}}
  \and
  \inferrule{\step{t₁}{t₂}}{\step{\appT{t₁}{u}}{\appT{t₂}{u}}}
  \and
  \inferrule{\Value{t} \\ \step{u₁}{u₂}}{
    \step{\appT{t}{u₁}}{\appT{t}{u₂}}}
  \and
  \inferrule{\Value{u}}{\step{\appT{(\lamT{t})}{u}}{\substZ{t}{u}}}
  \and
  \inferrule{\step{t₁}{t₂}}{\step{\fstT{t₁}}{\fstT{t₂}}}
  \and
  \inferrule{ }{\step{\fstT{(\prodT{t}{u})}}{t}}
  \and
  \inferrule{\step{t₁}{t₂}}{\step{\sndT{t₁}}{\sndT{t₂}}}
  \and
  \inferrule{ }{\step{\sndT{(\prodT{t}{u})}}{u}}
  \and
  \inferrule{\step{t₁}{t₂}}{\step{\prodrecT{t₁}{u}}{\prodrecT{t₂}{u}}}
  \and
  \inferrule{ }{\step{\prodrecT{(\prodT{t}{u})}{v}}{\substO{v}{t}{u}}}
  \and
  \inferrule{\step{v₁}{v₂}}{
    \step{\natrecT{t}{u}{v₁}}{\natrecT{t}{u}{v₂}}}
  \and
  \inferrule{ }{\step{\natrecT{t}{u}{\zeroName}}{t}}
  \and
  \inferrule{ }{
    \step{\natrecT{t}{u}{(\sucT{v})}}{\substO{u}{v}{\natrecT{t}{u}{v}}}}
\end{mathparpagebreakable}
Multi-step reduction is defined as follows:
\begin{mathparpagebreakable}
  \inferrule{ }{\steps{t}{t}}
  \and
  \inferrule{\step{t}{u} \\ \steps{u}{v}}{\steps{t}{v}}
\end{mathparpagebreakable}

The call-by-value extraction function is defined in the following way
(the call-by-name variant can be found in the accompanying
code):
\begin{center}
  \begin{pboxed}
    \>[][@{}l@{}]$\extract{\var{x}}$                               \>[][@{}l@{}]$= \varT{x}$                                                        \\
    \>           $\extract{\U{l}}$                                 \>           $= \dummy$                                                         \\
    \>           $\extract{(\LiftT{l}{A})}$                        \>           $= \dummy$                                                         \\
    \>           $\extract{(\lift{t})}$                            \>           $= \extract{t}$                                                     \\
    \>           $\extract{(\lowerTm{t})}$                         \>           $= \extract{t}$                                                     \\
    \>           $\extract{\Empty}$                                \>           $= \dummy$                                                         \\
    \>           $\extract{(\emptyrec{p}{A}{t})}$                  \>           $= \loopT$                                                         \\
    \>           $\extract{\Unit{s}}$                              \>           $= \dummy$                                                         \\
    \>           $\extract{\starTm{s}}$                            \>           $= \starT$                                                         \\
    \>           $\extract{(\unitrec{\zeroG}{A}{t}{u})}$           \>           $= \extract{u}$                                                     \\
    \>           $\extract{(\unitrec{\omegaG}{A}{t}{u})}$          \>           $= \unitrecT{(\extract{t})}{(\extract{u})}$                         \\
    \>           $\extract{(\PiT{p}{A}{B})}$                       \>           $= \dummy$                                                         \\
    \>           $\extract{(\lam{p}{t})}$                          \>           $= \lamT{(\extract{t})}$                                            \\
    \>           $\extract{(\app{t}{\omegaG}{u})}$                 \>           $= \appT{(\extract{t})}{(\extract{u})}$                             \\
    \>           $\extract{(\app{t}{\zeroG}{u})}$                  \>           $= \appT{(\extract{t})}{\dummy}$                                    \\
    \>           $\extract{(\SigmaT{s}{p}{A}{B})}$                 \>           $= \dummy$                                                         \\
    \>           $\extract{\prodTm{s}{\zeroG}{t}{u}}$              \>           $= \extract{u}$                                                     \\
    \>           $\extract{\prodTm{s}{\omegaG}{t}{u}}$             \>           $=
                  \appT{(\appT{(\lamT{(\lamT{(\prodT{(\varT{1})}{(\varT{0})})})})}{(\extract{t})})}{(\extract{u})}$                               \\
    \>           $\extract{(\fst{\zeroG}{t})}$                     \>           $= \loopT$                                                         \\
    \>           $\extract{(\fst{\omegaG}{t})}$                    \>           $= \fstT{(\extract{t})}$                                            \\
    \>           $\extract{(\snd{\zeroG}{t})}$                     \>           $= \extract{t}$                                                     \\
    \>           $\extract{(\snd{\omegaG}{t})}$                    \>           $= \sndT{(\extract{t})}$                                            \\
    \>           $\extract{(\prodrec{p}{\zeroG}{A}{t}{u})}$        \>           $= \substO{(\extract{u})}{\loopT}{\loopT}$                          \\
    \>           $\extract{(\prodrec{\zeroG}{\omegaG}{A}{t}{u})}$  \>           $= \appT{(\substZ{(\lamT{(\extract{u})})}{\loopT})}{(\extract{t})}$ \\
    \>           $\extract{(\prodrec{\omegaG}{\omegaG}{A}{t}{u})}$ \>           $= \prodrecT{(\extract{t})}{(\extract{u})}$                         \\
    \>           $\extract{(\IdT{A}{t}{u})}$                       \>           $= \dummy$                                                         \\
    \>           $\extract{\rfl}$                                  \>           $= \dummy$                                                         \\
    \>           $\extract{(\J{p}{q}{A}{t}{B}{u}{v}{w})}$          \>           $= \extract{u}$                                                     \\
    \>           $\extract{(\K{p}{A}{t}{B}{u}{v})}$                \>           $= \extract{u}$                                                     \\
    \>           $\extract{(\bcS{s}{l}{A}{t}{u}{v})}$              \>           $= \dummy$                                                         \\
    \>           $\extract{\Nat}$                                  \>           $= \dummy$                                                         \\
    \>           $\extract{\zeroName}$                             \>           $= \zeroName$                                                      \\
    \>           $\extract{(\suc{t})}$                             \>           $= \appT{(\lamT{(\sucT{(\varT{0})})})}{(\extract{t})}$              \\
    \>           $\extract{(\natrec{p}{q}{A}{t}{u}{v})}$           \>           $= \natrecT{(\extract{t})}{(\extract{u})}{(\extract{v})}$          \\
    \>           $\extract{(\Quot{A}{B})}$                         \>           $= \dummy$                                                         \\
    \>           $\extract{(\class{t})}$                           \>           $= \extract{t}$                                                    \\
    \>           $\extract{(\resp{A}{B}{t}{u}{v})}$                \>           $= \loopT$                                                         \\
    \>           $\extract{(\set{A}{B}{t}{u}{v}{w})}$              \>           $= \loopT$                                                         \\
    \>           $\extract{(\qrec{C}{t}{u}{v}{w})}$                \>           $= \appT{(\lamT{(\extract{t})})}{(\extract{w})}$
  \end{pboxed}
\end{center}
}{}

\end{document}